\title{Complexity and Parametric Computation of Equilibria in Atomic Splittable Congestion Games via Weighted Block Laplacians\footnote{Funded by the Deutsche Forschungsgemeinschaft (DFG, German Research Foundation) under Germany's Excellence Strategy – The Berlin Mathematics Research Center MATH+ (EXC-2046/1, project ID: 390685689).}}
\author{Max Klimm\footnote{School of Business and Economics, Humboldt-Universität zu Berlin, Spandauer Straße 1, 10178 Berlin, Germany. \newline \texttt{ $\{$max.klimm, philipp.warode$\}$@hu-berlin.de}} \and Philipp Warode\footnotemark[2] }
\theoremstyle{definition}
\newtheorem{definition}{Definition}[section]
\theoremstyle{plain}
\newtheorem{lemma}{Lemma}
\newtheorem{theorem}{Theorem}
\newtheorem{result}{Result}
\newtheorem*{theorem*}{Theorem}
\newtheorem{corollary}{Corollary}
\theoremstyle{remark}
\newtheorem{example}{Example}[section]
\newtheorem{claim}{Claim}
\tikzset{
arrow/.style={-latex},
>=stealth'}
\tikzstyle{solid} = =[circle, fill,inner sep=1.5pt,outer sep=0pt]
\tikzstyle{oct} = = [draw,regular polygon,regular polygon sides=8,shape border rotate=22.5,inner sep=0pt, minimum size=25pt]
\tikzstyle{rowedge} = =[-latex,blue,dashed,thick]
\tikzstyle{coledge} = =[-latex,red,dash dot,thick]
\newcommand{\R}{\mathbb{R}}
\newcommand{\N}{\mathbb{N}}
\renewcommand{\vec}[1]{\boldsymbol{\mathbf{#1}}}
\newcommand{\range}[1]{[#1]}
\newcommand{\sgn}{\mathop{\mathrm{sgn}}}
\newcommand{\diag}{\mathop{\mathrm{diag}}}
\newcommand{\argmin}{\mathop{\mathrm{argmin}}}
\newcommand{\vspan}{\mathop{\mathrm{span}}}
\newcommand{\kernel}{\mathop{\mathrm{ker}}}
\newcommand{\rank}{\mathop{\mathrm{rank}}}
\newcommand{\PPAD}{\mathsf{PPAD}}
\newcommand{\PLS}{\mathsf{PLS}}
\newcommand{\1}{\mathds{1}}
\newcommand{\pertVar}{\varepsilon}
\newcommand{\pertVec}{\vec{\pertVar}}
\newcommand{\support}{\mathcal{S}}
\newcommand{\altSupport}{\mathcal{S}'}
\newcommand{\potentialSpace}{\mathcal{P}}
\newcommand{\potentialRegion}[1][\support]{\mathcal{P}_{#1}}
\newcommand{\pertPotentialRegion}[1][\support]{\potentialRegion[#1]^{\pertVar}}
\newcommand{\supportBoundary}[1][\support]{\partial #1}
\newcommand{\potentialOffset}{\bar{\vec{d}}}
\newcommand{\normalVec}[1][\support, e, i]{\vec{q}_{#1}}
\newcommand{\snormalVec}[1][\support, e, i]{\hat{\vec{q}}_{#1}}
\newcommand{\lightNormalVec}[1][\altSupport, e, i]{\bar{\vec{q}}_{#1}}
\newcommand{\slightNormalVec}[1][\altSupport, e, i]{\hat{\bar{\vec{q}}}_{#1}}
\newcommand{\inducedflow}{\vec{\nu}}
\newcommand{\contNeighbors}[1][\support]{\mathfrak{N}_{#1}}
\newcommand{\lowerContNeighbors}[1][\support]{\contNeighbors[#1]^{\min}}
\newcommand{\upperContNeighbors}[1][\support]{\contNeighbors[#1]^{\max}}
\newcommand{\contNeighborsMax}[1][\support]{\contNeighbors[#1]^{\max}}
\newcommand{\contNeighborsMin}[1][\support]{\contNeighbors[#1]^{\min}}
\newcommand{\pertLowerContNeighbors}[1][\support]{\contNeighbors[#1]^{\min, \pertVar}}
\newcommand{\pertUpperContNeighbors}[1][\support]{\contNeighbors[#1]^{\max, \pertVar}}
\newcommand{\lambdaMin}[1][\support]{\lambda^{\min}_{#1}}
\newcommand{\lambdaMax}[1][\support]{\lambda^{\max}_{#1}}
\newcommand{\nullspaceParameter}{\xi}
\newcommand{\nullspaceMin}[1][\support]{\nullspaceParameter^{\min}_{#1}}
\newcommand{\nullspaceMax}[1][\support]{\nullspaceParameter^{\max}_{#1}}
\newcommand{\bpMin}[1][\support]{\vec{\pi}^{\min}_{#1}}
\newcommand{\bpMax}[1][\support]{\vec{\pi}^{\max}_{#1}}
\newcommand{\nullspaceD}[1][\support]{\Delta \vec{\pi}^{\text{N}}_{#1}}
\newcommand{\predF}{\mathsf{pred}}
\newcommand{\succF}{\mathsf{succ}}
\newcommand{\startF}{\mathsf{start}}
\newcommand{\supportSpace}{\myState}
\newcommand{\pertSupportSpace}{\supportSpace^{\pertVar}}
\newcommand{\myState}{\mathfrak{S}}
\newcommand{\mystate}{\mathfrak{s}}
\newcommand{\degenerateOffsets}{\mathfrak{B}^{\mathrm{deg}}}
\definecolor{color1}{RGB}{32,70,60}
\definecolor{color2}{RGB}{170,45,0}
\definecolor{color3}{RGB}{170,150,0}
\definecolor{color4}{RGB}{0,45,170}
\definecolor{playercolor1}{RGB}{210,45,60}
\definecolor{playercolor2}{RGB}{60,170,45}
\definecolor{playercolor3}{RGB}{45,60,170}
\begin{document}

\maketitle
\begin{abstract}
We settle the complexity of computing an equilibrium in atomic splittable congestion games with player-specific affine cost functions $l_{e,i}(x) = a_{e,i} x + b_{e,i}$ as we show that the computation is $\mathsf{PPAD}$-complete. To prove that the problem is contained in $\mathsf{PPAD}$, we develop a homotopy method that traces an equilibrium for varying flow demands of the players. A key technique for this method is to describe the evolution of the equilibrium locally by a novel block Laplacian matrix where each entry of the Laplacian is a Laplacian again. Using the properties of this matrix  allows to recompute efficiently the Laplacian after the support of the equilibrium changes by matrix pivot operations. These insights give rise to a path following formulation for computing an equilibrium where states correspond to supports that are feasible for some demands and neighboring supports are feasible for increased or  decreased flow demands. A closer investigation of the block Laplacian system further allows to orient the states giving rise to unique predecessor and successor states thus putting the problem into $\mathsf{PPAD}$. For the $\PPAD$-hardness, we reduce from computing an approximate equilibrium of a bimatrix win-lose game. As a byproduct of our reduction we further show that computing a multi-class Wardrop equilibrium with class dependent affine cost functions is $\PPAD$-complete as well. 


As another byproduct of our $\PPAD$-completeness proof, we obtain an algorithm that computes a continuum of equilibria parametrized by the players' flow demand. For player-specific costs, the continuum may involve several increases and decreases of the demand and yields an algorithm that runs in polynomial space. For games with player-independent costs, only demand increases are necessary yielding an algorithm computing all equilibria as a function of the flow demand that runs in time polynomial in the output. 
\end{abstract}
\thispagestyle{empty}

\newpage
\pagestyle{plain}

\setcounter{page}{1}
\section{Introduction}

Congestion games are a central topic in algorithmic game theory with applications in traffic~\cite{beckmann1956,dafermos1972,wardrop1952,rosenthal1973}, telecommunication~\cite{altman2002,orda1993,richman2007}, and logistics \cite{cominetti2009}.
We are given a graph $G = (V,E)$ with a finite set of $k$ commodities, each specified by a triplet $(s_i,t_i,r_i)$  consisting of a source node $s_i \in V$, a target node $t_i \in V$, and a fixed demand rate $r_i \in \R_{\geq 0}$. Each edge $e \in E$ is endowed with a flow-dependent (and possibly commodity-specific) cost function $l_{e,i} : \R_{\geq 0} \to \R_{\geq 0}$ that maps its flow $\smash{\bar{x}_e = \sum_{i=1}^k x_e^i}$ to a cost $l_{e,i}(\bar{x}_e)$ experienced by all flow particles of commodity~$i$ using that edge.

In Wardrop's basic model~\cite{wardrop1952}, each commodity corresponds to a continuum of users. Each user acts selfishly in minimizing their total cost, i.e., the sum of the costs of the used edges. The corresponding equilibrium concept of a \emph{Wardrop equilibrium} is defined as a multi-commodity flow with the additional property that each commodity only uses paths of minimum cost. Wardrop equilibria exist under mild assumptions on the cost functions \cite{beckmann1956,schmeidler1970}. When the cost functions of an edge are equal for all commodities, an equilibrium can be computed in polynomial time by convex programming techniques \cite{beckmann1956}. For the case of commodity-dependent affine cost functions, Meunier and Pradeau~\cite{meunier2019} very recently showed that the problem to compute a Wardrop equilibrium lies in $\PPAD$, but left it as an open problem whether the problem is actually $\PPAD$-complete.

In the light of the rise of navigation systems such as Waze and TomTom and ride sharing platforms such as Lyft and Uber, and in view of the anticipated market penetration of autonomous cars, it is sensible to assume that in the near future several competing companies will control significant portions of the road traffic.
Similarly, the ongoing discussion on Net Neutrality Rules for the Internet are fueled by the fact that few companies constitute and control large portions of the internet traffic, e.g., these days Netflix and YouTube each constitute about $15 \%$ of total downstream traffic worldwide \cite{sandvine2018}.
In these scenarios some players may be willing to sacrifice the cost experienced by some of their traffic in order to improve the overall cost of their flow, see also the discussion in Catoni and Pallottino~\cite{catoni1991}. Atomic splittable congestion are a much more compelling model
in these situations. In such a game, each commodity corresponds to a single player who controls a splittable flow of traffic in the network. The goal of the player is to minimize the total cost of their flow defined as $\smash{C(\vec x) = \sum_{e \in E} x_{e}^i\,l_{e,i}(\bar{x}_e)}$.
A multi-commodity flow $\vec x = (\vec x^1,\dots,\vec x^k)$ is a Nash equilibrium if $\smash{C^i(\vec x) \leq C^i(\tilde{\vec x}^i, \vec x^i)}$ for all players~$i \in \{1,\dots,k\}$ and all $s_i$-$t_i$-flows of rate $r_i$, where $\smash{(\tilde{\vec x}^i, \vec x^i)}$ denotes the multi-commodity flow where all players $j \neq i$ send the flow $\vec x^j$ and player~$i$ sends the flow $\tilde{\vec x}^i$.

Despite considerable progress regarding the computational complexity of equilibria in general bimatrix games \cite{chen2009,daskalakis2009}, Wardrop equilibria \cite{beckmann1956,meunier2019}, and atomic-unsplittable congestion games \cite{ackermann2008,fabrikant2004}, much less
is known regarding the computation of equilibria in atomic splittable congestion games. For affine player-independent cost functions, Cominetti et al.~\cite{cominetti2009} showed that an equilibrium can be found by computing the minimum of a convex potential function, see also Huang~\cite{huang2013} for a combinatorial algorithm for special graph topologies. Bhaskar and Lolakapuri~\cite{bhaskar2018} proposed two algorithms with exponential worst-case complexity that compute $\epsilon$-approximate Nash equilibria in singleton games with convex costs. Harks and Timmermans~\cite{harks2017} developed a polynomial time algorithm that computes an equilibrium in singleton games with player-specific affine cost functions.

Besides leaving the complexity of computing an equilibrium wide open, the approaches above also yield only a \emph{single} equilibrium for a \emph{fixed} vector of player demands.
Moreover, the algorithms of Bhaskar and Lolakapuri~\cite{bhaskar2015} and Harks and Timmermans~\cite{harks2017} work only for singleton games played on a network with two nodes. In actual traffic scenarios, the assumption that the players' demand vector is fully known and fixed is unrealistic since demands often fluctuate. In this paper, we will work towards understanding how the equilibria in atomic splittable games change as a function of the players' demand vectors. These insights will lead to a proof that the computation of equilibria is $\PPAD$-complete.

\subsection{Our results and techniques}

We settle the complexity of computing an equilibrium in an atomic splittable congestion game with player-specific affine costs showing that it is $\PPAD$-complete. The complexity class $\PPAD$ (``polynomial parity argument on directed graphs'') captures the complexity of search  problems that can be solved by directed path-following algorithms \cite{papadimitriou1994}. In a nutshell, a problem is in $\PPAD$ if there is an exponential set of states $\myState$ and polynomially computable functions $\startF$, $\predF(\cdot)$, and $\succF(\cdot)$ computing a start state, and well-defined predecessor and successor states for a given state. A state $\mystate$ is a source if $\predF(\mystate) = \mystate$ and a sink if $\succF(\mystate) = \mystate$. We are guaranteed that $\startF$ is a source and the  goal is to compute a sink or another source. 
Most notably the problem to compute an equilibrium of a bimatrix game is $\PPAD$-complete, even in special cases \cite{chen2009,daskalakis2009,mehta2014}. 
In this paper, we show the $\PPAD$-completeness of the following problem: 

%
\newcommand{\atomicsplittable}{\textsc{Nash-Atomic-Splittable}}
\newcommand{\parametricatomicsplittable}{\textsc{Parametric-Nash-Atomic-Splittable}}
\begin{framed}
\noindent\atomicsplittable\\[4pt]
\begin{tabular}{@{} ll}
\noindent \textsc{Input:} & game $(G,K,l)$ graph $G \!=\! (V,E)$, 
 commodities $\smash{K = \bigl((s_i,t_i,r_i)\bigr)_{i \in [k]}}$,\\
& and cost functions $\smash{l_{e,i}(x) = a_{e,i}\,x + b_{e,i}}$  
 for some $a_{e,i} \in \R_{>0}$, $b_{e,i} \in \R_{\geq 0}$.\\
\noindent \textsc{Output:} & Nash equilibrium $\vec x$ of $(G,K,l)$.
\end{tabular}
\end{framed}
\begin{result}[cf.~Theorems~\ref{thm:PPADmembership} and \ref{thm:hardness}]
\atomicsplittable\ is $\PPAD$-complete.	
\end{result}

To show $\PPAD$-hardness, we reduce from the problem to compute an $n^{-\beta}$-approximate Nash equilibrium for a $n \times n$ win-lose bimatrix game that is $\PPAD$-complete for all $\beta > 0$ \cite{chen2007}. Our reduction requires only two players and produces an atomic splittable congestion game on a planar graph implying hardness even for this special case. This is an interesting contrast to the $\PLS$-completeness results for computing Nash equilibria in \emph{unsplittable} congestion games  using a non-constant number of players and highly non-planar graphs \cite{ackermann2008,fabrikant2004}. As a byproduct, we also obtain $\PPAD$-completeness for a related problem settling an open question from Meunier and Pradeau~\cite{meunier2019}.

\begin{result}[cf.~Theorem~\ref{thm:wardrop_hardness}]
Computing a multi-class Wardrop equilibrium is $\PPAD$-complete.	
\end{result}

The more challenging part of the proof is to show that \atomicsplittable\ is contained in $\PPAD$. To this end, we develop a path following algorithm that pivots over player supports in a similar fashion to the Lemke-Howson-algorithm for bimatrix games \cite{lemke1964}.
Our algorithm follows a continuous path of Nash equilibria $\vec x(\lambda)$ for demand rates $\lambda r_1, \dots, \lambda r_k$ with $\lambda \in [0,1]$.
During the course of the algorithm, $\lambda$ is changed in a continuous but non-monotonic matter.
\shortversion{The general idea is that for given player support, the equilibrium can be obtained by solving a linear system of the form $\vec y(\lambda) = \vec L \vec \pi - \vec d$ where $\vec y(\lambda) \in \R^{nk}$ is the players' excess vector, $\vec \pi \in \R^{nk}$ is a player-specific potential vector, and $\vec d \in \R^{nk}$ is an offset. We call the block matrix $\vec L \in \R^{mk \times mk}$ the \emph{block Laplacian} and show that it is non-singular (for all supports that need to be considered) allowing to solve for $\vec \pi$ and, thus, to describe the evolution of the equilibrium locally. Suitable pivoting operations on the matrix further allow to compute predecessor and successor supports eventually enabling to prove that the problem is in $\PPAD$.}{%
To describe the evolution of $\lambda$ consider an arbitrary value $\lambda^* \in [0,1)$ and a corresponding equilibrium $\vec x(\lambda^*)$ and let us fix the supports of $(\vec x^1,\dots,\vec x^k)$.
By the Karush-Kuhn-Tucker conditions for each player, we derive that in $\vec x(\lambda^*)$ every player only uses paths that minimize the marginal total cost. This implies that for every player~$i$, there is a vector of vertex potentials $\vec \pi^i$ such that that player~$i$ uses  edge~$e =  (u,v)$ if and only if the difference in vertex potentials $\pi^i_v - \pi^i_u$ is at least $b_{e,i}$. Going further, we can reformulate the Nash equilibrium conditions as a system of linear equations of the form $\vec y = \vec L \vec \pi - \vec d$ where $\vec y = ((\vec y^1)^\top, \dots, (\vec y^k)^{\top})^\top$ is a block excess vector containing the excess $y_v^i$ for each player~$i$ and each vertex $v$, $\vec \pi = ((\vec \pi^1)^\top, \dots, (\vec \pi^k)^{\top})^\top$ is the block potential vector, and $\vec d \in \R^{nk}$ is an appropriate offset. The matrix $\vec L \in \R^{nk \times nk}$ is a block matrix \shortversion{$\vec L = (\vec L^{ij})_{i,j \in [k]},$}{of the form
\begin{align*}
\vec{L} :=
\begin{pmatrix}
\phantom{-}\vec{L}^{11}		& -\vec{L}^{12} 	& \dotsm & -\vec{L}^{1k} \\
-\vec{L}^{21}	& \phantom{-}\vec{L}^{22}		& \dotsm & -\vec{L}^{2k} \\
\vdots 			& \vdots 			& \ddots& \vdots 		 \\
-\vec{L}^{k1}	& -\vec{L}^{k2} 	& \dotsm & \phantom{-}\vec{L}^{kk}
\end{pmatrix},	
\end{align*}}
where the diagonal matrices $\vec L^{ii}$ are weighted Laplacian matrices for the graph containing only the edges in the support of player~$i$. The off-diagonal matrices $-\vec L^{ij}$ and $-\vec L^{ji}$ with $i \neq j$ are the negative of weighted Laplacian matrices for the graph containing only the edges in the support of both player~$i$ and $j$. The weights of the matrices $\vec L^{ij}$ depend on the coefficients $a_{e,j}$ while the weights of the matrices $\vec L^{ji}$ depend on the coefficients $a_{e,i}$ so that $\vec L$ is non-symmetric. We call $\vec L$ the \emph{block Laplacian matrix} of the graph.
We show that $\rank(\vec L) = k(n-1)$, except for a degenerated case that will be discussed later. This implies that there is a bijection between excess vectors and potentials (after fixing without loss of generality the potentials $\pi_{s_i}^i = 0$ for all players~$i$). In particular, we obtain that for a given support the set of equilibria $\{\vec x(\lambda) : \lambda \in [0,1]\}$ has dimension at most one. When the dimension is exactly one, the one-dimensional linear space $\{\vec x(\lambda) : \lambda \in [0,1]\}$ hits the boundaries of the support at exactly two points, uniquely defining two neighboring support sets. These support sets will form $\succF$ and $\predF$ of the current support set. To obtain a unique orientation, we show that it suffices to consider the determinant of the block Laplacian after erasing the row and columns corresponding to $s_i$ for each player~$i$.
}

As a byproduct, we obtain an algorithm to solve the following problem of computing \emph{all} Nash equilibria of an atomic splittable congestion game as a function of the players' demand rates:\\[-18pt]
\begin{framed}
\noindent\parametricatomicsplittable\\[4pt]
\begin{tabular}{@{} ll}
\noindent \textsc{Input:} & game $(G,K,l)$ with graph $G \!=\! (V,E)$, 
commodities $\smash{K = \bigl((s_i,t_i,r_i)\bigr)_{i \in [k]}}$,\\
& and cost functions $l_{e,i}(x) = a_{e,i}\,x + b_{e,i}$ 
for some $a_{e,i} \in \R_{>0}$, $b_{e,i} \in \R_{\geq 0}$.\\
\noindent \textsc{Output:} & piece-wise affine function $\vec f \!:\! [0,1] \to \R^{mk}$ \\
& such that $\vec f(\lambda)$ is a Nash equilibrium for 
demand rates $\lambda r_1, \dots, \lambda r_k \;\forall \lambda \in [0,1]$.
\end{tabular}
\end{framed}

In order to solve \parametricatomicsplittable, we simply run the trivial $\PPAD$-algorithm that starts in the Nash equilibrium given by $\startF$. Then, it consecutively applies $\succF$ on the current state.
During the course of the algorithm, it may be necessary to decrease $\lambda$, which leads to operations that are not reflected in the output of the algorithm. In these cases, we recall the maximal lambda $\bar{\lambda}$ such that $\vec f$ has been outputted for $[0,\bar{\lambda}]$ and only proceed to output Nash equilibria once $\bar{\lambda}$ is reached again.
Not counting the space of the output (which may be exponential in the input), this algorithm can clearly be implemented in polynomial space.

\begin{result}[cf.~Theorem~\ref{thm:parametric:PSPACE}]
\parametricatomicsplittable\ can be solved in polynomial space.
\end{result}

For the special case of player-independent cost functions $l_{e,i} = l_{e,j}$ for all $e \in E$ and $i,j \in [k]$, we can show that in our algorithm no decrease of $\lambda$ is necessary. We then obtain the following.

\begin{result}[cf.~Theorem~\ref{thm:parametric:sym:complexity}]
\parametricatomicsplittable\ can be solved in output-polynomial time for non-$b$-degenerate atomic splittable congestion games with player-independent cost functions. In particular, the runtime is in  $\mathcal{O}((kn)^{2.4} + \tau (kn)^2 )$, where $\tau$ is the number of breakpoints of the piecewise affine function $\vec f$ returned by the algorithm. 
\end{result}

Our analysis further allows to obtain the following results regarding the multiplicity of equilibria.

\begin{result}[cf.~Corollaries~\ref{cor:odd}~and~\ref{cor:nonbdegenerate:as}]
Except for a nullset of offset vectors $\vec{b} = (b_{e,i})_{e \in E, i \in [k]} \in \R^{mk}_{\geq 0}$ and demands $\vec{r}$, the games have an odd number of Nash equilibria.
\end{result}

We also exhibit an example of a game with an infinite number of equilibria (cf.~Example~\ref{ex:8playerInfinite}).

\subsection{Related work}

Atomic splittable congestion games can be seen as a coalitional version of Wardrop equilibria \cite{wardrop1952} where a finite set of player each controls a non-negligible amount of flow \cite{haurie1985,marcotte1987}. The existence of pure Nash equilibria follows from standard fixed point arguments \cite{kakutani1941,rosen1965}.
Games with player-specific cost functions were studied by Orda et al.~\cite{orda1993} who showed that Nash equilibria are unique in networks of parallel edges. Richman and Shimkin~\cite{richman2007} characterized the set of two-terminal network topologies that are necessary and sufficient for uniqueness showing that equilibria are unique if and only if the networks are nearly parallel. The latter class of networks has been introduced by Milchtaich~\cite{milchtaich2005} to characterize the uniqueness of multi-class Wardrop equilibria. Harks and Timmermans~\cite{harks2018} characterized the uniqueness of equilibria in atomic splittable congestion games in terms of the combinatorial structure of the strategy set showing that equilibria are unique when the strategy space of each player is a bidirectional flow matroid.
Bhaskar et al.~\cite{bhaskar2015} showed that edge flows in an atomic splittable game need not be unique even when the cost functions are player-independent.
For games with cost functions that are monomials of degree at most three, edge flows are known to be unique \cite{altman2002}.
The price of anarchy of atomic splittable congestion games has been studied by Cominetti et al.~\cite{cominetti2009}, Harks~\cite{harks2011}, and Roughgarden and Schoppmann~\cite{roughgarden2015}.
Catoni and Pallottino~\cite{catoni1991} provided a paradox of a non-atomic game where replacing the non-atomic players of one commodity by an atomic player with the same demand decreases the overall performance of that commodity. Hayrapetyan et al.~\cite{hayrapetyan2006} and Bhaskar et al.~\cite{bhaskar2010} studied this effect in more detail.

Cominetti et al.~\cite{cominetti2009} showed that for games with player-independent affine cost functions an equilibrium can be computed efficiently by solving a quadratic program. For special network topologies (including series-parallel graphs), Huang~\cite{huang2013} gave a combinatorial algorithm. For the case of parallel links, Harks and Timmermans~\cite{harks2017} gave a polynomial algorithm that computes the equilibrium of an atomic splittable congestion game with player-specific affine costs. Bhaskar and Lolakapuri~\cite{bhaskar2018} provided an exponential algorithm that computes approximate equilibria in games with player-independent convex costs. They also showed that some decision problems involving equilibria are $\mathsf{NP}$-complete.

Further related are \emph{unsplittable} congestion games~\cite{rosenthal1973} where each commodity chooses a single path of the network. For results on the existence of equilibria, see \cite{ackermann2008,dunkel2008,gairing2013,goemans2005,harks2012,milchtaich1996,rosenthal1973}. Computing a pure Nash equilibrium in a congestion game with unweighted players and player-independent cost functions reduces to finding the local minimum of a potential function and is, thus, contained in the complexity class $\PLS$, the class of all local search problems with polynomially searchable neighborhoods as defined by~\cite{johnson1988}. Fabrikant et al.~\cite{fabrikant2004} and Ackermann et al.~\cite{ackermann2008} showed that computing a pure Nash equilibrium is in fact $\PLS$-complete.

The problem of computing a Nash equilibrium in an atomic splittable congestion game with player-specific affine costs may also be formulated as a linear complementarity problem (LCP). Harks and Timmermans~\cite{harks2017} showed this explicitly for the singleton case, but it is not hard to convince ourselves that such a formulation is also possible in the general case by using the vertex potentials. However, it is not clear whether the resulting LCP belongs to any of the classes for which it is known that Lemke's algorithm terminates \cite{adler2011,cottle1992,eaves1971}. In addition, Lemke's algorithm introduces an \emph{artificial} variable that is traced along the course of the algorithm. Our algorithm, in contrast, traces the equilibria along a meaningful increase of demand rates. In previous work \cite{klimm2019}, we developed an algorithm that computes all Wardrop equilibria parametrized by the flow demand. While relying on a similar idea, Wardrop equilibria are much more easy to handle since they can be computed in polynomial time and are essentially unique.   
For further homotopy methods for computing equilibria, see \cite{goldberg2002,herings2010,herings2002,katzenelson1965}.

\section{Preliminaries}

An \emph{atomic splittable congestion game} is a tuple $(G, K, l)$ where $G = (V, E)$ is a directed, weakly connected graph with $n$ vertices $V$ and $m$ edges $E$, the family $K = \big( (s_1, t_1, r_1)$, \dots, $(s_k, t_k, r_k) \big)$ contains $k$ triples each of which consisting of a source node $s_i \in V$, a sink node $t_i \in V$, and a demand rate $r_i \in \R_{\geq 0}$ for each of the $k$ players, and $l$ is a family of player-specific, strictly increasing affine linear cost functions $l = (l_{e,i})_{e \in E, i \in \range{k}}$ with $l_{e,i}(x) = a_{e,i} x + b_{e,i}$ for some $a_{e,i} \in \R_{>0}$ and $b_{e,i} \in \R_{\geq 0}$.

A feasible strategy for every player $i \in \range{k}$ is to route their demand $r_i$ between their terminal vertices $s_i$ and $t_i$. Thus, a strategy for player $i$ is a \emph{$s_i$-$t_i$-flow} of rate $r_i$, i.e., a non-negative vector $\vec{x}^i = (x^i_e)_{e \in E}$ in $\R^m_{\geq 0}$ satisfying the flow conservation constraints
\shortversion{
$\smash{y^i_v := \sum_{e \in \delta^+ (v)} x^i_e - \sum_{e \in \delta^- (v)} x^i_e = 0}$ for all $v \in V \setminus \{s_i,t_i\}$, $y^i_{s_i} = r_i$, and $y^i_{t_i} = -r_i$. 
}{
\begin{equation*} 
\sum_{e \in \delta^+ (v)} x^i_e - \sum_{e \in \delta^- (v)} x^i_e =
\begin{cases}
\phantom{-}r_i & \text{if } v = s_i, \\
- r_i & \text{if } v = t_i, \\
\phantom{-}0 & \text{otherwise}
\end{cases}
\end{equation*}
for every vertex $v \in V$.}
A strategy profile is a vector 
$\smash{\vec{x} = 
( (\vec{x}^1)^{\top}, \dotsc, (\vec{x}^k)^{\top})^{\top}}
\in \smash{\R^{k m}_{\geq 0}}$
containing the flow vectors of all players.
We use the notation $(\tilde{\vec{x}}^i, \vec{x}^{-i})$ for the strategy profile where player $i$ uses the flow $\tilde{\vec{x}}^i$ and all other players use their flow as in the strategy profile $\vec{x}$.
The cost $l_{e,i}$ experienced by the flow of the player $i$ on some edge $e$ depends on the \emph{total flow} $\bar{\vec{x}} = (\bar{x}_e)_{e \in E}$ where $\smash{\bar{x}_e = \sum_{i = 1}^k x^i_e}$. 
Every player wants to minimize the \emph{total cost} $C_i (\vec{x})$ experienced by the flow sent by this player, i.e.,
$C_i (\vec{x}) = \sum_{e \in E} x^i_e l_{e,i}(\bar{x}_e)$.
We say that $\vec{x}$ is a \emph{Nash equilibrium} if for every player $i \in \range{k}$ there is no profitable deviation from $\vec x$, i.e., 
$C_i (\vec{x}) \leq C_i(\tilde{\vec{x}}^i, \vec{x}^{-i})$
for all $s_i$-$t_i$-flows $\tilde{\vec{x}}^i$ of rate $r_i$.
The \emph{marginal total cost} of player $i$ on edge $e$ given the flow $\vec{x}$ is given by
$\smash{\mu^i_e(\vec{x}) := \frac{\partial}{\partial x^i_e} \, x^i_e l_{e,i}(\bar{x}_e) =
a_{e,i} \bar{x}_e + b_{e,i} + a_{e,i} x_e^i}$. We obtain the following characterization of Nash equilibria, see, e.g., Bhaskar et al.~\cite{bhaskar2015} for a reference.

\begin{lemma} \label{lem:equilibrium:marginalcost}
The strategy profile $\vec{x}$ is a Nash equilibrium flow if and only if, for every player $i$, $\vec{x}^i$ is a $s_i$-$t_i$-flow and 
$
\sum_{e \in P} \smash{\mu^i_e(\vec{x}) \leq \sum_{e \in Q} \mu^i_e (\vec{x})}
$
for all $s_i$-$t_i$-paths $P, Q$ with $x^i_e > 0$ for all $e \in P$.
\end{lemma}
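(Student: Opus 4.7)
The plan is to derive the lemma from the Karush–Kuhn–Tucker (KKT) conditions of each player's best-response problem. Fixing the strategies $\vec{x}^{-i}$ of all opponents, player~$i$'s cost as a function of $\vec{x}^i$ is
\begin{equation*}
C_i(\vec{x}^i,\vec{x}^{-i}) \;=\; \sum_{e \in E} a_{e,i} (x^i_e)^2 + \sum_{e \in E}\Bigl(a_{e,i}\textstyle\sum_{j \neq i} x^j_e + b_{e,i}\Bigr) x^i_e,
\end{equation*}
which, since $a_{e,i} > 0$, is a convex quadratic in $\vec{x}^i$. The feasible region is a polyhedron (an $s_i$-$t_i$-flow of rate $r_i$ with non-negative edge flows), so $\vec{x}^i$ is a best response if and only if it satisfies the KKT conditions of this convex program. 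A direct computation gives $\frac{\partial}{\partial x^i_e} C_i(\vec{x}) = \mu^i_e(\vec{x})$, so I would set this up as the starting point.

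Next I would introduce Lagrange multipliers $\pi^i_v \in \R$ for the flow-conservation constraints at each vertex $v$ and $\lambda^i_e \geq 0$ for the non-negativity constraint on each edge $e$. Stationarity reads $\mu^i_e(\vec{x}) = \pi^i_v - \pi^i_u + \lambda^i_e$ for every $e = (u,v)$, complementary slackness yields $\lambda^i_e x^i_e = 0$, and both primal and dual feasibility hold. Summing the stationarity identities along an arbitrary $s_i$-$t_i$-path telescopes the potentials to $\pi^i_{t_i} - \pi^i_{s_i}$; on a path $P$ with $x^i_e > 0$ on every edge, complementary slackness kills the $\lambda^i_e$ terms and gives equality, while on an arbitrary path $Q$ the non-negativity of $\lambda^i_e$ yields the stated inequality. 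This proves the ``only if'' direction.

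For the converse, I would assume the path inequality holds for player~$i$ and construct KKT multipliers explicitly. Define $\pi^i_v$ as the minimum $\mu^i$-weighted length of any $s_i$-$v$-path among paths that can be extended by a $v$-$t_i$-path using only edges with $x^i_e > 0$ (equivalently, set $\pi^i_{s_i} = 0$ and extend via shortest-path potentials in the $\mu^i$-weighted graph). The assumed path condition ensures that for every edge $e = (u,v)$ with $x^i_e > 0$, one has $\pi^i_v - \pi^i_u = \mu^i_e(\vec{x})$, while for every other edge $\pi^i_v - \pi^i_u \leq \mu^i_e(\vec{x})$; letting $\lambda^i_e := \mu^i_e(\vec{x}) - (\pi^i_v - \pi^i_u) \geq 0$ yields a full KKT certificate for $\vec{x}^i$ as a minimizer of player $i$'s convex problem. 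Since this holds for every player, $\vec x$ is a Nash equilibrium.

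The only subtle point in the above is the well-definedness of the potentials $\pi^i_v$ in the converse: one has to rule out ``cycles'' in the $\mu^i$-weighted support of player~$i$ that would yield inconsistent values, which follows because along any cycle carried by positive $x^i_e$-flow, one can perturb the flow in both directions while preserving feasibility, and the path inequality implies the total $\mu^i$-weight of such a cycle vanishes. Beyond this, the argument is a routine application of convex duality, and I would simply cite Bhaskar et al.~\cite{bhaskar2015} to keep the exposition short.
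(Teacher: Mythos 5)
The paper does not give its own proof of this lemma; it states it and points to Bhaskar et al.~\cite{bhaskar2015} as a reference. Your KKT / convex-duality sketch is exactly the standard argument one finds there, and since you conclude by saying you would in any case just cite \cite{bhaskar2015}, your exposition strategy coincides with the paper's.

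That said, the ``subtle point'' paragraph in your converse is aimed at the wrong worry and its resolution is incorrect. Well-definedness of shortest-path potentials $\pi^i_v = \mathrm{dist}_{\mu^i}(s_i,v)$ is not an issue here: since $a_{e,i} > 0$, $b_{e,i} \geq 0$ and all flows are nonnegative, $\mu^i_e(\vec x) \geq 0$ on every edge, so there are no negative cycles and the distances exist. The actual difficulty in the ``if'' direction is complementary slackness: one must show $\pi^i_v - \pi^i_u = \mu^i_e(\vec x)$ on every edge $e=(u,v)$ with $x^i_e>0$, which requires each such edge to lie on a shortest $s_i$-$t_i$-path. If the support of $\vec x^i$ contains a cycle $C$ that carries positive flow but is not contained in any positive-flow $s_i$-$t_i$-path, the hypothesis of the lemma is silent about $C$. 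Your claim that ``the path inequality implies the total $\mu^i$-weight of such a cycle vanishes'' is therefore not merely unsupported but false: on a positively-flowed cycle one has $\mu^i_e(\vec x) = a_{e,i}\bar x_e + b_{e,i} + a_{e,i} x^i_e \geq a_{e,i} x^i_e > 0$ on every edge, so the cycle weight is strictly positive. The clean fix is to observe that such cycles never arise at a best response (removing the cycle flow strictly decreases $C_i$), and hence to restrict the converse to flows whose path--cycle decomposition has no cycles, or equivalently to assume that every edge with $x^i_e>0$ lies on some positive-flow $s_i$-$t_i$-path. As written, your argument presupposes the conclusion it is meant to justify; deferring to the citation is the safer choice.
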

Lemma~\ref{lem:equilibrium:marginalcost} states that $\vec{x}$ is a Nash equilibrium if and only if all path used by player~$i$ are also shortest path for that player with respect to the marginal costs. This enables us to give another characterization based on (shortest path) potentials.
\begin{lemma} \label{lem:equilibrium:potentials}
The flow $\vec{x}$ is a Nash equilibrium if and only if for all $i \in [k]$ there is a potential vector $\vec{\pi}^i = (\pi^i_v)_{v \in V}$ with
$\pi^i_w \!-\! \pi^i_v = \mu^i_e (\vec{x})$ if $x_e^i \!>\! 0$, and $\pi^i_w \!-\! \pi^i_v \leq \mu^i_e (\vec{x})$ if $x_e^i \!=\! 0$
for all $e =(v,w) \in E$.
\end{lemma}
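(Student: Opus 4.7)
The plan is to prove both directions by invoking Lemma~\ref{lem:equilibrium:marginalcost} and identifying the $\vec{\pi}^i$ with shortest-path potentials with respect to the marginal costs $\mu^i_e(\vec{x})$.

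For the easier implication, assume that such potentials $\vec{\pi}^i$ exist. Fix a player $i$ and any $s_i$-$t_i$-path $P$ with $x^i_e > 0$ on every edge. Telescoping the prescribed equalities along $P$ yields $\sum_{e \in P} \mu^i_e(\vec{x}) = \pi^i_{t_i} - \pi^i_{s_i}$. For any other $s_i$-$t_i$-path $Q$, summing the potential conditions (equalities or the weaker inequalities) edge by edge gives $\sum_{e \in Q} \mu^i_e(\vec{x}) \geq \pi^i_{t_i} - \pi^i_{s_i}$. Hence Lemma~\ref{lem:equilibrium:marginalcost} applies and $\vec{x}$ is a Nash equilibrium.

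For the converse, suppose that $\vec{x}$ is a Nash equilibrium. For each player $i$, define $\pi^i_v$ to be the minimum of $\sum_{e \in R} \mu^i_e(\vec{x})$ over all $s_i$-$v$-paths $R$; since $\mu^i_e(\vec{x}) \geq 0$ this shortest-path distance is well defined. By construction, $\pi^i_w - \pi^i_v \leq \mu^i_e(\vec{x})$ holds for every edge $e = (v,w)$, which takes care of the inequality for edges with $x^i_e = 0$. It remains to establish equality whenever $x^i_e > 0$. To this end I would first rule out flow cycles in $\vec{x}^i$: if $C$ were a directed cycle with $x^i_e > 0$ for every $e \in C$, player $i$ could reduce their flow uniformly on $C$ by $\varepsilon > 0$ while preserving flow conservation, and the resulting first-order change in $C_i$ would be $-\varepsilon \sum_{e \in C} \mu^i_e(\vec{x}) < 0$ (strict positivity follows from $a_{e,i} > 0$ and $x^i_e > 0$), contradicting the equilibrium property. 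Thus $\vec{x}^i$ admits a decomposition into $s_i$-$t_i$-paths, and any used edge $e = (v,w)$ lies on some such path $P = P_1 \cdot e \cdot P_2$. By Lemma~\ref{lem:equilibrium:marginalcost}, $P$ has minimum marginal cost among all $s_i$-$t_i$-paths; any shorter $s_i$-$v$-prefix than $P_1$ would contradict this by extension with $e \cdot P_2$, so $P_1$ is a shortest $s_i$-$v$-path and, by the same argument, $P_1 \cdot e$ is a shortest $s_i$-$w$-path. Therefore $\pi^i_v = \mu(P_1)$ and $\pi^i_w = \mu(P_1) + \mu^i_e(\vec{x})$, giving the required equality.

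The main technical obstacle I anticipate is the flow-decomposition step, namely excluding directed flow cycles of player $i$, since Lemma~\ref{lem:equilibrium:marginalcost} only constrains complete $s_i$-$t_i$-paths and says nothing about cycles. The perturbation argument sketched above handles this cleanly, but it crucially relies on the strict positivity $a_{e,i} > 0$ built into the definition of an atomic splittable congestion game; once cycles are excluded the remainder is a clean shortest-path argument.
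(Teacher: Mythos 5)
Your proof is correct and reconstructs exactly the argument the paper alludes to but does not write out (the paper states Lemma~\ref{lem:equilibrium:potentials} as an immediate consequence of Lemma~\ref{lem:equilibrium:marginalcost} via shortest-path potentials): the forward direction telescopes the potential conditions to recover the path-cost comparison, and the converse takes $\pi^i_v$ to be the shortest $s_i$-$v$-distance with respect to $\mu^i_\cdot(\vec x)$, with the cycle-exclusion step — which is indeed the one nontrivial ingredient and where strict positivity $a_{e,i}>0$ is needed — handled cleanly by the local perturbation argument. The only technicality you do not address is that $G$ is only assumed weakly connected, so some vertices may be unreachable from $s_i$ and the shortest-path distance is then $+\infty$; one fixes this by assigning any sufficiently large finite potential to unreachable vertices (no edge leaves a reachable vertex into an unreachable one, so only the harmless direction of the inequality can be affected).
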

We denote the block flow vector with $\smash{\vec{x} = ((\vec x^1)^{\top},\dots,(\vec{x}^k)^{\top})^{\top}}$ and the block potential vector with $\vec{\pi} = ((\vec \pi^1)^{\top},\dots,(\vec{\pi}^k)^{\top})^{\top}$. 
For every edge $e \in E$, let $S_e \subseteq \range{k}$ be some subset of the players. Then we call the family $\support := (S_e)_{e \in E}$ of these sets a \emph{support}. We say an edge $e$ is \emph{active for player~$i$} if $i \in S_e$, and \emph{$e$ is inactive for player $i$} otherwise. We say a Nash equilibrium $\vec{x}$ \emph{has the support $\support$} if for every edge $e = (v,w) \in E$, $i \in S_e$ implies that $\pi^i_w - \pi^i_v = \mu^i_e (\vec{x})$. Thus, $\support$ is a support of the Nash equilibrium $\vec{x}$ if for every active edge the inequality $\pi^i_w - \pi^i_v \leq \mu^i_e (\vec{x})$ is satisfied with equality. In general, there are multiple supports for the same equilibrium, since for every player~$i$, any edge with $x_e^i = 0$ and $\pi^i_w - \pi^i_v = \mu^i_e (\vec{x})$ can be considered to be an active or an inactive edge.
Given a support of a Nash equilibrium, the equilibrium flow can be obtained by solving the system of equations $\pi^i_w - \pi^i_v = \mu^i_e (\vec{x}), i \in S_e, e = (v,w) \in E$ with the unknowns $\vec{\pi}$ and $\vec{x}$. For affine cost functions this reduces to a system of linear equations that can be solved explicitly. This will be the key element of our analysis and we discuss this in further detail in the next section.

\shortversiononly{
For ease of exposition, we defer all proofs to the appendix at the end of the paper. The reader may also consult the full version of this paper \cite{klimm2019preprint}.
}
 
\section{Equilibrium structure}

In this section we develop a characterization of equilibria in atomic splittable congestion games for different demand rates parametrized by a factor $\lambda$.
To this end, we use the conditions of Lemma~\ref{lem:equilibrium:potentials} to characterize all flows $\vec{x}$ and potentials $\vec{\pi}$ that describe Nash equilibria.
Given a fixed support $\support$, we call a flow $\vec{x}$ \emph{induced by a potential $\vec{\pi}$ wrt. the support $\support$} if for every edge $e \in E$ we have $\pi^i_w - \pi^i_v = \mu^i_e (\vec{x})$ for all $i \in S_e$ and $x_e^i = 0$ for all $i \notin S_e$.
As we will prove in Lemma~\ref{lem:incudedflow}, for a fixed support $\support$, the flow $\vec x$ induced by a potential $\vec \pi$ is unique.
That is, for every support $\support$, there is a well-defined function $\inducedflow_\support : \R^{nk} \to \R^{mk}$ such that $\vec x = \inducedflow_\support(\vec \pi)$ is a flow satisfying
\begin{equation}
\label{eq:inducedflow:system}
\begin{split}
\mu^i_e (\vec{x}) &= \pi^{i}_w - \pi^i_v 
	\phantom{0} \qquad
	\text{ if } i \in S_e, \\
x_e^i &= 0  
	\phantom{\pi^{i}_w - \pi^i_v} \qquad
	\text{ if } i \notin S_e
\end{split}
\end{equation}
for all edges $e = (v,w)$. Note that $\vec x$ need not be a flow satisfying any given demands and may even be negative. In the following, we will work towards finding those potential vectors $\vec \pi$ for which the induced flow $\inducedflow_{\support}(\vec \pi)$ are feasible for some demand vector $\lambda \vec r$ and some support $\support$.  
To this end, we call a potential vector $\vec \pi \in \R^{nk}$ a \emph{$\lambda$-potential} for the support $\support$ if $\inducedflow_{\support}(\vec \pi)$ is feasible for the demand $\lambda \vec r$ and satisfies the conditions of Lemma~\ref{lem:equilibrium:potentials}.  



The remainder of the section is structured as follows. In Section~\ref{sec:laplacian} we show that the induced flow functions $\inducedflow_{\support}$ are well-defined and express necessary and sufficient conditions on $\lambda$-potentials in terms of a matrix refered to as \emph{block Laplacian matrix}.
In Section~\ref{sec:potentialspace}, we use these characterizations in order to show that for a fixed support $\support$, the space of $\lambda$-potentials for some $\lambda \in [0,1]$ is a zero or one dimensional subset of $\R^{nk}$ under some non-degeneracy assumption. Finally, in Section~\ref{sec:neighbors}, we show that for a given support that admits a Nash equilibrium there are two well-defined \emph{neighboring supports} that admitting Nash equilibria for slightly higher or slightly lower demand rates.
The latter will be used in Section~\ref{sec:membership} in order to define the functions $\predF$ and $\succF$ that put the equilibrium computation problem into $\mathsf{PPAD}$. This also yields an algorithm for parametric equilibrium computation as a byproduct that we will study in more detail in Section~\ref{sec:parametric}.


\subsection{Weighted block Laplacians}

\label{sec:laplacian}

In this subsection, we assume that we are given a fixed support $\support$. For ease of exposition, we omit the dependence on $\support$ for $\inducedflow = \inducedflow_{\support}$ and related notations in this chapter.
In order to find a closed form representation of the solution $\vec{x}$ to the system \eqref{eq:inducedflow:system}, we need to introduce some notation. For any two players $i,j \in \range{k}$ and any edge $e \in E$ we define the coefficient $\omega^{ij}_e$ as follows: $\omega^{ij}_e := 1$ if $\{i,j\} \subseteq S_e$ and $\omega^{ij}_e := 0$ otherwise, i.e., the coefficient $\omega^{ij}_e$ indicates whether the edge~$e$ is active for both player~$i$ and player~$j$. We also write $\omega^i_e$ as a short hand for $\omega^{ii}_e$. For further reference, we define the diagonal matrices $\vec{\Omega}^i := \smash{\diag (\omega^{i}_{e_1}, \dotsc, \omega^{i}_{e_m})}$ and the block-diagonal matrix $\vec{\Omega}$ that contains all matrices $\vec{\Omega}^i$ as block-diagonal elements. For any edge $e \in E$, let $\kappa_e := |S_e|$ be the number of players using edge~$e$. We denote by $\bar{\vec{K}} := \smash{\diag\big( \frac{1}{\kappa_{e_1}+1}, \dotsc, \frac{1}{\kappa_{e_m} + 1} \big) \in \R^{m \times m}}$ the diagonal matrix containing all $\kappa$ values and by $\vec{K} \in \R^{mk \times mk}$ the block matrix that contains $k \times k$-times the block $\bar{\vec{K}}$.
We further define diagonal matrix $\tilde{\vec{C}}^{ij} \in \R^{m \times m}$ containing weights for every edge by
\begin{align*}
\tilde{\vec{C}}^{ij} :=
\begin{cases}
\diag\Bigl(\tfrac{\kappa_{e_1}+1-\omega_{e_1}^i}{(\kappa_{e_1}+1)a_{e_1,i}}\,\omega_{e_1}^{i},\dots,\tfrac{\kappa_{e_m}+1-\omega_{e_m}^i}{(\kappa_{e_m}+1)a_{e_m,i}}\,\omega_{e_m}^{i}\Bigr)
	&\text{if } i = j, \\
\diag\Big(\tfrac{1}{(\kappa_{e_1}+1)a_{e_1,j}}\,\omega_{e_1}^{j}, \dots, \tfrac{1}{(\kappa_{e_m}+1)a_{e_m,j}}\,\omega_{e_m}^{j}\Bigr) 
	&\text{otherwise.}
\end{cases}
\end{align*}
We define the $mk \times mk$ block-matrix $\tilde{\vec{C}}$ with the blocks $\tilde{\vec{C}}^{ii}$ on the block-diagonal and the blocks $- \vec{C}^{ij}$ as off-diagonal blocks, i.e.,
\[
\tilde{\vec{C}} =
\begin{pmatrix}
\tilde{\vec{C}}^{11} & \dotsm & - \tilde{\vec{C}}^{1k} \\
\vdots & \ddots & \vdots \\
-\tilde{\vec{C}}^{k1} & \dotsm & \tilde{\vec{C}}^{kk}
\end{pmatrix}
.
\]
Finally, we define the matrix $\vec{C} := \vec{\Omega} \tilde{\vec{C}}$.
In particular, $\vec{C}$ has the same block structure as $\tilde{\vec{C}}$, the diagonal blocks have the form $\smash{\vec{C}^{ii} = \smash{\tilde{\vec{C}}}^{ii}}$ and the off diagonal blocks have the form
\begin{align*}
\vec{C}^{ij} &:= \diag\Big(\tfrac{1}{(\kappa_{e_1}+1)a_{e_1,j}}\omega_{e_1}^{ij}, \dots, \tfrac{1}{(\kappa_{e_m}+1)a_{e_m,j}}\omega_{e_m}^{ij}\Bigr).
\end{align*}
The matrices $\vec{\Omega}^i$ encode the activity status of all edges for every player $i$, hence, $\vec{\Omega}$ encodes the support. The matrices $\bar{\vec{K}}$ and $\vec{K}$ contain normalization factors depending on the number of players. At last, the matrices $\tilde{\vec{C}}$ and $\vec{C}$ encode the linear relationship between potentials and flows in the conditions of Lemma~\ref{lem:equilibrium:potentials}---we will prove this in Lemma~\ref{lem:incudedflow} and Theorem~\ref{thm:lambdapotential} below. Overall, the above definitions imply
\begin{align*}
\vec{\tilde{C}} = (\vec{I}_{km} - \vec{K} \vec{\Omega}) \vec{A}^{-1}
\text{ and } 
\vec{C} = \vec{\Omega} (\vec{I}_{km} - \vec{K} \vec{\Omega}) \vec{A}^{-1}
,
\end{align*}
where $\vec{I}_{km}$ is the $km \times km$ identity matrix and $\vec{A}^{-1} = \diag\big( \frac{1}{a_{e_1, 1}}, \dotsc, \frac{1}{a_{e_m, k}} \big)$ is the diagonal matrix containing all reciprocals of the slopes of all cost functions.

We denote by $\vec{\Gamma}$ the vertex-edge-incidence matrix of the graph $G$ and let $\vec{G} \in \R^{nk \times mk}$ be the block-diagonal matrix containing $k$ copies of $\vec{\Gamma}$ on the diagonal. Then, in particular, the vector $\vec{G}^{\top} \! \vec{\vec{\pi}}$ contains all potential differences $\pi^i_w - \pi^i_v$ for all edges $e = (v,w)$ and players~$i$, and the vector $\vec{G} \vec{x}$ contains the excess flow of every player at every vertex. Using all these definitions, we obtain the following representation of the induced flow function $\inducedflow$.
\begin{lemma} \label{lem:incudedflow}
Given a fixed support $\support$ and a potential vector $\vec{\pi} \in \R^{nk}$ the system~\eqref{eq:inducedflow:system} has the unique solution
\[
\vec{x} = \vec{C} (\vec{G}^{\top} \vec{\pi} - \vec{b})
\]
where $\vec{b} = (b_{e,i})_{e \in E, i \in \range{k}}$ is the vector of all offsets of the cost functions. Thus, the induced flow function $\inducedflow$ is well-defined. Furthermore, the total flow wrt. $\vec{x}$ on edge~$e$ can be computed as $\bar{x}_e = \vec{u}_{e,i}^{\top} \vec{K} \vec{\Omega} \vec{A}^{-1} ( \vec{G}^{\top} \! \vec{\pi} - \vec{b})$ for all $i \in \range{k}$, where $\vec{u}_{e,i}$ is the unit vector corresponding to player~$i$ and edge~$e$.
\end{lemma}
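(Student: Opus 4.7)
The plan is to solve the system \eqref{eq:inducedflow:system} edge by edge and then verify that the resulting closed-form expression for $\vec{x}$ agrees with the matrix identity $\vec{x} = \vec{C}(\vec{G}^\top \vec{\pi} - \vec{b})$. Because the conditions in \eqref{eq:inducedflow:system} involve only flow values and potential differences on a single edge $e = (v,w)$, the system decomposes into $m$ independent small linear systems, so I will fix an arbitrary edge $e$ and work with the $k$ unknowns $(x_e^i)_{i \in [k]}$.

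For a fixed edge $e = (v,w)$, write $\Delta_e^i := \pi_w^i - \pi_v^i$, which is exactly the $(e,i)$-component of $\vec{G}^\top \vec{\pi}$. For $i \notin S_e$ the condition $x_e^i = 0$ is already explicit. For $i \in S_e$, expanding $\mu_e^i(\vec{x}) = a_{e,i}(\bar{x}_e + x_e^i) + b_{e,i}$ turns the equation $\mu_e^i(\vec{x}) = \Delta_e^i$ into
\begin{equation*}
\bar{x}_e + x_e^i = \frac{\Delta_e^i - b_{e,i}}{a_{e,i}} =: y_e^i.
\end{equation*}
Because $x_e^j = 0$ for $j \notin S_e$, we have $\bar{x}_e = \sum_{j \in S_e} x_e^j$, and summing the above display over the $\kappa_e$ elements of $S_e$ gives $(\kappa_e + 1)\bar{x}_e = \sum_{j \in S_e} y_e^j$, so
\begin{equation*}
\bar{x}_e = \frac{1}{\kappa_e + 1}\sum_{j=1}^k \omega_e^j \, y_e^j,
\qquad
x_e^i = \omega_e^i\Bigl(y_e^i - \bar{x}_e\Bigr).
\end{equation*}
This determines $\vec{x}$ uniquely from $\vec{\pi}$, establishing well-definedness of $\inducedflow$.

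It remains to match the formula. Writing $\omega_e^{i} \omega_e^{j} = \omega_e^{ij}$ and $(\omega_e^i)^2 = \omega_e^i$ (valid because $\omega_e^i \in \{0,1\}$), the expression for $x_e^i$ rearranges to
\begin{equation*}
x_e^i = \frac{(\kappa_e + 1 - \omega_e^i)\omega_e^i}{(\kappa_e+1)\,a_{e,i}}\bigl(\Delta_e^i - b_{e,i}\bigr) \;-\; \sum_{j \neq i} \frac{\omega_e^{ij}}{(\kappa_e+1)\,a_{e,j}}\bigl(\Delta_e^j - b_{e,j}\bigr),
\end{equation*}
and direct inspection of the diagonal and off-diagonal blocks of $\vec{C} = \vec{\Omega}\tilde{\vec{C}}$ shows that the coefficients of $\Delta_e^j - b_{e,j}$ above are precisely the $((e,i),(e,j))$-entries of $\vec{C}$. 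Since $\vec{G}^\top \vec{\pi} - \vec{b}$ has entries $\Delta_e^j - b_{e,j}$, this proves $\vec{x} = \vec{C}(\vec{G}^\top \vec{\pi} - \vec{b})$. The analogous bookkeeping for $\bar{x}_e$: $\vec{u}_{e,i}^\top \vec{K}$ picks the row of $\vec{K}$ indexed by $(e,i)$, which has the value $1/(\kappa_e+1)$ in every column indexed by the same edge $e$, so $\vec{u}_{e,i}^\top \vec{K}\vec{\Omega}\vec{A}^{-1}$ has $(e,j)$-entry $\omega_e^j/((\kappa_e+1)a_{e,j})$, and multiplying against $\vec{G}^\top \vec{\pi} - \vec{b}$ reproduces the formula for $\bar{x}_e$ above; independence on $i$ is reflected by the fact that every block-row of $\vec{K}$ equals $\bar{\vec{K}}$.

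The arguments are all elementary; the only mildly tricky point is verifying the explicit coefficient $(\kappa_e + 1 - \omega_e^i)\omega_e^i / ((\kappa_e+1)a_{e,i})$ on the diagonal, which comes from combining the $y_e^i$ term with the $-\omega_e^i\bar{x}_e$ correction. I do not foresee a real obstacle: the key observation is simply that \eqref{eq:inducedflow:system} decouples across edges and that each edge's subsystem can be solved in closed form by averaging.
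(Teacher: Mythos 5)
Your proof is correct, and it takes a genuinely different route from the paper's. The paper writes the full system \eqref{eq:inducedflow:system} as one large linear system $(\vec{I}_{mk} + \vec{\Omega}\vec{J}(\vec{\Omega}\vec{J})^\top)\vec{x} = \vec{\rho}$ and inverts it via the Sherman--Morrison--Woodbury formula, observing that $(\vec{I}_m + (\vec{\Omega}\vec{J})^\top\vec{\Omega}\vec{J})^{-1} = \bar{\vec{K}}$. You instead exploit that the system decouples across edges, solve each edge's $k$-variable subsystem directly by the ``sum over $S_e$ to get $\bar{x}_e$, then back-substitute'' averaging trick, and only afterwards verify that the resulting coefficients reproduce the entries of $\vec{C}$. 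Your version is more elementary (no matrix-inversion lemma needed), makes the origin of the $\frac{1}{\kappa_e+1}$ normalization and the diagonal/off-diagonal coefficients of $\vec{C}$ transparent, and gets the total-flow formula essentially for free since $\bar{x}_e$ is produced as an intermediate quantity. What you give up is the compactness of the matrix manipulation: the paper's route immediately yields the identity $\vec{C} = \vec{\Omega}(\vec{I}_{km} - \vec{K}\vec{\Omega})\vec{A}^{-1}$ in a form reused later (e.g. in Theorem~\ref{thm:neighboringLaplacians}), whereas you arrive at the same coefficients by componentwise bookkeeping. The one step you summarize rather than spell out --- ``direct inspection of the diagonal and off-diagonal blocks of $\vec{C}$'' --- does check out: the diagonal entry is $\tfrac{(\kappa_e+1-\omega_e^i)\omega_e^i}{(\kappa_e+1)a_{e,i}}$ and the off-diagonal entry for $j\neq i$ is $-\tfrac{\omega_e^{ij}}{(\kappa_e+1)a_{e,j}}$, exactly matching your rearranged formula for $x_e^i$.
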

\begin{proof}
Using the definition of $\mu_e^i(\vec{x}) = a_{e,i} \big( \sum_{j \in S_e} x_e^j + x^i_e \big) + b_{e,i}$ we can rewrite the system~\eqref{eq:inducedflow:system} as 
\begin{align*}
2 x^i_e + \sum_{\substack{j \in S_e \\ j \neq i}} x_e^j &= \rho^i_e
	&\text{if } i \in S_e , \\
x_e^i &= \rho^i_e
	&\text{if } i \notin S_e ,
\end{align*}
where $\rho^i_e = \omega^i_e \, \frac{\pi^i_w - \pi^i_v - b_{e,i}}{a_{e,i}}$. Denote by $\vec{J} \in \R^{mk \times m}$ the block matrix that contains $k$ copies of the $m \times m$ identity matrix $\vec{I}_{m}$. Then the above system can be expressed as
$
\big( \vec{I}_{mk} + \vec{\Omega} \vec{J} ( \vec{\Omega} \vec{J} )^{\top} \big) \vec{x} = \vec{\rho}
$. 
The Sherman-Morrison-Woodbury formula states that 
\[
\big( \vec{I}_{km} + \vec{\Omega} \vec{J} (\vec{\Omega} \vec{J} )^{\top} \big)^{-1}
= \vec{I}_{km} - \vec{\Omega} \vec{J} \big( \vec{I}_m + (\vec{\Omega} \vec{J} )^{\top} \vec{\Omega} \vec{J} \big)^{-1} ( \vec{\Omega} \vec{J} )^{\top}
.
\]
We observe $( \vec{I}_m + (\vec{\Omega} \vec{J})^{\top} \vec{\Omega} \vec{J} )^{-1}=  \big(\vec{I}_m + \sum_{j=1}^k \vec{\Omega}^{j}\big)^{-1} = \big(\! \diag ( \kappa_{e_1} + 1, \dotsc, \kappa_{e_m} +1  ) \big)^{-1} = \bar{\vec{K}}$ and, thus, confirm that this inverse exists. Since $\vec{\rho} = \vec{\Omega} \tilde{\vec{A}}^{-1} (\vec{G}^{\top} \vec{\pi} - \vec{b})$, the system \eqref{eq:inducedflow:system} has the unique solution
$
\vec{x} = \vec{I}_{km} - \vec{\Omega} \vec{J} \bar{\vec{K}} ( \vec{\Omega} \vec{J} )^{\top} \! \vec{\rho}
=  \vec{C} (\vec{G}^{\top} \! \vec{\pi} - \vec{b})
.
$

By definition of the matrices $\vec{K}$ and $\vec{\Omega}$, we see that if we fix some edge $e \in E$, then for every player $i \in \range{k}$ the rows of the matrix $\vec{K} \vec{\Omega}$ are the same. In particular, we obtain
\begin{equation} \label{lem:incudedflow:subclaim}
\vec{u}_{e,i}^{\top} \vec{K} \vec{\Omega} = \frac{1}{\kappa_e + 1} \sum_{j = 1}^{k} \vec{u}_{e,j}^{\top} \vec{\Omega} = \frac{1}{\kappa_e + 1} \sum_{j \in S_e} \vec{u}_{e,j}^{\top}
.
\end{equation}
Then, the total flow on edge $e$ is
\begin{align*}
\bar{x}_e 
&= \sum_{j \in S_e} x_e^j 
= \sum_{j \in S_e} \vec{u}_{e,j}^{\top} \vec{C} \big( \vec{G}^{\top} \vec{\pi} - \vec{b} \big) \\
&= \Big( \sum_{j \in S_e} \vec{u}_{e,j}^{\top} - \sum_{j \in S_e} \vec{u}_{e,j}^{\top} \vec{K} \vec{\Omega} \Big) \tilde{\vec{C}} \big( \vec{G}^{\top} \vec{\pi} - \vec{b} \big) \\
&\stackrel{\eqref{lem:incudedflow:subclaim}}{=} \Big( (\kappa_e+ 1) \vec{u}_{e,i}^{\top} \vec{K} \vec{\Omega} - \kappa_e \vec{u}_{e,i}^{\top} \vec{K} \vec{\Omega} \Big) \tilde{\vec{C}} \big( \vec{G}^{\top} \vec{\pi} - \vec{b} \big) \\
&= \vec{u}_{e,i}^{\top} \vec{K} \vec{\Omega} \tilde{\vec{C}} \big( \vec{G}^{\top} \vec{\pi} - \vec{b} \big)
.
\end{align*}
\end{proof}

Given a potential vector $\vec{\pi}$, Lemma~\ref{lem:incudedflow} allows to compute the flow $\vec{x} = \inducedflow (\vec{\pi}) = \vec{C} ( \vec{G}^{\top} \! \vec{\pi} - \vec{b})$ induced by the potential $\vec{\pi}$. Then the vector $\vec{y} := \vec{G} \vec{x}$ contains the excess flows for every player at every vertex, i.e., the values $y^i_v = \sum_{e \in \delta^+(v)} x^i_e - \sum_{e \in \delta^-(v)} x^i_e$. 

\begin{definition}
For every support, we refer to the matrix $\vec{L} := \vec{G} \vec{C} \vec{G}^{\top}$ as the \emph{block Laplacian matrix} of the support $\mathcal{S}$.
\end{definition}
By the definition of the matrices $\vec{C}$ and $\vec{G}$, we observe that $\vec{L}$ has the form
\[
\vec{L} = 
\begin{pmatrix}
\phantom{-}\vec{L}^{11}		& -\vec{L}^{12} 	& \dotsm & -\vec{L}^{1k} \\
-\vec{L}^{21}	& \phantom{-}\vec{L}^{22}		& \dotsm & -\vec{L}^{2k} \\
\vdots 			& \vdots 			& \ddots& \vdots 		 \\
-\vec{L}^{k1}	& -\vec{L}^{k2} 	& \dotsm & \phantom{-}\vec{L}^{kk}
\end{pmatrix}
\]
where every block $\vec{L}^{ij} = \vec{\Gamma} \vec{C}^{ij} \vec{\Gamma}^{\top}$ is a weighted Laplacian matrix of $G$ with weights $\vec{C}^{ij}$ justifying the name \emph{block Laplacian matrix}. By definition, the excess vector of the flow $\vec{x}$ induced by some potential $\vec{\pi}$ wrt.\ the support $\support$ can be expressed as
\begin{equation}\label{eq:laplacian:excess}
\vec{y} = \vec{L} \vec{\pi} - \vec{d}
,
\end{equation}
where $\vec{d} := \vec{G} \vec{C} \vec{b}$.
Overall, we can compute for every potential vector $\vec{\pi} \in \R^{nk}$ an induced flow vector $\vec{x} = \inducedflow ( \vec{\pi} )$ such that $\vec{\pi}$ and $\vec{x}$ satisfy~\eqref{eq:inducedflow:system}. To ensure that, furthermore, the induced flow $\vec{x} = \inducedflow (\vec{\pi})$ is a Nash equilibrium, we additionally need that $\vec{x}$ is non-negative and satisfies the flow conservation as well as the inequality conditions of Lemma~\ref{lem:equilibrium:potentials}.
To this end, we define the vector
$\smash{\Delta \vec{y} := ( (\Delta \vec{y}^1)^{\top}, \dotsc, (\Delta \vec{y}^k)^{\top} )^\top}$
with $\smash{\Delta y^i_v = -r_i}$ if $v = s_i$, $\smash{\Delta y^i_v = r_i}$ if $v = t_i$ and $\smash{\Delta y^i_v = 0}$ otherwise. We refer to $\Delta \vec{y}$ as the \emph{excess direction}.

We introduce a new coefficient $\sigma^i_{e} := 1$ if $i \in S_e$ and $\sigma^i_{e} = -1$ if $i \notin S_e$. We then define the diagonal matrices $\vec{\Sigma}^i := \smash{ \diag (\sigma^i_{e_1}, \dots, \sigma^i_{e_m}) }$ for every player $i$ and the block diagonal matrix $\vec{\Sigma}$ containing the matrices $\vec{\Sigma}^i$ as block entries, similar to the matrices $\vec{\Omega}^i$ and $\vec{\Omega}$. Then we obtain the following result characterizing $\lambda$-potentials, i.e., potentials that induce a Nash equilibrium flow for demands $\lambda \vec{r}$ for some $\lambda \in [0,1]$.
\begin{theorem} \label{thm:lambdapotential}
The potential vector $\vec{\pi} \in \R^{nk}$ is a $\lambda$-potential with $\lambda \in [0,1]$ if and only if
\begin{subequations}
\label{eq:thm:lambdapotential}
\begin{align}
\vec{L} \vec{\pi} - \vec{d} &= \lambda \Delta \vec{y} 
	\label{eq:thm:lambdapotential:laplace} \\
\label{eq:thm:lambdapotential:constraints}
\vec{W} (\vec{G}^{\top} \! \vec{\pi} - \vec{b} ) &\geq \vec{0}
,
\end{align} 
where $\vec{W} := \vec{\Sigma} \tilde{\vec{C}}$.
\end{subequations}
\end{theorem}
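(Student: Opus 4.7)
The plan is to split the definition of a $\lambda$-potential into its two constituent parts — feasibility of the induced flow $\vec{x} = \inducedflow(\vec{\pi})$ as an $s_i$-$t_i$-flow of rate $\lambda r_i$ for every player $i$, and satisfaction of the potential-based equilibrium conditions of Lemma~\ref{lem:equilibrium:potentials} — and to match the first part to equation~\eqref{eq:thm:lambdapotential:laplace} and the second to the inequality~\eqref{eq:thm:lambdapotential:constraints}.

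First I would derive~\eqref{eq:thm:lambdapotential:laplace} from flow conservation. By Lemma~\ref{lem:incudedflow}, $\vec{x} = \vec{C}(\vec{G}^{\top}\vec{\pi} - \vec{b})$, so left-multiplying by $\vec{G}$ produces the block excess vector $\vec{y} = \vec{G}\vec{x} = \vec{G}\vec{C}\vec{G}^{\top}\vec{\pi} - \vec{G}\vec{C}\vec{b} = \vec{L}\vec{\pi} - \vec{d}$. Requiring $\vec{y} = \lambda \Delta \vec{y}$ is precisely flow conservation with demand rates $\lambda r_i$ at every vertex of every player, and conversely, any $\vec{\pi}$ satisfying~\eqref{eq:thm:lambdapotential:laplace} produces an induced flow that respects those demands.

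Second, I would examine~\eqref{eq:thm:lambdapotential:constraints} row-by-row. Since $\vec{\Sigma}$ is block diagonal with $\pm 1$ entries, the $(i,e)$-row of $\vec{W} = \vec{\Sigma}\tilde{\vec{C}}$ equals $\sigma^i_e$ times the corresponding row of $\tilde{\vec{C}}$. If $i \in S_e$, then $\sigma^i_e = \omega^i_e = 1$, so the $(i,e)$-row of $\vec{W}$ coincides with the $(i,e)$-row of $\vec{C} = \vec{\Omega}\tilde{\vec{C}}$, whence $(\vec{W}(\vec{G}^{\top}\vec{\pi}-\vec{b}))_{(i,e)} = x^i_e$ and the condition $\geq 0$ is flow non-negativity. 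If $i \notin S_e$, then $\sigma^i_e = -1$ and $x^i_e = 0$ is already built into $\inducedflow$; using the explicit formulas for $\tilde{C}^{ii}_{e,e}$ and $\tilde{C}^{ij}_{e,e}$ together with the expression $\bar{x}_e = \tfrac{1}{\kappa_e+1}\sum_{j \in S_e}\tfrac{1}{a_{e,j}}(\pi^j_w - \pi^j_v - b_{e,j})$ supplied by Lemma~\ref{lem:incudedflow}, I would collapse $(\vec{W}(\vec{G}^{\top}\vec{\pi}-\vec{b}))_{(i,e)}$ to a positive multiple of $\mu^i_e(\vec{x}) - (\pi^i_w - \pi^i_v)$, which is exactly the shortest-path inequality of Lemma~\ref{lem:equilibrium:potentials} on an inactive edge (recalling $\mu^i_e(\vec{x}) = a_{e,i}\bar{x}_e + b_{e,i}$ when $x^i_e = 0$).

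The main technical obstacle is this computation in the case $i \notin S_e$, where the minus sign from $\sigma^i_e$ has to reconcile with the asymmetry between the diagonal and off-diagonal formulas of $\tilde{\vec{C}}$. The cleanest route is to work from the compact identity $\tilde{\vec{C}} = (\vec{I}_{km} - \vec{K}\vec{\Omega})\vec{A}^{-1}$ established in Lemma~\ref{lem:incudedflow}, which makes transparent how the diagonal and off-diagonal contributions combine into the desired marginal-cost difference, and to observe that multiplication by $\sigma^i_e = -1$ is exactly what orients the resulting expression as $\geq 0$. Once both cases of row $(i,e)$ are verified, the biconditional follows directly: the forward direction reads off~\eqref{eq:thm:lambdapotential:laplace} and~\eqref{eq:thm:lambdapotential:constraints} from the $\lambda$-potential property, and the reverse direction assembles feasibility (from the equation) with non-negativity on active edges and the shortest-path inequalities on inactive edges (from the inequality) into all hypotheses of Lemma~\ref{lem:equilibrium:potentials}.
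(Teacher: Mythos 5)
Your proposal is correct and takes essentially the same route as the paper: read off~\eqref{eq:thm:lambdapotential:laplace} as flow conservation for demand $\lambda\vec{r}$ via $\vec{y} = \vec{G}\vec{x} = \vec{L}\vec{\pi} - \vec{d}$, then check~\eqref{eq:thm:lambdapotential:constraints} row-by-row, splitting on whether $i \in S_e$ (non-negativity of the induced flow) or $i \notin S_e$ (the shortest-path inequality). The paper also uses the compact identity $\tilde{\vec{C}} = (\vec{I}_{km} - \vec{K}\vec{\Omega})\vec{A}^{-1}$ together with the $\bar{x}_e$ formula from Lemma~\ref{lem:incudedflow} to collapse the inactive-edge row to $(\mu^i_e(\vec{x}) - (\pi^i_w - \pi^i_v))/a_{e,i} \geq 0$, exactly as you suggest as the cleanest route.
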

\begin{proof}
A potential vector $\vec{\pi}$ satisfies Equation~\ref{eq:thm:lambdapotential:laplace} if and only if $\vec{\pi}$ and $\inducedflow (\vec{\pi})$ satisfy system~\ref{eq:inducedflow:system} and the induced flow $\inducedflow (\vec{\pi})$ satisfies the demand vector  $\lambda \vec{r}$.

Now fix some player~$i$ and some edge~$e$. Then the row of \eqref{eq:thm:lambdapotential:constraints} that corresponds to the pair $(e,i)$ yields
\begin{equation} \label{eq:thm:lambdapotential:subclaim}
\sigma^i_e \vec{u}^{\top}_{e,i} \tilde{\vec{C}} (\vec{G}^{\top} \! \vec{\pi} - \vec{b} ) \geq \vec{0}
,
\end{equation}
where, again, $\vec{u}_{e,i}$ is the unit vector corresponding to the pair $(e,i)$. If $i \in S_e$, then $\sigma^i_e = \omega^i_e = 1$. Thus, $\sigma^i_e \vec{u}_{e,i}^{\top} \tilde{\vec{C}} = \vec{u}_{e,i}^{\top} \vec{C}$. Hence, \eqref{eq:thm:lambdapotential:subclaim} yields
$0 \leq \vec{u}_{e,i}^{\top} \vec{C} (\vec{G}^{\top} \vec{\pi} - \vec{b}) = \vec{u}_{e,i}^{\top} \inducedflow (\vec{\pi})$ ensuring the non-negativity of the induced flow $\inducedflow (\vec{\pi})$. If, on the other hand, $i \notin S_e$, we can use the formula for the total flow from Lemma~\ref{lem:incudedflow}, and Equation~\eqref{eq:thm:lambdapotential:subclaim} yields
\begin{align*}
0 &\leq
- \vec{u}_{e,i}^{\top} \tilde{\vec{C}} (\vec{G}^{\top} \! \vec{\pi} - \vec{b})  
= - \Big( \vec{u}_{e,i}^{\top} \vec{A}^{-1} \big(\vec{G}^{\top} \! \vec{\pi} - \vec{b}\big) - \bar{x}_e \Big) \\
&= - \Big( \frac{\pi_w^i - \pi_v^i - b_e^i}{a_{e,i}} - \bar{x}_e \Big)
= - \Big( \frac{ \pi_w^i - \pi_v^i - \mu_e^i(\vec{x}) }{a_{e,i}} \Big)
.
\end{align*}
Multiplying this inequality with $- a_{e,i}$ shows that it is equivalent to the inequality conditions in Lemma~\ref{lem:equilibrium:potentials}. 
Hence, a potential satisfies the inequalities \eqref{eq:thm:lambdapotential:constraints} if and only if its induced flow $\inducedflow (\vec{\pi})$ is non-negative for all $i \in S_e$ and the inequality conditions of Lemma~\ref{lem:equilibrium:potentials} are met for all $i \notin S_e$.
Overall, we have shown that the system~\eqref{eq:thm:lambdapotential} is equivalent to the conditions of Lemma~\ref{lem:equilibrium:potentials} concluding the poof.
\end{proof}
Informally, Theorem~\ref{thm:lambdapotential} can be read as follows. If $\vec{\pi}$ is a solution to~\eqref{eq:thm:lambdapotential} then Equation~\eqref{eq:thm:lambdapotential:laplace} ensures that the induced flow $\inducedflow (\vec{\pi})$ is a flow satisfying the demands $\lambda \vec{r}$. The induced flow $\inducedflow (\vec{\pi})$ and $\vec{\pi}$ satisfy system~\eqref{eq:inducedflow:system} by definition. Finally, Equation~\eqref{eq:thm:lambdapotential:constraints} ensures that the flow is non-negative on active edges and that the potential inequalities from Lemma~\ref{lem:equilibrium:potentials} are satisfied for inactive edges.

\subsection{Potential space and line segments}

\label{sec:potentialspace}

Theorem~\ref{thm:lambdapotential} gives a characterization of $\lambda$-potentials, i.e, of potentials that induce to Nash equilibrium flows. The aim of this section is to characterize the set of \emph{all} $\lambda$-potentials of some fixed support, if such potentials exist.
Before we proceed, we have to address some natural sources of ambiguity in our setting. It is obvious from Lemma~\ref{lem:equilibrium:potentials} as well as from Lemma~\ref{lem:incudedflow} that the induced flow does only depend on potential differences on edges rather than absolute potential values. In fact, the vertex-edge-incidence matrix $\vec{\Gamma}$ has rank $n-1$ (assuming that $G$ is weakly connected) and its left nullspace is spanned by the all-ones-vector $\1 \in \R^{n}$. Hence, the left nullspace of the matrix $\vec{G}$ is
\begin{align*}
\mathcal{N} 
	:=  \big\{ (\alpha_1 \1^{\top}, \dotsc, \alpha_k \1^{\top})^\top \in \R^{nk} \; \big\vert \; \alpha_1, \dotsc, \alpha_k \in \R \big\}
.
\end{align*}
To avoid ambiguity in the potential space, we restrict ourselves to potentials that are normalized to zero at the source vertices of the players. That is, we consider the vector space
\begin{align*}
\potentialSpace := \big\{ \vec{\pi} \in \R^{nk} \; \vert \; \pi^i_{s_i} = 0 \text{ for all players } i \in \range{k} \big\}
\end{align*}
that we refer to as the \emph{potential space}.
The potential space is isomorphic to $\R^{nk} / \mathcal{N}$ and has the advantage that the potentials $\pi^i_v$ can be interpreted as the marginal costs of any $s_i$-$v$-path used by player $i$.
Another technical problem with ambiguity may arise if not all vertices of $G$ are connected by active edges for some player. In this case, the potential of that player can be shifted by a constant on each of the connected components without changing the flow. To avoid this problem, we say a support $\mathcal{S}$ is a \emph{total support} if, for every player $i$, the subgraph of $G$ containing only edges~$e$ with $i \in S_e$ is connected. It is not hard to show, that for every Nash equilibrium $\vec x$, there is a corresponding total support $\support$ such that it is no restriction to consider total supports only. 

Denote by $\potentialRegion$ the set of all $\lambda$-potentials for the support $\support$. With the characterization from Theorem~\ref{thm:lambdapotential} we obtain 
\begin{align*}
\potentialRegion = \big\{
	\vec{\pi} \in \potentialSpace
	\; \vert \;
	\exists \lambda \in [0,1] :\;
	&\vec{L} \vec{\pi} - \vec{d} = \lambda \Delta \vec{y}, 
	\vec{W} (\vec{G}^{\top} \! \vec{\pi} - \vec{b} ) \geq \vec{0}
\big\}
.
\end{align*}
We say, a support is \emph{feasible} if it is total and $\potentialRegion \neq \emptyset$, i.e., if there is some $\lambda$ such that there is a Nash equilibrium for the demand vector $\lambda \vec{r}$. 
Our next step is to find solutions to the equation $\vec{L} \vec{\pi} - \vec{d} = \lambda \Delta \vec{y}$. First, we observe the following properties of the matrix $\vec{C}$.

\begin{lemma} \label{lem:matricesC}
If $\support$ is total, then $\ker(\vec{C} \vec{G}^{\top}) = \mathcal{N}$.
\end{lemma}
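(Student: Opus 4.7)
The plan is to prove both inclusions separately. The easy direction is $\mathcal{N} \subseteq \ker(\vec{C}\vec{G}^{\top})$: if $\vec{\pi} \in \mathcal{N}$, then $\vec{\pi}^i = \alpha_i \1$ for each player $i$, and since $\vec{\Gamma}^{\top} \1 = \vec{0}$ (the all-ones vector spans the left nullspace of $\vec{\Gamma}$), we get $\vec{G}^{\top} \vec{\pi} = \vec{0}$ and hence $\vec{C}\vec{G}^{\top}\vec{\pi} = \vec{0}$.

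The main direction is $\ker(\vec{C}\vec{G}^{\top}) \subseteq \mathcal{N}$, and here the idea is to reinterpret $\vec{C}\vec{G}^{\top}\vec{\pi}$ as an induced flow. By Lemma~\ref{lem:incudedflow}, $\vec{C}(\vec{G}^{\top}\vec{\pi} - \vec{b})$ is the unique solution $\vec{x}$ to the system~\eqref{eq:inducedflow:system} determined by $\vec{\pi}$ and the offsets $\vec{b}$. Setting $\vec{b} = \vec{0}$ (this is legitimate since the lemma holds for arbitrary offsets, and the kernel statement involves $\vec{C}\vec{G}^\top$ alone), for $\vec{\pi} \in \ker(\vec{C}\vec{G}^{\top})$ the induced flow is $\vec{x} = \vec{0}$. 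The defining equations~\eqref{eq:inducedflow:system} then read $\mu^i_e(\vec{0}) = \pi^i_w - \pi^i_v$ for all $i \in S_e$ and all $e = (v,w) \in E$. Since $\vec{x} = \vec{0}$ and $\vec{b} = \vec{0}$ imply $\mu^i_e(\vec{0}) = 0$, we conclude that $\pi^i_w = \pi^i_v$ for every edge $e = (v,w)$ active for player $i$.

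Now totality comes in: since $\support$ is total, for each player $i$ the subgraph of edges active for $i$ is spanning and connected. The identity $\pi^i_w = \pi^i_v$ along every edge of this spanning subgraph propagates to yield $\pi^i_v = \pi^i_{s_i}$ for all $v \in V$, so $\vec{\pi}^i$ is a constant multiple of $\1$. Doing this for every player yields $\vec{\pi} \in \mathcal{N}$, concluding the other inclusion.

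The only subtle point, and arguably the main obstacle, is recognizing that $\vec{C}\vec{G}^{\top}\vec{\pi}$ is the induced flow for offsets $\vec{b} = \vec{0}$; once this is noted, the result follows from the combinatorial meaning of the system~\eqref{eq:inducedflow:system} together with the totality hypothesis. The hypothesis cannot be dropped, because on a support whose active subgraph for some player $i$ has several components one can shift $\vec{\pi}^i$ independently on each component without leaving the kernel, producing elements of $\ker(\vec{C}\vec{G}^{\top})$ outside $\mathcal{N}$.
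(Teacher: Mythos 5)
Your proof is correct, but it takes a genuinely different route from the paper's. The paper argues purely in terms of linear algebra on the matrices: it observes that $\tilde{\vec{C}}$ is strictly diagonally dominant (hence non-singular), passes to the submatrices $\hat{\vec{C}}$ and $\hat{\vec{G}}$ obtained by deleting the rows and columns of inactive player--edge pairs, uses non-singularity to get $\ker(\hat{\vec{C}}\hat{\vec{G}}^{\top}) = \ker(\hat{\vec{G}}^{\top}) = \mathcal{N}$, and finally notes that the non-zero rows of $\vec{C}\vec{G}^{\top}$ coincide with $\hat{\vec{C}}\hat{\vec{G}}^{\top}$. You instead reinterpret the kernel condition combinatorially: since $\vec{C}$ and $\vec{G}$ do not depend on $\vec{b}$, you may set $\vec{b}=\vec{0}$, view $\vec{C}\vec{G}^{\top}\vec{\pi}$ as the flow induced by $\vec{\pi}$ via Lemma~\ref{lem:incudedflow}, and read off from the defining system~\eqref{eq:inducedflow:system} that $\pi_w^i = \pi_v^i$ across every edge active for player~$i$, after which totality forces each $\vec{\pi}^i$ to be constant. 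What each approach buys: the paper's version is self-contained and works directly with the matrix structure, at the cost of re-deriving a non-singularity fact that is already buried in the Sherman--Morrison--Woodbury step of Lemma~\ref{lem:incudedflow}; your version reuses that lemma as a black box and, in doing so, makes the role of the totality hypothesis transparent---it is exactly what lets the local equalities $\pi_w^i = \pi_v^i$ propagate across the whole vertex set. Both proofs rest on the same underlying invertibility of the active part of $\tilde{\vec{C}}$, just established via different arguments. Your closing remark on why totality cannot be dropped matches the paper's own motivation for introducing the notion.
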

\begin{proof}
Using the definition of $\tilde{\vec{C}}$ it can easily be seen that $\tilde{\vec{C}}$ is strictly diagonal-dominant and, thus, non-singular. Let $\hat{\vec{C}}$ be the matrix obtained from $\tilde{\vec{C}}$ by removing all rows and columns corresponding to player-edge-pairs $(e,i)$ with $i \notin S_e$. Then this matrix is still strictly diagonal dominant and non-singular. Let $\hat{\vec{G}}$ be the matrix obtained from $\vec{G}$ by deleting the columns corresponding to these player-edge-pairs. Then $\ker( \hat{\vec{C}} \hat{\vec{G}}^{\top} ) = \ker( \hat{\vec{G}}^{\top} ) = \mathcal{N}$ since $\support$ is a total support. Further, the non-zero rows of $\vec{C} \vec{G}^{\top}$ are exactly $\hat{\vec{C}} \hat{\vec{G}}^{\top}$. Hence, $\ker ( \vec{C} \vec{G}^{\top} ) = \ker (\hat{\vec{C}} \hat{\vec{G}}^{\top}) = \mathcal{N}$.
\end{proof}
Lemma~\ref{lem:matricesC} shows that the induced flow function $\inducedflow (\vec{\pi}) = \vec{C} (\vec{G}^{\top} \! \vec{\pi} - \vec{b})$ is injective on $\potentialSpace$, i.e., there is a one-to-one correspondence between (induced) flows and potentials. Furthermore, we see that the nullspace of the block Laplacian matrix $\vec{L}$ is a subset of $\mathcal{N}$. If $\ker( \vec{L} ) = \mathcal{N}$, we can solve $\vec{L} \vec{\pi} - \vec{d} = \lambda \Delta \vec{y}$ uniquely for $\vec{\pi}$ in $\potentialSpace$. In this case, there is a unique equilibrium potential (and flow) with the support $\support$ for every excess vector, and thus for every demand rate $\lambda \vec{r}$. If, on the other hand, $\dim(\ker(\vec{L})) > \dim(\mathcal{N})$, there are actually infinitely many equilibrium potentials, and thus also flows, for the same demand. If this situation occurs for some support $\support$, we call $\support$ $a$-degenerate.

\begin{definition}
A feasible support $\support$ is called \emph{$a$-degenerate} if $\dim(\ker(\vec{L})) > \dim(\mathcal{N})$. We say a game is $a$-degenerate, if there is a feasible support $\support$ that is $a$-degenerate.
\end{definition}

See Example~\ref{ex:8playerInfinite} in Section~\ref{sec:aDegeneracy} for a concrete example with a $a$-degenerate support $\support$ and a continuum of equilibria for fixed demands.

For a matrix $\vec{A}$, we denote by $\vec{A}^+$ a generalized inverse of $\vec{A}$, that is a matrix $\vec{A}^+$ satisfying $\vec{A} \vec{A}^+ \vec{A} = \vec{A}$. If the linear system $\vec{A} \vec{x} = \vec{b}$ is consistent, then $\vec{x} = \vec{A}^+ \vec{b}$ is a solution to the system for every generalized inverse $\vec{A}^+$. Using generalized inverses, we can express a solution to \eqref{eq:thm:lambdapotential:laplace} as $\vec{\pi} = \vec{L}^+ (\lambda \Delta \vec{y} + \vec{d})$. For non-$a$-degenerate supports, there is a unique solution in $\potentialSpace$ to this system. Thus there is a unique generalized inverse that we denote by $\vec{L}^*$ such that $\vec{\pi} = \vec{L}^* (\lambda \Delta \vec{y} + \vec{d})$ is the unique solution to \eqref{eq:thm:lambdapotential:laplace}.
\begin{corollary} \label{cor:finite}
A non-$a$-degenerate game has a finite number of Nash equilibria for every demand~$\vec{r}$.
\end{corollary}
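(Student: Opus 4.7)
The plan is to show that the number of Nash equilibria is bounded by the number of feasible supports, which is finite since there are at most $2^{km}$ subsets $S_e \subseteq [k]$ per edge.

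First I would argue that every Nash equilibrium $\vec{x}$ admits at least one total support $\support$ in the sense of Section~\ref{sec:potentialspace}. Given any Nash equilibrium, for each player~$i$ one can choose the edges on which $i$ puts positive flow plus, if needed, additional edges where the tight condition $\pi^i_w - \pi^i_v = \mu^i_e(\vec{x})$ of Lemma~\ref{lem:equilibrium:potentials} holds, so that the resulting edge set for player~$i$ is connected in $G$. Such a completion exists because the flow of player~$i$ uses an $s_i$-$t_i$-path and the tight-edge inequalities are consistent, so one may augment edges with zero flow but tight potential differences. Hence each Nash equilibrium corresponds to a (not necessarily unique) total support $\support$ of the equilibrium.

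Next I would invoke the machinery of Section~\ref{sec:laplacian}: for a fixed total support $\support$, Theorem~\ref{thm:lambdapotential} tells us that the $\lambda$-potentials in $\potentialSpace$ for $\lambda = 1$ are exactly the solutions of $\vec{L} \vec{\pi} - \vec{d} = \Delta\vec{y}$ that also satisfy $\vec{W}(\vec{G}^{\top}\vec{\pi} - \vec{b}) \geq \vec{0}$. Since the game is non-$a$-degenerate, $\ker(\vec{L}) = \mathcal{N}$ by assumption, so after the normalization $\pi^i_{s_i} = 0$ that defines $\potentialSpace$, the linear system has at most one solution, namely $\vec{\pi} = \vec{L}^{\ast}(\Delta\vec{y} + \vec{d})$. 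By Lemma~\ref{lem:matricesC}, the induced flow map $\inducedflow_{\support}$ is injective on $\potentialSpace$, so this unique potential yields at most one induced flow $\vec{x} = \inducedflow_{\support}(\vec{\pi})$. Thus each feasible total support contributes at most one Nash equilibrium.

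Combining the two steps, the set of Nash equilibria for demand $\vec{r}$ is contained in the image of a finite index set (the collection of total supports $\support$) under the map $\support \mapsto \inducedflow_{\support}(\vec{L}^{\ast}(\Delta\vec{y} + \vec{d}))$, and is therefore finite. The only subtlety I anticipate is confirming that the ``completion to a total support'' step is always available; this is immediate because one can add any tight edge to $S_e$ without violating the Nash conditions and $G$ itself is weakly connected, so connectivity of the player-specific subgraph can always be enforced.
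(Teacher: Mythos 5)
Your proposal is correct and follows essentially the same route as the paper: fix a (total, feasible) support, use non-$a$-degeneracy together with $\ker(\vec{C}\vec{G}^{\top}) = \mathcal{N}$ to deduce that $\vec{L}\vec{\pi} - \vec{d} = \Delta\vec{y}$ has at most one solution in $\potentialSpace$ and hence at most one equilibrium flow, then sum over the finitely many supports. The only difference is one of exposition -- you spell out the ``every equilibrium admits a total support'' step and the injectivity of $\inducedflow_{\support}$, both of which the paper treats as already established earlier in the section.
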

\begin{proof}
If the game is non-$a$-degenerate, then for fixed $\lambda = 1$ there is at most one $\vec{\pi} \in \potentialSpace$ satisfying $\vec{L} \vec{\pi} - \vec{d} = \lambda \Delta \vec{y}$. Hence, there is at most one equilibrium flow for every support $\support$. Since there are only finitely many supports, the claim follows.
\end{proof}
We assume that all supports $\support$ are non-$a$-degenerate and discuss how to deal with $a$-degeneracy in Section~\ref{sec:aDegeneracy}. 
Let us denote with $\Delta \vec{\pi} := \vec{L}^* \Delta \vec{y}$ the \emph{(unoriented) potential direction of the support $\support$} and by $\potentialOffset := \vec{L}^* \vec{d}$ the \emph{offset potential of support $\support$}.
Then the unique solution to \eqref{eq:thm:lambdapotential:laplace} can be expressed as $\vec{\pi} = \lambda \Delta \vec{\pi} + \potentialOffset$ and we see that all $\lambda$-potentials for a given support lie on a line in the potential space. The following lemma gives an alternative representation of $\potentialRegion$.

\begin{lemma} \label{lem:lambdapotentialsLineSegment}
For every feasible support $\support$, there are numbers $0 \leq \lambda^{\min}_{\support} \leq \lambda^{\max}_{\support} \leq 1$ such that
\[
\potentialRegion = \big\{ \lambda \Delta \vec{\pi} + \potentialOffset \;\vert\; \lambda^{\min}_{\support} \leq \lambda \leq \lambda^{\max}_{\support} \big \}
.
\]
\end{lemma}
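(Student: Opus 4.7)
The plan is to combine the equality constraint \eqref{eq:thm:lambdapotential:laplace} with the inequality constraint \eqref{eq:thm:lambdapotential:constraints} from Theorem~\ref{thm:lambdapotential} and show that, after substitution, the resulting feasible set is an intersection of finitely many half-lines in the scalar $\lambda$.

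First, I would invoke the standing assumption that $\support$ is non-$a$-degenerate together with Lemma~\ref{lem:matricesC}: this guarantees that the linear system \eqref{eq:thm:lambdapotential:laplace} has, for each fixed $\lambda$, at most one solution in the normalized potential space $\potentialSpace$. Feasibility of $\support$ implies that \eqref{eq:thm:lambdapotential:laplace} is consistent for at least one $\lambda$, and since $\lambda \Delta \vec{y} + \vec{d}$ depends affinely on $\lambda$ and the right-hand side lies in $\image(\vec L)$ for one $\lambda$, a standard argument (using that $\Delta \vec y$ and $\vec d$ lie in $\image(\vec L)$ because $\vec L^*$ is defined as a generalized inverse of $\vec L$) shows the system is consistent for \emph{every} $\lambda \in \R$. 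Thus the unique solution in $\potentialSpace$ is
\[
\vec \pi(\lambda) = \lambda\, \vec L^* \Delta \vec y + \vec L^* \vec d = \lambda\, \Delta \vec \pi + \potentialOffset.
\]

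Next, I substitute this parametric expression into the inequality \eqref{eq:thm:lambdapotential:constraints} to obtain
\[
\lambda \bigl(\vec W \vec G^\top \Delta \vec \pi\bigr) \;\geq\; \vec W \vec b - \vec W \vec G^\top \potentialOffset,
\]
which is a system of linear inequalities in the single real variable $\lambda$. Each individual row gives one of three possibilities: a lower bound on $\lambda$ (if the coefficient of $\lambda$ is positive), an upper bound on $\lambda$ (if it is negative), or a condition independent of $\lambda$ that is either satisfied or violated uniformly (if the coefficient is zero). Intersecting all such one-dimensional half-lines with the interval $[0,1]$ yields a closed, convex subset of $\R$, hence either the empty set or a closed interval $[\lambda^{\min}_{\support}, \lambda^{\max}_{\support}] \subseteq [0,1]$.

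Finally, feasibility of $\support$ means $\potentialRegion \neq \emptyset$, which rules out the empty case and establishes $0 \leq \lambda^{\min}_{\support} \leq \lambda^{\max}_{\support} \leq 1$. The displayed parametrization of $\potentialRegion$ then follows directly from the bijection between $\lambda \in [\lambda^{\min}_{\support}, \lambda^{\max}_{\support}]$ and $\vec \pi(\lambda) \in \potentialSpace$ established above. The main conceptual step — and the only nontrivial point — is the reduction from the $nk$-dimensional feasibility problem in $\vec\pi$ to a one-dimensional problem in $\lambda$, which relies crucially on non-$a$-degeneracy to collapse \eqref{eq:thm:lambdapotential:laplace} to a one-parameter family; the remaining argument is then elementary one-dimensional convex geometry.
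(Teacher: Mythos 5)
Your proof is correct and follows essentially the same route as the paper's: reduce the $nk$-dimensional feasibility set to the scalar $\lambda$ via the unique solution $\vec\pi(\lambda) = \lambda\Delta\vec\pi + \potentialOffset$ of \eqref{eq:thm:lambdapotential:laplace}, substitute into \eqref{eq:thm:lambdapotential:constraints}, and read off $\lambda^{\min}_{\support}$ and $\lambda^{\max}_{\support}$ as the binding lower and upper bounds. One small wobble in the justification: the parenthetical claim that $\Delta\vec y$ and $\vec d$ lie in $\image(\vec L)$ ``because $\vec L^*$ is a generalized inverse'' is circular — existence of a generalized inverse says nothing about the image; the correct reason is that both $\Delta\vec y$ and $\vec d = \vec G\vec C\vec b$ are orthogonal to $\mathcal{N} = \ker(\vec L^\top)$ (each has zero sum over the vertices in every player block), which is exactly the consistency condition for the Laplacian system at every $\lambda$.
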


\begin{proof}
Using that $\lambda \Delta \vec{\pi} + \potentialOffset$ is the unique solution in $\potentialSpace$ of $\vec{L} \vec{\pi} - \vec{d} = \lambda \Delta \vec{y}$, we get
\[
\potentialRegion =\big\{
	\vec{\pi} \in \potentialSpace
	\; \vert \;
	\exists \lambda \in [0,1] :
	\vec{W} \big(\vec{G}^{\top} \! ( \lambda \Delta \vec{\pi} + \potentialOffset ) - \vec{b} \big) \geq \vec{0}
\big\}
.
\]
Denote by $\vec{u}_{e,i}$ the unit vector corresponding to the edge-player-pair $(e,i)$. Then $\vec{w}_{e,i} := \vec{W}^{\top} \vec{u}_{e,i}$ is the $(e,i)$-th row of the matrix $\vec{W}$. 
Then, we express the constraint in $\potentialRegion$ as
\[
\lambda \vec{w}_{e,i}^{\top} \vec{G}^{\top} \! \Delta \vec{\pi} 
\geq 
\vec{w}_{e,i}^{\top} \big( \vec{b}  -  \vec{G}^{\top} \potentialOffset \big)
\]
for every $e \in E$ and $i \in \range{k}$. Solving for $\lambda$, we obtain that the value
\begin{align*}
\lambda^{\min}_{\support} &:=
	\max \bigg\{ \frac{\vec{w}_{e,i}^{\top} \big( \vec{b} \! - \! \vec{G}^{\top} \potentialOffset \big)}{\vec{w}_{e,i}^{\top} \vec{G}^{\top} \! \Delta \vec{\pi}}
		\, \bigg\vert \,
		\vec{w}_{e,i}^{\top} \vec{G}^{\top} \! \Delta \vec{\pi} < 0
	\bigg\} \cup \{0\} \\
\lambda^{\max}_{\support} &:=
	\min \bigg\{ \frac{\vec{w}_{e,i}^{\top} \big( \vec{b} \! - \! \vec{G}^{\top} \potentialOffset \big)}{\vec{w}_{e,i}^{\top} \vec{G}^{\top} \! \Delta \vec{\pi}}
		\, \bigg\vert \,
		\vec{w}_{e,i}^{\top} \vec{G}^{\top} \! \Delta \vec{\pi} > 0
	\bigg\} \cup \{1\}
\end{align*}
are the lower and upper bounds on $\lambda$. (Note that, additionally, $\vec{w}_{e,i}^{\top} (\vec{b} -  \vec{G}^{\top} \potentialOffset) \leq 0$ must be satisfied for all $(e,i)$ with $\vec{w}_{e,i}^{\top} \vec{G}^{\top} \! \Delta \vec{\pi} = 0$. Since we assumed that $\support$ is feasible, this is true.)
\end{proof}
Lemma~\ref{lem:lambdapotentialsLineSegment} shows that the set of all $\lambda$-potentials of a given support $\support$ is a line segment, potentially reduced to a single point when $\lambda_{\support}^{\min} = \lambda_{\support}^{\max}$. In particular, all $\lambda$-potentials of the support $\support$ are convex combinations of the two potentials $\vec{\pi}^{\min}_{\support} := \lambda^{\min}_{\support} \Delta \vec{\pi} + \potentialOffset$ and $\vec{\pi}^{\max}_{\support} := \lambda^{\max}_{\support} \Delta \vec{\pi} + \potentialOffset$, c.f. Figure~\ref{fig:lambdapotentials}. We denote by $\partial \potentialRegion = \{ \vec{\pi}^{\min}_{\support}, \vec{\pi}^{\max}_{\support} \}$ the set of \emph{boundary potentials of the support $\support$}.
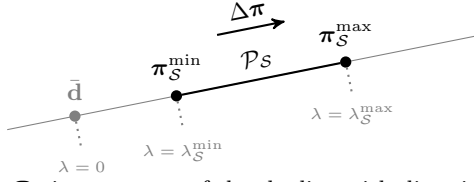
\begin{figure}[tb]
\begin{center}
\begin{tikzpicture}[scale=0.9]
\useasboundingbox (-1,-.5) rectangle (6,2);
\draw[gray]
	(-1,-1/5) -- (6,6/5);
	
\node[solid, gray] (anchor) at (0,0) {} node[above=0 of anchor, gray] {\footnotesize $\potentialOffset$};	

\node[solid] (pimin) at (1.5,1.5/5) {} node[above=0 of pimin] {\footnotesize $\vec{\pi}^{\min}_{\mathcal{S}}$};	
\node[solid] (pimax) at (4,4/5) {} node[above=0 of pimax] {\footnotesize $\vec{\pi}^{\max}_{\mathcal{S}}$};	

\draw[gray, dotted, thick]
	(anchor) -- ++(.1,-.5) node[below] {\tiny $\lambda = 0$}
	(pimin) -- ++(.1,-.5) node[below] {\tiny $\lambda = \lambda^{\min}_{\mathcal{S}}$}
	(pimax) -- ++(.1,-.5) node[below] {\tiny $\lambda = \lambda^{\max}_{\mathcal{S}}$};

\draw[thick]
	(pimin) -- (pimax) node[midway, above, sloped] {\footnotesize $\potentialRegion$};
	
\draw[thick, ->]
	(2.25 - 0.15,2.25/5 +0.75) -- (3.25 - 0.15, 3.25/5 + 0.75) node[midway, above, sloped] {\footnotesize $\Delta \vec{\pi}$};
\end{tikzpicture}
\caption{The set of all $\lambda$-potentials $\potentialRegion$ is a segment of the the line with direction $\Delta \vec{\pi}$ with extreme points $\vec{\pi}^{\min}_{\mathcal{S}}$ and $\vec{\pi}^{\max}_{\mathcal{S}}$. All potentials on this line segment induce Nash equilibria for demands $\lambda \vec{r}$ with $\lambda^{\min}_{\mathcal{S}} \leq \lambda \leq \lambda^{\max}_{\mathcal{S}}$.}
\label{fig:lambdapotentials}
\end{center}
\end{figure}

\subsection{Neighboring supports}
\label{sec:neighbors}

So far, we considered a fixed support $\support$ and characterized the $\lambda$-potentials for this demand. In this subsection, we study how different supports are connected in order to find a way to find feasible supports for all $\lambda \in [0,1]$. We begin by introducing the notion of \emph{neighboring supports}.

\begin{definition}
We say, two supports $\mathcal{S}, \mathcal{S}'$ are \emph{$(e,i)$-neighboring} if
\begin{enumerate}[(i)]
\item $S_{\tilde{e}} = S'_{\tilde{e}}$ for all $\tilde{e} \neq e$ and
\item $S'_e \setminus S_e = \{i\}$ or $S_e \setminus S'_e = \{i\}$.
\end{enumerate} 
\end{definition}
We denote by $N(\support, e, i)$ the unique $(e,i)$-neighboring support $\altSupport$.
By definition, two neighboring supports only differ by the activity status of one edge~$e$ for one particular player~$i$. Hence, the matrices $\vec{C}_{\support}$ and $\vec{C}_{\altSupport}$ as well as the block Laplacians $\vec{L}_{\support}$ and $\vec{L}_{\altSupport}$ are closely related. In fact, the following theorem shows that the respective matrices can be obtained from each other by a rank-$1$-update. To this end, we introduce for every support $\support$ and any pair of edge $e \in E$ and player $i \in \range{k}$ two new vectors
$\normalVec := (\vec{W}_{\support} \vec{G}^{\top})^{\top} \vec{u}_{e,i}$
and 
$\lightNormalVec[\support, e, i] := \big( (\vec{I}_{km} - \vec{K}_{\support} \vec{\Omega}_{\support} ) \vec{G}^{\top} \big)^{\top} \vec{u}_{e,i}$. Geometrically, the vector $\normalVec$ is the normal vector of the hyperplane corresponding to the $(e,i)$-th inequality of $\vec{W} (\vec{G}^{\top} \! \vec{\pi} - \vec{b}) \geq 0$ bounding the polytope $\potentialRegion$.

\begin{theorem} \label{thm:neighboringLaplacians}
For a pair $(e,i) \in E \times [k]$, let $\altSupport := N(\support, e, i)$ be the $(e,i)$-neighbor of the support $\support$.
Then,
\begin{enumerate}[(i)]
\item \label{thm:neighboringLaplacians:laplacians}
	$
	\vec{L}_{\mathcal{S}'} = \vec{L}_{\mathcal{S}} + \lightNormalVec \normalVec^{\top}
	$%
	,
\item \label{thm:neighboringLaplacians:inverses}
	$
	\vec{L}_{\mathcal{S}'}^{+}=\vec{L}_{\mathcal{S}}^{+} - \frac{1}{1 +  \normalVec^{\top} \vec{L}_{\mathcal{S}}^{+} \lightNormalVec} \vec{L}_{\mathcal{S}}^{+} \lightNormalVec \normalVec^{\top} \vec{L}^{+}_{\mathcal{S}}
	$%
	,
\item \label{thm:neighboringLaplacians:directions}
	$ \displaystyle
	\sgn \! \big( \normalVec^{\top}  \Delta \vec{\pi}_{\mathcal{S}} \big) = - \frac{\sigma_{\support}}{\sigma_{\altSupport}} \sgn \! \big( \normalVec[\altSupport, e, i]^{\top} \Delta \vec{\pi}_{\mathcal{S}'} \big)
	$%
	,
\end{enumerate}
where $\sigma_{\support} := \sgn( \det ( \hat{\vec{L}}_{\support} ) )$ is the sign of the determinant of the submatrix of $\vec{L}_{\support}$ obtained by deleting the first row and first column. 
\end{theorem}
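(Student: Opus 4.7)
The plan is to verify the three claims in turn, leveraging the factorization $\vec{C} = \vec{\Omega}(\vec{I}_{km} - \vec{K}\vec{\Omega})\vec{A}^{-1}$ established before Lemma~\ref{lem:incudedflow} and the fact that $\support$ and $\altSupport := N(\support, e, i)$ differ in only one place: the $(e,i)$-th diagonal entry of $\vec{\Omega}$ is toggled between $0$ and $1$, and the $e$-th diagonal entry of $\bar{\vec{K}}$ changes from $\frac{1}{\kappa_e + 1}$ to $\frac{1}{\kappa_e + 2}$ accordingly.

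For part~\eqref{thm:neighboringLaplacians:laplacians}, the strategy is to compute the difference $\vec{C}_{\altSupport} - \vec{C}_{\support}$ directly. Writing $\vec{\Omega}_{\altSupport} = \vec{\Omega}_{\support} \pm \vec{u}_{e,i}\vec{u}_{e,i}^\top$ (with the sign depending on whether $i$ is added or removed), expanding $\vec{C} = \vec{\Omega}(\vec{I}_{km} - \vec{K}\vec{\Omega})\vec{A}^{-1}$, and collecting terms, the difference factors as an outer product $\vec{p}\,\vec{r}^\top$ where $\vec{p}$ is proportional to the column of $(\vec{I}_{km}-\vec{K}_{\support}\vec{\Omega}_{\support})$ indexed by $(e,i)$ and $\vec{r}$ is proportional to the row of $\tilde{\vec{C}}_{\support}$ indexed by $(e,i)$. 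Sandwiching by $\vec{G}$ on the left and $\vec{G}^\top$ on the right then yields $\vec{L}_{\altSupport} - \vec{L}_{\support} = \vec{G}\vec{p}\vec{r}^\top\vec{G}^\top$, which matches $\lightNormalVec\,\normalVec^\top$ after the identifications from their definitions. The main bookkeeping is to show that the two contributions (from the change in $\vec{\Omega}$ and from the change in $\vec{K}$) combine cleanly into a single outer product; this is where the particular weights $\tfrac{\kappa_e+1-\omega_e^i}{(\kappa_e+1)a_{e,i}}$ used in $\tilde{\vec{C}}^{ii}$ are essential.

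For part~\eqref{thm:neighboringLaplacians:inverses}, I would apply the Sherman--Morrison formula to the rank-one update of part~\eqref{thm:neighboringLaplacians:laplacians}. Since we are inverting singular matrices, the usual derivation must be justified on the subspace orthogonal to the common nullspace $\mathcal{N}$. By Lemma~\ref{lem:matricesC} and non-$a$-degeneracy of both $\support$ and $\altSupport$, we have $\kernel(\vec{L}_{\support}) = \kernel(\vec{L}_{\altSupport}) = \mathcal{N}$. Both $\lightNormalVec$ and $\normalVec$ carry a factor of $\vec{G}^\top\vec{u}_{e,i}$ (respectively~$\vec{G}$), hence lie in the column space of $\vec{L}_{\support}$, so the derivation of Sherman--Morrison restricted to the orthogonal complement of $\mathcal{N}$ goes through and gives the stated formula for the generalized inverse $\vec{L}^+_{\altSupport}$ (equivalently $\vec{L}^*_{\altSupport}$) acting on vectors in the column space of $\vec{L}_{\support}$, which is where $\Delta\vec{y}$ and $\vec{d}$ live.

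For part~\eqref{thm:neighboringLaplacians:directions}, the plan is to substitute part~\eqref{thm:neighboringLaplacians:inverses} into $\Delta \vec{\pi}_{\altSupport} = \vec{L}^+_{\altSupport}\Delta\vec{y}$ and compute $\normalVec[\altSupport, e, i]^\top \Delta\vec{\pi}_{\altSupport}$, obtaining an expression in terms of $\normalVec^\top \Delta\vec{\pi}_{\support}$ scaled by $\frac{1}{1 + \normalVec^\top \vec{L}^+_{\support} \lightNormalVec}$, together with a global sign arising from the coefficient $\sigma^i_e$ in $\normalVec[\altSupport]$ flipping relative to $\normalVec[\support]$ when $(e,i)$ is toggled. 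To convert the scalar denominator into the claimed determinantal ratio, I would invoke the matrix determinant lemma on the punctured submatrices $\hat{\vec{L}}_{\support}$ and $\hat{\vec{L}}_{\altSupport}$ (these are invertible by non-$a$-degeneracy):
\begin{equation*}
\det(\hat{\vec{L}}_{\altSupport}) = \det(\hat{\vec{L}}_{\support})\bigl(1 + \normalVec^\top \vec{L}^+_{\support}\lightNormalVec\bigr),
\end{equation*}
so that $\sgn(1 + \normalVec^\top\vec{L}^+_{\support}\lightNormalVec) = \sigma_{\support}\sigma_{\altSupport}$. Putting the sign contributions together yields exactly the factor $-\sigma_{\support}/\sigma_{\altSupport}$ in the claim.

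The principal obstacle is the algebraic bookkeeping in part~\eqref{thm:neighboringLaplacians:laplacians}: showing that the simultaneous changes in $\vec{\Omega}$ and $\bar{\vec{K}}$ consolidate into a single outer product whose factors match the precise definitions of $\lightNormalVec$ and $\normalVec$ (which use $\vec{K}_{\support}$ and $\vec{\Omega}_{\support}$). A secondary subtlety, crucial for part~\eqref{thm:neighboringLaplacians:directions}, is the careful sign tracking through the changes in $\sigma^i_e$ and the Sherman--Morrison denominator, which together produce the minus sign that distinguishes the two orientations of the boundary hyperplane being crossed.
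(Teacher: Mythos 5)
Your plan tracks the paper's proof closely: compute $\vec{C}_{\altSupport} - \vec{C}_{\support}$ as a rank-one outer product, sandwich by $\vec{G},\vec{G}^{\top}$ to get~\emph{(\ref{thm:neighboringLaplacians:laplacians})}, apply the Sherman--Morrison--Woodbury formula for generalized inverses using that $\lightNormalVec$ and $\normalVec$ lie in the row/column space, and finally plug the update into $\Delta\vec{\pi}_{\altSupport}=\vec{L}^{+}_{\altSupport}\Delta\vec{y}$ and convert the SMW denominator into a ratio of determinants via the matrix determinant lemma on the punctured matrices. This is exactly the paper's route.

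One subtlety worth flagging, because it is easy to miss in the macro setup and your description glosses over it: the outer product in~\emph{(\ref{thm:neighboringLaplacians:laplacians})} is \emph{not} symmetric in $\support$ and $\altSupport$. The left factor uses $\altSupport$ (so $\lightNormalVec=\lightNormalVec[\altSupport,e,i]$, built from $\vec{I}_{km}-\vec{\Omega}_{\altSupport}\vec{K}_{\altSupport}$), while the right factor uses $\support$ (so $\normalVec=\normalVec[\support,e,i]$, built from $\vec{W}_{\support}$). Your description says $\vec{p}$ comes from a column of $\vec{I}_{km}-\vec{K}_{\support}\vec{\Omega}_{\support}$, but in fact the column comes from $\vec{I}_{km}-\vec{\Omega}_{\altSupport}\vec{K}_{\altSupport}$; only the row factor uses $\support$. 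This asymmetry is forced by the algebra (the change in $\bar{\vec{K}}$ enters with the new denominator $\kappa_e^{\altSupport}+1$), and if you use $\support$ on both sides the identity fails. It also propagates into part~\emph{(\ref{thm:neighboringLaplacians:directions})}: you additionally need the scaling relation $\normalVec[\altSupport,e,i]=-\tfrac{\kappa_e^{\support}+1}{\kappa_e^{\altSupport}+1}\normalVec$ (the sign flip from $\sigma^i_e$ \emph{and} a positive scalar), which the paper derives from a companion identity on $\tilde{\vec{C}}$; your plan only names the sign flip. Once these two places are tightened up, the argument is complete.
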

\begin{proof}
We begin the proof by establishing a connection between the matrices $\vec{C}_{\support}$ and $\vec{C}_{\altSupport}$ of two neighboring supports.
\begin{claim} \label{clm:thm:neighboringLaplace:neighboringC}
$
\vec{C}_{\mathcal{S}'} = \vec{C}_{\mathcal{S}} +
\big( \vec{I}_{km} - \vec{\Omega}_{\mathcal{S}'} \vec{K}_{\mathcal{S}'}  \big) \vec{u}_{e,i} \vec{u}_{e,i}^{\top} \vec{\Sigma}_{\mathcal{S}} \big( \vec{I}_{km} - \vec{K}_{\mathcal{S}} \vec{\Omega}_{\mathcal{S}}  \big) \vec{A}^{-1}
$ whenever $\support$ and $\altSupport$ are $(e,i)$-neighbors.
\end{claim}
\begin{proof}[Proof of Claim~\ref{clm:thm:neighboringLaplace:neighboringC}]
We make the following observations: The only difference between $\mathcal{S}$ and $\mathcal{S}'$ is by definition the different status of edge $e$ for player $i$.
Thus, $\omega^{i'}_{e'} (\altSupport) = \omega^{i'}_{e'} (\support)$ whenever $e' \neq e$ or $i' \neq i$ and $\omega^i_e (\altSupport) = \omega^{i}_{e} (\support) + \sigma^{\support}_{e,i}$. Thus, also $\vec{\Omega}_{\mathcal{S}'}$ differs only in one value from $\vec{\Omega}_{\mathcal{S}}$ and we get
\[
\vec{\Omega}_{\mathcal{S}'} = \vec{\Omega}_{\mathcal{S}} + \sigma^{\mathcal{S}}_{e, i} \vec{u}_{e,i} \vec{u}^{\top}_{e,i} = \vec{\Omega}_{\mathcal{S}} + \vec{u}_{e,i} \vec{u}^{\top}_{e,i} \vec{\Sigma}_{\mathcal{S}}
.
\]
Further, the number of players using an edge is unchanged for all edges but $e$. Hence, $\kappa^{\mathcal{S}'}_{e'} = \kappa^{\mathcal{S}}_{e'}$ for all $e' \neq e$, and $\kappa^{\mathcal{S}'}_{e} = \kappa^{\mathcal{S}}_{e} + \sigma^{\mathcal{S}}_{e,i}$. Define $\Delta \kappa := - \frac{\sigma^{\mathcal{S}}_{e,i}}{(\kappa^{\mathcal{S}'}_e + 1) (\kappa^{\mathcal{S}}_e +1)} = \frac{1}{\kappa^{\mathcal{S}'}_e + 1} - \frac{1}{\kappa^{\mathcal{S}}_e + 1}$. Then
\[
\bar{\vec{K}}_{\mathcal{S}'} = \bar{\vec{K}}_{\mathcal{S}} + \Delta \kappa \, \vec{u}_{e} \vec{u}_{e}^{\top} = \bar{\vec{K}}_{\mathcal{S}}
\]
where $\vec{u}_e$ denotes the $m$-dimensional unit vector that corresponds to edge $e$. 
We observe that we can express $\vec{K}_{\support}$ as $\vec{K}_{\mathcal{S}} = \vec{J} \bar{\vec{K}}_{\mathcal{S}} \vec{J}^{\top}$ where $\vec{J}$ is the $km \times m$ block matrix that contains $k$ copies of the $m \times m$ identity matrix. Then, in particular, $\vec{K}_{\mathcal{S}'} - \vec{K}_{\mathcal{S}} = \vec{J} \Delta \kappa \vec{u}_e \vec{u}^{\top}_e \vec{J}^{\top}$. 
If we use that $\vec{J}^{\top} \vec{u}_{e,i} = \vec{u}_e$ and $\bar{\vec{K}}_{\mathcal{S}} \vec{u}_{e} = \frac{1}{\kappa^{\mathcal{S}}_e + 1} \vec{u}_e$, then we obtain
\begin{equation}
\begin{split}
\vec{K}_{\mathcal{S}'} \vec{u}_{e,i} \vec{u}_{e,i}^{\top} \vec{\Sigma}_{\mathcal{S}} \vec{K}_{\mathcal{S}}
&= \vec{J} \bar{\vec{K}}_{\mathcal{S}'} \vec{J}^{\top} \vec{u}_{e,i} \vec{u}_{e,i}^{\top} \vec{\Sigma}_{\mathcal{S}} \vec{J} \bar{\vec{K}}_{\mathcal{S}} \vec{J}^{\top}  \\
&= \vec{J} \frac{1}{\kappa^{\mathcal{S}'}_e+1} \vec{u}_{e} \vec{u}_{e}^{\top} \frac{\sigma^{\mathcal{S}}_{e,i}}{\kappa^{\mathcal{S}'}_e+1} \vec{J}^{\top} 
= -\vec{J} \Delta \kappa \vec{u}_{e} \vec{u}_{e}^{\top}  \vec{J}^{\top}
= \vec{K}_{\mathcal{S}} - \vec{K}_{\mathcal{S}'}
.
\end{split} \label{eq:lem:neighboringCandWmatrices:subclaim}
\end{equation}
Overall, all this yields
{\allowdisplaybreaks
\begin{align*}
\vec{C}_{\mathcal{S}'} - \vec{C}_{\mathcal{S}}
&= \Big( \vec{\Omega}_{\mathcal{S}'} \big( \vec{I}_{km} - \vec{K}_{\mathcal{S}'} \vec{\Omega}_{\mathcal{S}'} \big) - \vec{\Omega}_{\mathcal{S}} \big( \vec{I}_{km} - \vec{K}_{\mathcal{S}} \vec{\Omega}_{\mathcal{S}} \big) \Big) \vec{A}^{-1} \\
&= \big(
	(\vec{\Omega}_{\mathcal{S}} + \vec{u}_{e,i} \vec{u}_{e,i}^{\top} \vec{\Sigma}_{\mathcal{S}})
	(\vec{I}_{km} - \vec{K}_{\mathcal{S}'} (\vec{\Omega}_{\mathcal{S}} + \vec{u}_{e,i} \vec{u}_{e,i}^{\top} \vec{\Sigma}_{\mathcal{S}} )) 
	- \vec{\Omega}_{\mathcal{S}} + \vec{\Omega}_{\mathcal{S}} \vec{K}_{\mathcal{S}} \vec{\Omega}_{\mathcal{S}}
	\big) \vec{A}^{-1} \\
&= \big(
	\vec{u}_{e,i} \vec{u}_{e,i}^{\top} \vec{\Sigma}_{\mathcal{S}} 
	- \vec{\Omega}_{\mathcal{S}} \vec{K}_{\mathcal{S}'} \vec{\Omega}_{\mathcal{S}} 
	- \vec{\Omega}_{\mathcal{S}} \vec{K}_{\mathcal{S'}} \vec{u}_{e,i} \vec{u}_{e,i}^{\top} \vec{\Sigma}_{\mathcal{S}}
	- \vec{u}_{e,i} \vec{u}_{e,i}^{\top} \vec{\Sigma}_{\mathcal{S}} \vec{K}_{\mathcal{S'}}  \vec{\Omega}_{\mathcal{S}}
	\\
	& \qquad \qquad \qquad
	- \vec{u}_{e,i} \vec{u}_{e,i}^{\top} \vec{\Sigma}_{\mathcal{S}} \vec{K}_{\mathcal{S}'} \vec{u}_{e,i} \vec{u}_{e,i}^{\top} \vec{\Sigma}_{\mathcal{S}}
	+ \vec{\Omega}_{\mathcal{S}} \vec{K}_{\mathcal{S}} \vec{\Omega}_{\mathcal{S}}
	\big) \vec{A}^{-1} \\
&= \big(
	\vec{u}_{e,i} \vec{u}_{e,i}^{\top} \vec{\Sigma}_{\mathcal{S}}
	- \vec{\Omega}_{\mathcal{S}} ( \vec{K}_{\mathcal{S'}} - \vec{K}_{\mathcal{S}} ) \vec{\Omega}_{\mathcal{S}}
	- \vec{u}_{e,i} \vec{u}_{e,i}^{\top} \vec{\Sigma}_{\mathcal{S}} ( \vec{K}_{\mathcal{S'}} - \vec{K}_{\mathcal{S}} ) \vec{\Omega}_{\mathcal{S}}
	\\
	& \qquad \qquad \qquad
	+ \vec{u}_{e,i} \vec{u}_{e,i}^{\top} \vec{\Sigma}_{\mathcal{S}} \vec{K}_{\mathcal{S}}  \vec{\Omega}_{\mathcal{S}}
	- \vec{\Omega}_{\mathcal{S}} \vec{K}_{\mathcal{S'}} \vec{u}_{e,i} \vec{u}_{e,i}^{\top} \vec{\Sigma}_{\mathcal{S}}
	- \vec{u}_{e,i} \vec{u}_{e,i}^{\top} \vec{\Sigma}_{\mathcal{S}} \vec{K}_{\mathcal{S}'} \vec{u}_{e,i} \vec{u}_{e,i}^{\top} \vec{\Sigma}_{\mathcal{S}}
	\big) \vec{A}^{-1} \\
&\stackrel{\mathclap{\eqref{eq:lem:neighboringCandWmatrices:subclaim}}}{=} \big(
	\vec{u}_{e,i} \vec{u}_{e,i}^{\top} \vec{\Sigma}_{\mathcal{S}}
	+ \vec{\Omega}_{\mathcal{S}} \vec{K}_{\mathcal{S}'} \vec{u}_{e,i} \vec{u}_{e,i}^{\top} \vec{\Sigma}_{\mathcal{S}} \vec{K}_{\mathcal{S}} \vec{\Omega}_{\mathcal{S}}
	+ \vec{u}_{e,i} \vec{u}_{e,i}^{\top} \vec{\Sigma}_{\mathcal{S}} \vec{K}_{\mathcal{S}'} \vec{u}_{e,i} \vec{u}_{e,i}^{\top} \vec{\Sigma}_{\mathcal{S}} \vec{K}_{\mathcal{S}} \vec{\Omega}_{\mathcal{S}}
	\\
	& \qquad \qquad \qquad
	+ \vec{u}_{e,i} \vec{u}_{e,i}^{\top} \vec{\Sigma}_{\mathcal{S}} \vec{K}_{\mathcal{S}}  \vec{\Omega}_{\mathcal{S}}
	- \vec{\Omega}_{\mathcal{S}} \vec{K}_{\mathcal{S'}} \vec{u}_{e,i} \vec{u}_{e,i}^{\top} \vec{\Sigma}_{\mathcal{S}}
	- \vec{u}_{e,i} \vec{u}_{e,i}^{\top} \vec{\Sigma}_{\mathcal{S}} \vec{K}_{\mathcal{S}'} \vec{u}_{e,i} \vec{u}_{e,i}^{\top} \vec{\Sigma}_{\mathcal{S}}
	\big) \vec{A}^{-1} \\	
&= \big(
	\vec{u}_{e,i} \vec{u}_{e,i}^{\top} \vec{\Sigma}_{\mathcal{S}} 
	- \vec{\Omega}_{\mathcal{S}} \vec{K}_{\mathcal{S'}} \vec{u}_{e,i} \vec{u}_{e,i}^{\top} \vec{\Sigma}_{\mathcal{S}} 
	- \vec{u}_{e,i} \vec{u}_{e,i}^{\top} \vec{\Sigma}_{\mathcal{S}} 
	\big) (\vec{I}_{km} - \vec{K}_{\mathcal{S}} \vec{\Omega}_{\mathcal{S}} ) \vec{A}^{-1} \\
&= ( \vec{I}_{km} - \vec{\Omega}_{\mathcal{S}'} \vec{K}_{\mathcal{S}'}  ) \vec{u}_{e,i} \vec{u}_{e,i}^{\top} \vec{\Sigma}_{\mathcal{S}} ( \vec{I}_{km} - \vec{K}_{\mathcal{S}} \vec{\Omega}_{\mathcal{S}}  ) \vec{A}^{-1}
\end{align*}
}
concluding the proof of Claim~\ref{clm:thm:neighboringLaplace:neighboringC}.
\end{proof}

With Claim~\ref{clm:thm:neighboringLaplace:neighboringC}, we get
\begin{align*}
\vec{L}_{\mathcal{S}'} - \vec{L}_\mathcal{S} 
&= \vec{G} ( \vec{C}_{\mathcal{S}'} - \vec{C}_{\mathcal{S}} ) \vec{G}^{\top} \\
&= \vec{G} \big( \vec{I}_{km} - \vec{\Omega}_{\mathcal{S}'} \vec{K}_{\mathcal{S}'}  \big) \vec{u}_{e,i} \vec{u}_{e,i}^{\top}      \underbrace{ \vec{\Sigma}_{\mathcal{S}} \big( \vec{I}_{km} - \vec{K}_{\mathcal{S}} \vec{\Omega}_{\mathcal{S}}  \big) \vec{A}^{-1}    }_{= \vec{\Sigma}_{\mathcal{S}} \vec{C}_{\mathcal{S}} = \vec{W}_{\support}   } \vec{G}^{\top}
= \lightNormalVec[\altSupport, e, i] \normalVec^{\top}
.
\end{align*}
proving \emph{(\ref{thm:neighboringLaplacians:laplacians})}.
Consider the matrix $\hat{\vec{L}}_{\mathcal{S}}$, obtained from $\vec{L}_{\mathcal{S}}$ by deleting the rows and columns corresponding to the source vertex respective source vertex for every player. Let $\snormalVec$ and $\slightNormalVec[\mathcal{S}',e,i]$ be the vectors obtained from $\normalVec$ and $\lightNormalVec[\mathcal{S}', e,i]$, respectively, by removing the rows belonging to the source vertex for every player. Then $\hat{\vec{L}}_{\mathcal{S}}$ is non-singular, since we assume $\mathcal{S}$ is non-$a$-degenerate, and $\hat{\vec{L}}_{\mathcal{S}} = \hat{\vec{L}}_{\mathcal{S'}} + \slightNormalVec[\mathcal{S}',e,i] \snormalVec^{\top}$ we obtain
\begin{equation} \label{eq:thm:uniqueBoundaryCrossing:subclaimProof}
\det (\hat{\vec{L}}_{\mathcal{S}'}) = \det \Big(\hat{\vec{L}}_{\mathcal{S}} \big( \vec{I}_{(n-1)k} + \hat{\vec{L}}_{\mathcal{S}}^{-1} \slightNormalVec[\mathcal{S}',e,i] \snormalVec^{\top} \big) \Big)
= \det \big(\hat{\vec{L}}_{\mathcal{S}} \big) \, \big( 1 + \snormalVec^{\top} \hat{\vec{L}}_{\mathcal{S}}^{-1} \slightNormalVec[\mathcal{S}',e,i] \big)
.
\end{equation}
We note that the term $\vec{v}^{\top} \, \vec{L}^+_{\mathcal{S}} \vec{v}$ is independent of the choice of the generalized inverse $\vec{L}^+_{\mathcal{S}}$ for any vector $\vec{v}$ belonging to the row- and column space of $\vec{L}$ \cite[Theorem~9.4.1]{harville1997matrix}. Thus, we obtain
\[
1 + \normalVec^\top \vec{L}_{\mathcal{S}}^+ \lightNormalVec[\mathcal{S}',e,i]
= 1 + \normalVec^\top \hat{\vec{L}}_{\mathcal{S}}^{*} \lightNormalVec[\mathcal{S}',e,i]
= 1 + \snormalVec^{\top} \hat{\vec{L}}_{\mathcal{S}}^{-1} \slightNormalVec[\mathcal{S}',e,i]
\stackrel{\eqref{eq:thm:uniqueBoundaryCrossing:subclaimProof}}{=} \frac{\det (\hat{\vec{L}}_{\altSupport})}{\det (\hat{\vec{L}}_{\support})}  \neq 0
\]
where $\vec{L}^*_{\mathcal{S}}$ is the generalized inverse of $\vec{L}_{\mathcal{S}}$ obtained from $\hat{\vec{L}}_{\mathcal{S}}$ by adding the appropriate zero rows and columns.
As we assume non-$a$-degeneracy, the determinants $\det (\hat{\vec{L}}_{\mathcal{S}})$ and $\det (\hat{\vec{L}}_{\mathcal{S}'})$ are non-zero. 
We then obtain \emph{(\ref{thm:neighboringLaplacians:inverses})} with the Sherman-Morrison-Woodbury formula for generalized inverses (see, e.g., \cite[Theorem~18.2.14]{harville1997matrix}).

To proceed, we need a second claim that describes the relationship between the matrices $\tilde{\vec{C}}_{\support}$ and $\tilde{\vec{C}}_{\altSupport}$ of two neighboring supports $\support$ and $\altSupport$.
\begin{claim} \label{clm:thm:neighboringLaplace:neighboringCtilde}
If $\support$ and $\altSupport$ are $(e,i)$-neighboring, then, for every player~$j \in \range{k}$,
\[
\vec{u}_{\tilde{e},j}^{\top} \tilde{\vec{C}}_{\mathcal{S}'} = 
\begin{cases}
\frac{\kappa^{\mathcal{S}}_e + 1}{\kappa^{\mathcal{S}'}_e + 1} \vec{u}_{e,j}^{\top} \tilde{\vec{C}}_{\mathcal{S}} +  \frac{\sigma^{\mathcal{S}}_{e,i}}{\kappa^{\mathcal{S}'}_{e} + 1} \big( \vec{u}^{\top}_{e,j} - \vec{u}^{\top}_{e,i} \big) \vec{A}^{-1}
	&\text{if } \tilde{e} = e, \\
\vec{u}_{\tilde{e},j}^{\top} \tilde{\vec{C}}_{\mathcal{S}}
	&\text{if } \tilde{e} \neq e.
\end{cases}
\]
\end{claim}
\begin{proof}[Proof of Claim~\ref{clm:thm:neighboringLaplace:neighboringCtilde}]
We compute
\begin{align*}
\frac{\kappa^{\mathcal{S}}_e + 1}{\kappa^{\mathcal{S}'}_e + 1} (\vec{I}_{km} - \vec{\Omega}_{\mathcal{S}} \vec{K}_{\mathcal{S}} ) \vec{u}_{e, j}
&= 
\frac{\kappa^{\mathcal{S}}_e + 1}{\kappa^{\mathcal{S}'}_e + 1} \vec{u}_{e, j} - \vec{\Omega}_{\mathcal{S}} \vec{K}_{\mathcal{S}} \frac{\kappa^{\mathcal{S}}_e + 1}{\kappa^{\mathcal{S}'}_e + 1} \vec{u}_{e, j} \\
&= \vec{u}_{e, j} - \frac{\tau_{e,i}^{\mathcal{S}}}{\kappa^{\mathcal{S}'}_e + 1} \vec{u}_{e, j} - (\vec{\Omega}_{\mathcal{S}'} - \vec{u}_{e,i} \vec{u}_{e,i}^{\top} \vec{T}_{\mathcal{S}}) \vec{K}_{\mathcal{S}'} \vec{u}_{e, j} \\
&= (\vec{I}_{km} - \vec{\Omega}_{\mathcal{S}'} \vec{K}_{\mathcal{S}}' )  - \frac{\tau_{e,i}^{\mathcal{S}}}{\kappa^{\mathcal{S}'}_e + 1} \vec{u}_{e, j} + \vec{u}_{e,i} 
\smash{\underbrace{  \vec{u}_{e,i}^{\top} \vec{T}_{\mathcal{S}} \vec{K}_{\mathcal{S}'} \vec{u}_{e, j}  }_{= \frac{\tau^{\mathcal{S}}_{e,i}}{\kappa^{\mathcal{S}'}_e + 1}   } } \\
&= (\vec{I}_{km} - \vec{\Omega}_{\mathcal{S}'} \vec{K}_{\mathcal{S}'} ) + \frac{\tau^{\mathcal{S}}_{e,i}}{\kappa^{\mathcal{S}'}_e + 1}  \big( \vec{u}_{e, i}  - \vec{u}_{e, j}  \big)
.
\end{align*}
Multiplying this equation with the matrix $\vec{A}^{-1}$ yields the first case of the claim.
Let $\tilde{e} \neq e$. Then, the number of players using the edge $\tilde{e}$ is the same for support $\support$ as for support $\altSupport$, i.e., $\kappa^{\mathcal{S}'}_{\tilde{e}} = \kappa^{\mathcal{S}}_{\tilde{e}}$. This implies that $\vec{K}_{\mathcal{S}} \vec{u}_{\tilde{e},j} = \vec{K}_{\mathcal{S}'}  \vec{u}_{\tilde{e},j}$. Using again that $\vec{\Omega}_{\mathcal{S}'}  = \vec{\Omega}_{\mathcal{S}} + \vec{u}_{e,i} \vec{u}^{\top}_{e,i} \vec{\Sigma}_{\mathcal{S}}$ yields
\begin{align*}
(\vec{I}_{km} - \vec{\Omega}_{\mathcal{S}} \vec{K}_{\mathcal{S}} ) \vec{u}_{\tilde{e}, j}
&= \vec{u}_{\tilde{e}, j} - (\vec{\Omega}_{\mathcal{S}'} - \vec{u}_{e,i} \vec{u}_{e,i}^{\top} \vec{T}_{\mathcal{S}}) \vec{K}_{\mathcal{S}} \vec{u}_{\tilde{e}, j} 
= \vec{u}_{\tilde{e}, j} - (\vec{\Omega}_{\mathcal{S}'} - \vec{u}_{e,i} \vec{u}_{e,i}^{\top} \vec{T}_{\mathcal{S}}) \vec{K}_{\mathcal{S}'} \vec{u}_{\tilde{e}, j} \\
&= (\vec{I}_{km} - \vec{\Omega}_{\mathcal{S}'} \vec{K}_{\mathcal{S}'} ) \vec{u}_{\tilde{e}, j}
	+ \vec{u}_{e,i} \underbrace{ \vec{u}_{e,i}^{\top} \vec{T}_{\mathcal{S}}  \vec{K}_{\mathcal{S}'} \vec{u}_{\tilde{e}, j} }_{= 0}
=  (\vec{I}_{km} - \vec{\Omega}_{\mathcal{S}'} \vec{K}_{\mathcal{S}'} )  \vec{u}_{\tilde{e}, j}
.
\end{align*}
Again, multiplying by $\vec{A}^{-1}$ yields the second case of the claim.
\end{proof}
Together with the definition $\normalVec := \vec{\Sigma}_{\support} \tilde{\vec{C}}_{\support} \vec{G}^{\top}$, Claim~\ref{clm:thm:neighboringLaplace:neighboringCtilde} implies that $\normalVec[\altSupport, e,i] = - \frac{\kappa^{\support}_e + 1}{\kappa^{\altSupport}_e + 1} \normalVec$.
We use~\emph{(\ref{thm:neighboringLaplacians:inverses})} in order to obtain
\begin{align*}
\normalVec[\altSupport, e,i]^{\top} \Delta \vec{\pi}_{\mathcal{S}'}
&\stackrel{\mathclap{\text{\emph{(\ref{thm:neighboringLaplacians:inverses})}}}}{=}
-\frac{\kappa^{\mathcal{S}}_e + 1}{\kappa^{\mathcal{S}'}_e + 1}
\normalVec^\top \big( \vec{L}_{\mathcal{S}}^{+} - \frac{1}{1 + \normalVec^{\top} \vec{L}_{\mathcal{S}}^{+} \lightNormalVec} \vec{L}_{\mathcal{S}}^{+} \lightNormalVec \normalVec^{\top} \vec{L}^{+}_{\mathcal{S}} \big) \Delta \vec{y} \\
&= -\frac{\kappa^{\mathcal{S}}_e + 1}{\kappa^{\mathcal{S}'}_e + 1} \;
\frac{1}{1 + \normalVec^{\top} \vec{L}_{\mathcal{S}}^{+} \lightNormalVec} \,
\normalVec^{\top} \vec{L}^{+}_{\mathcal{S}} \Delta \vec{y}
= -\frac{\kappa^{\mathcal{S}}_e + 1}{\kappa^{\mathcal{S}'}_e + 1} \;
 \frac{\det (\hat{\vec{L}}_{\support})}{\det (\hat{\vec{L}}_{\altSupport})} \,
\normalVec^{\top} \Delta \vec{\pi}_{\mathcal{S}}
\end{align*}
proving \emph{(\ref{thm:neighboringLaplacians:directions})} by applying the sign-function to both sides.
\end{proof}

Theorem~\ref{thm:neighboringLaplacians} establishes an easy update formula for the block Laplacians and their generalized inverses as well as a relation between the orientation of the direction vectors $\Delta \vec{\pi}_{\support}$ of two neighboring supports.

For a feasible support $\support$, we are interested in neighboring feasible supports $\altSupport$. Neighboring supports with $\potentialRegion \cap \potentialRegion[\altSupport] \neq \emptyset$ are of particular interest, as the line segment $\potentialRegion$ is extended with the line segment $\potentialRegion[\altSupport]$. We call such pairs of neighboring support \emph{continuative neighbors} as the line segment $\potentialRegion$ is ``continued'' by $\potentialRegion[\altSupport]$.
In order to characterize these continuative neighbors, we introduce for every feasible support $\support$ the (possibly empty) sets
\begin{align*}
\lowerContNeighbors &:=
	\bigg\{
	N(\support, e, i)
	\; \bigg\vert \;
	\frac{\vec{w}_{e,i}^{\top} \big( \vec{b} \! - \! \vec{G}^{\top} \potentialOffset \big)}{\vec{w}_{e,i}^{\top} \vec{G}^{\top} \! \Delta \vec{\pi}} = \lambda^{\min}_{\support}
	\bigg\} \\
\upperContNeighbors &:=
	\bigg\{
	N(\support, e, i)
	\; \bigg\vert \;
	\frac{\vec{w}_{e,i}^{\top} \big( \vec{b} \! - \! \vec{G}^{\top} \potentialOffset \big)}{\vec{w}_{e,i}^{\top} \vec{G}^{\top} \! \Delta \vec{\pi}} = \lambda^{\max}_{\support}
	\bigg\}
\end{align*}
of $(e,i)$-neighboring supports of $\support$ that induce the values $\lambda^{\min}_{\support}$ and $\lambda^{\max}_{\support}$. We also refer to these sets as \emph{upper and lower continuative neighbors of $\support$}. The following lemma shows that these sets contain all continuative neighbors.

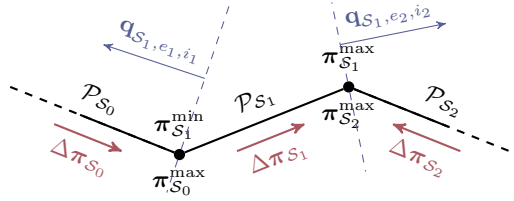
\begin{figure}[tb]
\begin{center}
\begin{tikzpicture}[scale=0.9]
\useasboundingbox (0,-1.25) rectangle (7.5,1.45);
\coordinate (pi0) at (0,0) {};	
\node[solid] (pi1) at (2.5,-1) {} 
	node[above=0 of pi1] {\footnotesize $\vec{\pi}^{\min}_{\mathcal{S}_1}$}
	node[below=0 of pi1] {\footnotesize $\vec{\pi}^{\max}_{\mathcal{S}_0}$};
\node[solid] (pi2) at (5,0) {} 
	node[above=0 of pi2] {\footnotesize $\vec{\pi}^{\max}_{\mathcal{S}_1}$}
	node[below=0 of pi2] {\footnotesize $\vec{\pi}^{\max}_{\mathcal{S}_2}$};
\coordinate (pi3) at (7.5,-1) {};	

\draw[gray!50!playercolor3, dashed]
	(pi1) -- ++(.75,2.25)
	(pi1) -- ++(-1/4*.75,-1/4*2.25);
\draw[gray!50!playercolor3, ->]
	($(pi1) + 1/2*(.75,2.25)$) -- ++(-2.25/1.5, .75/1.5)
		node[midway, above, sloped] {\footnotesize $\normalVec[\mathcal{S}_1,e_1,i_1]$};
	
\draw[gray!50!playercolor3, dashed]
	(pi2) -- ++(-.25,1.25)
	(pi2) -- ++(.25,-1.25);
	
\draw[gray!50!playercolor3, ->]
	($(pi2) + 1/2*(-.25,1.25)$) -- ++(1.25*1.25, .25*1.25)
		node[midway, above, sloped] {\footnotesize $\normalVec[\mathcal{S}_1,e_2,i_2]$};

\draw[thick, dashed]
	(pi0) -- (pi1)
	(pi2) -- (pi3);
\draw[thick]
	(pi0) edge[shorten <=1cm] node[midway, above, sloped] {\footnotesize $\potentialRegion[\mathcal{S}_0]$} (pi1) 
	(pi1) -- (pi2) node[midway, above, sloped] {\footnotesize $\potentialRegion[\mathcal{S}_1]$}
	(pi2) edge[shorten >=1cm] node[midway, above, sloped] {\footnotesize $\potentialRegion[\mathcal{S}_2]$} (pi3);
	
\draw[gray!50!playercolor1, thick, ->]
	($(pi0) + (.75, -.75/2.5) + (-.25/2.5 , -.25) $) -- ($(pi1) + (-.75, .75/2.5) + (-.25/2.5 , -.25) $) 
		node[midway, below, sloped] {\footnotesize $\Delta \vec{\pi}_{\mathcal{S}_0}$};
\draw[gray!50!playercolor1, thick, ->]
	($(pi1) + (.75, .75/2.5) + (.25/2.5 , -.25) $) -- ($(pi2) + (-.75, -.75/2.5) + (.25/2.5 , -.25) $) 
		node[midway, below, sloped] {\footnotesize $\Delta \vec{\pi}_{\mathcal{S}_1}$};
\draw[gray!50!playercolor1, thick, ->]
	($(pi3) + (-.75, .75/2.5) + (-.25/2.5 , -.25) $) -- ($(pi2) + (.75, -.75/2.5) + (-.25/2.5 , -.25) $)
		node[midway, below, sloped] {\footnotesize $\Delta \vec{\pi}_{\mathcal{S}_2}$};
\end{tikzpicture}
\caption{The polytope of $\lambda$-potentials $\potentialRegion[\mathcal{S}_1]$ of some support $\mathcal{S}_1$ and the respective polytopes of two continuative neighbors. The boundary potentials $\vec{\pi}$ of the neighboring supports coincide. The polytopes are separated by hyperplanes with normalvector $\normalVec$.}
\label{fig:continuativeNeighbors}
\end{center}
\end{figure}

\begin{lemma} \label{lem:continuativeNeighbors}
Denote by $\contNeighbors := \{ \altSupport \;\vert\; \potentialRegion \cap \potentialRegion[\altSupport] \neq \emptyset \}$ the set of all continuative neighbors of $\support$. Then, 
\[
\contNeighbors = \lowerContNeighbors \cup \upperContNeighbors
.
\]
Further, the sets  of $\lambda$-potentials do only intersect in exactly one potential, i.e,
\[
\potentialRegion \cap \potentialRegion[\altSupport] = \{ \vec{\pi}^* \} = \partial \potentialRegion \cap \partial \potentialRegion[\altSupport]
\]
whenever $\support$ and $\altSupport$ are continuative neighbors. Moreover, $\inducedflow_{\support} (\vec{\pi}^*) = \inducedflow_{\altSupport} (\vec{\pi}^*)$ in this case.
\end{lemma}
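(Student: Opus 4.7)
The plan is to reduce everything to a single \emph{flow-agreement claim}: whenever $\altSupport = N(\support, e, i)$ and $\vec{\pi}$ makes the $(e,i)$-th constraint of $\vec{W}_{\support}(\vec{G}^{\top}\vec{\pi}-\vec{b}) \ge \vec{0}$ tight, then $\inducedflow_{\support}(\vec{\pi}) = \inducedflow_{\altSupport}(\vec{\pi})$ and the corresponding $(e,i)$-th constraint of $\altSupport$ is also tight. To establish this I would distinguish the two cases (i) $i \in S_e$, where binding reads $x_e^i = 0$, and (ii) $i \notin S_e$, where binding reads $\pi_w^i - \pi_v^i = \mu_e^i(\vec{x})$. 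In either case the flow $\inducedflow_{\support}(\vec{\pi})$ continues to satisfy the defining system~\eqref{eq:inducedflow:system} for $\altSupport$: in case (i) the coordinate being flipped is already zero, so the total flow on $e$ is unchanged and the other equalities for players $j \neq i$ carry over verbatim; in case (ii) the ``new'' active condition $\mu_e^i = \pi_w^i - \pi_v^i$ already holds with $x_e^i = 0$, so again no flow entry has to move. Uniqueness of the induced flow (Lemma~\ref{lem:incudedflow}) then forces $\inducedflow_{\support}(\vec{\pi}) = \inducedflow_{\altSupport}(\vec{\pi})$.

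Next, to compare the two $(e,i)$-rows of $\vec{W}_{\support}$ and $\vec{W}_{\altSupport}$, I would invoke Claim~\ref{clm:thm:neighboringLaplace:neighboringCtilde} specialized to $\tilde e = e$ and $j = i$, which gives $\vec{u}_{e,i}^{\top}\tilde{\vec{C}}_{\altSupport} = \frac{\kappa_e^{\support}+1}{\kappa_e^{\altSupport}+1}\,\vec{u}_{e,i}^{\top}\tilde{\vec{C}}_{\support}$, and combine it with $\sigma_{e,i}^{\altSupport} = -\sigma_{e,i}^{\support}$ to conclude
\[
\vec{w}_{e,i}(\altSupport)^{\top} \;=\; -\tfrac{\kappa_e^{\support}+1}{\kappa_e^{\altSupport}+1}\,\vec{w}_{e,i}(\support)^{\top}.
\]
Hence the two $(e,i)$-constraints of $\support$ and $\altSupport$ are anti-parallel, so both inequalities can hold simultaneously at some $\vec{\pi}$ only when both evaluate to $0$, i.e., when $\vec{w}_{e,i}(\support)^{\top}(\vec{G}^{\top}\vec{\pi}-\vec{b}) = 0$.

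With these two ingredients the inclusions follow quickly. For $\lowerContNeighbors \cup \upperContNeighbors \subseteq \contNeighbors$, take any $\altSupport = N(\support, e, i) \in \lowerContNeighbors$; by definition the $(e,i)$-constraint of $\support$ is tight at $\bpMin$, so the flow-agreement claim yields $\inducedflow_{\support}(\bpMin) = \inducedflow_{\altSupport}(\bpMin)$, and this common flow is a Nash-equilibrium flow for demand $\lambdaMin \vec{r}$ both under $\support$ and under $\altSupport$; by Theorem~\ref{thm:lambdapotential} this means $\bpMin \in \potentialRegion \cap \potentialRegion[\altSupport]$. The argument for $\upperContNeighbors$ is analogous at $\bpMax$. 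For the converse, pick $\altSupport = N(\support, e, i) \in \contNeighbors$ and $\vec{\pi}^{*} \in \potentialRegion \cap \potentialRegion[\altSupport]$; the anti-parallelism forces $\vec{w}_{e,i}(\support)^{\top}(\vec{G}^{\top}\vec{\pi}^{*}-\vec{b}) = 0$, and substituting $\vec{\pi}^{*} = \lambda^{*}\Delta\vec{\pi}_{\support} + \potentialOffset$ into that equation either yields the single value $\lambda^{*} = \vec{w}_{e,i}^{\top}(\vec{b}-\vec{G}^{\top}\potentialOffset)/(\vec{w}_{e,i}^{\top}\vec{G}^{\top}\Delta\vec{\pi})$ (when $\vec{w}_{e,i}^{\top}\vec{G}^{\top}\Delta\vec{\pi} \neq 0$), which by the very definitions in the proof of Lemma~\ref{lem:lambdapotentialsLineSegment} must coincide with $\lambdaMin$ or $\lambdaMax$ (else $\lambda^{*}$ would fall outside the feasible range), placing $\altSupport$ in $\lowerContNeighbors$ or $\upperContNeighbors$.

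Uniqueness of $\vec{\pi}^{*}$ is immediate: the linear equation $\vec{w}_{e,i}(\support)^{\top}(\vec{G}^{\top}\vec{\pi} - \vec{b}) = 0$ cuts the one-dimensional affine set $\potentialRegion$ in a unique point (under the generic assumption $\vec{w}_{e,i}^{\top}\vec{G}^{\top}\Delta\vec{\pi} \neq 0$ guaranteed for non-degenerate supports), and by the preceding analysis this point is an endpoint of $\potentialRegion$, hence lies in $\partial\potentialRegion$; repeating the argument from $\altSupport$'s perspective places it also in $\partial\potentialRegion[\altSupport]$. The final identity $\inducedflow_{\support}(\vec{\pi}^{*}) = \inducedflow_{\altSupport}(\vec{\pi}^{*})$ is exactly the flow-agreement claim again. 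The main obstacle in the whole argument is the flow-agreement claim itself: it is the only place where the purely combinatorial change of support has to be reconciled with the algebraic system~\eqref{eq:inducedflow:system}, and one must simultaneously verify that the old flow still solves the new system, that it is non-negative (or zero) on the flipped coordinate, and that the corresponding inequality in $\altSupport$ becomes tight rather than violated.
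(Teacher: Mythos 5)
Your proposal is correct and shares the paper's overall architecture: establish a flow-agreement claim across a tight $(e,i)$-constraint, derive the anti-parallelism $\vec{w}_{\altSupport,e,i} = -\tfrac{\kappa_e^{\support}+1}{\kappa_e^{\altSupport}+1}\vec{w}_{\support,e,i}$ from Claim~\ref{clm:thm:neighboringLaplace:neighboringCtilde}, and use these to obtain both inclusions and the single-point intersection. Where you diverge from the paper is in \emph{how} the two central sub-claims are argued. For the flow agreement, the paper reaches for the rank-one update $\vec{C}_{\altSupport} = \vec{C}_{\support} + (\cdots)\vec{u}_{e,i}\vec{w}^{\top}_{\support,e,i}(\cdots)$ (Claim~\ref{clm:thm:neighboringLaplace:neighboringC}) and observes that the correction term vanishes at the tight potential (Claim~\ref{clm:lem:continuativeNeighbors:inducedflow}); you instead argue directly that the old flow already solves the defining system~\eqref{eq:inducedflow:system} for the new support, invoking uniqueness from Lemma~\ref{lem:incudedflow}. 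That is a cleaner explanation of the phenomenon. For the inclusion $\lowerContNeighbors \cup \upperContNeighbors \subseteq \contNeighbors$, the paper verifies $\vec{W}_{\altSupport}(\vec{G}^{\top}\vec{\pi}^{\min}_{\support}-\vec{b}) \geq \vec{0}$ coordinate by coordinate (Claim~\ref{clm:lem:continuativeNeighbors:constraints}, split into three cases); you sidestep this by noting that the common induced flow is a Nash equilibrium, that the characterization in Lemma~\ref{lem:equilibrium:potentials} is support-independent, and that Theorem~\ref{thm:lambdapotential} then yields the $\altSupport$-constraints automatically. This is a legitimate shortcut, provided one reads the definition of a $\lambda$-potential carefully (it must be the specific potential $\vec{\pi}^{\min}_{\support}$ that certifies the Lemma~\ref{lem:equilibrium:potentials} conditions, which holds since $\vec{\pi}^{\min}_{\support} \in \potentialRegion$). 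One loose end common to both treatments: the case $\vec{w}_{e,i}^{\top}\vec{G}^{\top}\Delta\vec{\pi} = 0$ (a constraint parallel to the motion direction) is not handled explicitly, so neither argument literally shows the tight potential is an endpoint of $\potentialRegion$ in that degenerate situation; you at least flag it as a genericity assumption, so this is not a flaw relative to the paper.
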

\begin{proof}
For every $e \in E$ and $i \in \range{k}$ with $\altSupport \in \lowerContNeighbors$, we observe
\begin{equation} \label{eq:lem:tightconstraint}
\vec{w}^{\top}_{\support,e,i} ( \vec{G}^{\top} \vec{\pi}^{\min}_{\support} - \vec{b} )
= \vec{w}^{\top}_{\support,e,i} ( \vec{G}^{\top} (\lambda^{\min}_{\support} \Delta \vec{\pi}_{\support} + \bar{\vec{d}}_{\support} ) - \vec{b} )
= 0
\end{equation}
by the definition of $\lowerContNeighbors$.
\begin{claim} \label{clm:lem:continuativeNeighbors:inducedflow}
If $\altSupport := N(\support, e,i) \in \lowerContNeighbors$, then $\inducedflow_{\support}(\vec{\pi}^{\min}_{\support}) = \inducedflow_{\altSupport} ( \vec{\pi}^{\min}_{\support} )$ and $\vec{L}_{\support} \vec{\pi}^{\min}_{\support} - \vec{d}_{\support} = \lambda^{\min}_{\support} \Delta \vec{y} = \vec{L}_{\altSupport} \vec{\pi}^{\min}_{\altSupport} - \vec{d}_{\altSupport}$.
\end{claim}
\begin{proof}[Proof of Claim~\ref{clm:lem:continuativeNeighbors:inducedflow}]
With Claim~\ref{clm:thm:neighboringLaplace:neighboringC} from the proof of Theorem~\ref{thm:neighboringLaplacians} we get
\begin{align*}
\inducedflow_{\altSupport} (\vec{\pi}^{\min}_{\support}) 
&= \vec{C}_{\altSupport} ( \vec{G}^{\top} \vec{\pi}^{\min}_{\support} - \vec{b}) \\ 
&= \big( \vec{C}_{\mathcal{S}} +
\big( \vec{I}_{km} - \vec{\Omega}_{\mathcal{S}'} \vec{K}_{\mathcal{S}'}  \big) \vec{u}_{e,i} \vec{u}_{e,i}^{\top} \vec{\Sigma}_{\mathcal{S}} \big( \vec{I}_{km} - \vec{K}_{\mathcal{S}} \vec{\Omega}_{\mathcal{S}}  \big) \vec{A}^{-1} \big) ( \vec{G}^{\top} \vec{\pi}^{\min}_{\support} - \vec{b}) \\
&= \vec{C}_{\support} ( \vec{G}^{\top} \vec{\pi}^{\min}_{\support} - \vec{b}) + (\vec{I}_{km} - \vec{\Omega}_{\mathcal{S}'} \vec{K}_{\mathcal{S}'}  \big) \vec{u}_{e,i} \vec{w}_{\support, e, i}^{\top} ( \vec{G}^{\top} \vec{\pi}^{\min}_{\support} - \vec{b}) \\
&\stackrel{\smash{\mathclap{\eqref{eq:lem:tightconstraint}}}}{=} \vec{C}_{\support} ( \vec{G}^{\top} \vec{\pi}^{\min}_{\support} - \vec{b}) = \inducedflow_{\support} (\vec{\pi}^{\min}_{\support}) 
.
\end{align*}
Multiplying this equation by $\vec{G}$ yields the second part of the claim.
\end{proof}
\begin{claim} \label{clm:lem:continuativeNeighbors:constraints}
If $\altSupport := N(\support, e,i) \in \lowerContNeighbors$, then $\vec{W}_{\altSupport} ( \vec{G}^{\top} \! \vec{\pi}^{\min}_{\support} - \vec{b} ) \geq \vec{0}$.
\end{claim}
\begin{proof}[Proof of Claim~\ref{clm:lem:continuativeNeighbors:constraints}]
We show that the inequality is satisfied for every row. First, we consider player-edge-pairs $(\tilde{e}, j)$ with $\tilde{e} \neq e$ and $j \in \range{k}$. Then, by Claim~\ref{clm:thm:neighboringLaplace:neighboringCtilde} from the proof of Theorem~\ref{thm:neighboringLaplacians}, we get $\vec{u}^{\top}_{\tilde{e},j} \vec{W}_{\altSupport} = \sigma^{\altSupport}_{\tilde{e}, j} \vec{u}^{\top}_{\tilde{e},j} \tilde{\vec{C}}_{\altSupport} = \sigma^{\support}_{\tilde{e}, j} \vec{u}^{\top}_{\tilde{e},j} \tilde{\vec{C}}_{\support} = \vec{u}^{\top}_{\tilde{e},j} \vec{W}_{\support}$. Thus, $\vec{u}^{\top}_{\tilde{e}, j} \vec{W}_{\altSupport} ( \vec{G}^{\top} \! \vec{\pi}^{\min}_{\support} - \vec{b} ) = \vec{u}^{\top}_{\tilde{e}, j} \vec{W}_{\support} ( \vec{G}^{\top} \! \vec{\pi}^{\min}_{\support} - \vec{b} ) \geq 0$.
Now, consider the edge $e$ and player $i$. Then, $\sigma^{\altSupport}_{e,i} = - \sigma^{\support}_{e,i}$. Using again Claim~\ref{clm:thm:neighboringLaplace:neighboringCtilde} we obtain
\begin{align*}
\vec{u}_{e,i}^{\top} \vec{W}_{\altSupport} ( \vec{G}^{\top} \! \vec{\pi}^{\min}_{\support} - \vec{b} ) 
&= \vec{u}_{e,i}^{\top}  \vec{\Sigma}_{\altSupport} \tilde{\vec{C}}_{\altSupport}  ( \vec{G}^{\top} \! \vec{\pi}^{\min}_{\support} - \vec{b} ) 
= \sigma_{e,i}^{\altSupport} \vec{u}_{e,i}^{\top} \tilde{\vec{C}}_{\altSupport} ( \vec{G}^{\top} \! \vec{\pi}^{\min}_{\support} - \vec{b} )  \\
&= - \sigma_{e,i}^{\support} \; \frac{\kappa^{\support}_e + 1}{\kappa^{\altSupport}_e + 1} \, \vec{u}_{e,i}^{\top} \tilde{\vec{C}}_{\support} ( \vec{G}^{\top} \! \vec{\pi}^{\min}_{\support} - \vec{b} ) 
= - \frac{\kappa^{\support}_e + 1}{\kappa^{\altSupport}_e + 1} \vec{w}^{\top}_{e,i} ( \vec{G}^{\top} \! \vec{\pi}^{\min}_{\support} - \vec{b} ) \stackrel{\smash{\mathclap{\eqref{eq:lem:tightconstraint}}}}{=}  0
.
\end{align*}
Finally, consider any player $j \neq i$ and the edge $e$. Then, we obtain with Claim~\ref{clm:thm:neighboringLaplace:neighboringCtilde}
\begin{align*}
\vec{u}_{e,j}^{\top} \vec{W}_{\altSupport} ( \vec{G}^{\top} \! \vec{\pi}^{\min}_{\support} - \vec{b} ) 
&= \sigma^{\altSupport}_{e,j} \bigg( \frac{\kappa^{\mathcal{S}}_e + 1}{\kappa^{\mathcal{S}'}_e + 1} \vec{u}_{e,j}^{\top} \tilde{\vec{C}}_{\mathcal{S}} +  \frac{\sigma^{\mathcal{S}}_{e,i}}{\kappa^{\mathcal{S}'}_{e} + 1} \big( \vec{u}^{\top}_{e,j} - \vec{u}^{\top}_{e,i} \big) \vec{A}^{-1} \bigg) ( \vec{G}^{\top} \! \vec{\pi}^{\min}_{\support} - \vec{b} )  \\
&= \sigma^{\support}_{e,j} \bigg( \vec{u}_{e,j}^{\top} \tilde{\vec{C}}_{\mathcal{S}} - \frac{\sigma^{\support}_{e,i}}{\kappa^{\mathcal{S}'}_e + 1}  \vec{u}_{e,j}^{\top} \tilde{\vec{C}}_{\mathcal{S}} + \frac{\sigma^{\mathcal{S}}_{e,i}}{\kappa^{\mathcal{S}'}_{e} + 1} \big( \vec{u}^{\top}_{e,j} - \vec{u}^{\top}_{e,i} \big) \vec{A}^{-1} \bigg) ( \vec{G}^{\top} \! \vec{\pi}^{\min}_{\support} - \vec{b} ) \\ 
&=\sigma^{\support}_{e,j} \bigg( \vec{u}_{e,j}^{\top} \tilde{\vec{C}}_{\mathcal{S}} - \frac{\sigma^{\mathcal{S}}_{e,i}}{\kappa^{\mathcal{S}'}_{e} + 1} \, \Big(   \underbrace{ \vec{u}^{\top}_{e, j} \vec{K}_{\support} \vec{\Omega}_{\support} }_{= \vec{u}^{\top}_{e, i} \vec{K}_{\support} \vec{\Omega}_{\support}}   - \vec{u}^{\top}_{e,i} \Big) \vec{A}^{-1} \bigg) ( \vec{G}^{\top} \! \vec{\pi}^{\min}_{\support} - \vec{b} ) \\
&= \vec{u}_{e,j}^{\top} \vec{W}_{\mathcal{S}} ( \vec{G}^{\top} \! \vec{\pi}^{\min}_{\support} - \vec{b} ) + \frac{1}{\kappa^{\altSupport}_e + 1} \, \vec{w}^{\top}_{\support, e, i} ( \vec{G}^{\top} \! \vec{\pi}^{\min}_{\support} - \vec{b} ) \\
&\stackrel{\smash{\mathclap{\eqref{eq:lem:tightconstraint}}}}{=} \vec{u}_{e,j}^{\top} \vec{W}_{\mathcal{S}} ( \vec{G}^{\top} \! \vec{\pi}^{\min}_{\support} - \vec{b} ) \geq 0
.
\end{align*}
Hence, the claim follows.
\end{proof}
Claim~\ref{clm:lem:continuativeNeighbors:inducedflow} and Claim~\ref{clm:lem:continuativeNeighbors:constraints} imply that whenever $\altSupport := N(\support, e, i) \in \lowerContNeighbors$, then $\vec{\pi}^{\min}_{\support} \in \potentialRegion[\altSupport]$. Thus, in particular, $\vec{\pi}^{\min}_{\support} \in \potentialRegion[\altSupport] \cap \potentialRegion$ and $\altSupport$ is a continuative neighbor of $\support$. Since Claim~\ref{clm:lem:continuativeNeighbors:inducedflow} and Claim~\ref{clm:lem:continuativeNeighbors:constraints} hold also if we replace $\lowerContNeighbors$ by $\upperContNeighbors$ and $\vec{\pi}^{\min}_{\support}$ by $\vec{\pi}^{\max}_{\support}$. Thus, we get  $\vec{\pi}^{\max}_{\support}  \in \potentialRegion[\altSupport] \cap \potentialRegion$ and $\altSupport$ is a continuative neighbor of $\support$, whenever $\altSupport := N(\support, e, i) \in \upperContNeighbors$. Overall, we established $\contNeighbors \supseteq \lowerContNeighbors \cup \upperContNeighbors$.

If, on the other hand, $\altSupport := N(\support, e, i) \in \contNeighbors$, then there is some potential $\vec{\pi}^* \in \potentialRegion \cap \potentialRegion[\altSupport]$. Thus we get with the same computation as in the proof of Claim~\ref{clm:lem:continuativeNeighbors:constraints} that
\begin{align*}
\vec{0} \leq \vec{u}^{\top}_{e,i} \vec{W}_{\altSupport} ( \vec{G}^{\top} \! \vec{\pi}^* - \vec{b} )
= - \frac{\kappa^{\support}_e+1}{\kappa^{\altSupport}_e + 1} \, \vec{u}^{\top}_{e,i} \vec{W}_{\support} ( \vec{G}^{\top} \! \vec{\pi}^* - \vec{b} ) \leq \vec{0}
.
\end{align*}
Hence, we established \eqref{eq:lem:tightconstraint} for the potential $\vec{\pi}^*$ implying that the constraint corresponding to the pair $(e,i)$ in $\potentialRegion$ and $\potentialRegion[\altSupport]$ is tight. This implies that $\vec{\pi}^*$ is a boundary potential and $\altSupport \in \lowerContNeighbors$ or $\upperContNeighbors$, establishing $\contNeighbors \subseteq \lowerContNeighbors \cup \upperContNeighbors$.
\end{proof}

Lemma~\ref{lem:continuativeNeighbors} can be interpreted as follows. The continuative neighbors of some support are fully described by $\lowerContNeighbors$ and $\upperContNeighbors$ and can thus be easily identified with the computation of $\lambda^{\min}_{\support}$ and $\lambda^{\max}_{\support}$. Moreover, the $\lambda$-potentials of two continuatively neighboring supports intersect in exactly one potential that is a boundary potential of both supports. And, finally, this potential in the intersection induces the same flow wrt. both supports, i.e., continuative neighbors do not only ``continue'' the sets of $\lambda$-potentials, but also the sets of Nash equilibrium flows.

\section{Membership in $\PPAD$}
\label{sec:membership}

With the insights of the previous section we now turn to show that \atomicsplittable{} is in $\PPAD$. The high level idea for this is the following: We consider all feasible supports as states of the $\PPAD$-graph. We use the support of the empty flow for $\lambda = 0$ as the start state. Then, under some non-degeneracy assumption, we can show that every support other than the initial support and the support of the equilibrium for $\lambda = 1$ have exactly one predecessor and successor support. In order to define these predecessors and successors, we use the previously defined lower and upper continuative neighbors that are unique if the game is non-degenerate.

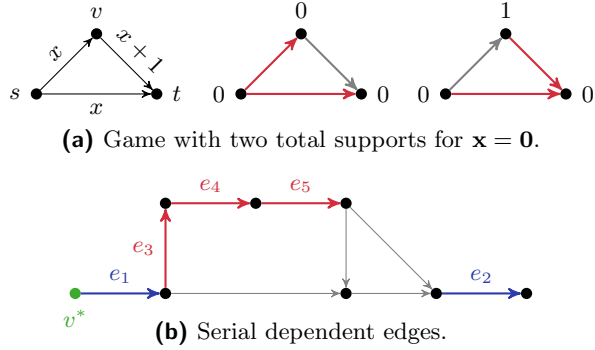
\begin{figure}[tb]
\begin{center}
\begin{subfigure}{0.5\textwidth} 
\begin{center}
\begin{tikzpicture}[scale=0.8]
\newcommand\dx{3.4}
\useasboundingbox (0,0.25) rectangle (2*\dx+2,1.5);
\node[solid] (a) at (0,0) {} node[left=0 of a] {\footnotesize $s$};
\node[solid] (b) at (2,0) {} node[right=0 of b] {\footnotesize $t$};
\node[solid] (c) at (1,1) {} node[above=0 of c] {\footnotesize $v$};
\draw[->,]
		(a) edge node[midway, below, black] {\footnotesize $x$} (b)
		(a) edge node[midway, above, sloped, black] {\footnotesize $x$} (c)
		(c) edge node[midway, above, sloped, black] {\footnotesize $x+1$} (b);

\foreach \x in {0,1} {
	\node[solid] (1+\x) at ({\dx*(\x+1)},0) {};
	\node[solid] (2+\x) at ({\dx*(\x+1)+2},0) {};
	\node[solid] (3+\x) at ({\dx*(\x+1)+1},1) {};
	
	\node[left=0 of 1+\x] {\footnotesize $0$};
	\node[right=0 of 2+\x] {\footnotesize $0$};
	\node[above=0 of 3+\x] {\footnotesize $\x$};
	
	\ifthenelse{\x=1}{\def\edecor{gray}}{\def\edecor{}};
	\ifthenelse{\x=0}{\def\fdecor{gray}}{\def\fdecor{}};
	\draw[->, thick, playercolor1]
		(1+\x) edge (2+\x)
		(1+\x) edge[\edecor] (3+\x)
		(3+\x) edge[\fdecor] (2+\x);
}
\end{tikzpicture}
\end{center}
\caption{Game with two total supports for $\vec{x}=\vec{0}$.}
\label{fig:serialDependence:multipleSupports}
\end{subfigure}
\begin{subfigure}{0.5\textwidth}
\begin{center}
\begin{tikzpicture}[scale=1.2]
\useasboundingbox (0,0.25) rectangle (5,1.5);
\node[solid, playercolor2] (1) at (0,0) {} node[below=0 of 1, playercolor2] {\footnotesize $v^*$};
\node[solid] (2) at (1,0) {};
\node[solid] (3) at (3,0) {};
\node[solid] (4) at (4,0) {};
\node[solid] (5) at (5,0) {};
\node[solid] (6) at (1,1) {};
\node[solid] (7) at (2,1) {};
\node[solid] (8) at (3,1) {};

\draw[->, gray]
	(1) edge[thick, playercolor3, "\footnotesize $e_1$"] (2)
	(2) edge (3)
	(3) edge (4)
	(4) edge[thick, playercolor3, "\footnotesize $e_2$"] (5)
	(2) edge[thick, playercolor1, "\footnotesize $e_3$"] (6)
	(6) edge[thick, playercolor1, "\footnotesize $e_4$"] (7)
	(7) edge[thick, playercolor1, "\footnotesize $e_5$"] (8)
	(8) edge (4)
	(8) edge (3);
\end{tikzpicture}
\end{center}
\caption{Serial dependent edges.} \label{fig:serialDependence:serialDependent}
\end{subfigure}
\caption{\textbf{\textsf{(a)}} A $1$-player game network (leftmost graph) that admits two potentials (values at the vertices) and two supports (active edges in red) for the equilibrium flow $\vec{x} = \vec{0}$ (middle and rightmost graph). \textbf{\textsf{(b)}} The thick edges with the same color are serial-dependent. In particular, $e_1$ is closer to the marked vertex $v^*$ than $e_2$, as well as $e_3$ is closer to $v^*$ than $e_4$ and $e_5$.}
\label{fig:serialDependence}
\end{center}
\end{figure}
Before defining the state set of our $\PPAD$-graph, we have to care about one further source of ambiguity. Consider for example the network from Figure~\ref{fig:serialDependence:multipleSupports}. If there is one player with source $s$ and sink $t$, there are, even if we restrict ourselves to total supports, two different potentials that induce the same flow (in this example the flow $\vec{x} = (0,0,0)^{\top}$). In particular these two potentials belong to different supports. This ambiguity makes it harder to define predecessor and successor functions in a well-defined way. To avoid this problem, we will restrict ourselves to special potentials---namely potentials that can be interpreted as shortest path potentials of the players. To ensure that all potentials are shortest path potentials, we restrict ourselves to certain supports, which we will characterize with the following definition.

We say in any graph $G = (V, E)$ two edges $e, e' \in E$ are \emph{serial-dependent} if there is a subset of vertices $V' \subseteq V$ such that $\delta^+(V) = \{e\}$ and $\delta^-(V) = \{e'\}$ or $\delta^+(V)= \{e'\}$ and $\delta^-(V)= \{e\}$, i.e., the edge-cut induced by $V$ contains exactly one incoming and one outgoing edge, and these cut edges are exactly $e$ and $e'$. 
For some vertex $v$ and two serial-dependent edges $e,e'$ we say \emph{$e$ is closer to $v$ than $e'$} if there is a subset of vertices $V' \subseteq V$ with $v \notin V'$ and $\delta^+(V) = \{e'\}$ and $\delta^-(V) = \{e\}$. See Figure~\ref{fig:serialDependence:serialDependent} for an illustration of these definitions.
We say a total support $\support$ is a \emph{shortest-path-support} if the following holds for all players $i$: If $e$ and $e'$ are serial-dependent in the graph $G$ and $e$ is closer to $s_i$ than $e$ then the edge $e$ is active for player $i$, i.e., $i \in S_e$. In fact, it is not hard to show that if $\support$ is a shortest-path-support and $\vec{\pi} \in \potentialRegion$, then $\vec{\pi}$ contains the length of shortest paths for all players to all vertices\footnote{For this to be true, we require that the graph $G$ is strongly connected. If this is not the case, additional edges can be added with latency functions with high offsets.} with respect to the edge lengths $a^i_e \bar{x}_e + b_e^i$, where $\vec{x} = \inducedflow (\vec{\pi})$ is the induced flow of potential $\vec{\pi}$. 
Regarding the example from Figure~\ref{fig:serialDependence:multipleSupports}, observe that the edges $(s,v)$ and $(v,t)$ are serial-dependent, since $\delta^+(\{v\}) = \{ (v,t) \}$ and $\delta^- (\{v\}) = \{ (s, v) \}$. Thus, only supports where $(s,v)$ is active are shortest path supports which excludes the second support in the rightmost graph.

Finally, we define the set of $\PPAD$-states as
\begin{align*}
\myState := \{ \support \; : \; \support \text{ is a feasible, non-}a\text{-degenerate, } 
	\text{shortest-path support} \}
.
\end{align*}
Next, we want to define successor and predecessor functions that yield predecessor and successor supports for every support $\support \in \myState$. These predecessors and successors will be based on upper and lower continuative neighbors. Although we eliminated many sources of ambiguity there still can be some degeneracy yielding multiple upper and/or lower continuative neighbors.
\begin{definition}
We say a support $\support \in \myState$ is \emph{$b$-degenerate} if $|\lowerContNeighbors \cap \myState | > 1$ or $|\upperContNeighbors \cap \myState | > 1$ and if there is a unique support $\support \in \myState$ for the Nash equilibrium $\vec{x} = \vec{0}$. A game is called \emph{$b$-degenerate} if it has a $b$-degenerate support.
\end{definition}
For the remainder of this section, we assume that every game is non-$b$-degenerate. We discuss $b$-degenerate games in Section~\ref{sec:bDegeneracy}.
\begin{lemma} \label{lem:startandendstate}
Let $\support \in \myState$. Then,
\begin{enumerate}[(i)]
\item
	$\vec{\pi}^{\min}_{\support}$ is a $0$-potential if $\lowerContNeighbors \cap \myState = \emptyset$.
\item
	$\vec{\pi}^{\max}_{\support}$ is a $1$-potential if $\upperContNeighbors \cap \myState = \emptyset$.
\end{enumerate}
\end{lemma}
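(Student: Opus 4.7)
The plan is to prove both parts by contradiction; I give details for (i), since (ii) is symmetric under the exchange $\lambdaMin \leftrightarrow \lambdaMax$ and $\lowerContNeighbors \leftrightarrow \upperContNeighbors$. Suppose $\bpMin$ is not a $0$-potential, i.e., $\lambdaMin > 0$. By the closed-form expression in Lemma~\ref{lem:lambdapotentialsLineSegment}, the maximum defining $\lambdaMin$ is attained by some pair $(e,i)$, producing an $(e,i)$-neighbor $\altSupport := N(\support, e, i) \in \lowerContNeighbors$. The aim is to show $\altSupport \in \myState$, contradicting the hypothesis $\lowerContNeighbors \cap \myState = \emptyset$.

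Non-emptiness of $\potentialRegion[\altSupport]$ is immediate from Lemma~\ref{lem:continuativeNeighbors}: we have $\bpMin \in \potentialRegion \cap \potentialRegion[\altSupport]$ with matching induced flows, so $\altSupport$ admits a valid Nash equilibrium potential for the demand $\lambdaMin \vec r$. Non-$a$-degeneracy of $\altSupport$ then follows from the rank-one update $\vec{L}_{\altSupport} = \vec{L}_{\support} + \lightNormalVec[\altSupport, e, i] \normalVec^{\top}$ in Theorem~\ref{thm:neighboringLaplacians}~\emph{(i)} combined with the determinant identity derived in its proof, giving $\det(\hat{\vec{L}}_{\altSupport}) = \det(\hat{\vec{L}}_{\support}) \cdot \big(1 + \normalVec^{\top} \vec{L}_{\support}^{+} \lightNormalVec[\altSupport, e, i]\big)$; the first factor is non-zero by non-$a$-degeneracy of $\support$, and if the second factor vanished, the induced flow at $\bpMin$ would extend to a full one-parameter family of equilibria for the fixed demand $\lambdaMin \vec{r}$, contradicting Corollary~\ref{cor:finite}.

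For totality and the shortest-path-support condition of $\altSupport$, I split on whether the flip activates or deactivates edge $e$ for player $i$. In the activation case the active subgraph of player $i$ only grows, so totality is inherited from $\support$; and the shortest-path-support condition persists because every serial-dependent predecessor of the newly activated edge was already active in $\support$. In the deactivation case, the tight boundary condition at $(e,i)$ forces $x^i_e = 0$ at $\bpMin$; using the shortest-path interpretation of $\vec{\pi}^i$ guaranteed by $\support$ being a shortest-path-support, the endpoints of $e$ remain connected in $\altSupport$ via an equally short alternative path, which preserves both totality and the required serial-dependent inclusion after removing $e$ from player $i$'s active set.

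The main obstacle I anticipate is the deactivation step: one must carefully argue that removing the tight edge cannot create a serial-dependence violation or disconnect player $i$'s active subgraph. I expect this to hinge on the vanishing flow $x^i_e = 0$ at the boundary, which forces $e$ to lie on a shortest $s_i$-$t_i$-path whose internal segments must be duplicated by other active paths in $\altSupport$, so the shortest-path interpretation of the potentials is preserved as one crosses the boundary.
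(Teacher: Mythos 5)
Your contrapositive framing (if $\lambda^{\min}_{\support} > 0$, exhibit a shortest-path continuative neighbor) and your treatment of the activation case match the paper, and the feasibility argument via Lemma~\ref{lem:continuativeNeighbors} is fine. The non-$a$-degeneracy detour is unnecessary---the paper works throughout Section~4 under the standing assumption of Section~3.2 that all supports are non-$a$-degenerate, and in any case this follows immediately from the definition of a non-$a$-degenerate game once $\altSupport$ is known to be feasible---but it does no harm.

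The genuine gap is in the deactivation case. You try to show that the particular neighbor $\altSupport = N(\support, e, i)$ attaining the bound $\lambda^{\min}_{\support}$ is itself a shortest-path support, invoking an ``equally short alternative path'' between the endpoints of $e$. This fails exactly when $e$ participates in a serial-dependent pair $(e, e')$ as the edge closer to $s_i$: serial dependence means the cut separating them has size two in $G$, so there is no alternative route around $e$, and deactivating $e$ unconditionally violates the shortest-path-support condition, which demands that the closer edge of every serial-dependent pair be active regardless of any connectivity considerations. The paper does not try to rescue this $\altSupport$. Instead it observes that, precisely because $e$ and $e'$ are serial-dependent and $x^i_e = 0$ at $\bpMin$, the flow across the size-two cut is zero, so $x^i_{e'} = 0$ too; hence the constraint for $(e', i)$ is also tight at $\bpMin$ and $N(\support, e', i)$ is another lower continuative neighbor. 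Iterating---always passing to edges farther from $s_i$, which terminates on a finite graph---eventually lands on a continuative neighbor that is a shortest-path support. You correctly anticipated the obstacle, but your proposed resolution addresses the wrong question (whether a replacement path exists, rather than which $(e,i)$-flip to perform), and without the replacement-and-iteration step the proof does not establish $\lowerContNeighbors \cap \myState \neq \emptyset$.
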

\begin{proof}
If $\lambda^{\min}_{\support} = 0$, then $\vec{\pi}^{\min}_{\support}$ is a $0$-potential by definition. Thus, in order to prove \emph{(i)}, it suffices to show that, whenever $\lambda^{\min}_{\support} > 0$, we have $\lowerContNeighbors \cap \myState \neq \emptyset$. By definition of $\lambda^{\min}_{\support}$, we know that $\lambda^{\min}_{\support} > 0$ implies that $\lowerContNeighbors \neq \emptyset$. Further we know, that all neighboring supports in $\lowerContNeighbors$ are feasible (as they are continuative neighbors). Thus, we only need to show that there is at least one shortest-path support in $\lowerContNeighbors$. If we assume that $\altSupport := N(\support, e, i) \in \lowerContNeighbors$ is not a shortest-path support, then edge~$e$ must be inactive for player~$i$ in $\altSupport$ and active in $\support$. (Otherwise, $\altSupport$ has one more active edge than the shortest-path support $\support$ implying that $\altSupport$ must be shortest-path itself.) In particular, there must be another edge $e'$ such that $e$ and $e'$ are serial dependent, and $e$ must be closer to $s_i$ than $e'$. (Otherwise, the shortest-path property would not be violated.) But since $e$ and $e'$ are serial-dependent, the flow $x_{e,i} = x_{e',i} = 0$ is zero on both edges in the boundary potential $\vec{\pi}^{\min}_{\support}$. Hence, $N(\support, e', i)$ must also be a continuative neighbor. Therefore, by iterating this argument, we obtain some continuatively neighboring support that must be shortest path.

The same argument holds also for $\vec{\pi}^{\max}_{\support}$ and the upper continuative neighbors $\upperContNeighbors$.
\end{proof}
Lemma~\ref{lem:startandendstate} states, that whenever there is no lower continuative neighbor, then the lower boundary potential $\vec{\pi}^{\min}_{\support}$ is a $0$-potential, i.e., a potential inducing the zero-flow. Similarly, whenever a support has no no upper continuative neighbor, then the support admits the $1$-potential $\vec{\pi}^{\max}_{\support}$, i.e., a potential inducing a Nash equilibrium for demand $\vec{r}$ and thus a solution to \atomicsplittable.

Finally, we introduce the \emph{orientation $\sigma_{\support}$ of the support $\support$} by defining
$
\sigma_{\support} := \sgn( \det ( \hat{\vec{L}}_{\support} ) )
.$
That is, the orientation of the support $\support$ is the sign of the determinant of the submatrix $\hat{\vec{L}}_{\support}$ obtained from $\vec{L}_{\support}$ by deleting the first row and first column.  Theorem~\ref{thm:neighboringLaplacians} implies that the relative orientation of the directions $\Delta \vec{\pi}_{\support}$ of two neighboring supports depends on the orientation of the respective supports. This implies that, for example, the intersection $\potentialRegion \cap \potentialRegion[\altSupport]$ contains the upper boundary potential $\vec{\pi}^{\max}_{\support}$ of one of the supports and the lower boundary potential $\vec{\pi}^{\min}_{\altSupport}$ whenever both supports have the same orientation. If two neighboring, continuative neighbors have opposite orientation the intersection of their $\lambda$-potentials $\potentialRegion \cap \potentialRegion[\altSupport]$ contains either both lower or both upper boundary potentials. For the example in Figure~\ref{fig:continuativeNeighbors}, we see that supports $\support_0$ and $\support_1$ have the same orientation, while $\support_1$ and $\support_2$ have opposite orientation.
Using the orientation, we finally define a predecessor function $\predF : \myState \to \myState \cup \{ \emptyset \}$ and successor function $\succF : \myState \to \myState \cup \{ \emptyset \}$ as follows:
\begin{align*}
\predF (\support) &:= 
\begin{cases}
	\emptyset &
		\text{if } \vec{\pi}^{\min}_{\support} \text{ is a }0\text{-potential}, \\
	\altSupport \in \lowerContNeighbors \cap \myState &
		\text{if }\sigma_{\support} = 1 \text{ and } \vec{\pi}^{\min}_{\support}
		\text{ is not a }0\text{-potential}, \\
	\altSupport \in \upperContNeighbors \cap \myState 
		&\text{if } \sigma_{\support} = -1 \text{ and } \vec{\pi}^{\min}_{\support} 
		\text{ is not a }0\text{-potential}	,
\end{cases}
\\
\succF (\support) &:= 
\begin{cases}
	\emptyset &
		\text{if } \vec{\pi}^{\max}_{\support} \text{ is a }1\text{-potential}, \\
	\altSupport \in \upperContNeighbors \cap \myState &
		\text{if }\sigma_{\support} = 1 \text{ and } \vec{\pi}^{\max}_{\support}
		\text{is not a }1\text{-potential}, \\
	\altSupport \in \lowerContNeighbors \cap \myState &
		\text{if } \sigma_{\support} = -1 \text{ and } \vec{\pi}^{\max}_{\support} 
		\text{is not a }1\text{-potential}.
\end{cases}
\end{align*}
Additionally, we define the constant function $\startF$, that maps to the unique support $\support^0$ of the zero flow. The following lemma shows that the $\support^0$ as well as the functions $\predF$ and $\succF$ are well-defined, computable in polynomial time and compliant.
\begin{lemma} \label{lem:startPredSucc}
The functions $\startF, \predF, \succF$ are well-defined and computable in polynomial time. Furthermore,
\begin{enumerate}[(i)]
\item
	The support $\support^0 := \startF$ has positive orientation.
\item
	If $\predF( \support ) \neq \emptyset$, then $\succF(\predF( \support )) = \support$.
\item
	If $\succF( \support ) \neq \emptyset$, then $\predF(\succF( \support )) = \support$.
\item
	If $\succF ( \support ) = \emptyset$, then there is a $1$-potential with support $\support$.
\end{enumerate}
\end{lemma}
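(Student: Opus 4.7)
The plan is to dispatch well-definedness and polynomial-time computability first, then handle parts (iv), (ii), and (iii) by short direct arguments, and finally attack (i) as the main technical obstacle. For well-definedness, $\startF$ returns the unique support of the zero flow, which is postulated by the uniqueness clause in the definition of non-$b$-degeneracy. For $\predF$ and $\succF$, non-$b$-degeneracy ensures $|\lowerContNeighbors \cap \myState|, |\upperContNeighbors \cap \myState| \leq 1$ whenever these sets are queried, so any non-empty output is unique. All objects involved---$\vec{L}_\support$, a generalized inverse, $\bar{\vec{d}}_\support$, $\Delta\vec{\pi}_\support$, $\vec{\pi}^{\min}_\support$, $\vec{\pi}^{\max}_\support$, the sign $\sigma_\support$, and the continuative-neighbor sets---are computed by a polynomial number of standard linear algebra operations on matrices of size $\mathcal{O}(nk)$ or $\mathcal{O}(mk)$, giving polynomial-time computability.

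Part (iv) is immediate from the case distinction defining $\succF$: $\succF(\support) = \emptyset$ occurs exactly in the first branch, meaning $\vec{\pi}^{\max}_\support$ is a $1$-potential, which is a $1$-potential supported on $\support$. The non-trivial check is that the two remaining branches never return $\emptyset$ when $\vec{\pi}^{\max}_\support$ is not a $1$-potential. For the $\sigma_\support = 1$ branch this is Lemma~\ref{lem:startandendstate}(ii); for $\sigma_\support = -1$, emptiness of $\lowerContNeighbors \cap \myState$ would by Lemma~\ref{lem:startandendstate}(i) force $\vec{\pi}^{\min}_\support$ to be a $0$-potential, hence $\support = \support^0$ by non-$b$-degeneracy, contradicting $\sigma_\support = -1$ in view of (i).

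For (ii) and (iii), which are symmetric, I outline (ii). Suppose $\altSupport := \predF(\support) \neq \emptyset$ is realized via the $(e,i)$-neighbor. If $\sigma_\support = +1$, then $\altSupport \in \lowerContNeighbors$, the $(e,i)$-constraint is tight at $\vec{\pi}^{\min}_\support$, and $\normalVec^\top \Delta\vec{\pi}_\support < 0$. Theorem~\ref{thm:neighboringLaplacians}(iii) then yields $\sgn(\normalVec[\altSupport,e,i]^\top \Delta\vec{\pi}_\altSupport) = \sigma_\altSupport$, so the $(e,i)$-constraint is tight at $\vec{\pi}^{\max}_\altSupport$ if $\sigma_\altSupport = +1$ and at $\vec{\pi}^{\min}_\altSupport$ if $\sigma_\altSupport = -1$. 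In either subcase, $\support$ lies in the particular continuative-neighbor set of $\altSupport$ that $\succF$ selects under the current orientation of $\altSupport$, and non-$b$-degeneracy forces $\succF(\altSupport) = \support$. The case $\sigma_\support = -1$ and the analogous statement (iii) are handled by swapping the roles of the min- and max-boundaries.

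The main obstacle is (i), the positivity of $\det(\hat{\vec{L}}_{\support^0})$. The plan is to exploit that $\support^0$ is the unique shortest-path support of the zero flow, so for each player $i$ the active edges form the shortest-path subgraph with respect to $b_{\cdot, i}$. The restriction $\hat{\vec{C}}_{\support^0}$ of $\vec{C}_{\support^0}$ to its active rows and columns is a $Z$-matrix with positive diagonal and strict column-diagonal dominance (as observed inside the proof of Lemma~\ref{lem:matricesC}), hence an $M$-matrix with $\det(\hat{\vec{C}}_{\support^0}) > 0$ and all principal minors positive. Writing $\hat{\vec{L}}_{\support^0} = \hat{\vec{G}}\,\hat{\vec{C}}_{\support^0}\,\hat{\vec{G}}^\top$ for the block incidence matrix restricted to active edges, I would apply the Cauchy-Binet formula twice to obtain $\det(\hat{\vec{L}}_{\support^0}) = \sum_{S,T} \det(\hat{\vec{G}}_{\cdot,S}) \det((\hat{\vec{C}}_{\support^0})_{S,T}) \det(\hat{\vec{G}}_{\cdot,T})$, where $S,T$ range over active-edge subsets of cardinality $k(n-1)$ that each induce one spanning tree per player. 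The diagonal contribution $\sum_S \det(\hat{\vec{G}}_{\cdot,S})^2 \det((\hat{\vec{C}}_{\support^0})_{S,S})$ is a sum of non-negative terms, and is strictly positive because totality of $\support^0$ guarantees at least one spanning-tree configuration $S$. The hard step, which I expect to be the principal technical difficulty, is to show that the off-diagonal $(S \neq T)$ terms cannot cancel this positive contribution; I plan to establish this via a sign argument leveraging the tree-like shortest-path structure of $\support^0$ together with the sign pattern and minor positivity of the $M$-matrix $\hat{\vec{C}}_{\support^0}$.
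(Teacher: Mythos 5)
Your treatment of well-definedness, polynomial-time computability, and parts (ii)--(iv) is sound---in fact more explicit than the paper's own proof, which essentially leaves those parts to the reader after establishing (i). The gap is in (i), and it is a gap that a single structural observation would have closed.

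What you are missing is that non-$b$-degeneracy forces the shortest paths used to build $\support^0$ to be unique, so for every player $i$ the active edges of $\support^0$ form a spanning tree with exactly $n-1$ edges. Consequently the restricted incidence matrix $\check{\vec{G}}$ (source rows and inactive columns removed) is a \emph{square} $(n-1)k \times (n-1)k$ block-diagonal matrix whose blocks are reduced incidence matrices of spanning trees, so $\det(\check{\vec{G}}) = \pm 1$. With $\hat{\vec{L}}_{\support^0} = \check{\vec{G}}\,\check{\vec{C}}\,\check{\vec{G}}^\top$ and all three factors square, one gets directly
\[
\det\bigl(\hat{\vec{L}}_{\support^0}\bigr) = \det(\check{\vec{G}})^2 \det(\check{\vec{C}}) = \det(\check{\vec{C}}) > 0,
\]
where positivity of $\det(\check{\vec{C}})$ is exactly the strict diagonal dominance with positive diagonal you already identified. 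There is no Cauchy--Binet expansion and no off-diagonal cancellation to control, because there is only one index set. The ``hard step'' you flag is not a technicality to be pushed through by a generic sign argument: for general feasible supports with player-specific costs the orientation $\sigma_\support$ \emph{can} be negative (otherwise the orientation machinery of this section would be pointless), so any correct proof of (i) must use what is special about $\support^0$, and what is special is precisely the spanning-tree structure that makes $\check{\vec{G}}$ square. You picked the right factorization and the right diagonal-dominance input, but without this observation your plan does not go through.
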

\begin{proof}
Since we assume that the game is non $b$-degenerate, there is a unique support $\support^0$ for the Nash equilibrium $\vec{x} = \vec{0}$. Further, this support can be computed in polynomial time by computing shortest-path form $s_i$ to all vertices $v$ for all players~$i$ with respect to the offsets $b_{e,i}$. Then all edges that are on the (unique) shortest paths are considered active, edges not on shortest paths are considered inactive. Since we assume by non-$b$-degeneracy that all shortest-path are unique, the active edges for every player wrt. $\support^0$ form a spanning tree. Denote by $\check{\vec{G}}$ the matrix obtained from $\vec{G}$ by deleting all columns corresponding to inactive edges and all rows corresponding to the source vertices~$s_i$. Then $\check{\vec{G}}$ is quadratic and it is not hard to show that $\det (\check{\vec{G}}) = \pm 1$ (In fact, $\check{\vec{G}}$ is a block-diagonal matrix, and the diagonal matrices are vertex-edge incidence matrices of the spanning trees of active edges.)
Further, let $\check{\vec{C}}$ be the matrix obtained from $\tilde{\vec{C}}$ by deleting all rows and columns corresponding to inactive edges. Then $\hat{\vec{L}} = \check{\vec{G}} \check{\vec{C}} \check{\vec{G}}^{\top}$. Since $\check{\vec{C}}$ is strictly diagonal dominant by definition, we get that $\sigma_{\support} = \det( \hat{\vec{L}} ) = \det ( \check{\vec{C}} )  > 0$.

\end{proof}

Lemma~\ref{lem:startPredSucc} immediately proves that \atomicsplittable{} is in $\PPAD$.
\begin{theorem}\label{thm:PPADmembership}
\atomicsplittable\ is in $\PPAD$.
\end{theorem}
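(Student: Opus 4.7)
The plan is to package the ingredients from Section~\ref{sec:membership} directly into a $\PPAD$ certificate. Each state in $\myState$ admits the polynomial-size encoding as a bitvector in $\{0,1\}^{mk}$ specifying $i\in S_e$ for every edge-player pair, and given such a bitvector one verifies in polynomial time that the support is total, shortest-path (by checking serial-dependent cuts), feasible (by solving the linear system~\eqref{eq:thm:lambdapotential}), and non-$a$-degenerate (by computing $\rank(\vec L)$ via Lemma~\ref{lem:matricesC}). To match the $\PPAD$ convention that $\predF$ and $\succF$ must always return a state, I would convert the $\emptyset$-returns from the definitions in this section into self-loops; then sources are precisely those $\support$ with $\predF(\support)=\support$ in the modified function, and sinks those with $\succF(\support)=\support$.

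The structural properties required by $\PPAD$ are then already delivered by Lemma~\ref{lem:startPredSucc}: polynomial-time computability of $\startF,\predF,\succF$ together with the compatibilities $\succF(\predF(\support))=\support$ and $\predF(\succF(\support))=\support$. These compatibilities ensure every node has in-degree and out-degree at most one, so the implicit graph is a disjoint union of paths and cycles, which is exactly the $\PPAD$ structure. It remains to confirm that $\support^0:=\startF$ is a source: by definition $\support^0$ is the (unique, by non-$b$-degeneracy) support of the zero flow, which is the Nash equilibrium for zero demand, so $\bpMin[\support^0]$ is a $0$-potential and hence $\predF(\support^0)=\emptyset$.

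Executing the $\PPAD$-search from $\support^0$ via $\succF$ must terminate at some sink $\support^*$, because any second source would yield another shortest-path support inducing the zero flow, contradicting non-$b$-degeneracy. By Lemma~\ref{lem:startPredSucc}\emph{(iv)}, the upper boundary potential $\bpMax[\support^*]$ is a $1$-potential, and the induced flow $\inducedflow_{\support^*}(\bpMax[\support^*]) = \vec C_{\support^*}\bigl(\vec G^\top \bpMax[\support^*]-\vec b\bigr)$ is, by Theorem~\ref{thm:lambdapotential} combined with Lemma~\ref{lem:incudedflow}, a Nash equilibrium for the full demand~$\vec r$. The output is recovered in polynomial time by a single matrix-vector multiplication, completing the $\PPAD$-reduction.

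The main obstacle is not the $\PPAD$ packaging itself—which, once the state space is fixed, is essentially mechanical—but rather ensuring \emph{well-definedness} of $\predF$ and $\succF$, namely that each non-terminal support has a \emph{unique} continuative neighbor of each type. This hinges on Theorem~\ref{thm:neighboringLaplacians}\emph{(iii)} and the orientation sign $\sigma_{\support}=\sgn\det(\hat{\vec L}_{\support})$, which align the one-dimensional direction vectors $\Delta\vec\pi_{\support}$ of neighboring supports so that a step from $\support$ to $\altSupport$ followed by the corresponding reverse step returns to $\support$ rather than to a distinct neighbor. Together with the non-degeneracy assumptions (to be discharged separately by perturbation in Sections~\ref{sec:aDegeneracy}--\ref{sec:bDegeneracy}), this is what makes the $\PPAD$ graph well-defined and the reduction tight.
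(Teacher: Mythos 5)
Your proposal is correct and takes essentially the same route as the paper, which simply observes that Lemma~\ref{lem:startPredSucc} immediately yields $\PPAD$-membership; you spell out the packaging (bitvector encoding, self-loops, source uniqueness via non-$b$-degeneracy, sink-to-equilibrium conversion via Lemma~\ref{lem:startPredSucc}\emph{(iv)} and Lemma~\ref{lem:incudedflow}) that the paper leaves implicit. One small framing note: $\PPAD$-membership does not require ``executing the search until it terminates'' (that would be exponential); it only requires that the functions are polynomial-time, that $\startF$ is a source, and that any sink or non-start source decodes to a solution — which, as you correctly observe, holds because non-$b$-degeneracy rules out a second source and every sink carries a $1$-potential.
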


We obtain the following corollary which mirrors a similar result for non-degenerated bimatrix games. For the proof, we use that support changes appear only for a nullset of demands since there are only finitely many supports. This implies that except for a nullset of demands, the equilibria appear in the relative interior of the polytopes $\mathcal{P}_{\mathcal{S}}$ for some supports $\mathcal{S}$. This further implies that these equilibria are all unique startpoints or endpoints of a $\PPAD$-paths. Subtracting the artificial equilibrium at $\lambda = 0$, we have shown that there is an odd number of equilibria.   

\begin{corollary}
\label{cor:odd}
A non-$b$-degenerate atomic splittable congestion game has an odd number of Nash equilibria except for a nullset of demands.
\end{corollary}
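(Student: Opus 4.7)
I would combine a genericity argument with the parity of sources and sinks in the $\PPAD$-graph from Lemma~\ref{lem:startPredSucc}. The key observation is that, for $\vec{r}$ outside a measure-zero set, the Nash equilibria at $\lambda = 1$ are in one-to-one correspondence with the sinks of this graph, so the parity of the equilibrium count is inherited from that of the sink count.

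\emph{Genericity and bijection.} There are only finitely many supports $\support \in \myState$ and, for each, finitely many inequality constraints $\vec{w}^{\top}_{\support,e,i}(\vec{G}^{\top} \vec{\pi} - \vec{b}) \geq 0$ defining $\potentialRegion$. I would call a demand vector $\vec{r}$ \emph{degenerate} if some such constraint is tight at the $\lambda=1$ potential $\bpMax = \Delta \vec{\pi}_{\support} + \potentialOffset$ of some $\support$, or if some $\support \neq \support^0$ has $\bpMin$ equal to a $0$-potential. Using the parametric representation $\Delta \vec{\pi}_{\support} = \vec{L}^*_{\support} \Delta \vec{y}$, each of these conditions is a non-trivial linear equation in $\vec{r}$, so the degenerate set is a finite union of hyperplanes in $\R^k_{\geq 0}$ and hence a Lebesgue null set. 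For $\vec{r}$ outside this set, every Nash equilibrium at $\lambda = 1$ lies in the relative interior of exactly one polytope $\potentialRegion$, by injectivity of the induced-flow map on $\potentialSpace$ (Lemmas~\ref{lem:incudedflow} and~\ref{lem:matricesC}). In particular $\lambdaMax = 1$ is attained by the upper bound on $\lambda$ rather than by a tight constraint, so $\bpMax$ is a $1$-potential and $\succF(\support) = \emptyset$. Conversely every sink yields a $1$-potential by Lemma~\ref{lem:startPredSucc}(iv), so Nash equilibria at $\lambda = 1$ are in bijection with sinks.

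\emph{Parity.} By Lemma~\ref{lem:startPredSucc}(ii)(iii), the directed graph on $\myState$ defined by $\predF$ and $\succF$ has in-degree and out-degree at most one, so it decomposes into directed paths, directed cycles, and isolated vertices. Each path and each isolated vertex contributes exactly one source ($\predF = \emptyset$) and one sink ($\succF = \emptyset$), while cycles contribute none; hence the number of sources equals the number of sinks. By non-$b$-degeneracy and genericity of $\vec{r}$, a support $\support \in \myState$ is a source if and only if $\support = \support^0$, because the zero flow admits a unique shortest-path support in $\myState$ and injectivity of $\inducedflow_{\support}$ (Lemma~\ref{lem:matricesC}) forces any other source to also support the zero flow. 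Thus the $\PPAD$-graph has exactly one source, hence exactly one sink, and by the bijection the game has a single Nash equilibrium---an odd number.

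\emph{Main obstacle.} The delicate part is verifying the genericity claim: the linear functionals in $\vec{r}$ that encode tightness of the constraints at $\lambda = 1$ must not vanish identically. This uses non-$a$-degeneracy to ensure that the generalized inverse $\vec{L}^*_{\support}$ is well-defined as in Section~\ref{sec:potentialspace}, so that $\bpMax$ depends non-trivially on $\vec{r}$ through the relevant coordinates of the excess direction $\Delta \vec{y}$.
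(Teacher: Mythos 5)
Your proof concludes that there is a \emph{single} Nash equilibrium, which is a strictly stronger statement than the corollary claims—and it is false in general. Non-$b$-degenerate, non-$a$-degenerate atomic splittable congestion games with player-specific costs can genuinely have more than one equilibrium; this is precisely why the corollary asserts only an \emph{odd} count and why the paper compares it to the non-degenerate bimatrix case, where three, five, etc.\ equilibria occur.

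The gap is in your characterization of sources. You argue that $\predF(\support) = \emptyset$ forces $\vec{\pi}^{\min}_{\support}$ to be a $0$-potential, hence forces $\support$ to support the zero flow, hence $\support = \support^0$. But this ignores the role of the orientation $\sigma_{\support}$. Along the piecewise linear curve of $\lambda$-potentials, $\lambda$ is \emph{not} monotone: when $\sigma_{\support} = -1$ the path enters $\potentialRegion$ through $\vec{\pi}^{\max}_{\support}$ and exits through $\vec{\pi}^{\min}_{\support}$. Consequently, a support with $\sigma_{\support} = -1$ and $\lambda^{\max}_{\support} = 1$ has no predecessor (the set $\upperContNeighbors \cap \myState$ from which $\predF$ selects is empty for generic $\vec{r}$) and is therefore a source—yet its boundary potential at $\lambda=1$ is a genuine Nash equilibrium for demand $\vec{r}$, and it does not support the zero flow. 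These are exactly the extra sources that raise the equilibrium count above one. Symmetrically, supports with $\lambda^{\max}_{\support} = 1$ and $\sigma_{\support} = 1$ are sinks. Writing $a$ for the number of such sinks and $b$ for the number of such sources, the equality of source and sink counts in the path decomposition gives $1 + b = a$, and the total number of equilibria at $\lambda = 1$ is $a + b = 2b + 1$, which is odd but not necessarily one. This is the argument the paper sketches ("these equilibria are all unique startpoints or endpoints of $\PPAD$-paths; subtracting the artificial equilibrium at $\lambda=0$"); your version collapses the two cases into the sink case only and thereby loses the $2b$.

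Your genericity/nullset discussion is fine and matches the paper's, and the observation that $\#\mathrm{sources} = \#\mathrm{sinks}$ in a union of paths and cycles is the right parity engine. What is missing is the recognition that an equilibrium at $\lambda = 1$ can lie at the source end of a $\PPAD$-path whenever the orientation is $-1$ there.
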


In Section~\ref{sec:bDegeneracy} we prove that a small perturbation of the additive coefficients $b_{e,i}$ makes a $b$-degenerate game non-$b$-degenerate. Further, we show in Corollary~\ref{cor:nonbdegenerate:as} that for almost all $\vec{b} = (b_{e,i})_{e \in E, i \in \range{k}} \in \R^{mk}$ the game with offset vector $\vec{b}$ is non-$b$-degenerate.

\shortversion{}{\section{$\PPAD$-hardness}}
\shortversion{To complement our results, we show}{
In this section, we show} that it is $\mathsf{PPAD}$-hard to compute an equilibrium in an integer-splittable congestion game with affine player-specific costs.

\newcommand{\winlosegame}{\textsc{Approximate-Nash-Win-Lose-Game}}

\begin{appendixproof*}
\shortversion{\subsection{Proof of Theorem~\ref{thm:hardness}}}{}
To prove that computing an equilibrium is $\PPAD$-hard, we reduce from the problem of computing an $\epsilon$-approximate Nash equilibrium in Win-Lose-Games.

\begin{framed}
\noindent\winlosegame\\[6pt]
\begin{tabular}{@{} ll}
\noindent \textsc{Input:~} & matrices $\vec U, \vec V \in \{0,1\}^{n \times n}$, $\epsilon > 0$\\
\noindent \textsc{Output:~} & strategies $\bar{\vec y}, \bar{\vec z} \in \R^{n}_{\geq 0}$  \\[.5ex]
&with $\sum_{j=1}^n \bar{y}_j = \sum_{j=1}^n 
	\bar{z}_j = 1$ such that \\
	&$\quad  \vec y^{\top} \vec U \bar{\vec z} \leq \bar{\vec y}^{\top} \vec U \bar{\vec z} + \epsilon$ \\
	&$\quad \bar{\vec y}^{\top} \vec V \vec z \leq \bar{\vec y}^{\top} \vec V \bar{\vec z} + \epsilon$ \\
	&for all strategies $\vec y,\vec z$.
\end{tabular}
\end{framed}

The following result is due to Chen et al.~\cite{chen2007}.

\begin{theorem}[Chen et al.~\cite{chen2007}]
For any constant $\beta > 0$, \winlosegame\ for matrices $\vec U, \vec V \in \{0,1\}^{n \times n}$ and $\epsilon = n^{-\beta}$ is $\mathsf{PPAD}$-complete.
\end{theorem}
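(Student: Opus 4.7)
The plan is to prove this $\PPAD$-completeness via the generalized circuit framework pioneered by Daskalakis, Goldberg, and Papadimitriou, and substantially refined by Chen, Deng, and Teng. The proof decomposes into two main stages: first, establishing that the approximate satisfiability of polynomial-size arithmetic circuits ($\epsilon$-\textsc{GCircuit}) is $\PPAD$-complete for $\epsilon = 1/\mathrm{poly}(n)$; and second, reducing $\epsilon$-\textsc{GCircuit} to the approximate Nash problem in \emph{win-lose} bimatrix games while preserving a polynomial relationship between the approximation parameters. $\PPAD$-membership is routine via Nash's theorem and Scarf-style path-following, so the work is concentrated in the hardness direction.

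For the first stage I would invoke the standard chain of reductions from the canonical \textsc{End-of-Line} problem through high-dimensional approximate Brouwer fixed point down to $\epsilon$-\textsc{GCircuit} over a fixed gate set consisting of constant assignment, addition, subtraction, multiplication by a fixed rational, and ``brittle'' comparators. The key output is that for any $\epsilon = 1/\mathrm{poly}(n)$ this problem is $\PPAD$-complete, and the reduction is efficient enough that the circuit size is polynomial in the inverse precision.

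For the second and central stage I would construct a bimatrix game in which each wire $v$ of the circuit corresponds to a pair of pure strategies: a strategy $r_v$ for the row player and $c_v$ for the column player. The payoff matrices would be engineered so that in any $\epsilon'$-approximate Nash equilibrium $(\bar{\vec y}, \bar{\vec z})$ the probability $\bar y_{r_v}$ equals the logical value of wire $v$ in a valid circuit evaluation, up to controlled error. Each gate is realized by a small gadget: local best-response incentives tie the probabilities associated with the output wire to a (nearly) linear or comparator function of the probabilities of its input wires, and by composing these gadgets along the circuit, approximate equilibrium conditions simulate approximate circuit satisfaction.

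The main obstacle is the win-lose restriction confining all payoffs to $\{0,1\}$, which forbids placing arbitrary rational coefficients directly in the matrices and thus seems to rule out multiplication-by-constant and general arithmetic gadgets. I would overcome this using the Chen--Teng--Valiant counting trick: each logical wire is replaced by a group of many identical physical pure strategies, and an auxiliary matching-pennies-style anchor gadget forces the row player to spread probability uniformly within each group. The effective ``coefficient'' attached to a logical action is then controlled by the \emph{number} of pure strategies comprising its group, so arbitrary rationals can be simulated by counting alone, and all arithmetic gates can be encoded purely with $0/1$ payoffs. The delicate part is calibrating the polynomial blow-up and propagating discretization errors through the circuit so that, for every fixed $\beta > 0$, an $n^{-\beta}$-approximate Nash equilibrium of the constructed win-lose game still decodes to an $\epsilon$-approximate solution of the underlying \textsc{GCircuit} instance; this is achievable because the blow-up factor enters only polynomially in the size of the constructed game.
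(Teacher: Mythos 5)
The paper does not prove this statement at all: it is imported verbatim from Chen et al.~\cite{chen2007} and used as the source problem for the reduction in Theorem~\ref{thm:hardness}, so there is no in-paper argument to compare yours against. Your outline is consistent with how the result is actually established in the literature you implicitly lean on: the \textsc{End-of-Line} $\to$ Brouwer $\to$ generalized-circuit chain of Daskalakis--Goldberg--Papadimitriou and Chen--Deng--Teng~\cite{daskalakis2009,chen2009}, followed by a gadget-per-gate bimatrix construction, with the win-lose restriction handled by simulating rational coefficients through strategy multiplicities under a matching-pennies-style uniformization gadget, as in Chen--Teng--Valiant~\cite{chen2007}. In that sense your route is the standard one, not a new one. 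However, as a proof it remains a plan: both stages you invoke are precisely the content of the cited works, and the single genuinely delicate point --- realizing the arithmetic and brittle-comparator gates with payoffs in $\{0,1\}$ while controlling the per-gadget error and the polynomial blow-up so that an $n^{-\beta}$-approximate equilibrium of the constructed game still decodes to an approximate \textsc{GCircuit} solution for every fixed $\beta > 0$ --- is asserted (``this is achievable because the blow-up factor enters only polynomially'') rather than carried out; this calibration is exactly where the difficulty of the win-lose case lies, since the naive binary-encoding reduction for exact equilibria does not preserve inverse-polynomial approximation. If your goal is to justify the theorem as used in this paper, citing~\cite{chen2007} as the authors do is the right move; if you intend a self-contained proof, the uniformization/counting gadget and its error propagation analysis must be made explicit.
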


\shortversion{We are now ready to prove Theorem~\ref{thm:hardness}.}{We show the following result.}
\end{appendixproof*}

\begin{theorem} \label{thm:hardness}
\atomicsplittable\ is $\PPAD$-hard.
\end{theorem}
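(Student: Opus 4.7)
The plan is to reduce from \winlosegame\ with parameter $\epsilon = n^{-\beta}$ for an arbitrary constant $\beta > 0$. Given matrices $\vec U, \vec V \in \{0,1\}^{n \times n}$, I would construct in polynomial time a 2-player atomic splittable congestion game $(G, K, l)$ on a planar graph such that any exact Nash equilibrium $\vec x$ of the congestion game can be decoded in polynomial time into an $\epsilon$-approximate Nash equilibrium $(\bar{\vec y}, \bar{\vec z})$ of the bimatrix game. The two players of the congestion game correspond to the two bimatrix players, both with demand $r_1 = r_2 = 1$, so that the relevant components of their flows can be interpreted as mixed strategies after normalization.

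The graph $G$ is built from two symmetric routing blocks plus $O(n^2)$ interaction edges. Player~1's block contains $n$ parallel $s_1$-$t_1$-routes, and the amount of flow player~1 sends along the $i$-th route represents $\bar{y}_i$; the construction for player~2 is symmetric. The routes of the two players are interleaved by payoff-encoding edges: for every pair $(i,j)$ with $U_{ij} = 1$, I insert a ``row'' edge lying on player~1's $i$-th route and also on player~2's $j$-th route, and symmetric ``column'' edges using $V_{ij}$. The slopes $a_{e,i}$ and offsets $b_{e,i}$ on every edge are chosen so that in any Nash equilibrium player~1's marginal total cost along the $i$-th route equals $(\vec U \bar{\vec z})_i$ plus a common route-independent constant, and analogously for player~2 with $\vec V$.

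Once the calibration is done, Lemma~\ref{lem:equilibrium:marginalcost} yields the reduction: in any Nash equilibrium, all routes used by a given player must have equal marginal cost, and unused routes must have cost no smaller. Translating through the calibration, every $i$ in the support of $\bar{\vec y}$ must then maximize $(\vec U \bar{\vec z})_i$, and similarly for $\bar{\vec z}$, which is exactly the Nash condition of the bimatrix game. The $\epsilon$-slack is introduced because the cost coefficients have polynomial-size rational encodings, so best-response gaps are matched only up to a controlled error; scaling the parameters polynomially in $n$ and $\log(1/\epsilon)$ drives this error below $\epsilon = n^{-\beta}$. Planarity follows from a bipartite-style layout: player~1's routing block sits above a horizontal strip containing the $n^2$ interaction edges, player~2's block sits below, and because there are only two commodities the interaction edges can be routed between the two blocks without crossings.

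The main obstacle is the asymmetry in the marginal cost formula $\mu^i_e(\vec x) = a_{e,i}(\bar x_e + x_e^i) + b_{e,i}$, in which player~$i$'s own flow enters every shared edge with double weight compared to the opponent's flow. This creates an unwanted feedback of player~$i$'s mixed strategy into their own best-response computation, which would corrupt the correspondence between route marginal costs and the bimatrix payoff vector $\vec U \bar{\vec z}$. I would cancel this feedback by inserting, inside each routing block, compensating edges whose slopes are tuned to exactly offset the own-flow contribution accumulated on the payoff edges, so that the marginal cost seen by player~$i$ depends only on the opponent's flow up to an $\epsilon$-error. A secondary difficulty is to ensure that the rounding inherent in using fixed-bit-length rational coefficients can be absorbed into the prescribed $\epsilon$-slack; this is expected to be handled by choosing large but polynomially bounded scaling factors on the cost functions, which by routine arguments yield the required precision while preserving planarity.
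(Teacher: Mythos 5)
Your high-level plan matches the paper's strategy closely — reduce from \winlosegame, use a two-player game with demands $1$, encode the bimatrix payoffs in the slopes (the paper literally sets $a_{e,1}=1-u_{r,c}$ and $a_{e,2}=1-v_{r,c}$ on interleaved "row" and "column" edges), and appeal to Lemma~\ref{lem:equilibrium:marginalcost} to equalize marginal route costs in equilibrium. You also correctly identify the central obstacle: the own-flow term $a_{e,i}(\bar x_e+x^i_e)$ in $\mu^i_e$, which couples a player's mixed strategy back into their own route cost.

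However, your device for handling that obstacle — ``compensating edges whose slopes are tuned to exactly offset the own-flow contribution'' — does not work, and this is the genuine gap. The contribution you want to cancel, summed along player~$1$'s route~$i$, is $\sum_j 2(1-u_{ij})\,y_i$, which is \emph{positive} and \emph{route-dependent through $y_i$}. Any compensating edge on route~$i$ used only by player~$1$ contributes $2a_{e,1}y_i + b_{e,1}$ with $a_{e,1}>0$; it can only \emph{add} another positive multiple of $y_i$, never subtract one, because the problem definition requires $a_{e,i}\in\R_{>0}$. Trying instead to equalize the own-flow term across routes fails for the same reason: any choice of compensating slopes leaves a total own-flow term $2Cy_i$ whose value still varies with $y_i$. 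The paper's solution is not cancellation but \emph{dilution}: each payoff edge inside gadget $G_{r,c}$ is replaced by $T=2n^{\beta+1}$ parallel copies, and a separate claim (Claim~\ref{clm:equal_use}) shows that in equilibrium player~$1$ spreads $y_i/T$ on each copy, so the own-flow term along a route is at most $2n/T=n^{-\beta}$ — which is exactly the approximation slack \winlosegame\ already tolerates. Without this (or an equivalent) idea, the correspondence between congestion-game marginal costs and bimatrix payoffs breaks down by an additive error of order $y_i$, which is $\Theta(1)$, not $n^{-\beta}$. Your remark that rational rounding can be ``absorbed into the $\epsilon$-slack'' does not help, because the own-flow error is structural, not a rounding artifact.

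A secondary concern is planarity: a layout with player~$1$'s block above and player~$2$'s block below a strip of $n^2$ interaction edges, each of which must lie on one route from each block, is essentially a $K_{n,n}$ incidence pattern and is not obviously planar. The paper instead places the $n^2$ gadgets in an $n\times n$ grid, with player~$1$'s routes running left–right along rows and player~$2$'s running top–bottom along columns, which is manifestly planar. You would need to replace your bipartite layout by something grid-like to make the planarity claim go through.
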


\begin{appendixproof}{Theorem~\ref{thm:hardness}}
We reduce from \winlosegame. Let $(\vec U, \vec V)$ be a win-lose game for some matrices $\vec U, \vec V \in \{0,1\}^{n \times n}$. We proceed to describe the construction of a corresponding atomic splittable congestion games $(G_{\vec U, \vec V},\{1,2\},l)$ with two players with demands $d_1 = d_2 = 1$.
The macro structure of the underlying graph $G_{\vec U, \vec V}$ is shown in Figure~\ref{subfig:reduction_macro}. We note that our construction uses constant cost functions (in particular cost functions with constant cost $0$) for the benefit of a simpler exposition. It is not hard to see that the same construction is valid for cost functions with non-constant functions with very small slopes $a_{e,i} = \delta$ for a very small $\delta > 0$.

There are $n^2$ gadgets $G_{r,c}$ with $r,c \in [n]$ that are arranged in a grid like fashion. The horizontal edges of the grid as well as the edges connecting $s_1$ and $t_1$ to the grid, shown dashed and blue in Figure~\ref{subfig:reduction_macro}, have constant cost $0$ for player~$1$ and constant cost $4n$ for player~$2$. We call these edges \emph{type-1 auxiliary edges}. Similarly, the vertical edges of the grid as well as the edges connecting $s_2$ and $t_2$ to the grid, shown dot-dashed and red in Figure~\ref{subfig:reduction_macro} have constant cost $4n$ for player~$1$ and constant cost $0$ for player~$2$. These edges are called \emph{type-2 auxiliary edges}. Every gadget $G_{r,c}$ with $r,c \in \{1,\dots,n\}$ has four designated verices $\bar{s}_r, \bar{t}_r, \bar{s}_c, \bar{t}_c$. In the macro structure, incoming auxiliary type-1 edges to $G_{r,c}$ from the left are connected to $\bar{s}_r$, incoming auxiliary type-2 edges from above are connected to $\bar{s}_c$, outgoing auxiliary type-1 edges to the right are connected to $\bar{t}_r$, and outgoing auxiliary type-2 edges to below are connected to $\bar{t}_c$. 

\begin{figure}
\centering
\begin{subfigure}{0.48\textwidth}
\centering
\scriptsize
\begin{tikzpicture}[scale=0.85]
\draw (-4,  0) node[solid] (s1) {} node [left] {$s_1$};
\draw ( 0, -4) node[solid] (t2) {} node [below]  {$t_2$};
\draw ( 4,  0) node[solid] (t1) {} node [right] {$t_1$};
\draw ( 0,  4) node[solid] (s2) {} node [above] {$s_2$};
\draw(-2,2) node[oct] (g11) {$\!G_{1,1}\!$};
\draw(-0.5,2) node[oct] (g12) {$\!G_{1,2}\!$};
\draw(2,2) node[oct] (g1n) {$\!G_{1,n}\!$};
\draw(-2,0.5) node[oct] (g21) {$\!G_{2,1}\!$};
\draw(-0.5,0.5) node[oct] (g22) {$\!G_{2,2}\!$};
\draw(2,0.5) node[oct] (g2n) {$\!G_{2,n}\!$};
\draw (0.75,2) node (g1dots) {$\dots$};
\draw (0.75,0.5) node (g2dots) {$\dots$};
\draw (0.75,-2) node (gndots) {$\dots$};
\draw(-2,-2) node[oct] (gn1) {$\!G_{n,1}\!$};
\draw(-0.5,-2) node[oct] (gn2) {$\!G_{n,2}\!$};
\draw(2,-2) node[oct] (gnn) {$\!G_{n,n}\!$};
\draw(-2,-0.75) node (gdots1) {$\vdots$};
\draw(-0.5,-0.75) node (gdots2) {$\vdots$};
\draw(2,-0.75) node (gdotsn) {$\vdots$};
\draw[rowedge] (s1) -- (g11.west);
\draw[rowedge] (g11) -- (g12);
\draw[rowedge] (g12) -- (g1dots);
\draw[rowedge] (g1dots) -- (g1n);
\draw[rowedge] (g1n.east) -- (t1);
\draw[rowedge] (s1) -- (g21.west);
\draw[rowedge] (g21) -- (g22);
\draw[rowedge] (g22) -- (g2dots);
\draw[rowedge] (g2dots) -- (g2n);
\draw[rowedge] (g2n.east) -- (t1);
\draw[rowedge] (s1) -- (gn1.west);
\draw[rowedge] (gn1) -- (gn2);
\draw[rowedge] (gn2) -- (gndots);
\draw[rowedge] (gndots) -- (gnn);
\draw[rowedge] (gnn.east) -- (t1);
\draw[coledge] (s2) -- (g11.north);
\draw[coledge] (g11) -- (g21);
\draw[coledge] (g21) -- (gdots1);
\draw[coledge] (gdots1) -- (gn1);
\draw[coledge] (gn1.south) -- (t2);
\draw[coledge] (s2) -- (g12.north);
\draw[coledge] (g12) -- (g22);
\draw[coledge] (g22) -- (gdots2);
\draw[coledge] (gdots2) -- (gn2);
\draw[coledge] (gn2.south) -- (t2);
\draw[coledge] (s2) -- (g1n.north);
\draw[coledge] (g1n) -- (g2n);
\draw[coledge] (g2n) -- (gdotsn);
\draw[coledge] (gdotsn) -- (gnn);
\draw[coledge] (gnn.south) -- (t2);
\end{tikzpicture}
\caption{Macro structure of the reduction.\label{subfig:reduction_macro}}
\end{subfigure}
\begin{subfigure}{0.48\textwidth}
\scriptsize
\centering
\begin{tikzpicture}[scale=0.85]
\draw (-4,  0) node[solid] (sr) {} node [left] {$\bar{s}_r$};
\draw ( 0, -4) node[solid] (tc) {} node [below]  {$\bar{t}_c$};
\draw ( 4,  0) node[solid] (tr) {} node [right] {$\bar{t}_r$};
\draw ( 0,  4) node[solid] (sc) {} node [above] {$\bar{s}_c$};

\draw ( 0,  2.5) node[solid] (nr) {};
\draw ( 0, -2.5) node[solid] (nc) {};

\draw (-3,2.5) node[solid] (sr1) {};
\draw (-1,2.5) node[solid] (tr1) {};
\draw[-latex,thick] (sr1) -- (tr1) node[above,midway] {$e^1_{r,c,1}$};
\draw (-3,1.0) node[solid] (sr2) {};
\draw (-1,1.0) node[solid] (tr2) {};
\draw[-latex,thick] (sr2) -- (tr2) node[above,midway] {$e^1_{r,c,2}$};
\draw (-3,-0.5) node[solid] (sr3) {};
\draw (-1,-0.5) node[solid] (tr3) {};
\draw[-latex,thick] (sr3) -- (tr3) node[above,midway] {$e^1_{r,c,3}$};
\draw (-2,-1.5) node (rdots) {$\vdots$};
\draw (-3,-2.5) node[solid] (srT) {};
\draw (-1,-2.5) node[solid] (trT) {};
\draw[-latex,thick] (srT) -- (trT) node[above,midway] {$e^1_{r,c,T}$};
\draw[rowedge] (sr) -- (sr1);
\draw[rowedge] (sr) -- (sr2);
\draw[rowedge] (sr) -- (sr3);
\draw[rowedge] (sr) -- (srT);
\draw[rowedge] (tr1) -- (nr);
\draw[rowedge] (tr2) -- (nr);
\draw[rowedge] (tr3) -- (nr);
\draw[rowedge] (trT) -- (nr);
\draw[coledge] (sc) -- (sr1);
\draw[coledge] (tr1) -- (sr2);
\draw[coledge] (tr2) -- (sr3);
\draw[coledge] (tr3) -- (rdots.north);
\draw[coledge] (rdots.south) -- (srT);
\draw (1,2.5) node[solid] (sc1) {};
\draw (3,2.5) node[solid] (tc1) {};
\draw[-latex,thick] (sc1) -- (tc1) node[above,midway] {$e^2_{r,c,1}$};
\draw (1,1.0) node[solid] (sc2) {};
\draw (3,1.0) node[solid] (tc2) {};
\draw[-latex,thick] (sc2) -- (tc2) node[above,midway] {$e^2_{r,c,2}$};
\draw (1,-0.5) node[solid] (sc3) {};
\draw (3,-0.5) node[solid] (tc3) {};
\draw[-latex,thick] (sc3) -- (tc3) node[above,midway] {$e^2_{r,c,3}$};
\draw (2,-1.5) node (cdots) {$\vdots$};
\draw (1,-2.5) node[solid] (scT) {};
\draw (3,-2.5) node[solid] (tcT) {};
\draw[-latex,thick] (scT) -- (tcT) node[above,midway] {$e^2_{r,c,T}$};
\draw[coledge] (trT) -- (nc);
\draw[coledge] (nc) -- (sc1);
\draw[coledge] (nc) -- (sc2);
\draw[coledge] (nc) -- (sc3);
\draw[coledge] (nc) -- (scT);
\draw[rowedge] (nr) -- (sc1);
\draw[rowedge] (tc1) -- (sc2);
\draw[rowedge] (tc2) -- (sc3);
\draw[rowedge] (tc3) -- (cdots.north);
\draw[rowedge] (cdots.south) -- (scT);
\draw[rowedge] (tcT) -- (tr);
\draw[coledge] (tc1) -- (tc2);
\draw[coledge] (tc2) -- (tc3);
\draw[coledge] (tc3) -- (tcT);
\draw[coledge] (tcT) -- (tc);
\end{tikzpicture}
\caption{Gadget $G_{r,c}$ used in the macro construction.\label{subfig:micro_reduction}}
\end{subfigure}

\caption{Construction for the $\PPAD$-hardness of computing a Nash equilibrium in an atomic splittable congestion game.}
\end{figure}
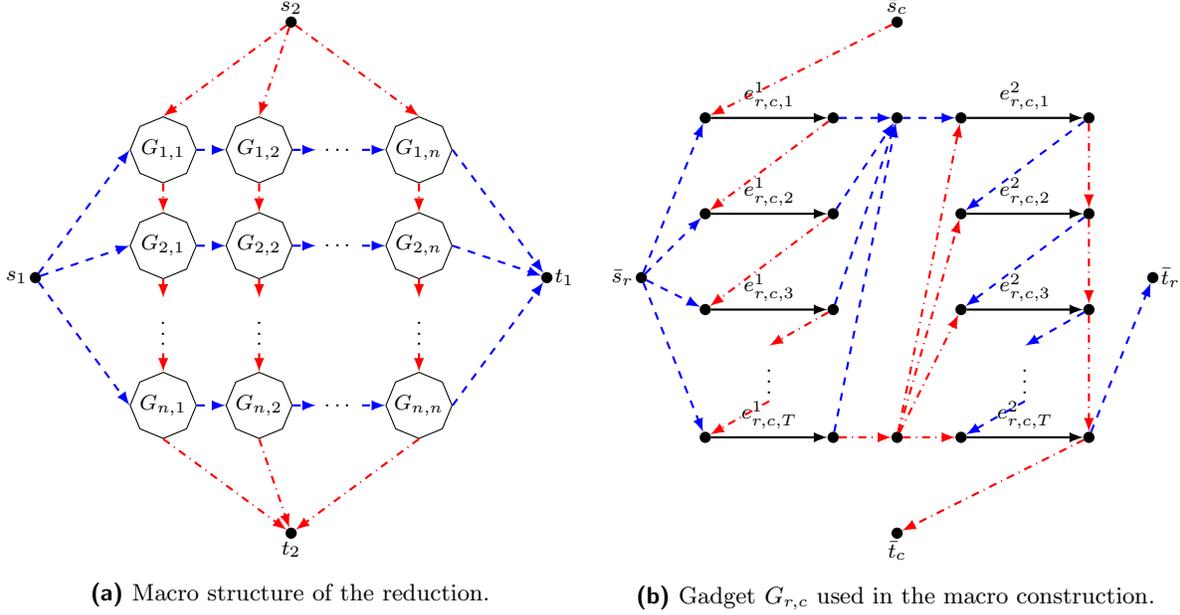

We introduce an additional parameter $T \in \N$ in our construction of the gadgets $G_{r,c}$, parametrizing the number of edges in every gadget. We specify $T$ in Claim~\ref{clm:right_beta} below. For some given $T \geq 1$, 
every gadget $G_{r,c}$ has $T$ \emph{type-1 main edges} $e^1_{r,c,1},\dots,e^1_{r,c,T}$ and $T$ \emph{type-2 main edges} $e^2_{r,c,1},\dots,e^2_{r,c,T}$ as well as some auxiliary edges, see Figure~\ref{subfig:micro_reduction}. The dashed blue edges are type-1 auxiliary edges that have constant cost $0$ for player~$1$ and constant cost $4n$ for player~$2$ while the dash-dotted red edges are type-2 auxiliary edges that have constant cost $4n$ for player~$1$ and constant cost $0$ for player~$2$. The structure of $G_{r,c}$ is such that every path from $\bar{s}_r$ to $\bar{t}_r$ that does not use type-2 auxiliary edges uses exactly one of the type-1 main edges $e_{r,c,1}^1,\dots,e_{r,c,T}^1$ and all type-2 main edges $\{e_{r,c,1},\dots,e_{r,c,T}^2\}$. Similarly, every path from $\bar{s}_c$ to $\bar{t}_c$ that does not use any of the type-1 auxiliary edges uses exactly one of the type-2 main edges $e^2_{r,c,1},\dots,e^2_{r,c,T}$ as well as all of the type-1 main edges $\{e_{r,c,1}^1,\dots,e_{r,c,T}^1\}$. 

For the type-1 and type-2 main edges, we define the player-specific affine costs as follows. The player-specific costs are of the form $c_{e,i}(x) = a_{e,i} x + b_{e,i}$ with 
{ \allowdisplaybreaks
\begin{align*}
a_{e,1} &=
\begin{cases}
1-u_{r,c} & \text{ if } e = e^1_{r,c,t} \text{ with } r,c \in [n], t \in [T]\;,\\
0 & \text{ otherwise}\;;\\
\end{cases} \\
a_{e,2} &= 
\begin{cases}
1-v_{r,c} & \text{ if } e = e^2_{r,c,t} \text{ with } r,c \in [n], t \in [T]\;,\\
0 & \text{ otherwise}\;;
\end{cases} \\
b_{e,i} &= \hphantom{\Biggl\{} 0 \hphantom{1-}\;\text{ for all } e^1_{r,c,t}, e^2_{r,c,t} \text{ with } r,c \in [n], t \in [T]\;.
\end{align*}
}

We claim that in every Nash equilibrium, player~$1$ does not use any of the auxiliary type-2 edges and player~$1$ does not use any of the auxiliary type-1 edges.

\begin{claim}
\label{clm:no_aux}
Let $\vec x$ be a Nash equilibrium of $(G_{\vec U, \vec V},\{1,2\},l)$. Then, $x^1_e = 0$ for every type-2 auxiliary edge~$e$ and $x^2_e = 0$ for every type-1 auxiliary edge. 	
\end{claim}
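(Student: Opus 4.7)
The plan is to prove the claim by exhibiting, for each player, a cheap ``bypass'' path from source to sink that avoids the wrong type of auxiliary edges, and then invoking the Nash condition of Lemma~\ref{lem:equilibrium:marginalcost} to derive a contradiction from any positive flow on such an edge. I treat player~1 explicitly; the argument for player~2 is symmetric.

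First I would compute the marginal costs of the relevant edges. On a type-2 auxiliary edge $e$ we have $a_{e,1}=0$ and $b_{e,1}=4n$, so $\mu^1_e(\vec x) = a_{e,1}(\bar x_e + x^1_e)+b_{e,1} = 4n$, independently of the flow. On a type-1 auxiliary edge we have $a_{e,1}=b_{e,1}=0$, so the marginal cost is $0$; similarly, on any type-2 main edge $a_{e,1}=b_{e,1}=0$, so the marginal cost for player~1 vanishes. The only remaining edges with potentially positive marginal cost for player~1 are type-1 main edges $e = e^1_{r,c,t}$, for which $\mu^1_e(\vec x) = (1-u_{r,c})(\bar x_e + x^1_e) \le 1\cdot(2+1) = 3$, using $\bar x_e \le 2$ and $x^1_e \le 1$ since both demands equal $1$.

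Next I would construct a bypass $s_1$-$t_1$ path $Q$ that avoids every type-2 auxiliary edge: pick any row $r \in [n]$ and traverse the gadgets $G_{r,1},\dots,G_{r,n}$ in order using the horizontal type-1 auxiliary edges as connectors; inside each gadget $G_{r,c}$ use the $\bar s_r$-$\bar t_r$ subpath that, by the gadget's construction, employs exactly one type-1 main edge and all $T$ type-2 main edges (plus type-1 auxiliary edges). Summing the bounds from the previous paragraph along $Q$ gives
\[
\sum_{e \in Q}\mu^1_e(\vec x) \;\le\; n\cdot 3 \;=\; 3n,
\]
since $Q$ contains exactly one type-1 main edge per gadget and contributions from all other edges on $Q$ are zero.

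Finally, suppose for contradiction that $x^1_{e^*} > 0$ for some type-2 auxiliary edge $e^*$. By Lemma~\ref{lem:equilibrium:marginalcost} there is an $s_1$-$t_1$ path $P$ containing $e^*$ with $x^1_e > 0$ for every $e \in P$ and $\sum_{e \in P}\mu^1_e(\vec x)\le \sum_{e\in Q}\mu^1_e(\vec x)$. Since marginal costs are non-negative and $\mu^1_{e^*}=4n$, the left-hand side is at least $4n$, contradicting the bound $3n$ on the right-hand side. Hence no type-2 auxiliary edge carries positive player-1 flow. The symmetric argument for player~2, using a vertical bypass path through any column $c$, rules out positive player-2 flow on any type-1 auxiliary edge. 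The main ``obstacle'' is merely verifying that the bypass paths really exist with the claimed structure, but this is immediate from the explicit gadget description preceding the claim; the rest is routine arithmetic driven by the gap between the per-edge cost $4n$ and the cumulative bound $3n$ on the cheap alternative.
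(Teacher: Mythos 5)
Your proof is correct and takes essentially the same approach as the paper's: lower-bound the marginal cost of any $s_1$-$t_1$ path through a type-2 auxiliary edge by $4n$, upper-bound the marginal cost of a horizontal bypass path by $3n$ (one type-1 main edge per gadget, each contributing at most $\bar x_e + x^1_e \le 3$), and then invoke Lemma~\ref{lem:equilibrium:marginalcost} to conclude. The only cosmetic difference is that you make the bypass path and the per-edge arithmetic slightly more explicit than the paper does.
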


\begin{proof}[Proof of Claim~\ref{clm:no_aux}]
We show the claim only for player~$1$ since the argumentation for player~$2$ is symmetrical. By Lemma~\ref{lem:equilibrium:marginalcost}, player~$1$ only uses paths with minimal marginal costs in $\vec x$. The marginal cost of every path containing a type-2 auxiliary edge~$e$ is at least
\begin{align*}
a_{e,i} \bar{x}_e + b_{e,i} + a_{e,i} x_e^i \geq 4n.	
\end{align*}
On the other hand, every path that does not contain a type-2 auxiliary edge traverses all gadgets $G_{r,1},\dots,G_{r,n}$ for some $r \in [n]$. In each gadget, $G_{r,c}$ with $c \in [n]$, player~$1$ traverses one of the type-1 main edges $e_{r,c,t_c}$ with $t_c \in [T]$ as well as some type-1 auxiliary edges and some type-2 main edges each of which have no cost for player-1. Thus, the marginal cost of the path is determined by the $n$ type-1 main edges. Let $E_r = \{e_{r,c,t_c}^1 \mid c \in [n]\}$ be the set of these edges. The marginal cost of the path is then given by
\begin{align*}
\sum_{e \in E_r} a_{e,i} \bar{x}_e + b_{e,i} + a_{e,i} x_{e}^i \leq \sum_{e \in E_r} \bar{x}_e + x_{e,i} \leq 3n.
\end{align*}
We conclude that every path of player~$1$ not containing a type-2 auxiliary edge has lower marginal cost than a path containing a type-2 auxiliary edge and the claim follows.
\end{proof}

Notice that player~$1$ has an exponential number of paths from $s_1$ to $t_1$ that do not use any of the type-2 edges. Every such path visits one row of gadgets $G_{r,1},\dots,G_{r,n}$ for some $r \in \{1,\dots,n\}$. Within each gadget $G_{r,c}$, $c \in \{1,\dots,n\}$ the path uses one of the $T$ type-1 main edges $e^1_{r,c,1},\dots,e^1_{r,c,T}$. We claim that in every Nash equilibrium, all paths that visit the same row of gadgets have the same flow of player~$1$.

\begin{claim}
\label{clm:equal_use}
Let $\vec x$ be a Nash equilibrium of $(G_{\vec U, \vec V},\{1,2\},l)$. Then, $x^i_{e^i_{r,c,t}} = x^i_{e^i_{r,c,t'}}$ for all $i \in \{1,2\}$, $r,c \in [n]$, and $t,t' \in [T]$.	
\end{claim}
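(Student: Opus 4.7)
My plan is to combine Claim~\ref{clm:no_aux} with a direct analysis of flow conservation inside a single gadget $G_{r,c}$ and the marginal cost characterization of Nash equilibria in Lemma~\ref{lem:equilibrium:marginalcost}. By Claim~\ref{clm:no_aux}, player~$1$'s flow inside $G_{r,c}$ uses only type-$1$ auxiliary edges together with the main edges, and player~$2$'s flow uses only type-$2$ auxiliary edges together with the main edges. Inspecting Figure~\ref{subfig:micro_reduction} shows that every path of player~$1$ from $\bar{s}_r$ to $\bar{t}_r$ inside $G_{r,c}$ picks \emph{exactly one} of the type-$1$ main edges $e^1_{r,c,t}$ and traverses \emph{all} $T$ type-$2$ main edges in sequence, while every path of player~$2$ from $\bar{s}_c$ to $\bar{t}_c$ inside $G_{r,c}$ picks exactly one type-$2$ main edge and traverses all $T$ type-$1$ main edges in sequence. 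By flow conservation this already yields that the quantities $y^2_{r,c} := x^2_{e^1_{r,c,t}}$ and $y^1_{r,c} := x^1_{e^2_{r,c,t}}$ do not depend on $t$.

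Next I would set $z_t := x^1_{e^1_{r,c,t}}$ and compute player~$1$'s marginal cost on $e^1_{r,c,t}$, which is
\[
\mu^1_{e^1_{r,c,t}}(\vec{x}) \;=\; a_{e^1_{r,c,t},1}\bigl(\bar{x}_{e^1_{r,c,t}} + x^1_{e^1_{r,c,t}}\bigr) \;=\; (1-u_{r,c})\bigl(2z_t + y^2_{r,c}\bigr).
\]
Any two $s_1$-$t_1$-paths of player~$1$ that route through the same row differ only in which $e^1_{r,c,t}$ they pick inside $G_{r,c}$, so Lemma~\ref{lem:equilibrium:marginalcost} forces the marginal costs on every pair of positively used $e^1_{r,c,t}, e^1_{r,c,t'}$ to coincide, giving $z_t = z_{t'}$ whenever both are positive. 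For an unused edge $z_{t'}=0$ paired with a used edge $z_t>0$, the inequality direction of the same lemma rewrites as $(1-u_{r,c})\,y^2_{r,c} \geq (1-u_{r,c})(2z_t+y^2_{r,c})$, which forces $z_t\leq 0$, a contradiction. Hence all $z_t$ coincide. An entirely symmetric argument applied to $G_{r,c}$ from the perspective of player~$2$ gives $x^2_{e^2_{r,c,t}}$ constant in $t$.

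The only subtle case is $u_{r,c}=1$ (respectively $v_{r,c}=1$), where the slope $(1-u_{r,c})$ collapses to zero and the marginal cost identities above become vacuous. This is precisely where the remark preceding the claim is used: since the formal \atomicsplittable\ problem requires all slopes to lie in $\R_{>0}$, one implements the construction with $a_{e^1_{r,c,t},1} := (1-u_{r,c})+\delta$ for a sufficiently small $\delta>0$ (and analogously for type-$2$ main edges), which makes the factor multiplying $z_t$ in the marginal cost strictly positive for every gadget. The calculation above then forces $z_t=z_{t'}$ unconditionally. I expect the handling of this degeneracy to be the only genuine obstacle; once the $\delta$-perturbation is in force, the claim reduces to a textbook Wardrop-style ``equal marginal cost on parallel alternatives'' principle applied to each of the $n^2$ gadgets in turn.
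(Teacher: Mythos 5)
Your argument is essentially the same as the paper's: use Claim~\ref{clm:no_aux} to restrict each player to its own auxiliary edges, observe that every surviving path of player~$2$ through $G_{r,c}$ traverses all type-$1$ main edges so $x^2_{e^1_{r,c,t}}$ is constant in $t$, and then compare marginal costs across the parallel type-$1$ main edges via Lemma~\ref{lem:equilibrium:marginalcost}. The only cosmetic difference is that the paper derives $x^1_e \le x^1_{e'}$ and then flips the inequality, while you split into ``both positive'' and ``one zero''; your explicit treatment of the $u_{r,c}=1$ degeneracy via a small slope $\delta$ is exactly what the paper addresses in its remark preceding the gadget construction.
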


\begin{proof}[Proof of Claim~\ref{clm:equal_use}]
We show the claim only for player~$1$ since the argumentation for player~$2$ is symmetrical. Let $r,c \in [n]$ and $t,t' \in [T]$ be arbitrary and consider the edges $e = \smash{e^1_{r,c,t}}$ and $e' = \smash{e^1_{r,c,t'}}$. Since $\vec x$ is a Nash equilibrium, by Claim~\ref{clm:no_aux}, we may assume that player~$1$ does not use any type-2 auxiliary edges and player~$2$ does not use any type-1 auxiliary edges. Since every path of player~$2$ that does not use any type-1 auxiliary edges and traverses the gadget $G_{r,c}$ uses all type-1 main edges, this implies in particular that $x^2_{e} = x^2_{e'}$.

If $x^1_{e} = x^1_{e'} = 0$, there is nothing left to show, so it is without loss of generality to assume that $x^1_{e} > 0$. This implies that there is a path $P$ from $s_1$ to $t_1$ carrying positive flow. Let $P'$ be the path that uses the same edges as $P$ except that in gadget $G_{r,c}$ edge $e'$ is used instead of edge $e$. (This means that also some auxiliary type-1 edges are swapped in $G_{r,c}$ but since they have no cost for player~$1$, we may ignore them for the following arguments.) Lemma~\ref{lem:equilibrium:marginalcost} implies
\begin{align}
\label{eq:marginal_costs}
a_{e,1} \bar{x}_e + b_{e,1} + a_{e,1} x_e^1 \leq a_{e',1} \bar{x}_{e'} + b_{e',1} + a_{e',1} x_{e'}^1 
\end{align}
which is equivalent to
\begin{align*}
(1-u_{r,c})(2x_e^1 + x_e^2) \leq  (1-u_{r,c})(2x_{e'}^1 + x_{e'}^2)
\end{align*}
using the definition of the cost functions. Further using that $x_{e}^2 = x_{e'}^2$ and that $1- u_{r,c} \geq 0$ this implies $x_{e}^1 \leq x_{e'}^{1}$ and, thus, $x_{e'}^1 > 0$. We conclude that \eqref{eq:marginal_costs} is actually satisfied with equality implying that $x_{e}^1 = x_{e'}^1$.
\end{proof}

For player~$1$ and $r \in \{1,\dots,n\}$ let $x^1_r$ denote the total flow sent along the paths using the $r$th gadget row $G_{r,1},\dots,G_{r,n}$. Similarly, for player~$2$ and $c \in \{1,\dots,n\}$ let $x^2_c$ denote the total flow sent along the paths using the $c$th gadget column $G_{1,c},\dots,G_{n,c}$.

\begin{claim}
\label{clm:right_beta}
Let $\vec x$ be a Nash equilibrium of $(G_{\vec U, \vec V},\{1,2\},l)$ with $T = 2n^{\beta+1}$ for some $\beta > 0$. Then, the strategy profile $(\bar{\vec y}, \bar{\vec z})$ with $\bar{y}_r = x^1_r$ for all $r \in [n]$ and $\bar{z}_c = x^2_c$ for all $c \in [n]$ is an $n^{-\beta}$-approximate Nash equilibrium of $(\vec U, \vec V)$.
\end{claim}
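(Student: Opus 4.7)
My plan is to translate the Nash equilibrium conditions for the congestion game, via Lemma~\ref{lem:equilibrium:marginalcost}, into an approximate best-response condition for the win-lose game. First I would use Claims~\ref{clm:no_aux} and~\ref{clm:equal_use} to pin down the flow on every main edge. By Claim~\ref{clm:no_aux}, player~$1$ routes flow exclusively along paths that traverse one row of gadgets $G_{r,1},\dots,G_{r,n}$; each such path picks a single type-1 main edge per gadget and traverses all type-2 main edges. Symmetrically for player~$2$. Combined with Claim~\ref{clm:equal_use}, which forces equal use among parallel main edges within a gadget, this yields
\begin{align*}
x^1_{e^1_{r,c,t}} &= x^1_r/T, & x^2_{e^1_{r,c,t}} &= x^2_c, \\
x^1_{e^2_{r,c,t}} &= x^1_r, & x^2_{e^2_{r,c,t}} &= x^2_c/T
\end{align*}
for every $r,c\in[n]$ and $t\in[T]$, where $x^1_r,x^2_c$ are as in the claim. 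In particular $\sum_r x^1_r = \sum_c x^2_c = 1$, so $\bar{\vec y},\bar{\vec z}$ are valid mixed strategies.

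Next I would compute the marginal cost of a row/column path. Using the explicit slopes $a_{e^1_{r,c,t},1} = 1-u_{r,c}$ and $a_{e^2_{r,c,t},2} = 1-v_{r,c}$ and the vanishing cost on auxiliary edges used (by Claim~\ref{clm:no_aux}), any path of player~$1$ traversing row $r$ has total marginal cost
\begin{align*}
\mu^1_r = \sum_{c=1}^n (1-u_{r,c})\bigl(2x^1_r/T + x^2_c\bigr)
       = 1 - (\vec U\bar{\vec z})_r + \frac{2x^1_r}{T}\sum_{c=1}^n (1-u_{r,c}),
\end{align*}
and symmetrically $\mu^2_c = 1 - (\vec V^{\top}\bar{\vec y})_c + (2x^2_c/T)\sum_{r=1}^n (1-v_{r,c})$. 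Note that the correction term on each row/column is in $[0, 2n/T]$ since $x^1_r, x^2_c \in [0,1]$ and $0 \leq \sum_c (1-u_{r,c}) \leq n$.

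Now I would apply Lemma~\ref{lem:equilibrium:marginalcost}: for any row $r$ with $x^1_r>0$ and any row $r^*$, $\mu^1_r \leq \mu^1_{r^*}$, which gives
\begin{align*}
(\vec U\bar{\vec z})_{r^*} - (\vec U\bar{\vec z})_r \;\leq\; \frac{2x^1_r}{T}\sum_c (1-u_{r,c}) - \frac{2x^1_{r^*}}{T}\sum_c (1-u_{r^*,c}) \;\leq\; \frac{2n}{T}.
\end{align*}
Choosing $r^*$ to be a best response against $\bar{\vec z}$ in the win-lose game and averaging over $r$ with weights $x^1_r$ (using $\sum_r x^1_r = 1$) yields $\max_{\vec y} \vec y^\top \vec U \bar{\vec z} = (\vec U\bar{\vec z})_{r^*} \leq \bar{\vec y}^\top \vec U \bar{\vec z} + 2n/T$. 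With $T = 2n^{\beta+1}$ this is exactly $\bar{\vec y}^\top \vec U \bar{\vec z} + n^{-\beta}$, giving the desired approximate best-response inequality for the row player. The symmetric argument for player~$2$ gives the second inequality.

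There is no real obstacle here once the flow structure is known; the only bookkeeping nuisance is to be careful that the auxiliary (non-main) edges used in row/column paths contribute zero to the marginal cost (by Claim~\ref{clm:no_aux} and their zero slopes/offsets), and to ensure the correction term is uniformly bounded by $2n/T$ regardless of $r$ or $r^*$. The choice $T = 2n^{\beta+1}$ then calibrates this error to exactly $n^{-\beta}$, completing the reduction.
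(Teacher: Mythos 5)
Your argument takes essentially the same route as the paper's: Claims~\ref{clm:no_aux} and \ref{clm:equal_use} pin down the flow shares, the marginal cost of a row path is compared between a used row and an arbitrary row, and the choice $T=2n^{\beta+1}$ calibrates the residual $2n/T$ to exactly $n^{-\beta}$ (the paper frames this as a contradiction with an $\argmax/\argmin$ pair rather than a direct derivation with averaging, but that is a cosmetic difference).

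One small slip: after $\mu^1_r\leq\mu^1_{r^*}$, the bound you write has $r$ and $r^*$ swapped on the right-hand side. The correct consequence is
\begin{align*}
(\vec U\bar{\vec z})_{r^*}-(\vec U\bar{\vec z})_r \;\leq\; \frac{2x^1_{r^*}}{T}\sum_{c}(1-u_{r^*,c})\;-\;\frac{2x^1_{r}}{T}\sum_{c}(1-u_{r,c}),
\end{align*}
not the negative of this. Since both correction terms lie in $[0,2n/T]$, the final bound of $2n/T$ is unaffected, so the conclusion still follows, but the intermediate inequality as written is not what the equilibrium condition gives you.
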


\begin{proof}[Proof of Claim~\ref{clm:right_beta}]
For the sake of a contradiction, suppose that $(\bar{\vec y},\bar{\vec z})$ is not an $n^{-\beta}$-approximate Nash equilibrium of $(\vec U, \vec V)$. Due to symmetry, it is without loss of generality to assume that player~$1$ has an alternative strategy $\vec y$ with $\vec y^{\top} \vec U \bar{\vec z} > \bar{\vec y}^{\top} \vec U \bar{\vec z} + n^{-\beta}$. This implies in particular that
\begin{align}
\label{eq:no_approx}
\max_{r \in  [n]} \sum_{c \in [n]} u_{r,c} \bar{z}_c > \min_{r \in [n] : \bar{y}_r > 0} \sum_{c=1}^n u_{r,c} \bar{z}_c + n^{-\beta}.	
\end{align}
Let $r^* \in \arg\max_{r \in [n]} \sum_{c \in [n]} u_{r,c} \bar{z}_c$ and let $r' \in \argmin_{r \in [n] : \bar{y}_r > 0} \sum_{c=1}^n u_{r,c} \bar{z}_c$ be arbitrary.
Since $\vec x$ is a Nash equilibrium for $G_{\vec U,\vec V}$, all used paths have the same marginal total cost, and unused paths have higher marginal total cost. Using Claim~\ref{clm:equal_use}, this implies in particular that
\begin{align*}
\frac{\partial}{\partial x_{r'}^1} \Biggl(\sum_{c \in [n]} T (1-u_{r',c})\biggl(\frac{x^1_{r'}}{T} + x^2_c\biggr) \frac{x^1_{r'}}{T} \Biggr) 
&\leq \frac{\partial}{\partial x_{r^*}^1} \Biggl( \sum_{c \in [n]} T (1-u_{r^*,c})\biggl(\frac{x^1_{r^*}}{T} + x^2_c\biggr) \frac{x^1_{r^*}}{T} \Biggr)
\intertext{which gives}
\sum_{c \in [n]} \frac{2 (1-u_{r',c}) x^1_{r'}}{T} + (1-u_{r',c}) x^2_c 
 &\leq \sum_{c \in [n]} \frac{2(1-u_{r^*,c}) x^1_{r^*}}{T} + (1-u_{r^*,c})x^2_c.
\intertext{Using the definition of $(\bar{\vec y}, \bar{\vec z})$ we obtain}
\sum_{c \in [n]} \frac{2(1 - u_{r',c}) \bar{y}_{r'}}{T} + (1-u_{r',c}) \bar{z}_c 
 &\leq \sum_{c \in [n]} \frac{2(1-u_{r^*,c}) \bar{y}_{r^*}}{T} + (1- u_{r^*,c})\bar{z}_c.
\end{align*}
Finally, we obtain
\begin{align*}
\sum_{c \in [n]} u_{r^*,c} \bar{z}_c
&\leq \sum_{c \in [n]} u_{r',c} \bar{z}_c + \sum_{c \in [n]} \frac{2(1-u_{r^*,c}) \bar{y}_{r^*}-2(1-u_{r',c})\bar{y}_{r'}}{T} \\
&\leq \sum_{c \in [n]} u_{r',c} \bar{z}_c + \sum_{c \in [n]} \frac{2\bar{y}_{r^*}}{T} \leq \sum_{c \in [n]} u_{r',c} \bar{z}_c + \frac{2n}{T} = \sum_{c=1}^n u_{r',c} \bar{z}_c + n^{-\beta},
\end{align*}
contradicting \eqref{eq:no_approx}.
\end{proof}
Using that it is $\PPAD$-complete to compute an $n^{-\beta}$-approximate equilibrium of a two-player win-lose game for any $\beta > 0$, we conclude that the computation of a Nash equilibrium of an atomic-splittable congestion game is $\PPAD$-hard as well. 
\end{appendixproof}

With similar arguments, we can also show that the computation of a multi-class Wardrop equilibrium is $\PPAD$-complete. A multi-class Wardrop equilibrium is a multi-commodity flow that satisfies the characterization via shortest path potentials of Lemma~\ref{lem:equilibrium:potentials} for the original cost functions $l_{e,i}$ instead for the marginal cost functions $\mu_{e}^i$, i.e, $\vec x$ is a Wardrop equilibrium if and only if for all $i \in [k]$ there is a potential vector $\vec \pi^i$ with $\pi_w^i - \pi_v^i = l_{e,i}(\bar{x}_e)$ if $x_{e}^i > 0$, and $\pi_w^i - \pi^i_v \leq l_{e,i}(\bar{x}_e)$ if $x_e^i = 0$ for all $e = (v,w) \in E$. We prove the following result settling an open question in \cite{meunier2019}.

\begin{theorem}
\label{thm:wardrop_hardness}
Computing a multi-class Wardrop equilibrium for commodity-specific affine costs is $\PPAD$-hard. 	
\end{theorem}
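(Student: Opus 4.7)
The plan is to adapt the reduction from Theorem~\ref{thm:hardness} essentially verbatim, the only substantive change being that marginal costs $\mu_e^i(\vec x) = a_{e,i}(\bar x_e + x_e^i) + b_{e,i}$ are replaced by latencies $l_{e,i}(\bar x_e) = a_{e,i}\bar x_e + b_{e,i}$ in the equilibrium characterization of Lemma~\ref{lem:equilibrium:potentials}. Given a win-lose bimatrix game $(\vec U, \vec V)$, I would build the identical graph $G_{\vec U, \vec V}$ with the same two commodities, the same assignment of type-1/2 auxiliary and main edges, and the same player-specific affine cost functions, but now interpret the instance as a multi-class Wardrop problem. As before, the target is to show that any Wardrop equilibrium $\vec x$ yields an $n^{-\beta}$-approximate Nash equilibrium of $(\vec U, \vec V)$ via $\bar y_r = x^1_r$ and $\bar z_c = x^2_c$ for a suitable choice of the parallelism parameter $T$.

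The three steps mirror Claims~\ref{clm:no_aux}--\ref{clm:right_beta}. First, I would show that player~1 never uses a type-2 auxiliary edge (and symmetrically): every such path costs at least $4n$ for player~1, while every row-respecting path through row $r$ has cost $\sum_{c \in [n]} (1-u_{r,c})(x^1_r/T + x^2_c) \leq 2n$, using that all demands equal $1$. Second, I would prove the even-split property $x^1_{e^1_{r,c,t}} = x^1_r/T$ in every gadget by invoking the Wardrop condition on two parallel main edges carrying positive player-1 flow; the degenerate case $u_{r,c}=1$ can be handled either by averaging (since all quantities we ultimately care about are row/column aggregates) or by perturbing the slope to a tiny $\delta>0$ as already noted after the construction of Theorem~\ref{thm:hardness}.

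Third and most importantly, I would convert the Wardrop optimality into a near-best-response inequality for the bimatrix game. For any row $r^*$ and any row $r'$ with $x^1_{r'}>0$, Lemma~\ref{lem:equilibrium:potentials} applied along the two corresponding paths gives
\begin{equation*}
\sum_{c \in [n]} (1-u_{r',c})\Bigl(\tfrac{\bar y_{r'}}{T} + \bar z_c\Bigr) \;\leq\; \sum_{c \in [n]} (1-u_{r^*,c})\Bigl(\tfrac{\bar y_{r^*}}{T} + \bar z_c\Bigr),
\end{equation*}
which rearranges to $\sum_c u_{r^*,c}\bar z_c - \sum_c u_{r',c}\bar z_c \leq n\,\bar y_{r^*}/T \leq n/T$. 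A symmetric inequality holds for player~2, so the choice $T = n^{\beta+1}$ yields an $n^{-\beta}$-approximate Nash equilibrium of $(\vec U, \vec V)$. Since computing such an equilibrium is $\PPAD$-hard by the result of Chen, Teng, and Valiant~\cite{chen2007} invoked in the previous proof, $\PPAD$-hardness of multi-class Wardrop equilibrium follows.

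The main obstacle I expect is not the approximation estimate itself---it is in fact cleaner than in the atomic splittable case because the derivative factor of $2$ disappears---but reconciling the even-split argument of the second step with potential degeneracies coming from zero coefficients. I believe either workaround above resolves this, but the cleanest route is probably to carry out the reduction with slopes uniformly bounded below by a tiny $\delta>0$, so that the Wardrop condition pins down the flow on every parallel main edge uniquely and the aggregation into $(\bar{\vec y}, \bar{\vec z})$ is unambiguous.
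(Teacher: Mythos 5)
Your proposal is correct and follows essentially the same route as the paper's proof: reuse the construction from Theorem~\ref{thm:hardness} verbatim, re-run the three claims (no auxiliary edges, even split in each gadget, approximation bound) with Wardrop costs $l_{e,i}(\bar x_e)$ replacing marginal costs, and observe that the missing derivative factor of $2$ lets you take $T = n^{\beta+1}$ rather than $2n^{\beta+1}$. The only cosmetic difference is that in your first step you write the row-path cost already using the even split $x^1_r/T$, which is logically step two; the crude bound $\sum_{e\in E_r} \bar x_e \le 2n$ (as the paper uses) suffices and avoids any circularity, but your final estimate and choice of $T$ match the paper.
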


\begin{appendixproof}{Theorem~\ref{thm:wardrop_hardness}}
We use a similar construction as in the proof of Theorem~\ref{thm:hardness} with the only exception that player~$1$ is replaced by a population of players of size $1$ and player~$2$ is replaced by a population of players of size $1$. 

\begin{claim}
\label{clm:no_aux_wardrop}
Let $\vec x$ be a Wardrop equilibrium of $(G_{\vec U,\vec V},\{1,2\},l)$. Then $x_e^1 = 0$ for every type-2 auxiliary edge $e$ and $x_e^2 = 0$ for every type-1 auxiliary edge $e$.
\end{claim}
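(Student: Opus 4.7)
The plan is to mimic the proof of Claim~\ref{clm:no_aux} but with actual costs $l_{e,i}$ in place of marginal costs $\mu_e^i$. By the shortest-path characterization of Wardrop equilibria, every path carrying positive flow of player~$i$ must attain the minimum $l_{e,i}$-cost among all $s_i$-$t_i$-paths. It therefore suffices to exhibit, for each player, a single candidate path that does not use the forbidden auxiliary edges and whose cost undercuts any path using such an edge.

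I would carry this out as follows. By symmetry it is enough to treat player~$1$. First, I bound from below the cost of any $s_1$-$t_1$-path $P$ that traverses at least one type-$2$ auxiliary edge $e$: since $l_{e,1}$ is the constant $4n$, the cost of $P$ is at least $4n$. Second, I exhibit an $s_1$-$t_1$-path $P'$ not using any type-$2$ auxiliary edge (such a path exists because the macro graph has a ``horizontal'' $s_1$-$t_1$-path through any row of gadgets, and within each gadget $\bar s_r$ can be connected to $\bar t_r$ using only type-$1$ auxiliary edges together with one type-$1$ main edge and all $T$ type-$2$ main edges). Its total cost equals the sum of the costs seen on the main edges, since all type-$1$ auxiliary edges have $l_{e,1}\equiv 0$ and all type-$2$ main edges have $a_{e,1}=b_{e,1}=0$, so they also contribute zero to player~$1$. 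Hence the cost of $P'$ is $\sum_{c\in[n]}(1-u_{r,c})\,\bar x_{e^1_{r,c,t_c}}$ for one chosen index $t_c$ per gadget column. Using that $\bar x_e \le x^1_e + x^2_e \le 2$ (each player has demand $1$) and $1-u_{r,c}\le 1$, this sum is at most $2n$. Thus any path through a type-$2$ auxiliary edge has cost strictly larger than the $s_1$-$t_1$-path $P'$, so by the Wardrop condition no such path can carry positive flow of player~$1$, giving $x_e^1=0$ for every type-$2$ auxiliary edge $e$. The symmetric argument yields $x_e^2=0$ for every type-$1$ auxiliary edge.

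The only real subtlety is making sure an entirely non-auxiliary-$2$ path exists for player~$1$ and that its cost bound is genuinely independent of $T$, so that the separation $4n$ vs.\ $\le 2n$ is preserved; this is immediate from the construction of Figure~\ref{subfig:micro_reduction}, where the path $\bar s_r \to e^1_{r,c,1} \to \cdots$ combined with the dashed blue edges inside each gadget provides the required route. No further ideas are needed beyond those already used for Claim~\ref{clm:no_aux}; the switch from marginal to actual costs is harmless because the $4n$ versus $O(n)$ gap is preserved in both formulations.
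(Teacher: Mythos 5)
Your proof is correct and follows essentially the same argument as the paper's: bound any path through a type-2 auxiliary edge from below by $4n$, bound the cost of a non-auxiliary-type-2 path from above by $2n$ (using that only the $n$ type-1 main edges it traverses have nonzero cost for player~$1$, and each contributes at most $\bar{x}_e \le 2$), and conclude via the Wardrop shortest-path condition. The only cosmetic slip is writing $\bar{x}_e \le x_e^1 + x_e^2$ where equality holds by definition of total flow; the intended bound $\bar{x}_e \le 2$ is what matters and is correct.
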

	
\begin{proof}[Proof of Claim~\ref{clm:no_aux_wardrop}]
We again show the claim only for population $1$ since the argumentation for population $2$ is symmetric. In a Wardrop equilibrium only paths with minimum cost are used. The total cost of any path containing a type-2 auxiliary edge $e$ is at least $4n$. On the other hand, every path that does not contain a type-2 edge contains $n$ type-1 main edges contained in some set $E_r = \{e_{r,c,t_c} \mid c \in [n]\}$ with $t_c \in [T]$ for all $c \in [n]$. The total cost of these edges is equal to
\begin{align*}
\sum_{e \in E_r} a_{e,i} \bar{x}_e + b_{e,i} \leq \sum_{e \in E_r} x_{e}^1 + x_e^2 \leq 2n, 	
\end{align*}
 so that we conclude that no path containing a type-2 edge is used in a Wardrop equilibrium.
\end{proof}

We proceed to show that population~$i$ chooses all type-$i$ main edges within a gadget $G_{r,c}$ with the same flow.

\begin{claim}
\label{clm:equal_share_wardrop}
Let $\vec x$ be a Wardrop equilibrium of $(G_{\vec U, \vec V},\{1,2\},l)$. Then, $\smash{x^i_{e^i_{r,c,t}} = x^i_{e^i_{r,c,t'}}}$ for all $i \in \{1,2\}$, $r,c \in [n]$ and $t,t' \in [T]$.	
\end{claim}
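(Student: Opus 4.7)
The plan is to mirror the proof of Claim~\ref{clm:equal_use}, replacing the marginal-cost comparison used there by a plain-cost comparison, as dictated by the Wardrop characterization. By the symmetry between the two populations in the construction of $(G_{\vec U, \vec V}, \{1,2\}, l)$, it suffices to treat $i = 1$. Fix $r, c \in [n]$ and $t, t' \in [T]$ with $t \neq t'$, and set $e = e^1_{r,c,t}$ and $e' = e^1_{r,c,t'}$.

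The first step is to observe, using Claim~\ref{clm:no_aux_wardrop}, that population 2 does not use any type-1 auxiliary edges. Because of the structure of the gadget $G_{r,c}$ (cf.~Figure~\ref{subfig:micro_reduction}), every $s_2$-$t_2$-path traversing $G_{r,c}$ that avoids type-1 auxiliary edges must route through \emph{all} type-1 main edges $e^1_{r,c,1}, \dots, e^1_{r,c,T}$ inside $G_{r,c}$. Consequently $x^2_e = x^2_{e'}$.

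Next, if $x^1_e = x^1_{e'} = 0$ there is nothing to show, so I may assume $x^1_e > 0$. Then there is a used $s_1$-$t_1$-path $P$ containing $e$, and I construct a comparison path $P'$ by replacing, within $G_{r,c}$, the edge $e$ together with the type-1 auxiliary edges needed to reroute through $e'$ by $e'$ and the appropriate auxiliary edges. All swapped auxiliary edges carry zero cost for population 1, so the cost difference of $P$ and $P'$ reduces to $l_{e,1}(\bar x_e) - l_{e',1}(\bar x_{e'})$. By Lemma~\ref{lem:equilibrium:potentials} applied to the Wardrop equilibrium $\vec x$, the used path $P$ satisfies
\[
l_{e,1}(\bar x_e) \leq l_{e',1}(\bar x_{e'}),
\quad \text{i.e.,} \quad
(1-u_{r,c})(x^1_e + x^2_e) \leq (1-u_{r,c})(x^1_{e'} + x^2_{e'}).
\]
Combining this with $x^2_e = x^2_{e'}$ and the fact that the slope $a_{e,1} = 1 - u_{r,c}$ is strictly positive (using the perturbation of the construction with very small positive slopes mentioned after the definition of $(G_{\vec U, \vec V}, \{1,2\}, l)$), we conclude $x^1_e \leq x^1_{e'}$. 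In particular $x^1_{e'} > 0$, so $e'$ lies on a used path of population 1. Applying the same argument with the roles of $e$ and $e'$ swapped gives the reverse inequality, yielding $x^1_e = x^1_{e'}$.

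The main subtlety is that, strictly speaking, when $u_{r,c} = 1$ the main edges have identically zero cost for population 1 so the Wardrop conditions alone are compatible with arbitrary splits. This is precisely why the perturbation remark from the proof of Theorem~\ref{thm:hardness} is invoked: replacing $a_{e,1} = 1 - u_{r,c}$ by $1 - u_{r,c} + \delta$ for a sufficiently small $\delta > 0$ (and analogously for population 2) leaves all earlier claims intact while enforcing strict positivity of the slopes, which is what makes the final cancellation in the displayed inequality above legitimate.
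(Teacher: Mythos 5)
Your proof is correct and follows essentially the same route as the paper's: use Claim~\ref{clm:no_aux_wardrop} to get $x^2_e = x^2_{e'}$, compare two paths differing only in the type-1 main edge used within $G_{r,c}$, invoke the Wardrop (shortest-path) condition to obtain $(1-u_{r,c})(x^1_e + x^2_e) \leq (1-u_{r,c})(x^1_{e'} + x^2_{e'})$, cancel $x^2_e = x^2_{e'}$, and bootstrap $x^1_{e'} > 0$ to symmetrize. The only difference is that you make explicit the degenerate case $u_{r,c} = 1$ (where the slope vanishes and the inequality is vacuous) and point to the perturbation $a_{e,i} \to a_{e,i} + \delta$ as the fix; the paper leaves this implicit, relying on the remark in the proof of Theorem~\ref{thm:hardness}. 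That is a small but genuine improvement in rigor.
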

\begin{proof}[Proof of Claim~\ref{clm:equal_share_wardrop}]
We again show the result only for population $1$ since the argumentation for population~$2$ is symmetric. Let $r,c \in [n]$ and $t,t' \in [T]$ be arbitrary and consider the type-1 main edges $e = e^1_{r,c,t}$ and $e = e^1_{r,c,t'}$ in gadget $G_{r,c}$. Using Claim~\ref{clm:no_aux_wardrop}, no population~$i$ uses auxiliary edges that are not of type $i$. This implies in particular that the flow of population~$2$ on edges $e$ and $e'$ is equal, i.e., $x_{e}^2 = x_{e'}^2$.

If $x_{e}^1 = x_{e'}^1 = 0$ there is nothing left to show, so it is without loss of generality to assume that $x_{e}^1 > 0$. This implies that there is a path $P$ from $s_1$ to $t_1$ carrying positive flow. Let $P'$ be the path that uses the same edges as $P$ except that in gadget $G_{r,c}$ the type-$1$ main edge $e'$ is used instead of edge $e$. The equilibrium condition of Wardrop flows implies that
\begin{align}
\label{eq:wardrop_condition}
a_{e,1} \bar{x}_e + b_{e,1} \leq a_{e',1} \bar{x}_{e'} + b_{e',1}.
\end{align}
Using the definition of the cost functions, this is equivalent to
\begin{align*}
(1- u_{r,c})(x_e^1 + x_e^2) \leq (1-u_{r,c})(x_{e'}^1 + x_{e'}^2).	
\end{align*}
Further using that $x_{e}^2 = x_{e'}^2$ this implies that $x_{e}^1 \leq x_{e'}^1$ and, thus, $x_{e'}^1 > 0$. We conclude that \eqref{eq:wardrop_condition} was satisfied with equality implying that $x_{e}^1 = x_{e'}^1$.
\end{proof}

For population~$1$ and $r \in [n]$, let $x_r^1$ denote the total flow sent along the paths using the $r$th gadget row $G_{r,1},\dots,G_{r,n}$. Similarly, define $x_{c}^2$ as the total flow sent by population~$2$ along the paths in the $c$th gadget column $G_{1,c},\dots,G_{n,c}$.

\begin{claim}
\label{clm:right_beta_wardrop}
Let $\vec x$ be a Wardrop equilibrium of $(G_{\vec U,\vec V},\{1,2\},l)$ with $T= n^{\beta+1}$ for some $\beta > 0$. Then, the strategy profile $(\bar{\vec y},\bar{\vec z})$ with $\bar{y}_r = x_{r}^1$ for all $r \in [n]$ and $\bar{z}_c = x_c^2$ for all $c \in [n]$ is an $n^{-\beta}$-approximate Nash equilibrium of $(\vec U,\vec V)$.	
\end{claim}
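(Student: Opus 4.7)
The plan is to mirror the proof of Claim~\ref{clm:right_beta} almost verbatim, replacing the equilibrium condition on marginal total costs by the Wardrop equilibrium condition on path costs. The only meaningful difference will be that Wardrop uses path cost directly rather than its derivative, so the factor of $2$ appearing in the atomic-splittable case disappears, which is exactly why $T = n^{\beta+1}$ is enough here (as opposed to $T = 2 n^{\beta+1}$).

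I would first argue by contradiction: assume $(\bar{\vec y}, \bar{\vec z})$ is not an $n^{-\beta}$-approximate equilibrium of $(\vec U, \vec V)$. By symmetry between the two populations we may suppose that population~$1$ has a profitable deviation $\vec y$, so
\[
\max_{r \in [n]} \sum_{c \in [n]} u_{r,c}\,\bar{z}_c \;>\; \min_{r \in [n]:\; \bar{y}_r > 0} \sum_{c \in [n]} u_{r,c}\,\bar{z}_c \;+\; n^{-\beta}.
\]
Pick $r^*$ achieving the maximum and any $r'$ with $\bar{y}_{r'} > 0$ achieving the minimum on the right-hand side.

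Next, I would compute the total cost of a population-$1$ path that runs along gadget row $r$. By Claim~\ref{clm:no_aux_wardrop} this path uses no type-$2$ auxiliary edges (all of which have cost $4n$ for population~$1$), so its cost comes entirely from type-$1$ main edges; type-$1$ auxiliary and type-$2$ main edges contribute $0$ to population~$1$. By Claim~\ref{clm:equal_share_wardrop}, in each gadget $G_{r,c}$ population~$1$ places flow $\bar{y}_r / T$ on each type-$1$ main edge, whereas population~$2$ traverses \emph{all} type-$1$ main edges of $G_{r,c}$ and contributes $\bar{z}_c$ to each of them. Hence every used population-$1$ path through row $r$ has cost
\[
\sum_{c \in [n]} (1 - u_{r,c})\Bigl(\tfrac{\bar{y}_r}{T} + \bar{z}_c\Bigr).
\]

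Then I would apply the Wardrop condition: since $\bar{y}_{r'} > 0$, the cost of a path through row $r'$ is no larger than the cost of the best alternative through row $r^*$. Subtracting the two sides and rearranging yields
\[
\sum_{c \in [n]} u_{r^*,c}\,\bar{z}_c - \sum_{c \in [n]} u_{r',c}\,\bar{z}_c
\;\leq\; \sum_{c \in [n]} \frac{(1 - u_{r^*,c})\bar{y}_{r^*} - (1 - u_{r',c})\bar{y}_{r'}}{T}
\;\leq\; \frac{n\,\bar{y}_{r^*}}{T} \;\leq\; \frac{n}{T} \;=\; n^{-\beta},
\]
which contradicts the displayed strict inequality above. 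I don't anticipate any real obstacle: once Claims~\ref{clm:no_aux_wardrop} and~\ref{clm:equal_share_wardrop} are in hand, the Wardrop cost of each used path takes the clean closed form above, and the only arithmetic subtlety is making sure that no hidden factor of $2$ creeps in, so that $T = n^{\beta+1}$ suffices rather than $T = 2n^{\beta+1}$.
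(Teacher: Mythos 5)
Your proposal is correct and follows essentially the same argument as the paper's proof: contradiction from \eqref{eq:wardrop_contradiction}, the Wardrop path-cost comparison between rows $r'$ and $r^*$ (using Claims~\ref{clm:no_aux_wardrop} and~\ref{clm:equal_share_wardrop} to get the closed-form cost $\sum_c (1-u_{r,c})(\bar{y}_r/T + \bar{z}_c)$), and the bound $n\bar{y}_{r^*}/T \le n/T = n^{-\beta}$. The only cosmetic difference is that the paper carries a superfluous common factor of $T$ on both sides of the cost inequality before cancelling it, while you drop it immediately; your remark about the missing factor of $2$ (explaining why $T = n^{\beta+1}$ suffices here versus $T = 2n^{\beta+1}$ in the atomic-splittable case) is accurate.
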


\begin{proof}[Proof of Claim~\ref{clm:right_beta_wardrop}]
For the sake of a contradiction, suppose that $(\bar{\vec y}, \bar{\vec z})$ is not an $n^{-\beta}$-approximate Nash equilibrium of $(\vec U, \vec V)$. Due to symmetry, it is without loss of generality to assume that player~$1$ has an alternative strategy $\vec y$ with $\vec y^{\top} \vec U \bar{\vec{z}} > \bar{\vec y}^{\top} \vec U \bar{\vec z} + n^{-\beta}$. This implies in particular that
\begin{align}
\label{eq:wardrop_contradiction}
\max_{r \in [n]} \sum_{c \in [n]} u_{r,c} \bar{z}_c > \min_{r \in [n]: \bar{y}_r>0} \sum_{c \in [n]} u_{r,c} \bar{z}_c + n^{-\beta}.	
\end{align}
Let $r^* \in \arg\max_{r \in [n]} \sum_{c \in [n]} u_{r,c} \bar{z}_c$ and let $r' \in \arg\min_{r \in [n] : \bar{y}_r > 0} u_{r,c} \bar{z}_c$ be arbitrary. Since $\vec x$ is a Wardrop equilibrium for $G_{\vec U, \vec V}$, all used paths have the same cost, and unused paths have higher costs. Using Claim~\ref{clm:equal_share_wardrop}, this implies
\begin{align*}
\sum_{c \in [n]} T(1- u_{r',c})	\biggl(\frac{x^1_{r'}}{T} + x_c^2\biggr) 
&\leq \sum_{c \in [n]} T(1- u_{r^*,c})\biggl(\frac{x^1_{r^*}}{T} + x^2_c\biggr).
\end{align*}
Using the definition of $(\bar{\vec y}, \bar{\vec z})$, we obtain
\begin{align*}
\sum_{c \in [n]}	 T(1- u_{r',c})\biggl(\frac{\bar{y}_{r'}}{T} + \bar{z}_c\biggr)\
&\leq \sum_{c \in [n]} T (1- u_{r^*,c}) \biggl(\frac{\bar{y}_{r^*}}{T} + \bar{z}_c\biggr)
\intertext{which yields}
\sum_{c \in [n]} Tu_{r^*,c} \bar{z}_c 
&\leq \sum_{c \in [n]} Tu_{r',c}\bar{z}_c + \sum_{c \in [n]} (1 - u_{r^*,c})\bar{y}_{r^*} - (1-u_{r',c})\bar{y}_{r'}
\intertext{and equivalently}
\sum_{c \in [n]}  u_{r^*,c} \bar{z}_c 
 &\leq \sum_{c \in [n]} u_{r',c}\bar{z}_c + \sum_{c \in [n]} \frac{(1 - u_{r^*,c})\bar{y}_{r^*} - (1-u_{r',c})\bar{y}_{r'}}{T} \\
&\leq  \sum_{c \in [n]} u_{r',c} \bar{z}_c + \sum_{c \in [n]} \frac{\bar{y}_{r^*}}{T}
\leq \sum_{c \in [n]} + n^{-\beta}
\end{align*}
contradicting \eqref{eq:wardrop_condition}.
\end{proof}
Since the computation of an $\epsilon$-approximate equilibrium for win-lose games is $\PPAD$-complete, we conclude that also the computation of a multi-class Wardrop equilibrium is $\PPAD$-complete.
\end{appendixproof}

\section{Player-independent cost functions}

In this section, we briefly discuss the special case of player-independent cost functions, i.e., games where the cost functions on the edges are
$
l_{e,i} (x_e) := l_e(x_e) = a_e x_e + b_e
$
and, thus, independent of the player. In this case, we obtain the following result for the block Laplacians of any total support.
\begin{lemma} \label{lem:symmetric:blockLaplacian}
If the coefficients of the cost functions $l_{e,i}$ are independent of the player $i$, then, for any total support $\support$,
	$\vec{L}_{\support}$ is symmetric and positive semi-definite
	and
	$\kernel (\vec{L}_{\support}) = \mathcal{N}$.%
\end{lemma}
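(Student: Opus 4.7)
The plan is to handle the three assertions in turn. For symmetry I would inspect the block structure of $\vec{C}$. The diagonal blocks $\vec{C}^{ii}$ are diagonal matrices (trivially symmetric), and for $i\neq j$ the $(i,j)$-block of $\vec{C}$ is $-\vec{C}^{ij}=-\diag\!\bigl(\omega_{e}^{ij}/((\kappa_{e}+1)\,a_{e,j})\bigr)_{e\in E}$. Since $\omega_{e}^{ij}=\omega_{e}^{ji}$ and, under player-independence, $a_{e,j}=a_{e,i}=a_{e}$, the diagonal matrices $\vec{C}^{ij}$ and $\vec{C}^{ji}$ coincide, so $\vec{C}^{\top}=\vec{C}$ and consequently $\vec{L}^{\top}=(\vec{G}\vec{C}\vec{G}^{\top})^{\top}=\vec{G}\vec{C}\vec{G}^{\top}=\vec{L}$.

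For positive semi-definiteness I would massage $\vec{C}$ into a congruence form. Starting from the identity $\vec{C}=(\vec{I}_{km}-\vec{\Omega}\vec{K}\vec{\Omega})\vec{\Omega}\vec{A}^{-1}$ already extracted in the proof of Lemma~\ref{lem:incudedflow}, I use that in the player-independent setting $\vec{A}^{-1}$ depends only on the edge. Consequently $\vec{A}^{-1}$ commutes with $\vec{\Omega}$ (both diagonal) and with $\vec{K}=\vec{J}\bar{\vec{K}}\vec{J}^{\top}$ (because $\bar{\vec{K}}$ is diagonal and indexed by edges only). Combining these commutativities with $\vec{\Omega}^{2}=\vec{\Omega}$, I can rewrite
\[
\vec{C} \;=\; \vec{A}^{-1/2}\,\vec{\Omega}\,(\vec{I}_{km}-\vec{\Omega}\vec{K}\vec{\Omega})\,\vec{\Omega}\,\vec{A}^{-1/2}.
\]
The Sherman--Morrison--Woodbury computation already performed in the proof of Lemma~\ref{lem:incudedflow} identifies $\vec{I}_{km}-\vec{\Omega}\vec{K}\vec{\Omega}$ as the inverse of $\vec{I}_{km}+\vec{\Omega}\vec{J}(\vec{\Omega}\vec{J})^{\top}$, hence as a symmetric positive definite matrix. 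Symmetric congruence by $\vec{A}^{-1/2}\vec{\Omega}$ preserves positive semi-definiteness, so $\vec{C}$ is PSD, and therefore $\vec{L}=\vec{G}\vec{C}\vec{G}^{\top}$ is symmetric positive semi-definite.

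For the kernel identification I would combine what I have with Lemma~\ref{lem:matricesC}. Since $\vec{L}$ is symmetric and PSD, $\kernel(\vec{L})=\{\vec{\pi}:\vec{\pi}^{\top}\vec{L}\vec{\pi}=0\}$, and writing $\vec{\pi}^{\top}\vec{L}\vec{\pi}=(\vec{G}^{\top}\vec{\pi})^{\top}\vec{C}(\vec{G}^{\top}\vec{\pi})$, PSD-ness of $\vec{C}$ yields $\vec{\pi}^{\top}\vec{L}\vec{\pi}=0$ iff $\vec{C}^{1/2}\vec{G}^{\top}\vec{\pi}=\vec{0}$, equivalently $\vec{C}\vec{G}^{\top}\vec{\pi}=\vec{0}$. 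Because $\support$ is total, Lemma~\ref{lem:matricesC} identifies this kernel as $\mathcal{N}$, completing the argument. The one step that requires a moment of care is the algebraic rewriting of $\vec{C}$ into the displayed congruence form; this is where the player-independence hypothesis is used essentially, via the commutativity of $\vec{A}^{-1}$ with both $\vec{\Omega}$ and $\vec{K}$.
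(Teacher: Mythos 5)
Your proof is correct, and the symmetry and kernel parts match the paper's argument in spirit. Where you genuinely diverge is in establishing positive semi-definiteness of $\vec{C}$: the paper observes directly that the symmetric matrix $\vec{C}$ is (weakly) diagonally dominant with nonnegative diagonal entries---the row-wise gap between the diagonal entry $\tfrac{\kappa_e}{(\kappa_e+1)a_e}$ and the absolute row sum of off-diagonals is exactly $\tfrac{1}{(\kappa_e+1)a_e}\geq 0$---and invokes the standard fact that such matrices are PSD. You instead rewrite $\vec{C}=\vec{A}^{-1/2}\vec{\Omega}\bigl(\vec{I}_{km}-\vec{\Omega}\vec{K}\vec{\Omega}\bigr)\vec{\Omega}\vec{A}^{-1/2}$ (which relies on $\vec{A}^{-1}$ commuting with both $\vec{\Omega}$ and $\vec{K}$, and is exactly where player-independence enters), identify the middle factor via the Sherman--Morrison--Woodbury computation from Lemma~\ref{lem:incudedflow} as $\bigl(\vec{I}_{km}+\vec{\Omega}\vec{J}(\vec{\Omega}\vec{J})^{\top}\bigr)^{-1}$, hence symmetric positive definite, and conclude by congruence. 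Both are valid; the paper's diagonal-dominance check is shorter and more elementary, while your congruence argument is arguably more conceptual in that it explains \emph{why} $\vec{C}$ is PSD---it is a principal (projected) compression of the inverse of an explicit Gram-type matrix $\vec{I}+\vec{B}\vec{B}^{\top}$---and makes the role of player-independence structurally transparent rather than a coincidence of entry-wise arithmetic. Your derivation of $\ker(\vec{L})=\mathcal{N}$ from PSD-ness together with Lemma~\ref{lem:matricesC} is the same step the paper takes (its ``this implies'' is your $\vec{\pi}^{\top}\vec{L}\vec{\pi}=0\Leftrightarrow\vec{C}\vec{G}^{\top}\vec{\pi}=\vec{0}$, spelled out).
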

\begin{proof}
In the case of player independent cost functions, the matrix $\vec{C}$ contains the off-diagonal blocks
$
- \vec{C}^{ij} := - \diag\Big(\tfrac{1}{(\kappa_{e_1}+1)a_{e_1}}\omega_{e_1}^{ij}, \dots, \tfrac{1}{(\kappa_{e_m}+1)a_{e_m}}\omega_{e_m}^{ij}\Bigr) = - \vec{C}^{ji}
$%
.
Thus, the matrix $\vec{C}$ is symmetric. Moreover, in every row of $\vec{C}$, the diagonal element is $\tfrac{\kappa_{e}+1-\omega_{e}^i}{(\kappa_{e_1}+1)a_{e}}$ while the (non-zero) off-diagonal elements are $- \tfrac{1}{(\kappa_{e}+1)a_{e}}\omega_{e}^{ij}$. Thus, the difference between diagonal and off-diagonal elements in every row is $\tfrac{1}{(\kappa_e + 1) a_e} > 0$ and, hence, $\vec{C}$ is diagonal-dominant. (In the player-specific case case this only holds true for the matrix $(\vec{I}_{km} - \vec{K} \vec{\Omega}$.)
As $\vec{C}$ is symmetric and diagonal-dominant it is positive semi-definite and, hence, $\vec{L} = \vec{G} \vec{C} \vec{G}^{\top}$ is positive semi-definite as well. Furthermore, this implies that $\vec{G} \vec{C} \vec{G}^{\top} \vec{v} = \vec{0}$ if and only if $\vec{C} \vec{G}^{\top} \vec{v} = \vec{0}$. With Lemma~\ref{lem:matricesC}, $\kernel (\vec{L}_{\support}) = \mathcal{N}$ follows.
\end{proof}

Sylvester's criterions states that a matrix is positive semi-definite if and only if all principal minors of the matrix are positive semi-definite. Thus, by Lemma~\ref{lem:symmetric:blockLaplacian}, the principal minor $\sigma_{\support} = \det(\hat{\vec{L}}_{\support})$ is non-negative.
Additionally, Lemma~\ref{lem:symmetric:blockLaplacian}\longversiononly{$(ii)$} states that all supports are non-$a$-degenerate implying that $\sigma_{\support} > 0$, and we obtain the following result.
\begin{theorem} \label{thm:symmetric}
If the coefficients of the cost functions $l_{e,i}$ are independent of the player $i$, then%
\begin{enumerate}[(i)]
\item
	the game is non-$a$-degenerate.
\item
	the orientation $\sigma_{\support}$ of every total support $\support$ is positive.
\item
	For every demand vector $\vec{r}$, there is a unique Nash equilibrium $\vec{x}$.
\end{enumerate}
\end{theorem}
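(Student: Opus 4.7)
The plan is to deduce all three parts from Lemma~\ref{lem:symmetric:blockLaplacian} and the orientation machinery developed in Theorem~\ref{thm:neighboringLaplacians} and Section~\ref{sec:membership}. Part (i) is essentially immediate: the lemma asserts $\kernel(\vec{L}_{\support}) = \mathcal{N}$ for every total support $\support$, and since $\dim \mathcal{N} = k$, we have $\dim \kernel \vec{L}_{\support} = \dim \mathcal{N}$, which is precisely the negation of $a$-degeneracy in the definition right after Lemma~\ref{lem:matricesC}.

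For part (ii), I would show that the submatrix $\hat{\vec{L}}_{\support}$ (obtained by removing, for each player $i$, the row and column corresponding to the source $s_i$, as in the proof of Theorem~\ref{thm:neighboringLaplacians}) is in fact symmetric positive definite, not merely having non-zero determinant. By Lemma~\ref{lem:symmetric:blockLaplacian}, $\vec{L}_{\support}$ is symmetric positive semi-definite with kernel exactly $\mathcal{N}$. The potential space $\potentialSpace = \{\vec{\pi} \in \R^{nk} : \pi^i_{s_i}=0 \text{ for all } i\}$ intersects $\mathcal{N}$ trivially, since any $(\alpha_1 \vec{1}^{\top},\dots,\alpha_k \vec{1}^{\top})^{\top} \in \mathcal{N}$ with zero in each source-coordinate forces $\alpha_1=\dots=\alpha_k=0$. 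Hence $\vec{\pi}^{\top}\vec{L}_{\support}\vec{\pi} > 0$ for every nonzero $\vec{\pi} \in \potentialSpace$. Identifying $\potentialSpace$ with $\R^{(n-1)k}$ by dropping the source coordinates, this restricted quadratic form is exactly the one represented by $\hat{\vec{L}}_{\support}$. Thus $\hat{\vec{L}}_{\support}$ is symmetric positive definite and $\sigma_{\support} = \sgn(\det \hat{\vec{L}}_{\support}) = +1$.

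For part (iii), I would combine part (ii) with the $\PPAD$-graph structure. By (ii), every support $\support \in \myState$ has positive orientation, so the definitions of $\predF$ and $\succF$ reduce to $\predF(\support) \in \lowerContNeighbors \cap \myState$ and $\succF(\support) \in \upperContNeighbors \cap \myState$ whenever these are nonempty. Consequently, the boundary potential chosen by $\succF$ always moves from $\vec{\pi}^{\max}_{\support}$ at the current $\lambda^{\max}_{\support}$ to the adjacent support starting at that same value of $\lambda$, so $\lambda$ is weakly increasing and in fact strictly increasing across any support transition. The $\PPAD$-graph then decomposes into a disjoint union of simple paths (no cycles, by strict $\lambda$-monotonicity), with equally many sources and sinks. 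Sources are those $\support$ with $\vec{\pi}^{\min}_{\support}$ a $0$-potential, and by non-$b$-degeneracy the unique such state is $\support^0 = \startF$; hence there is exactly one path and exactly one sink, whose $\vec{\pi}^{\max}$ is a $1$-potential. By Lemma~\ref{lem:incudedflow}, this uniquely determines the equilibrium flow for demand $\vec{r}$. The $b$-degenerate case follows by a standard perturbation argument (as in Section~\ref{sec:bDegeneracy}), taking a limit of unique equilibria, or equivalently by strict convexity of the Cominetti--Correa--Stier-Moses potential~\cite{cominetti2009}, whose Hessian equals $\diag(a_e) \otimes (\vec{I}+\vec{1}\vec{1}^{\top})$ and is therefore positive definite in the affine case.

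The main obstacle is part (iii): parts (i) and (ii) reduce to structural properties of the symmetric block Laplacian, but uniqueness of the equilibrium in (iii) must bridge the per-support uniqueness guaranteed by (i) to uniqueness across all feasible supports. The crucial step is that part (ii) synchronizes the orientation of every support, so the $\PPAD$-graph is globally monotone in $\lambda$, which then forces the number of $1$-potential sinks to equal the number of $0$-potential sources, namely one.
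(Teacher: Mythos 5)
Your parts (i) and (ii) match the paper's argument in substance: the paper invokes Sylvester's criterion to conclude that $\det(\hat{\vec{L}}_{\support})$ is a non-negative principal minor of the symmetric positive semi-definite $\vec{L}_{\support}$ and then uses $\kernel(\vec{L}_{\support}) = \mathcal{N}$ to make it strictly positive; your restriction-of-the-quadratic-form argument reaches the same conclusion and is a valid, slightly more explicit formulation. For part (iii) you take a genuinely different route. You count sources and sinks of the $\PPAD$-graph and argue that a unique source forces a unique sink. The paper instead argues by contradiction: given two distinct equilibria for demand $\vec{r}$, they have distinct supports by non-$a$-degeneracy, and tracing back via $\predF$ produces either two distinct start supports for the zero flow or a common predecessor, both impossible in a non-$b$-degenerate game by the bijectivity of $\predF$ and $\succF$ from Lemma~\ref{lem:startPredSucc}. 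Both routes rest on positive orientation and non-$b$-degeneracy and are morally equivalent; yours makes the global picture explicit, the paper's is more local.

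One step in your (iii) is stated too strongly: $\lambda$ is only \emph{weakly} increasing across support transitions, not strictly. A feasible, non-degenerate support $\support$ may have $\lambda^{\min}_{\support} = \lambda^{\max}_{\support}$, so the region is a single point and strict monotonicity fails; it therefore cannot, by itself, rule out cycles. What does rule them out: on any $\succF$-cycle of positively oriented supports the value $\lambda^{\max}_{\support}$ is constant, hence every region on the cycle collapses to the same boundary potential $\vec{\pi}^*$ at which the separating constraints toward all cycle neighbors are simultaneously tight, forcing $|\lowerContNeighbors \cap \myState| > 1$ or $|\upperContNeighbors \cap \myState| > 1$, i.e., $b$-degeneracy. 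So acyclicity is a consequence of non-$b$-degeneracy, which you already invoke for the uniqueness of the start state; once this is said, your source--sink count goes through. Your closing remark that strict convexity of the Cominetti--Correa--Stier-Moses potential gives (iii) directly is correct and is in fact a shorter argument that bypasses the $\PPAD$-graph machinery entirely; the paper does not take that route.
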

\begin{proof}
Statements \emph{(i)} and \emph{(ii)} follow directly from Sylvester's criterion.
If we assume that there are two distinct equilibria for some demand $\vec{r}$, then these equilibria must have two different supports---otherwise all convex combinations of the two equilibria would also be equilibria, yielding a contradiction to the non-$a$-degeneracy assumption.
Using the $\predF$ function, we can trace back a sequence of supports for both of these equilibria until we either end up with two different supports for the zero flow, or with a common predecessor support. Both of these facts imply that the game must be $b$-degenerate. 
But even if the game is $b$-degenerate, the assumption would imply that for very small perturbation $\pertVar$ as described in Section~\ref{sec:bDegeneracy}, the perturbed game still has two equilibria with different supports for the same demand. But, as argued in Section~\ref{sec:bDegeneracy}, we know that this game must be non-$b$-degenerate and, thus, we again obtain a contradiction and the claim follows.
\end{proof}
\shortversion{\paragraph{Parametric computation.}}{\section{Parametric computation}}
\longversiononly{\label{sec:parametric}}

Using the trivial $\PPAD$-algorithm, beginning with the support given by $\startF$, we can iterate through (possibly exponentially many) supports using the function $\succF$, until we eventually obtain a support $\support$ for the Nash equilibrium for the demand $\vec{r}$.

As a byproduct of this procedure, we obtain a sequence of supports $\support_0, \dotsc, \support_\tau$, where $\support_{l}$ is the support of Nash equilibria for demands $\lambda \vec{r}$ with  $ \lambda \in [ \lambdaMin[\support_l], \lambdaMax[\support_l] ]$. By construction of the successor function, we know that $\bigcup_{l=0}^{\tau} [ \lambdaMin[\support_l], \lambdaMax[\support_l] ] = [0,1]$. Thus, we can define a function $\lambda \mapsto \support(\lambda)$ that maps every $\lambda$ to a support of some Nash equilibrium for demand $\lambda \vec{r}$. Given the support $\support(\lambda)$ and $\lambda$, we can easily compute a Nash equilibrium for the demand $\lambda$ as we have discussed in subsection~\ref{sec:laplacian}. Hence, we obtain a function $\vec{f} :[0, 1] \to \R^{mk}$ such that $\vec{f} (\lambda)$ is a Nash equilibrium for the demand $\lambda \vec{r}$, i.e., a function solving \parametricatomicsplittable. 

\begin{theorem} \label{thm:parametric:PSPACE}
\parametricatomicsplittable{} can be solved in polynomial space.
\end{theorem}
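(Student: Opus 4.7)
The plan is to instrument the path-following procedure informally sketched before the theorem and then bound its working space. Starting from $\support_0 := \startF$, the algorithm maintains a single current support $\support \in \myState$ together with an auxiliary scalar $\bar{\lambda} \in [0,1]$ recording the largest demand factor for which $\vec{f}$ has already been emitted. On each iteration it computes $\altSupport := \succF(\support)$ and overwrites the stored support by $\altSupport$. By Lemma~\ref{lem:startPredSucc}(ii)--(iii), predecessors and successors are uniquely matched along the path, so no support can be revisited; together with the finiteness of $\myState$ this guarantees termination at some $\support_\tau$ with $\succF(\support_\tau) = \emptyset$, at which point Lemma~\ref{lem:startPredSucc}(iv) yields the Nash equilibrium for $\lambda = 1$.

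First I would argue that each elementary step fits in polynomial space. A support is encoded by the indicator vector $(S_e)_{e \in E} \in \{0,1\}^{mk}$. From $\support$ one assembles the block Laplacian $\vec{L}_{\support}$, computes the direction $\Delta \vec{\pi}_{\support}$ and the offset $\potentialOffset_{\support}$ inside $\potentialSpace$, and determines the bounds $\lambdaMin[\support]$, $\lambdaMax[\support]$ from Lemma~\ref{lem:lambdapotentialsLineSegment} together with the continuative neighbor achieving them by Lemma~\ref{lem:continuativeNeighbors}. All of these are rational linear-algebra tasks on matrices of dimension at most $nk \times nk$ whose intermediate bit lengths are polynomially bounded in the input; the sign of $\det(\hat{\vec{L}}_{\support})$ needed for the orientation $\sigma_{\support}$ inside $\succF$ is similarly computable in polynomial space via Bareiss-style Gaussian elimination.

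After producing $\altSupport$, the procedure compares $\Lambda := \max(\lambdaMin[\altSupport], \lambdaMax[\altSupport])$ against $\bar{\lambda}$. If $\Lambda > \bar{\lambda}$, the new affine piece $\vec{f}(\lambda) = \vec{C}_{\altSupport}(\vec{G}^{\top}(\lambda \Delta \vec{\pi}_{\altSupport} + \potentialOffset_{\altSupport}) - \vec{b})$ is appended to the output stream on the interval $[\bar{\lambda}, \Lambda]$ and $\bar{\lambda}$ is updated to $\Lambda$; otherwise nothing is emitted. This is precisely the bookkeeping needed to handle the steps on which the PPAD path temporarily decreases $\lambda$, which may occur in the player-specific setting.

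The only genuine obstacle is the length of the traced path, which may be exponential in the input. This is sidestepped because the algorithm never holds more than the current support, the scalar $\bar{\lambda}$, and the $\mathcal{O}(nk)$-sized matrices and vectors needed for one transition; the uniqueness of $\predF$ combined with the uniqueness of $\succF$ from Lemma~\ref{lem:startPredSucc} eliminates any need for a visited list or counter whose size could depend on the path length. Excluding the output as stipulated in the theorem, the total working memory is therefore polynomial in the input.
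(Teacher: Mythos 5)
Your proposal is correct and matches the paper's approach: both run the trivial $\PPAD$ path-following procedure from $\startF$ via repeated application of $\succF$, maintain only the current support together with the maximal emitted demand multiplier $\bar{\lambda}$, and output new affine pieces of $\vec{f}$ as the traversal advances past $\bar{\lambda}$. The paper's own proof is terser---it merely observes that $\vec{f}$ is piecewise linear with polynomially computable breakpoint potentials and that $\startF$ and $\succF$ run in polynomial time---but the bookkeeping you make explicit is exactly what it leaves implicit.
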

\begin{appendixproof}{Theorem~\ref{thm:parametric:PSPACE}}
Since the function $\lambda \mapsto \vec{\pi} (\lambda)$ that maps a value $\lambda$ to the potential of a Nash equilibrium for demand $\lambda \vec{r}$ is linear for every support (see Lemma~\ref{lem:lambdapotentialsLineSegment}), and flows depend linearly on the potentials, the function $\vec{f}$ is piecewise-linear. Thus, it is enough to compute the potentials at the breakpoints $\lambdaMin, \lambdaMax$ explicitly. For any support, this can be clearly done in polynomial time. Since the functions $\startF$ and $\succF$ can be computed in polynomial time as well, \parametricatomicsplittable{} can be computed in polynomial space.
\end{appendixproof}

A game has unique equilbria for all demands if and only if, for every feasible support $\support$, we have $\sigma_{\support} > 0$. In this case, the function $\vec{f}$ computes \emph{all Nash Equilibria} of the game. If, in addition, the game is non-$b$-degenerate, then every support corresponds to a breakpoint of the piecewise affine function $\vec{f}$, yielding an output polynomial algorithm. Using Theorem~\ref{thm:symmetric} this implies in particular an output-polynomial algorithm for the parametric Nash equilibrium problem for player-independent costs.

\begin{theorem} \label{thm:parametric:sym:complexity}
\parametricatomicsplittable{} can be solved in output-polynomial time for non-$b$-degenerate games with $\sigma_{\support} > 0$ for all feasible supports $\support$. In particular, the runtime is in $\mathcal{O}((kn)^{2.4} + \tau (kn)^2 )$, where $\tau$ is the number of breakpoints of the piecewise affine function $\vec f$ returned by the algorithm. 
\end{theorem}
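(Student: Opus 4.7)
The plan is to show that under the hypotheses $\sigma_{\support} > 0$ for every feasible support, together with non-$b$-degeneracy, the generic $\PPAD$ algorithm that starts at $\startF$ and iteratively applies $\succF$ traces the parametric equilibrium in a single monotone sweep from $\lambda = 0$ to $\lambda = 1$, performing exactly $\tau$ iterations, each of which can be executed in $\mathcal{O}((kn)^2)$ time after an $\mathcal{O}((kn)^{2.4})$ initialization.

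First I would establish monotonicity of $\lambda$ along the sweep. Since $\sigma_{\support} = 1$ throughout, $\succF(\support)$ always returns the (by non-$b$-degeneracy unique) element $\altSupport = N(\support, e, i)$ of $\upperContNeighbors \cap \myState$. By the definition of $\upperContNeighbors$, the pair $(e,i)$ satisfies $\normalVec^{\top} \Delta \vec{\pi}_{\support} > 0$, and Theorem~\ref{thm:neighboringLaplacians}(iii) with $\sigma_{\support} = \sigma_{\altSupport} = 1$ then forces $\normalVec[\altSupport, e, i]^{\top} \Delta \vec{\pi}_{\altSupport} < 0$. Hence the same $(e,i)$ appears as a defining constraint for $\lambda^{\min}_{\altSupport}$, so $\support \in \lowerContNeighbors[\altSupport]$, and Lemma~\ref{lem:continuativeNeighbors} yields $\vec{\pi}^{\max}_{\support} = \vec{\pi}^{\min}_{\altSupport}$ and thus $\lambda^{\max}_{\support} = \lambda^{\min}_{\altSupport}$. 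Iterating, the consecutive intervals $[\lambda^{\min}_{\support_l}, \lambda^{\max}_{\support_l}]$ concatenate monotonically to cover $[0,1]$. Consequently, each support visited contributes exactly one breakpoint of $\vec{f}$ (since adjacent supports differ in the activity of a single edge-player pair, altering the local affine slope), the algorithm terminates after exactly $\tau$ iterations, and termination is detected via $\upperContNeighbors \cap \myState = \emptyset$ by Lemma~\ref{lem:startandendstate}. The breakpoint data is written out on the fly, so no bookkeeping for backtracking as in the general polynomial-space algorithm is required.

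For the runtime, I would maintain the inverse $\hat{\vec{L}}_{\support}^{-1}$ of the reduced block Laplacian throughout the sweep. The starting support $\support_0$ is obtained from shortest-path trees of the players with respect to the offsets $b_{e,i}$, and computing its reduced inverse requires one inversion of a $k(n{-}1) \times k(n{-}1)$ matrix, taking $\mathcal{O}((kn)^{2.4})$ with fast matrix multiplication. Per iteration, $\Delta \vec{\pi}_{\support}$ and $\potentialOffset_{\support}$ are recovered by two matrix-vector products in $\mathcal{O}((kn)^2)$; the unique upper continuative neighbor is identified by evaluating the ratios from the proof of Lemma~\ref{lem:lambdapotentialsLineSegment} over all $(e,i)$ pairs in $\mathcal{O}((kn)^2)$, exploiting the sparsity of $\vec{G}$ and the block-diagonal structure of $\vec{W}$ and $\tilde{\vec{C}}$; and finally Theorem~\ref{thm:neighboringLaplacians}(ii) provides $\hat{\vec{L}}_{\altSupport}^{-1}$ from $\hat{\vec{L}}_{\support}^{-1}$ via a Sherman--Morrison rank-one update, again in $\mathcal{O}((kn)^2)$. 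Summing these costs over the $\tau$ iterations and adding the initialization yields the claimed bound.

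The main obstacle is the monotonicity step: beyond the clean sign-flip identity of Theorem~\ref{thm:neighboringLaplacians}(iii), one has to argue that non-$b$-degeneracy rules out stagnation through intermediate supports with $\lambda^{\min}_{\support_l} = \lambda^{\max}_{\support_l}$, which could otherwise produce zero-length intervals and muddle the correspondence between iterations and breakpoints of $\vec{f}$. Once this is handled and the sign identity is read correctly under the assumption $\sigma_{\support} = 1$ everywhere, the remainder of the argument is essentially linear-algebraic bookkeeping.
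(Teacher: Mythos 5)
Your proposal is correct and takes essentially the same approach as the paper's proof: the runtime analysis (initial $\mathcal{O}((kn)^{2.4})$ Laplacian inversion, then $\mathcal{O}((kn)^2)$ per step via sparsity and the Sherman--Morrison update of Theorem~\ref{thm:neighboringLaplacians}) matches the paper almost line by line. The paper's written proof actually consists only of this runtime bookkeeping and treats the monotonicity of $\lambda$ under $\sigma_{\support}>0$ as a given from the preceding discussion; you spell that part out explicitly via Theorem~\ref{thm:neighboringLaplacians}(iii) and Lemma~\ref{lem:continuativeNeighbors}, which is a useful addition rather than a deviation. The stagnation/zero-length-interval concern you raise at the end is a genuine subtlety, but the paper glosses over it as well (it identifies ``supports visited'' with ``breakpoints of $\vec f$''), so it does not reflect a gap relative to the paper's own proof.
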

\begin{appendixproof}{Theorem~\ref{thm:parametric:sym:complexity}}
To obtain the first support $\support_0$, we need to solve $k$ shortest-path-problems and then check $km$ many inequalities for equality. Then we need to setup the Laplacian matrix $\vec{L}_{\support}$ and compute its (generalized) inverse. Using a fast matrix multiplication algorithm, the latter can be done in $\mathcal{O}( (kn)^{2.4} )$ time, e.g., with the Coppersmith-Winograd algorithm \cite{coppersmith1987matrix}.

Given the (generalized) inverse of the Laplacian $\vec{L}^*_{\support}$, potentials and flows can be computed in $\mathcal{O}((nk)^2)$ time as it only requires the multiplication of $nk \times nk$ matrices with vectors of dimension $nk$.  We note that by definition the matrix $\tilde{\vec{C}}_{\support}$ has at most $k$ non-zero entries in every row. The vectors $\vec{w}_{\support, e, i}, \normalVec$, and $\lightNormalVec$ have also only $\mathcal{O}(k)$ many non-zero elements. This implies that every value necessary for the computation of $\lambda^{\min}_{\support}$ and $\lambda^{\max}_{\support}$ can be obtained in $\mathcal{O}(k)$ time. Thus the computation of $\lambda^{\min}_{\support}$ and $\lambda^{\max}_{\support}$ itself needs $\mathcal{O}(km \cdot k) = \mathcal{O}((kn)^2)$ time. The bottleneck is thus the computation of the (generalized) inverse of the Laplacian matrix. However, the inverse does not need to be computed from scratch, but can be obtained with the update formula from Theorem~\ref{thm:neighboringLaplacians}. Since this formula also depends only on the sparse vector $\normalVec$ and $\lightNormalVec$ with $\mathcal{O}(k)$-many non-zero entries, this update step can also be computed in $\mathcal{O}((nk)^2)$-time.
\end{appendixproof}

\section*{Appendix}
\appendix
\section{Degeneracy} \label{sec:degeneracy}

\subsection{$a$-Degeneracy} \label{sec:aDegeneracy}

In this section, we show how the functions $\startF, \predF$, and $\succF$ can be modified to still work as intended, even if $a$-degenerate supports exist. First, recall that a support $\mathcal{S}$ is $a$-degenerate, if $\dim(\ker(\vec{L}_{\mathcal{S}})) > \dim(\mathcal{N}) = k$, i.e., if $\hat{\vec{L}}_{\mathcal{S}}$ is singular.
This means that equation $\vec{L}_{\mathcal{S}} \vec{\pi} = \vec{y}$ has no longer a unique solution in $\potentialSpace$. In fact, if there is an $a$-degenerate support $\mathcal{S}$ that is feasible, i.e., $\vec{L}_{\mathcal{S}} \vec{\pi} = \vec{y}$ has a solution, then there must be infinitely many solutions to this equation. Lemma~\ref{lem:matricesC} shows that the induced flow function $\inducedflow$ is injective. Thus, if there exists an $a$-degenerate support, then there must be also infinitely many Nash equilibrium flows for the same excess vector $\vec{y}$ and, hence, infinitely many Nash equilibria for the same demand. See Example~\ref{ex:8playerInfinite} below for a concrete example where such support exists.

In general, for an $a$-degenerate support, all potentials induce the same excess vector $\vec{y}$. Thus, $\lambda^{\min}_{\mathcal{S}} = \lambda^{\max}_{\mathcal{S}}$ (we prove this formally in Lemma~\ref{lem:neighboringDegenerateSupport}) and the potentials $\vec{\pi}^{\min}_{\mathcal{S}}$ and $\vec{\pi}^{\max}_{\mathcal{S}}$ are not unique. Therefore, the functions $\predF$ and $\succF$ are not well-defined in the case of an $a$-degenerate support. We now proceed by adapting these functions for the case of $a$-degenerate games.
As before, we still (only) use 
\begin{equation*}
\begin{split}
\myState := \{ \mathcal{S} : \mathcal{S} &\text{ is a feasible,}\text{ non-}a\text{-degenerate,} 
\text{ shortest-path-support}  \}
\end{split}
\end{equation*}
as the set of states, i.e., even in an $a$-degenerate game, we only consider non-$a$-degenerate supports as states. We observe, that $\startF$ as defined before yields the support $\mathcal{S}^0$ which is non-a-degenerate with the same proof as in Lemma~\ref{lem:startPredSucc}. Hence, we only need to define new successor and predecessor functions.

The first observation that we are going to make is that any (continuative) neighboring support $\mathcal{S}'$ of some non-$a$-degenerate support $\mathcal{S}$ is not ``too degenerate''. Formally, we say a support $\mathcal{S}$ is \emph{weakly-$a$-degenerate} if $\rank(\vec{L}_{\mathcal{S}}) = k (n-1) - 1$ (or equivalently, if $\dim(\ker(\vec{L}_{\support})) = k + 1$). With the rank-$1$-update formula from Theorem~\ref{thm:neighboringLaplacians} we directly obtain
\begin{lemma}
Let $\mathcal{S}$ be a non-$a$-degenerate support. Then every neighboring support $\mathcal{S}'$ of $\mathcal{S}$ is non-$a$-degenerate or weakly-$a$-degenerate.
\end{lemma}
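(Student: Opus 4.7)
The plan is to exploit the rank-$1$ update formula from Theorem~\ref{thm:neighboringLaplacians}\,(\ref{thm:neighboringLaplacians:laplacians}), which expresses $\vec{L}_{\mathcal{S}'}$ as $\vec{L}_{\mathcal{S}}$ plus an outer product $\lightNormalVec[\mathcal{S}', e, i]\,\normalVec[\mathcal{S}, e, i]^{\top}$. Since any outer product of two vectors has rank at most one, the standard matrix-perturbation inequality immediately gives
\[
\bigl|\rank(\vec{L}_{\mathcal{S}'}) - \rank(\vec{L}_{\mathcal{S}})\bigr| \leq 1.
\]

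Next I would translate the non-$a$-degeneracy assumption on $\mathcal{S}$ into a statement about rank: by definition, $\dim(\ker(\vec{L}_{\mathcal{S}})) = \dim(\mathcal{N}) = k$, so $\rank(\vec{L}_{\mathcal{S}}) = nk - k = k(n-1)$. Combined with the previous step, this pins down $\rank(\vec{L}_{\mathcal{S}'}) \in \{k(n-1)-1,\, k(n-1),\, k(n-1)+1\}$.

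To eliminate the last case I would use the observation that the inclusion $\mathcal{N} \subseteq \ker(\vec{L}_{\mathcal{S}'})$ holds unconditionally for every support: as $\vec{L}_{\mathcal{S}'} = \vec{G}\,\vec{C}_{\mathcal{S}'}\,\vec{G}^{\top}$ and every $\vec{v} \in \mathcal{N}$ satisfies $\vec{G}^{\top} \vec{v} = \vec{0}$ by the very definition of $\mathcal{N}$, we get $\vec{L}_{\mathcal{S}'} \vec{v} = \vec{0}$. This caps $\rank(\vec{L}_{\mathcal{S}'})$ at $nk - k = k(n-1)$, so only the two values $k(n-1)$ and $k(n-1)-1$ survive, corresponding exactly to the non-$a$-degenerate and weakly-$a$-degenerate cases.

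There is essentially no obstacle: the whole argument is three lines of rank bookkeeping once Theorem~\ref{thm:neighboringLaplacians}\,(\ref{thm:neighboringLaplacians:laplacians}) is in hand. The only subtlety worth checking is that the rank-$1$ update formula applies to \emph{every} $(e,i)$-neighbor and not merely to continuative ones, which is the case as the statement of Theorem~\ref{thm:neighboringLaplacians} makes no such restriction.
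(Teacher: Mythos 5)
Your proof is correct and takes the same route as the paper: the paper's proof is literally the single sentence appealing to the rank-$1$ update formula of Theorem~\ref{thm:neighboringLaplacians}, and you have simply unpacked the two pieces of rank bookkeeping implicit in it (the rank-perturbation bound $|\rank(\vec{L}_{\mathcal{S}'}) - \rank(\vec{L}_{\mathcal{S}})| \leq 1$, plus the unconditional cap $\rank(\vec{L}_{\mathcal{S}'}) \leq k(n-1)$ coming from $\mathcal{N} \subseteq \ker(\vec{L}_{\mathcal{S}'})$, needed to rule out a rank increase). Your final remark is also worth keeping: part~(\ref{thm:neighboringLaplacians:laplacians}) of Theorem~\ref{thm:neighboringLaplacians} is a pure algebraic identity that holds for every $(e,i)$-neighbor and does not itself use any non-degeneracy hypothesis, which is exactly what makes the argument go through.
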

For every weakly-$a$-degenerate support $\mathcal{S}$ we know by definition that $\dim(\ker(\hat{\vec{L}}_{\mathcal{S}})) = 1$, i.e., there is a direction vector $\nullspaceD$ such that 
$\ker(\hat{\vec{L}}_{\mathcal{S}}) = \vspan(\nullspaceD)$. (This direction is of course not unique, but it is unique up to a scalar multiplication.)
We refer to this direction as the \emph{nullspace direction} of the weakly-$a$-degenerate support $\mathcal{S}$. As it will turn out, if, for example, the successor $\mathcal{S}' := \succF(\mathcal{S})$ of some non-$a$-degenerate support $\mathcal{S}$ is $a$-degenerate, then we can use this direction to find another, non-$a$-degenerate support $\mathcal{S}''$ by following the nullspace direction $\nullspaceD[\mathcal{S}']$. We can then use $\mathcal{S}''$ as successor instead.
In order to prove this formally, we first need the following technical observations.

\begin{lemma} \label{lem:neighboringDegenerateSupport}
Let $\mathcal{S}$ be a non-$a$-degenerate support and $\mathcal{S}'$ be a $(e,i)$-neighboring, $a$-degenerate support.
\begin{enumerate}[(i)]
\item
	The nullspace direction $\nullspaceD[\altSupport] \in \vspan ( \vec{L}_{\support}^+ \lightNormalVec)$.
\item
	If $\normalVec^{\top} \Delta \vec{\pi}_{\support} \neq 0$ and $\altSupport$ is feasible, then
	the equation $\vec{L}_{\altSupport} \vec{\pi} = \lambda \Delta \vec{y} + \vec{d}_{\altSupport}$ has a solution if and only if $\lambda = \lambda^*_{\altSupport}$ for some fixed $\lambda^*_{\altSupport} \in [0,1]$.
\item
	Some potential vector $\vec{\pi} \in \potentialSpace$ satisfies $\vec{L}_{\altSupport} \vec{\pi} = \lambda^*_{\altSupport} \Delta \vec{y} + \vec{d}_{\altSupport}$ if and only if there is some $\nullspaceParameter \in \R$ such that $\vec{\pi} = \vec{L}^+_{\altSupport} \big( \lambda^*_{\altSupport} \Delta \vec{y} + \vec{d}_{\altSupport}) + \nullspaceParameter \nullspaceD[\altSupport]$ for some generalized inverse $\vec{L}^+_{\altSupport}$ with $\vec{L}^+_{\altSupport} \big( \lambda^* \Delta \vec{y} + \vec{d}_{\altSupport} \big) \in \potentialSpace$.
\end{enumerate}
\end{lemma}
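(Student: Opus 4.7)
The central tool throughout is the rank-one update $\vec{L}_{\altSupport} = \vec{L}_{\support} + \lightNormalVec \normalVec^{\top}$ provided by Theorem~\ref{thm:neighboringLaplacians}, combined with the fact that non-$a$-degeneracy of $\support$ gives $\ker(\vec{L}_{\support}) = \ker(\vec{L}_{\support}^{\top}) = \mathcal{N}$ with $\dim \mathcal{N} = k$. Since $\altSupport$ differs from $\support$ by only a rank-one perturbation, it is at worst weakly $a$-degenerate, so $\dim \ker(\vec{L}_{\altSupport}) = \dim \ker(\vec{L}_{\altSupport}^{\top}) = k+1$.

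For \emph{(i)}, I would pick any $\vec{v} \in \ker(\vec{L}_{\altSupport}) \setminus \mathcal{N}$, which exists by weak $a$-degeneracy. The update identity gives $\vec{L}_{\support} \vec{v} = -(\normalVec^{\top} \vec{v}) \lightNormalVec$. If $\normalVec^{\top} \vec{v}$ were zero, then $\vec{v} \in \ker(\vec{L}_{\support}) = \mathcal{N}$, contradicting the choice of $\vec{v}$; hence $\normalVec^{\top} \vec{v} \neq 0$. Solving the (now consistent) system yields $\vec{v} = -(\normalVec^{\top} \vec{v})\, \vec{L}_{\support}^{+} \lightNormalVec + \vec{w}$ for some $\vec{w} \in \mathcal{N}$, and the claimed $\nullspaceD[\altSupport]$ is obtained by fixing the generalized inverse so that the source-potential normalization $\pi^{i}_{s_{i}} = 0$ holds, exactly as was done in defining $\vec{L}^{*}$ for non-$a$-degenerate supports.

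For \emph{(ii)}, I first observe that $\lambda \Delta \vec{y} + \vec{d}_{\altSupport} \in \mathcal{N}^{\perp}$: each block of $\Delta \vec{y}$ sums to zero, and $\vec{d}_{\altSupport} = \vec{G} \vec{C}_{\altSupport} \vec{b} \in \image(\vec{G}) \subseteq \mathcal{N}^{\perp}$. Consistency of $\vec{L}_{\altSupport} \vec{\pi} = \lambda \Delta \vec{y} + \vec{d}_{\altSupport}$ therefore reduces to orthogonality against a single extra direction $\vec{u} \in \ker(\vec{L}_{\altSupport}^{\top}) \setminus \mathcal{N}$. Applying the argument of \emph{(i)} to the transposed identity $\vec{L}_{\altSupport}^{\top} = \vec{L}_{\support}^{\top} + \normalVec \lightNormalVec^{\top}$ expresses $\vec{u}$, up to a nonzero scalar and an $\mathcal{N}$-summand, as $(\vec{L}_{\support}^{\top})^{+} \normalVec$. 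Using that $\normalVec, \lightNormalVec \in \mathcal{N}^{\perp}$ (since both are of the form $\vec{G} \cdot (\text{stuff})$) and that $\lambda \Delta \vec{y} + \vec{d}_{\altSupport} \in \image(\vec{L}_{\support}) = \mathcal{N}^{\perp}$, the scalar quantity $\normalVec^{\top} \vec{L}_{\support}^{+}(\lambda \Delta \vec{y} + \vec{d}_{\altSupport})$ is well-defined, and the consistency condition collapses to the scalar identity $\lambda\, \normalVec^{\top} \Delta \vec{\pi}_{\support} + \normalVec^{\top} \vec{L}_{\support}^{+} \vec{d}_{\altSupport} = 0$. Under the standing assumption $\normalVec^{\top} \Delta \vec{\pi}_{\support} \neq 0$, this pins down a unique $\lambda^{*}_{\altSupport}$, and feasibility of $\altSupport$ forces $\lambda^{*}_{\altSupport} \in [0,1]$.

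For \emph{(iii)}, once $\lambda = \lambda^{*}_{\altSupport}$ is fixed, the solution set in $\R^{nk}$ is an affine subspace $\vec{\pi}_{0} + \ker(\vec{L}_{\altSupport}) = \vec{\pi}_{0} + \mathcal{N} + \vspan(\nullspaceD[\altSupport])$. Since $\potentialSpace$ is a complement of $\mathcal{N}$ in $\R^{nk}$ under the normalization $\pi^{i}_{s_{i}} = 0$, intersecting with $\potentialSpace$ eliminates the $\mathcal{N}$-direction and leaves a one-parameter family along $\nullspaceD[\altSupport]$. Choosing the generalized inverse $\vec{L}_{\altSupport}^{+}$ analogously to $\vec{L}^{*}$ (so that it always returns the representative satisfying $\pi^{i}_{s_{i}} = 0$) ensures that $\vec{L}_{\altSupport}^{+}(\lambda^{*}_{\altSupport} \Delta \vec{y} + \vec{d}_{\altSupport}) \in \potentialSpace$, and the claimed parametrization $\vec{\pi} = \vec{L}_{\altSupport}^{+}(\lambda^{*}_{\altSupport} \Delta \vec{y} + \vec{d}_{\altSupport}) + \nullspaceParameter \nullspaceD[\altSupport]$ follows. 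The main technical hazard throughout is the ambiguity inherent in generalized inverses; it is resolved by consistently using $\normalVec, \lightNormalVec \in \mathcal{N}^{\perp}$ and $\lambda \Delta \vec{y} + \vec{d}_{\altSupport} \in \image(\vec{L}_{\support})$, which make all scalar quantities of interest independent of the chosen inverse.
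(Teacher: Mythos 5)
Your proof is correct and reaches the same conclusions, but it takes a noticeably different route from the paper, especially in parts \emph{(i)} and \emph{(ii)}. For \emph{(i)}, the paper first invokes the Sherman--Morrison--Woodbury rank corollary to deduce $\normalVec^{\top}\vec{L}_{\support}^{+}\lightNormalVec=-1$, then verifies directly that $\vec{L}_{\altSupport}\vec{L}_{\support}^{+}\lightNormalVec=\vec{0}$, thereby \emph{constructing} the extra kernel direction; you instead argue top-down, picking $\vec{v}\in\ker(\vec{L}_{\altSupport})\setminus\mathcal{N}$ and showing it must be a multiple of $\vec{L}_{\support}^{+}\lightNormalVec$ modulo $\mathcal{N}$. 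Both are valid; the paper's version is more computational, yours more elementary. For \emph{(ii)}, the paper takes two candidate solutions and uses $\normalVec^{\top}\vec{L}_{\support}^{+}\vec{L}_{\altSupport}=\vec{0}^{\top}$ to force equality of the two $\lambda$-values, whereas you go via the Fredholm alternative, characterize $\ker(\vec{L}_{\altSupport}^{\top})$ by transposing \emph{(i)}, and collapse the consistency condition to an explicit scalar equation $\lambda\,\normalVec^{\top}\Delta\vec{\pi}_{\support}+\normalVec^{\top}\vec{L}_{\support}^{+}\vec{d}_{\altSupport}=0$. Your version has the small bonus of giving a closed-form for $\lambda^{*}_{\altSupport}$, not just uniqueness. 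One minor point worth spelling out: when you write ``solving the (now consistent) system'' in \emph{(i)}, the consistency of $\vec{L}_{\support}\vec{x}=-c\,\lightNormalVec$ needs the observation that $\lightNormalVec\in\image(\vec{G})\subseteq\mathcal{N}^{\perp}=\image(\vec{L}_{\support})$, the last identity holding because $\ker(\vec{L}_{\support}^{\top})=\mathcal{N}$ by non-$a$-degeneracy; you use this fact implicitly elsewhere, so you may as well state it once.
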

\begin{proof}
Using a corollary from the Sherman-Morrison-Woodbury formula for generalized inverses \cite[Cor.~18.5.2]{harville1997matrix} we get
$
\rank(\vec{L}_{\mathcal{S}'}) = \rank(\vec{L}_{\mathcal{S}}) + \rank(1 + \normalVec^{\top} \vec{L}_{\support}^+ \lightNormalVec) - 1
.
$
By assumption, $\rank(\vec{L}_{\mathcal{S}'}) - \rank(\vec{L}_{\mathcal{S}}) = -1$ and, hence, we obtain
$
\normalVec^{\top} \vec{L}_{\support}^+ \lightNormalVec = -1
.
$
Together with Theorem~\ref{thm:neighboringLaplacians} this yields
\begin{equation} \label{eq:lem:neighboringDegenerateSupport:subclaim1}
\vec{L}_{\altSupport} \vec{L}^+_{\support} \lightNormalVec
= \vec{L}_{\support} \vec{L}^+_{\support} \lightNormalVec + \lightNormalVec \normalVec^{\top} \vec{L}^+_{\support} \lightNormalVec 
= \lightNormalVec - \lightNormalVec = \vec{0}
.
\end{equation} 
Thus, $\vspan (\vec{L}_{\mathcal{S}}^+ \vec{w}'_{\mathcal{S}',e,i}) \subset \ker(\vec{L}_{\mathcal{S}'}) \cap \potentialSpace$. Since $\dim(\ker(\vec{L}_{\mathcal{S}'}) \cap \potentialSpace) = 1$ we therefore obtain that
$
\ker( \vec{L}_{\altSupport} ) \cap \potentialSpace = \vspan ( \vec{L}^+_{\support} \lightNormalVec )
$. This implies, that the nullspace direction $\nullspaceD[\altSupport]$ is a multiple of the vector $\vec{L}^+_{\support} \lightNormalVec$ and, hence, proves \emph{(i)}.

Assume that  $\normalVec^{\top} \Delta \vec{\pi}_{\support} \neq 0$ and  $\altSupport$ is feasible. Then, by feasibility, there is a $\lambda^*_{\altSupport}$ such that $\vec{L}_{\altSupport} \vec{\pi} = \lambda^*_{\altSupport} \Delta \vec{y} + \vec{d}_{\altSupport}$ has the solution $\vec{\pi} = \vec{L}^+_{\altSupport} \big(\lambda^*_{\altSupport} \Delta \vec{y} + \vec{d}_{\altSupport})$. Now assume that $\lambda$ is chosen such that $\tilde{\vec{\pi}} = \vec{L}^+_{\altSupport} \big(\lambda \Delta \vec{y} + \vec{d}_{\altSupport})$ is a solution of $\vec{L}_{\altSupport} \vec{\pi} = \lambda \Delta \vec{y} + \vec{d}_{\altSupport}$, then we obtain
\begin{align*}
(\lambda^*_{\altSupport} - \lambda) \normalVec^{\top} \Delta \vec{\pi}_{\support} 
= \normalVec^{\top} \vec{L}^+_{\support} (\lambda^*_{\altSupport} - \lambda) \Delta \vec{y}
= \normalVec^{\top} \vec{L}^+_{\support} \vec{L}_{\altSupport} (\vec{\pi} - \tilde{\vec{\pi}}) 
\stackrel{\mathclap{\eqref{eq:lem:neighboringDegenerateSupport:subclaim1}}}{=} 0
\end{align*}
and, hence, $\lambda = \lambda^*_{\altSupport}$ proving \emph{(ii)}.

Since $\vec{L}_{\altSupport} \nullspaceD[\altSupport] = \vec{0}$, every vector $\vec{\pi} = \vec{L}^+_{\altSupport} \big(\lambda^*_{\altSupport} \Delta \vec{y} + \vec{d}_{\altSupport}) + \xi \nullspaceD[\altSupport]$ with $\xi \in \R$ is a solution of $\vec{L}_{\altSupport} \vec{\pi} = \lambda \Delta \vec{y} + \vec{d}_{\altSupport}$ and, by \emph{(ii)}, these are the only solution. Thus, \emph{(iii)} follows.
\end{proof}
Note, that the statements of Lemma~\ref{lem:neighboringDegenerateSupport} hold for any (fixed) choice of the generalized inverses of $\vec{L}_{\mathcal{S}}$ and $\vec{L}_{\mathcal{S}'}$. In the non-$a$-degenerate case, we use the fact that there is a unique generalized inverse $\vec{L}^{*}_{\mathcal{S}}$ that maps into $\potentialSpace$. For $a$-degenerate regions, there in no longer a unique such choice of the generalized inverse. For the remainder of this section we assume we have chosen some fixed generalized inverse $\vec{L}^{+}_{\altSupport}$ that maps into $\potentialSpace$.

Assume that the continuative $(e,i)$-neighbor $\altSupport$ of some non-$a$-degenerate support $\support$ is $a$-degenerate. Then, we know that $\altSupport$ is feasible (since $\support$ and $\altSupport$ share a boundary potential), thus we can use Lemma~\ref{lem:neighboringDegenerateSupport} to describe the polytope of $\lambda$-potentials $\potentialRegion[\altSupport]$ of $\altSupport$ as 
\begin{align*}
\potentialRegion[\altSupport] &=
\big\{
\vec{\pi} \in \potentialSpace
\; \big\vert \;
\vec{L}_{\altSupport} \vec{\pi} - \vec{d}_{\altSupport}  = \lambda^*_{\altSupport} \Delta \vec{y}
, 
\vec{W}_{\altSupport}  \big( \vec{G}^{\top} \! \vec{\pi} - \vec{b} \big) \geq \vec{0}
\big\} \\
&=
\big\{
\vec{L}^+_{\altSupport}  ( \lambda^*_{\altSupport} \Delta \vec{y} + \vec{d}_{\altSupport}  ) 
 + \nullspaceParameter \nullspaceD[\altSupport]
\; \big\vert \; \exists \nullspaceParameter \in \R : 
\vec{W}_{\altSupport}  \big( \vec{G}^{\top} \!
 \big(  
 \vec{L}^+_{\altSupport}  ( \lambda^*_{\altSupport} \Delta \vec{y} + \vec{d}_{\altSupport}  ) 
 + \nullspaceParameter \nullspaceD[\altSupport]
 \big) 
- \vec{b} \big) \geq \vec{0}
\big\}
.
\end{align*}
We see that the polytope $\potentialRegion[\altSupport]$ of an $a$-degenerate region is parametrized by $\xi \in \R$ along the nullspace direction $\nullspaceD[\altSupport]$, in contrast to a non-$a$-degenerate region $\support$, where $\potentialRegion$ is parametrized by $\lambda$ and the potential direction $\Delta \vec{\pi}_{\support}$.
%
Similar to Lemma~\ref{lem:lambdapotentialsLineSegment}, we can rewrite the polytope of $\lambda$-potentials as
\[
\potentialRegion[\altSupport] =
\big\{
\vec{L}^+_{\altSupport} ( \lambda^*_{\altSupport} \Delta \vec{y} + \vec{d}_{\altSupport} )  + \nullspaceParameter \nullspaceD[\altSupport]
\; \vert \;
\nullspaceMin[\altSupport] \leq \nullspaceParameter \leq \nullspaceMax[\altSupport]
\big\}
.
\]
We note that $\nullspaceMin[\altSupport], \nullspaceMax[\altSupport]$ are in fact finite values by the following argument. The induced flow of the nullspace direction $\inducedflow (\nullspaceD[\altSupport])$ is a flow circulation that satisfies the potential equations $\smash{\Delta \pi^{N,i}_w - \Delta \pi^{N,i}_v = a_e^i \Delta \bar{x}_e}$. Hence, there must be at least one edge with positive total induced flow and one edge with negative total induced flow implying that adding or subtracting $\inducedflow (\nullspaceD[\altSupport])$ violates at some point the non-negativity of the flow for some edge.
We therefore know that a non-$a$-degenerate support $\altSupport$ has well-defined boundary potentials $\bpMin[\altSupport], \bpMax[\altSupport]$. As for non-$a$-degenerate supports we can find continuative neighbors by looking at the $(e,i)$-neighbors that are shortest-path-supports for all pairs $(e,i)$ that induce $\nullspaceMin[\altSupport]$ and $\nullspaceMax[\altSupport]$.

For every boundary potential of the degenerate support $\support$, we can find a continuative neighbor. This neighbor is also unique, as long as the $a$-degenerate support is not also $b$-degenerate---we cover this case of $a$-$b$-degeneracy in the next section.
The following lemma proves the existence, uniqueness, and the non-$a$-degeneracy of such neighboring supports.

\begin{lemma} \label{lem:skippingWeaklyDegenerateSupports}
Let $\mathcal{S}$ be a non-$a$-degenerate support. Further, let $\mathcal{S}'$ be a continuative $(e,i)$-neighbor of $\support$ that is $a$-degenerate. Then, there is a uniquely defined support $\support^{*}_{\support, \altSupport} \neq \support$ that is a neighbor of $\altSupport$ and non-$a$-degenerate.
\end{lemma}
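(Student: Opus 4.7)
The plan is to construct $\support^*$ as the support associated with the endpoint of the line segment $\potentialRegion[\altSupport]$ opposite to the one shared with $\support$, and then argue existence, uniqueness, and non-$a$-degeneracy separately. First, invoking Lemma~\ref{lem:neighboringDegenerateSupport}(ii)--(iii), I would write $\potentialRegion[\altSupport]$ as a one-dimensional segment along the nullspace direction $\nullspaceD[\altSupport]$, parametrized by $\xi \in [\nullspaceMin[\altSupport], \nullspaceMax[\altSupport]]$. By Lemma~\ref{lem:continuativeNeighbors}, $\support$ and $\altSupport$ share exactly one boundary potential, and at this shared potential the $(e,i)$-constraint is tight (since $\support$ and $\altSupport$ differ precisely in the activity of $(e,i)$).

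Next, I would verify that this shared potential lies at one endpoint of the segment, say $\xi = \nullspaceMin[\altSupport]$. Combining $\normalVec[\altSupport, e, i] = -\tfrac{\kappa^\support_e + 1}{\kappa^\altSupport_e + 1}\normalVec[\support, e, i]$ from the proof of Theorem~\ref{thm:neighboringLaplacians}, the identity $\normalVec[\support, e, i]^\top \vec{L}_\support^+ \lightNormalVec[\altSupport, e, i] = -1$ derived in the proof of Lemma~\ref{lem:neighboringDegenerateSupport}, and the fact that $\nullspaceD[\altSupport] \in \vspan(\vec{L}^+_\support \lightNormalVec[\altSupport, e, i])$, I obtain $\normalVec[\altSupport, e, i]^\top \nullspaceD[\altSupport] \neq 0$. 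Hence the $(e,i)$-constraint value varies linearly and non-trivially with $\xi$, so it can be tight at only one endpoint of the segment. The opposite endpoint $\xi = \nullspaceMax[\altSupport]$ must therefore correspond to a different constraint $(e',i')$ becoming tight, and the non-$b$-degeneracy assumption (in the form extended to cover endpoints of $\xi$-parametrized degenerate regions in Section~\ref{sec:bDegeneracy}) guarantees that this $(e',i')$ is unique. This defines $\support^* := N(\altSupport, e', i')$ with $\support^* \neq \support$.

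The core step is establishing non-$a$-degeneracy of $\support^*$. Applying Theorem~\ref{thm:neighboringLaplacians}(i) we have $\vec{L}_{\support^*} = \vec{L}_\altSupport + \lightNormalVec[\support^*, e', i']\,\normalVec[\altSupport, e', i']^\top$, and evaluating at the nullspace direction yields
\[
\vec{L}_{\support^*}\,\nullspaceD[\altSupport] \;=\; \lightNormalVec[\support^*, e', i']\cdot\bigl(\normalVec[\altSupport, e', i']^\top \nullspaceD[\altSupport]\bigr).
\]
The scalar factor is non-zero precisely because the $(e',i')$-constraint defines an endpoint of the segment, so its value must vary non-trivially along $\nullspaceD[\altSupport]$. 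The vector $\lightNormalVec[\support^*, e', i']$ is also non-zero by its graph-theoretic definition. Therefore $\nullspaceD[\altSupport] \notin \ker(\vec{L}_{\support^*})$, and since $\ker(\vec{L}_\altSupport) \cap \potentialSpace = \vspan(\nullspaceD[\altSupport])$ had dimension one, this already forces $\ker(\vec{L}_{\support^*}) \cap \potentialSpace$ to be a proper subspace.

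The main obstacle is finishing this rank-tracking argument cleanly: a rank-$1$ update of a singular matrix can in principle introduce a new null direction distinct from $\nullspaceD[\altSupport]$, so one must also exclude a different null vector from appearing in $\ker(\vec{L}_{\support^*}) \cap \potentialSpace$. I would handle this by invoking the extended Sherman-Morrison rank identity, as in equation~\eqref{eq:thm:uniqueBoundaryCrossing:subclaimProof} in the proof of Theorem~\ref{thm:neighboringLaplacians}: the rank of $\vec{L}_{\support^*}$ increases by one exactly when $1 + \normalVec[\altSupport, e', i']^\top \vec{L}_\altSupport^+ \lightNormalVec[\support^*, e', i'] \neq 0$, which can be rewritten as a ratio of determinants of appropriate submatrices of $\hat{\vec{L}}_\altSupport$ and $\hat{\vec{L}}_{\support^*}$. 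Non-vanishing of this quantity would then be argued from the genericity enforced by non-$b$-degeneracy, possibly after a small perturbation of the offset vector $\vec{b}$ as introduced in Section~\ref{sec:bDegeneracy}.
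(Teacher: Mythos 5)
The overall strategy is aligned with the paper: parametrize $\potentialRegion[\altSupport]$ along $\nullspaceD[\altSupport]$, locate $(e',i')$ at the opposite endpoint, and use $\normalVec[\altSupport, e',i']^{\top}\nullspaceD[\altSupport] \neq 0$ to push $\nullspaceD[\altSupport]$ out of the kernel of $\vec{L}_{\support^*_{\support,\altSupport}}$. You correctly flag that this alone is not enough, since a rank-one update of a singular matrix can in principle admit a different null direction. However, your proposed way of closing this gap does not work, on two counts. First, the version of the Sherman--Morrison rank identity you cite (the one giving $\rank(\vec{L}_{\support^*}) = \rank(\vec{L}_{\altSupport}) + \rank(1 + \normalVec^{\top}\vec{L}^+_{\altSupport}\lightNormalVec) - 1$) is only valid when the update vectors lie in the column and row spaces of the base matrix. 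Precisely because $\normalVec[\altSupport, e',i']^{\top}\nullspaceD[\altSupport] \neq 0$, the vector $\normalVec[\altSupport, e',i']$ is \emph{not} in $\rowspace{\vec{L}_{\altSupport}}$, so that formula is inapplicable here — invoking it is circular. Second, and more decisively, the block Laplacian $\vec{L}_{\support} = \vec{G}\vec{C}_{\support}\vec{G}^{\top}$ depends only on the support $\support$ and the slopes $a_{e,i}$; it does not depend on $\vec{b}$ at all. Perturbing $\vec{b}$ (Section~\ref{sec:bDegeneracy}) therefore cannot restore regularity of any $\hat{\vec{L}}_{\support}$; non-$b$-degeneracy governs uniqueness of the tight constraint defining the endpoint, not the rank of the Laplacian.

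The paper does \emph{not} use a Sherman--Morrison rank argument here at all. It instead performs a direct case analysis on an arbitrary $\vec{v} \in \ker(\vec{L}_{\support^*_{\support,\altSupport}})$, splitting on whether $\normalVec[\altSupport, e',i']^{\top}\vec{v}$ vanishes. When it vanishes, $\vec{L}_{\altSupport}\vec{v}=\vec{0}$, so $\vec{v} = \vec{n} + \alpha\nullspaceD[\altSupport]$ with $\vec{n}\in\mathcal{N}$; since $\normalVec[\altSupport, e',i']^{\top}\vec{n} = 0$ (as $\normalVec[\altSupport,e',i'] = \vec{G}\vec{W}_{\altSupport}^{\top}\vec{u}_{e',i'}$ is orthogonal to the left nullspace $\mathcal{N}$ of $\vec{G}$) and $\normalVec[\altSupport, e',i']^{\top}\nullspaceD[\altSupport]\neq 0$, this forces $\alpha = 0$ and $\vec{v}\in\mathcal{N}$. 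When $\normalVec[\altSupport,e',i']^{\top}\vec{v}\neq 0$, the update term does not vanish, and the paper argues this leads to a contradiction with $\vec{L}_{\altSupport}\vec{v}$. If you want to complete your route, you should replace the genericity/perturbation plan by this explicit kernel case analysis; the rank identity you wanted to cite cannot carry the load.
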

\begin{proof}
Since $\nullspaceMin[\altSupport], \nullspaceMax[\altSupport]$ are finite, there must be such neighboring supports, and as long as we assume non-$a$-$b$-degeneracy, there must also be a unqiue support. It remains to be shown that this support is also non-$a$-degenerate.
To this end, let $e' \in E$ and $i' \in \range{k}$ be such that $\support^{*}_{\support, \altSupport} := N(\altSupport, e', i')$ is said unique support.
By definition, $(e',i')$ is the minimizer in the computation of $\nullspaceMax$ or the maximizer in the computation of $\nullspaceMin$. This implies that $\normalVec[\altSupport, e',i']^{\top} \nullspaceD \neq 0$ yielding $\lightNormalVec[\mathcal{S}'',e',i'] \normalVec[\altSupport, e',i']^{\top} \nullspaceD \neq \vec{0}$. Since $\rank(\lightNormalVec[\mathcal{S}'',e',i'] \normalVec[\altSupport, e',i']^{\top}) = 1$ we  know that $\ker(\lightNormalVec[\mathcal{S}'',e',i'] \normalVec[\altSupport, e',i']^{\top}) = \{ \vec{v} \in \R^{nk} \,|\, \vec{v} \notin \vspan(\nullspaceD) \}$. Finally, let $\vec{v} \in \ker(\vec{L}_{\mathcal{S}''})$. Then we have two cases.
\begin{enumerate}[a)]
\item
	$\vec{v} \in \vspan(\nullspaceD)$. Then $\vec{0} = \vec{L}_{\mathcal{S}''} \vec{v} = \vec{L}_{\mathcal{S}'} \vec{v} + \lightNormalVec[\mathcal{S}'',e',i'] \normalVec[\altSupport, e',i']^{\top} \vec{v} = \lightNormalVec[\mathcal{S}'',e',i'] \normalVec[\altSupport, e',i']^{\top} \vec{v} \neq \vec{0}$, which is a contradiction.
\item
	$\vec{v} \notin \vspan(\nullspaceD)$. Then $\vec{0} = \vec{L}_{\mathcal{S}''} \vec{v} = \vec{L}_{\mathcal{S}'} \vec{v} + \lightNormalVec[\mathcal{S}'',e',i'] \normalVec[\altSupport, e',i']^{\top} \vec{v} = \vec{L}_{\mathcal{S}'} \vec{v}$ which is equivalent to $\vec{v} \in \mathcal{N}$ ($\vec{v} \in \vspan(\nullspaceD)$ is excluded by assumption).
\end{enumerate}
Hence, we have $\ker(\vec{L}_{\mathcal{S}''}) = \mathcal{N}$, i.e., $\mathcal{S}''$ is non-degenerate.
\end{proof}

We define a new predecessor function
\[
\predF^* (\support) :=
\begin{cases}
\predF (\support)
	&\text{if } \predF (\support) \in \myState \\
\support^{*}_{\support, \predF(\support)}
	&\text{if } \predF(\support) \notin \myState
\end{cases}
\]
and a new successor $\succF^{*}$ function analogously that are based on the the basic functions $\predF$ and $\succF$ except for the cases where these function would map to a $a$-degenerate support $\altSupport \notin \myState$. Finally, the next theorem proves, that these functions are still compliant with each other.

\begin{theorem} \label{thm:aDegeneracyWellDefined}
The functions $\predF^*$ and $\succF^*$ are well-defined and computable in polynomial time. Further, 
\begin{enumerate}[(i)]
\item
	$\predF^* (\mathcal{S}) \neq \emptyset \; \Rightarrow \; \succF^* (\predF^* (\mathcal{S})) = \mathcal{S}$
\item
	$\succF^* (\mathcal{S}) \neq \emptyset \; \Rightarrow \; \predF^* (\succF^* (\mathcal{S})) = \mathcal{S}$.
\end{enumerate}
\end{theorem}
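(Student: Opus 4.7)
The plan is to verify three things: well-definedness, polynomial-time computability, and the inverse relations \emph{(i)} and \emph{(ii)}. Well-definedness reduces to Lemma~\ref{lem:skippingWeaklyDegenerateSupports}: whenever $\predF(\support)$ or $\succF(\support)$ yields a support $\altSupport \notin \myState$, the skip target $\support^{*}_{\support, \altSupport}$ is uniquely determined as the only non-$a$-degenerate neighbor of $\altSupport$ different from $\support$, using the non-$b$-degeneracy assumption. Polynomial-time computability follows by combining the polynomial-time routines from Section~\ref{sec:membership} for the non-degenerate case with the observation that, by Lemma~\ref{lem:neighboringDegenerateSupport}~\emph{(i)}, the nullspace direction $\nullspaceD[\altSupport]$ of an $a$-degenerate neighbor is a scalar multiple of $\vec{L}_{\support}^{+} \lightNormalVec$, and is thus computable in polynomial time from the generalized inverse that is already maintained; the values $\nullspaceMin[\altSupport]$ and $\nullspaceMax[\altSupport]$ together with the corresponding $(e',i')$-neighbors can then be read off from the $mk$ inequalities that define $\vec{W}_{\altSupport}(\vec{G}^{\top}\vec{\pi}-\vec{b}) \geq \vec{0}$.

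For the inverse relations I would focus on \emph{(ii)}; part \emph{(i)} is proved symmetrically. In the easy case $\succF(\support) \in \myState$, $\succF^{*}$ coincides with $\succF$ on $\support$, and Lemma~\ref{lem:startPredSucc} immediately gives $\predF^{*}(\succF^{*}(\support)) = \predF(\succF(\support)) = \support$. In the degenerate case, $\altSupport := \succF(\support)$ is $a$-degenerate and $\succF^{*}(\support) = \support^{*}_{\support, \altSupport} =: \support^{**}$. I would then invoke the key structural fact that, under non-$b$-degeneracy, $\altSupport$ has exactly two non-$a$-degenerate continuative neighbors: these correspond to the two endpoints $\xi = \nullspaceMin[\altSupport]$ and $\xi = \nullspaceMax[\altSupport]$ of the $\xi$-parametrized segment $\potentialRegion[\altSupport]$. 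Applying Lemma~\ref{lem:skippingWeaklyDegenerateSupports} from the perspective of $\support^{**}$ gives $\support^{*}_{\support^{**}, \altSupport} = \support$, and it remains only to argue that $\predF^{*}(\support^{**})$ actually targets $\altSupport$ as the support to skip through.

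The main obstacle is this last point. Since $\sigma_{\altSupport} = 0$, the orientation rule of Theorem~\ref{thm:neighboringLaplacians}~\emph{(iii)} does not apply directly across the degenerate support. I would handle it by chaining two rank-$1$ updates $\hat{\vec{L}}_{\support} \to \hat{\vec{L}}_{\altSupport} \to \hat{\vec{L}}_{\support^{**}}$ and extracting a signed relation between $\sigma_{\support}$ and $\sigma_{\support^{**}}$ via a small perturbation of the $a$-degenerate coefficient that renders $\hat{\vec{L}}_{\altSupport}$ temporarily nonsingular; the sign of the limiting quotient $\det(\hat{\vec{L}}_{\support^{**}})/\det(\hat{\vec{L}}_{\support})$ can then be identified with the sign of $\normalVec[\support^{**}, e', i']^{\top} \nullspaceD[\altSupport]$, which is nonzero by the construction of $(e',i')$ as the bound-active pair in the computation of $\nullspaceMin[\altSupport]$ or $\nullspaceMax[\altSupport]$ used in Lemma~\ref{lem:skippingWeaklyDegenerateSupports}. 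Combining this sign relation with the observation that traversal through $\altSupport$ takes the path from one end of the $\xi$-segment to the other shows that the side of $\support^{**}$ along which $\altSupport$ lies, together with the orientation $\sigma_{\support^{**}}$, matches the selection rule of $\predF$, completing the argument.
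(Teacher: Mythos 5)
Your plan for the easy parts matches the paper: well-definedness and polynomial-time computability come from Lemma~\ref{lem:skippingWeaklyDegenerateSupports} and the fact that $\nullspaceD[\altSupport]$ is computable via Lemma~\ref{lem:neighboringDegenerateSupport}~\emph{(i)}; and when $\succF(\support)\in\myState$ the claim reduces immediately to Lemma~\ref{lem:startPredSucc}. You also correctly identify the genuine obstacle: $\sigma_{\altSupport}=0$, so Theorem~\ref{thm:neighboringLaplacians}~\emph{(\ref{thm:neighboringLaplacians:directions})} does not apply directly across the degenerate support, and some form of regularization is required. So the high-level route coincides with the paper's.

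The gap is in the regularization step. ``A small perturbation of the $a$-degenerate coefficient'' is not a well-defined construction: it is unclear which object you perturb, and perturbing a single slope $a_{e,i}$ need not restore nonsingularity of $\hat{\vec{L}}_{\altSupport}$ (it can shift the kernel without eliminating it). The paper instead adds $\vec{L}^\delta := \vec{L}+\vec{D}$ where $\vec{D}$ is a block-diagonal matrix, each block being a $\delta$-scaled matrix with zero row/column sums that restricts to $\delta\vec{I}_{n-1}$ after deleting the source row/column; this makes $\hat{\vec{L}}^\delta_{\altSupport}$ nonsingular for almost all $\delta>0$ without touching the game data, and is compatible with the rank-$1$ update formulas across both transitions $\support\to\altSupport\to\support''$. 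Your claimed sign identity is also not what the paper proves, and I would not take it for granted: chaining the two determinant-update identities
\begin{align*}
\det(\hat{\vec{L}}^\delta_{\altSupport}) &= \det(\hat{\vec{L}}^\delta_{\support})\big(1+\normalVec^\top\vec{L}^{\delta,*}_{\support}\lightNormalVec\big),\\
\det(\hat{\vec{L}}^\delta_{\support''}) &= \det(\hat{\vec{L}}^\delta_{\altSupport})\big(1+\normalVec[\altSupport,e',i']^\top\vec{L}^{\delta,*}_{\altSupport}\lightNormalVec[\support'',e',i']\big),
\end{align*}
one factor tends to $0$ and the other can blow up as $\delta\to0$, so $\det(\hat{\vec{L}}_{\support''})/\det(\hat{\vec{L}}_{\support})$ is a $0\cdot\infty$ limit and its sign is not obviously $\sgn\big(\normalVec[\support'',e',i']^\top\nullspaceD[\altSupport]\big)$. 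What the paper actually derives, by tracking $\det(\hat{\vec{L}}^\delta_{\altSupport})\Delta\vec{\pi}^\delta_{\altSupport}\to -\det(\hat{\vec{L}}_{\support})\big(\normalVec^\top\Delta\vec{\pi}_{\support}\big)\vec{L}^*_{\support}\lightNormalVec$ and the symmetric statement from $\support''$, is the sign relation
\[
\sgn\big(\normalVec^\top\Delta\vec{\pi}_{\support}\big)=-\sgn\big(\normalVec[\support'',e',i']^\top\Delta\vec{\pi}_{\support''}\big),
\]
using in the middle step that $\nullspaceD[\altSupport]$ points away from the $\support$--$\altSupport$ boundary and towards the $\altSupport$--$\support''$ boundary. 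This is the statement that actually ties the endpoint types ($\min$/$\max$) together and makes $\predF^*,\succF^*$ compliant. Your outline needs this specific identity, not the determinant-quotient one, and it needs the explicit $\vec{D}$-perturbation to make the limit computation go through.

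Two smaller points. First, the ``key structural fact'' that $\altSupport$ has exactly two non-$a$-degenerate continuative neighbors is a consequence of non-$a$-$b$-degeneracy and Lemma~\ref{lem:skippingWeaklyDegenerateSupports}, but you should state this as such rather than invoke it as a separate fact. Second, observe that in the degenerate region the parameter is $\xi$ along $\nullspaceD[\altSupport]$, not $\lambda$; by Lemma~\ref{lem:neighboringDegenerateSupport}~\emph{(ii)} all of $\potentialRegion[\altSupport]$ corresponds to the single value $\lambda^*_{\altSupport}$, which is exactly why the orientation comparison must be made between $\support$ and $\support''$ directly rather than in two steps through $\altSupport$.
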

\begin{proof}
The functions are clearly well-defined. Since we can find the support $\mathcal{S}^*_{\mathcal{S}, \mathcal{S}'}$ via the computation of $\nullspaceMin[\altSupport], \nullspaceMax[\altSupport]$ requiring the computation of a maximum or minimum over $km$-many values, the functions are also computable in polynomial time. 
Whenever $\predF(\mathcal{S}) \in \myState$ or $\succF(\mathcal{S}) \in \myState$, respectively, statements \emph{(i)} and \emph{(ii)} follow immediately from Lemma~\ref{lem:startPredSucc}. 
Thus, we only need to show \emph{(i)} and \emph{(ii)} for the case when $\predF(\mathcal{S}) \in \myState$ or $\succF(\mathcal{S}) \in \myState$ are $a$-degenerate supports.
In order to achieve this, we need an analogue property to Theorem~\ref{thm:neighboringLaplacians}\emph{(\ref{thm:neighboringLaplacians:directions})}. This is difficult since the orientation $\sigma_{\altSupport}$ of any $a$-degenerate support $\altSupport$ is zero. We overcome this difficulty by adding a small perturbation to the block Laplacian matrices restoring the regularity of $\hat{\vec{L}}_{\mathcal{S}'}$ for the weakly degenerate support $\mathcal{S}'$ enabling us to apply Theorem~\ref{thm:neighboringLaplacians} also in the degenerate setting.

Formally, for some $\delta > 0$, let $\vec{D}^i \in \R^{n \times n}$ be a matrix defined as follows: Take the $(n-1) \times (n-1)$-identity matrix multiplied by $\delta$ and insert a row and column at the position of the index of the vertex $s_i$ such that the resulting matrix has zero row- and column-sum. Thus,
\[
\vec{D}^i \! = \!
\begin{pmatrix}
\delta & 0 & \cdots & 0 & - \delta & 0 & \cdots & 0 \\
0 & \delta & \cdots & 0 & - \delta & 0 & \cdots & 0 \\
\vdots &  & \ddots &  & \vdots &  & & \vdots \\
0 & 0 & \cdots & \delta & - \delta & 0 & \cdots & 0 \\
-\delta & -\delta & \cdots & -\delta & (n-1) \delta & -\delta & \cdots & -\delta \\
0 & 0 & \cdots & 0 & - \delta & \delta & \cdots & 0 \\
\vdots &  &  &  & \vdots & \ddots & & \vdots \\
0 & 0 & \cdots & 0 & - \delta & 0 & \cdots & \delta \\
\end{pmatrix}
\!.
\]
Then let $\vec{D}$ be the block diagonal matrix containing the matrices $\vec{D}^i, i=1, \dotsc, k$ as block diagonal elements. Further, let $\hat{\vec{D}}$ be the matrix obtained from $\vec{D}$ by deleting the rows and columns belonging to the source vertices of the respective players. Then, in partiuclar, $\hat{\vec{D}}$ is a $k(n-1) \times k(n-1)$-diagonal matrix with $\delta$ on the diagonal. Hence, for every matrix $\vec{A} \in \R^{k(n-1) \times k(n-1)}$ the matrix $\vec{A} + \hat{\vec{D}}$ is non-singular for almost all $\delta > 0$.

For any block Laplacian matrix $\vec{L}$ define $\vec{L}^{\delta} := \vec{L} + \vec{D}$, $\hat{\vec{L}}^{\delta} := \hat{\vec{L}}^{\delta} + \hat{\vec{D}}$, and $\vec{L}^{\delta, *}$ as the matrix obtained from $\big( \hat{\vec{L}}^{\delta} \big)^{-1}$ by adding zero rows and columns for the deleted rows and columns in $\hat{\vec{L}}^{\delta}$. (The latter is well-defined for almost all $\delta>0$.) Then it can be shown that $\vec{L}^{\delta, *}$ is a generalized inverse of $\vec{L}^{\delta}$. 
Further, we define $\Delta \vec{\pi}^{\delta} := \sgn(\det(\hat{\vec{L}}^{\delta})) \vec{L}^{\delta,*} \Delta \vec{y}$.
By definition, $\vec{L}^{\delta} \xrightarrow{\delta \to 0} \vec{L}$ and $\hat{\vec{L}}
^{\delta} \xrightarrow{\delta \to 0} \hat{\vec{L}}$ element-wise. By the continuity of the inverse this also implies that $\vec{L}^{\delta, *} \xrightarrow{\delta \to 0} \vec{L}^*$ whenever $\hat{\vec{L}}$ is non-singlar. Thus, in these cases, $\Delta \vec{\pi}^{\delta} \xrightarrow{\delta \to 0} \Delta \vec{\pi}$.

Let $\support$ be some feasible support with the continuative $(e,i)$-neighbor $\altSupport := N(\support, e,i)$. Suppose that $\altSupport$ is $a$-degenerate and let $\support'' := \mathcal{S}^*_{\support, \altSupport} = N(\altSupport, e', i')$ for some $e' \in E$ and $i' \in \range{k}$.
Then,
\[
\vec{L}^{\delta}_{\mathcal{S}'} 
= \vec{L}_{\mathcal{S}'} + \vec{D}
= \vec{L}_{\mathcal{S}} + \lightNormalVec \normalVec^{\top} + \vec{D}
= \vec{L}^{\delta}_{\mathcal{S}} +  \lightNormalVec \normalVec^{\top}
\]
and, similarly, $\vec{L}^{\delta}_{\mathcal{S}''} = \vec{L}^{\delta}_{\mathcal{S}'} +   \lightNormalVec \normalVec^{\top}$.
Since $\det ( \hat{\vec{L}}^{\delta} )$ is a continuous, non-zero polynomial in the perturbation variable $\delta$ that is zero for $\delta = 0$. Thus, there is a $\delta^* > 0$, such that for all $0 < \delta < \delta^*$
\begin{compactenum}[a)]
\item
	the matrix $\hat{\vec{L}}^{\delta}_{\mathcal{S}'}$ is non-singular and the sign of the determinant $\sigma^*_{\mathcal{S}'} := \sgn(\hat{\vec{L}}^{\delta}_{\mathcal{S}'})$ is constant.
\item
	the signs $\sgn(\det(\hat{\vec{L}}^{\delta}_{\mathcal{S}})) = \sigma_{\mathcal{S}}$ and
	$\sgn(\det(\hat{\vec{L}}^{\delta}_{\mathcal{S}''})) = \sigma_{\mathcal{S}''}$ are constant.
\end{compactenum}

With the same arguments as in the proof of Theorem~\ref{thm:neighboringLaplacians}, we obtain
\begin{equation} \label{eq:aDegeneracyWellDefined:subclaim1}
\det(\hat{\vec{L}}^{\delta}_{\mathcal{S}'}) = \det(\hat{\vec{L}}^{\delta}_{\mathcal{S}}) (1 +  \normalVec^{\top} \vec{L}^{\delta,*}_{\mathcal{S}} \lightNormalVec)
.
\end{equation}
 We define a new direction $\Delta \vec{\pi}^{\delta}_{\mathcal{S}'} := \vec{L}^{\delta,*}_{\mathcal{S}'} \Delta \vec{y}$ for the support $\mathcal{S}'$. Then, using the Sherman-Morrison formula as in Theorem~\ref{thm:neighboringLaplacians},
\begin{align*}
\det(\hat{\vec{L}}^{\delta}_{\mathcal{S}'})  \Delta \vec{\pi}^{\delta}_{\mathcal{S}'} 
&= \det(\hat{\vec{L}}^{\delta}_{\mathcal{S}'})  \vec{L}^{\delta,*}_{\mathcal{S}'} \Delta  \vec{y} \\
&= \det(\hat{\vec{L}}^{\delta}_{\mathcal{S}'})  \bigg(
	\vec{L}^{*,\delta}_{\mathcal{S}} 
	- \frac{1}{1 \!+\! \normalVec^{\top} \vec{L}^{\delta,*}_{\mathcal{S}} \lightNormalVec} \vec{L}^{*,\delta}_{\mathcal{S}} \lightNormalVec \normalVec \vec{L}^{\delta,*}_{\mathcal{S}} \bigg) \Delta \vec{y} \\
&\stackrel{\mathclap{\eqref{eq:aDegeneracyWellDefined:subclaim1}}}{=}
	\det(\hat{\vec{L}}^{\delta}_{\mathcal{S}}) 
	\underbrace{
	(1 + \normalVec^{\top} \vec{L}^{\delta,*}_{\mathcal{S}}	 \lightNormalVec
	}_{
	\xrightarrow{\delta \to 0} 0 \text{ by Lemma~\ref{lem:neighboringDegenerateSupport}}
	}
	\vec{L}^{*,\delta}_{\mathcal{S}} \Delta \vec{y} 
	- \vec{L}^{*,\delta}_{\mathcal{S}} \normalVec^{\top} \lightNormalVec
	\underbrace{
	\det(\hat{\vec{L}}^{\delta}_{\mathcal{S}}) \vec{L}^{\delta,*}_{\mathcal{S}} \Delta \vec{y}
	}_{
	\xrightarrow{\delta \to 0} \det ( \hat{\vec{L}}_{\support} ) \Delta \vec{\pi}_{\mathcal{S}}
	}
	\\
&\xrightarrow{\delta \to 0}
	(- \det ( \hat{\vec{L}}_{\support} ) \normalVec^{\top} \Delta \vec{\pi}_{\mathcal{S}} ) \vec{L}^{*}_{\mathcal{S}} \lightNormalVec
	.
\end{align*}
Thus, by Lemma~\ref{lem:neighboringDegenerateSupport}\emph{(i)}, the direction $\Delta \vec{\pi}^{\delta}_{\mathcal{S}'}$ converges to (a multiple of) the nullspace direction $\nullspaceD[\mathcal{S}']$, i.e., there is $\alpha \in \R$ such that 
$
\Delta \vec{\pi}^{\delta}_{\mathcal{S}'} \xrightarrow{\delta \to 0} \alpha \nullspaceD[\mathcal{S}']
. $
Thus, in particular, $\normalVec^{\top} \Delta \vec{\pi}^{\delta}_{\mathcal{S}'} \xrightarrow{\delta \to 0} \alpha \, \normalVec^{\top} \nullspaceD[\mathcal{S}']$ and, hence,
\begin{equation} \label{eq:lem:aDegeneracyWellDefined:subclaim2}
\alpha \, \normalVec^{\top}\nullspaceD[\mathcal{S}'] 
= (- \det(\hat{\vec{L}}_{\mathcal{S}}) \normalVec^{\top} \Delta \vec{\pi}_{\mathcal{S}} ) \normalVec \vec{L}^{*}_{\mathcal{S}} \lightNormalVec
= \det(\hat{\vec{L}}_{\mathcal{S}}) \normalVec^{\top} \Delta \vec{\pi}_{\mathcal{S}}
.
\end{equation}
Similarly, we can show that $\Delta \vec{\pi}^{\delta}_{\mathcal{S}'} \xrightarrow{\delta \to 0} (- \det(\hat{\vec{L}}_{\mathcal{S}''}) \normalVec[\altSupport, e',i']^{\top} \Delta \vec{\pi}_{\mathcal{S}''} ) \vec{L}^{*}_{\mathcal{S}''} \lightNormalVec[\mathcal{S}'',e',i']$ and thus get similarly that
\begin{equation} \label{eq:lem:aDegeneracyWellDefined:subclaim3}
\alpha \, \normalVec[\altSupport, e',i']^{\top} \nullspaceD[\mathcal{S}'] = 
\det(\hat{\vec{L}}_{\mathcal{S}''}) \lightNormalVec[\mathcal{S}'',e',i'] \Delta \vec{\pi}_{\mathcal{S}''}
.
\end{equation}

By the definition of $\mathcal{S}''$, we know that the direction $\nullspaceD[\mathcal{S}']$ is directed away from the the hyperplane that separates $\mathcal{S}$ and $\mathcal{S}'$ and is directed towards the hyperplane that separates $\mathcal{S}'$ and $\mathcal{S}''$, i.e., that 
$\sgn(\normalVec[\mathcal{S}',e,i]^{\top} \nullspaceD[\mathcal{S}'] ) = - \sgn(  \normalVec[\mathcal{S}',e',i']^{\top} \nullspaceD[\mathcal{S}'] )$.
Finally, we obtain
\begin{align*}
\sgn( \normalVec^{\top} \Delta \vec{\pi}_{\mathcal{S}} )
&\stackrel{\mathclap{\eqref{eq:lem:aDegeneracyWellDefined:subclaim2}}}{=}
	\sgn( \alpha \,  \normalVec^{\top} \nullspaceD[\mathcal{S}'] ) 
= - \sgn( \alpha \, \normalVec[\altSupport, e, i]^{\top}\nullspaceD[\mathcal{S}'] ) \\
&= \sgn( \alpha \,  \normalVec[\altSupport, e' ,i']^{\top}\nullspaceD[\mathcal{S}'] ) 
\stackrel{\mathclap{\eqref{eq:lem:aDegeneracyWellDefined:subclaim3}}}{=}
	\sgn( \normalVec[\mathcal{S}',e',i']^{\top} \Delta \vec{\pi}_{\mathcal{S}''} ) 
= - \sgn( \normalVec[\mathcal{S}'',e',i']^{\top} \Delta \vec{\pi}_{\mathcal{S}''} )
.
\end{align*}
This sign condition implies that, the boundary potential $\vec{\pi} \in \supportBoundary \cap \supportBoundary[\mathcal{S}']$ that lies on the boundary between $\mathcal{S}$ and $\mathcal{S}'$ is $\vec{\pi}^{\min}_{\mathcal{S}}$ if and only if the potential $\vec{\pi}' \in \supportBoundary[\mathcal{S}'] \cap \supportBoundary[\mathcal{S}'']$  that lies on the boundary between $\mathcal{S}'$ and $\mathcal{S}''$ is $\vec{\pi}^{\max}_{\mathcal{S}''}$ and vice versa. Thus, we get that $\predF(\succF(\mathcal{S})) = \mathcal{S}$ and $\succF(\predF(\mathcal{S})) = \mathcal{S}$, respectively.
\end{proof}
\begin{example}\label{ex:8playerInfinite}
We consider a game with 8 players on the graph given in Figure~\ref{fig:infequilibria:graph}. Every player has two adjacent vertices as source and sink vertex and a demand of $r_i = 2$. We define the costs in a symmetric way, meaning that for every player the edge connecting the source and the sink vertex (e.g. $e_8$ for player~$1$) is equipped with the cost function $c_1(x) = 9 x + 3$ and the edges on the longer path from the source to the sink (e.g. $e_1, e_3, e_5$ for player~$1$) are equipped with the cost function $c_2 (x) = x + 6$. All other edges are assumed to have cost functions with high offsets and slopes such that they are never in the support of the respective players.
By definition, the game is player-symmetric, i.e., the cost functions look the same for every player. Yet, not all players have the same cost function on every edge.
Because of this player symmetry, there are overall three possible strategy profiles: \vspace{-1.5ex}
\begin{enumerate}[a)]
\item
	All players use the direct path between $s_i$ and $t_i$. \vspace{-1.5ex}
\item
	All players use both path between $s_i$ and $t_i$. \vspace{-1.5ex}
\item
	All players use the long path between $s_i$ and $t_i$. \vspace{-1.5ex}
\end{enumerate}
Since all strategies and all cost functions are symmetric we give all potentials, flows and directions for the first player routing demand from $v_1$ to $v_4$. All other potential and flow values are given implicitly.

For small demands $\lambda \vec{r}$, $\lambda \leq \frac{1}{2}$, because of the relatively high offset of the long path, all players will route their demand on the direct path to their source. The potential for the zero flow $\vec{x}^0 = \vec{0}$ is the shortest path potential $\vec{\pi}^{0} = (0,6,12,3)^{\top}$. This yields the initial support $\mathcal{S}^0$ where the edges $e_1, e_3$ and $e_8$ are active for the player $1$. (Note that we consider the edges $e_1, e_3$ active although they are not used by the player. They are only part of the shortest path network.) We obtain the potential direction $\Delta \vec{\pi}^0 = (0, 2, 4, 36)^{\top}$ that induces the flow direction $\Delta \vec{x}^0 = (0,0,0,0,0,0,0,2)^{\top}$.
For $\lambda = \frac{1}{2}$, we obtain the potential $\vec{\pi}^1 = (0,7,14,21) = \vec{\pi}^0 + \frac{1}{2} \Delta \vec{\pi}^0$ inducing the flow $\vec{x}^1 = (0,0,0,0,0,0,0,1)^{\top}$, i.e., every player routes exactly one flow unit on the direct path from source to sink.

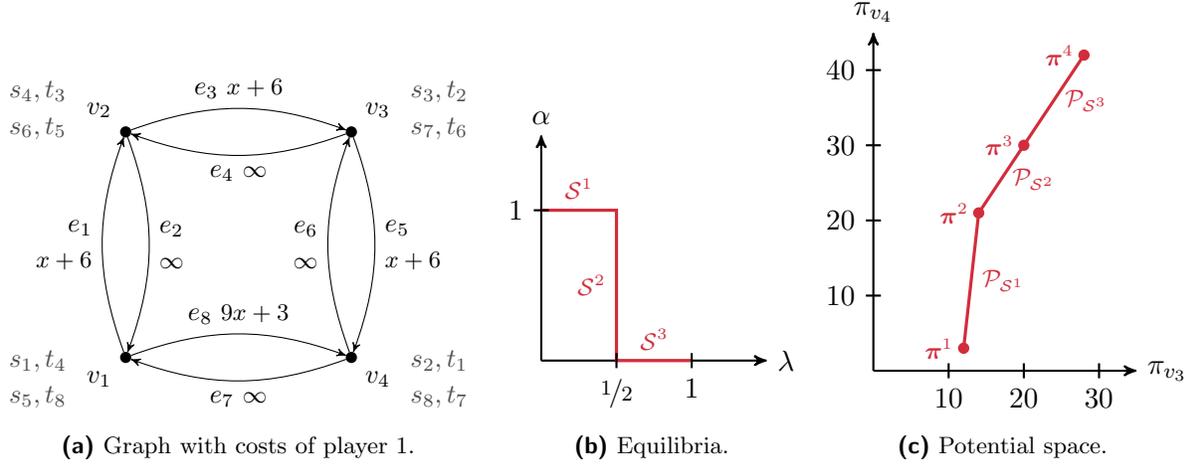
\begin{figure}
\begin{subfigure}[b]{.4\linewidth}
\begin{center}
\begin{tikzpicture}
\node[solid] (v1) at (0,0) {} node[below left=0cm of v1] (lv1) {\small $v_1$};
\node[solid] (v2) at (0,3) {} node[above left=0cm of v2] (lv2) {\small $v_2$};
\node[solid] (v3) at (3,3) {} node[above right=0cm of v3] (lv3) {\small $v_3$};
\node[solid] (v4) at (3,0) {} node[below right=0cm of v4] (lv4) {\small $v_4$};

\node[left=0 cm of lv1, align=right] {\small \textcolor{black!70}{$s_1, t_4$} \\ \small \textcolor{black!70}{$s_5, t_8$}};
\node[left=0 cm of lv2, align=right] {\small \textcolor{black!70}{$s_4, t_3$} \\ \small \textcolor{black!70}{$s_6, t_5$}};
\node[right=0 cm of lv3, align=right] {\small \textcolor{black!70}{$s_3, t_2$} \\ \small \textcolor{black!70}{$s_7, t_6$}};
\node[right=0 cm of lv4, align=right] {\small \textcolor{black!70}{$s_2, t_1$} \\ \small \textcolor{black!70}{$s_8, t_7$}};

\draw[bend left=20,->]
	(v1) edge node[midway, left, align=right] {\small $e_1$ \\ \footnotesize \textcolor{black}{$x + 6$}} (v2)
	(v2) edge node[midway, right, align=left] {\small $e_2$ \\ \footnotesize \textcolor{black}{$\infty$}} (v1)
	(v2) edge node[midway, above] {\small $e_3$ \textcolor{black}{\footnotesize $x + 6$}} (v3)
	(v3) edge node[midway, below] {\small $e_4$ \textcolor{black}{\footnotesize $\infty$}} (v2)
	(v3) edge node[midway, right, align=left] {\small $e_5$ \\ \footnotesize \textcolor{black}{$x + 6$}} (v4)
	(v4) edge node[midway, left, align=right] {\small $e_6$ \\ \footnotesize \textcolor{black}{$\infty$}} (v3)
	(v4) edge node[midway, below] {\small $e_7$ \textcolor{black}{\footnotesize $\infty$}} (v1)
	(v1) edge node[midway, above] {\small $e_8$ \textcolor{black}{\footnotesize $9 x + 3$}} (v4);
\end{tikzpicture}
\caption{Graph with costs of player~1.}
\label{fig:infequilibria:graph}
\end{center}
\end{subfigure}
\begin{subfigure}[b]{.25\linewidth}
\begin{center}
\begin{tikzpicture}[scale=2]
\draw[thick, ->] (0,0) -- (1.5,0) node[anchor=west] {$\lambda$};
\draw[thick, ->] (0,0) -- (0,1.5) node[anchor=south] {$\alpha$};

\draw[very thick, playercolor1]
	(0,1) -- node[midway, above] {\footnotesize $\mathcal{S}^1$} (1/2,1)
		  -- node[midway, left] {\footnotesize $\mathcal{S}^2$} (1/2,0) 
		  -- node[midway, above] {\footnotesize $\mathcal{S}^3$} (1,0);

\draw[thick] (1/2,.05) -- (1/2,-.05) node[anchor=north] {$\nicefrac{1}{2}$};
\draw[thick] (1,.05) -- (1,-.05) node[anchor=north] {$1$};
\draw[thick] (.05,1) -- (-.05,1) node[anchor=east] {$1$};
\end{tikzpicture}
\caption{Equilibria.}
\label{fig:infequilibria:equilibria}
\end{center}
\end{subfigure}
\begin{subfigure}[b]{.3\linewidth}
\begin{center}
\begin{tikzpicture}[scale=0.1]
\draw[thick, ->] (0,0) -- (35,0) node[anchor=west] {$\pi_{v_3}$};
\draw[thick, ->] (0,0) -- (0,45) node[anchor=south] {$\pi_{v_4}$};

\draw[very thick, playercolor1]
	(12,3) -- node[midway, right] {\footnotesize $\potentialRegion[\mathcal{S}^1]$} (14,21)
		   -- node[midway, right] {\footnotesize $\potentialRegion[\mathcal{S}^2]$}(20,30) 
		   -- node[midway, right] {\footnotesize $\potentialRegion[\mathcal{S}^3]$} (28,42);
\fill[playercolor1]	(12,3) circle(0.75) node[anchor=east] {\footnotesize $\vec{\pi}^1$}
					(14,21) circle(0.75) node[anchor=east] {\footnotesize $\vec{\pi}^2$}
					(20,30) circle(0.75) node[anchor=east] {\footnotesize $\vec{\pi}^3$}
					(28,42) circle(0.75) node[anchor=east] {\footnotesize $\vec{\pi}^4$};

\draw[thick] (10,1) -- (10,-1) node[anchor=north] {$10$};
\draw[thick] (20,1) -- (20,-1) node[anchor=north] {$20$};
\draw[thick] (30,1) -- (30,-1) node[anchor=north] {$30$};
\draw[thick] (1,10) -- (-1,10) node[anchor=east] {$10$};
\draw[thick] (1,20) -- (-1,20) node[anchor=east] {$20$};
\draw[thick] (1,30) -- (-1,30) node[anchor=east] {$30$};
\draw[thick] (1,40) -- (-1,40) node[anchor=east] {$40$};
\end{tikzpicture}
\caption{Potential space.}
\label{fig:infequilibria:potentialSpace}
\end{center}
\end{subfigure}
\caption{\textbf{\textsf{(a)}} The graph of Example~\ref{ex:8playerInfinite} with the cost functions of player~$1$. \textbf{\textsf{(b)}} The strategies of the players described by the fraction $\alpha$ of the demand that is routed on the direct edge between source and sink, parametrized by the demand multiplier $\lambda$. \textbf{\textsf{(c)}}  The polytopes of $\lambda$-potentials $\potentialRegion$ of the three supports with the boundary potentials in a projection of the potential space $\potentialSpace$.}
\label{fig:infequilibria}
\end{figure}

Denote by $e_{l_i}$ the last edge of the long path for player~$i$ (e.g. $e_{l_1} = e_5$). In the potential $\vec{\pi}^1$, all edges $e_{l_i}$ become active at once. (We note that this implies that the game is actually $b$-degenerate, but we ignore this for the benefit of an easier example.)
This defines a new support $\mathcal{S}^1$ that is $a$-degenerate---the Laplacian matrix has rank $23$ and $\ker( \vec{L}_{\mathcal{S}^1} ) \cap \potentialSpace = \vspan \big( (0,1,2,3)^{\top} \big)$. Hence, the nullspace direction is $\nullspaceD[\mathcal{S}^1] = (0,1,2,3)^{\top}$ inducing the circulation $\Delta \vec{x}^1 =\smash{\big(\frac{1}{3},0,\frac{1}{3},0,\frac{1}{3},0,0,-\frac{1}{3}\big)^{\top}}$. 
The potential $\vec{\pi}^2 := \vec{\pi}^1 + 3 \, \nullspaceD[\mathcal{S}^1] = (0,10,20,30)^{\top}$ induces the flow $\vec{x}^2 = (1,0,1,0,1,0,0,0)^{\top}$. Note that this flow is still an equilibrium for the demand $\frac{1}{2} \vec{r}$. In fact, all flows 
\[
\vec{x} (\alpha) = \big( (1-\alpha), 0, (1-\alpha), 0, (1-\alpha), 0, 0, \alpha\big)^{\top}
, 0 \leq \alpha \leq 1
\]
are equilibria for the demand $\frac{1}{2} \vec{r}$, i.e., there are infinitely many equilibria when every player wants to route a demand of $1$.

Finally, denote by $e_{d_i}$ the direct edge connecting the source and sink of player $i$ (e.g. $e_{d_1} = e_8$). All these edges become inactive in the potential $\vec{\pi}^2$ which defines the next support $\mathcal{S}_2$ which is non-$a$-degenerate. (Again, we ignore the $b$-degeneracy at this point.) With this support, we obtain the direction $\Delta \vec{\pi}^2 = (0,8,16,24)^{\top}$ inducing the flow direction $\Delta \vec{x} = (2,0,2,0,2,0,0,0)^{\top}$. For $\lambda = 1$ we then get the potential $\vec{\pi}^3 := \vec{\pi}^2 + \frac{1}{2} \Delta \vec{\pi}^3 = (0,14,28,42)^{\top}$ inducing the (solution) flow $\vec{x}^3 = (2,0,2,0,2,0,0,0)^{\top}$.
\end{example}

\subsection{$b$-Degeneracy} \label{sec:bDegeneracy}
In this section, we develop a rule for handling $b$-degenerate games. A game is $b$-degenerate, if there is a $b$-degenerate support. That is, a support with $|\contNeighborsMin \cap \myState| > 1$ or $|\contNeighborsMax \cap \myState | > 1$, i.e., a support with a boundary potential that is feasible in more than one $(e,i)$-neighboring supports.
This is the case if and only if there is not a unique minimizer or maximizer in the computation of $\lambda^{\min}_{\support}$ or $\lambda^{\max}_{\support}$.
Concretely, this means that when increasing or decreasing the demand, multiple edge-player-pairs change their activity status at once, e.g., if two edges become active for one player or one edge becomes inactive for two players and so on.
Roughly speaking, $b$-degeneracy occurs if the offsets of the cost functions $\vec{b}$ are such that the hyperplanes induced by the inequalities in the polytopes $\potentialRegion$ intersect exactly in some boundary potential of some support. Intuitively, a small random perturbation of the offsets $\vec{b}$  would translate the hyperplanes in a way that, almost surely, they do not intersect in boundary potentials anymore. Further, a small perturbation should not change the equilibrium to much.
In fact, we can show that there is an explicit, small perturbation $\vec{\epsilon} \in \R^{mk}$ that can be added to the vector $\vec{b}$ such that the game with these perturbed offsets is non-$b$-degenerate. Further, we can show that all feasible supports in this game are also feasible for the original, unperturbed game and, hence, it will be enough to compute the states $\mathcal{S}$ of the perturbed game. Additionally, we observe below that it is not necessary to carry out the perturbation explicitly. Rather,  we can use a lexicographic criterion.
We initially assume that every game is non-$a$-degenerate and discuss the interplay of $a$-degeneracy and $b$-degeneracy later.

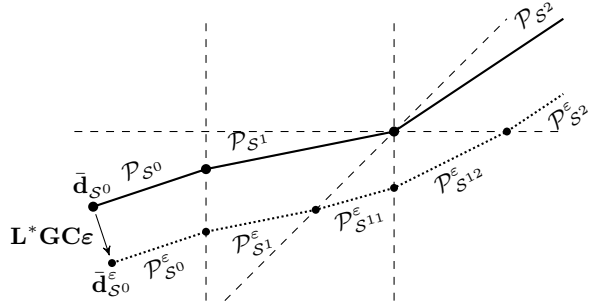
\begin{figure}
\begin{center}
\footnotesize
\begin{tikzpicture}
\useasboundingbox (-1.25,-0.5) rectangle (6.25,3.25);
\draw[thick, fill=black]
	(-0.5,0.5) circle(1.5pt)
	-- (1,1) circle(1.5pt) node[midway,above,sloped] {$\potentialRegion[\support^0]$}
	-- (3.5,1.5) circle(1.5pt) node[pos=0.25,above,sloped] {$\potentialRegion[\support^1]$}
	-- (5.75,3) node[pos=0.9,above,sloped] {$\potentialRegion[\support^2]$};
\node[anchor=south] at (-0.5,0.5) {$\bar{\vec{d}}_{\support^0}$};
\draw[dashed] (1,-0.75) -- (1,3);
\draw[dashed] (1.25,-0.75) -- (5,3);
\draw[dashed] (3.5,-0.75) -- (3.5,3);
\draw[dashed] (-0.75,1.5) -- (5.75,1.5);


\draw[thick, densely dotted]
	(-0.25,-0.25)
	-- (1,1.25/3-0.25) node[midway,below,sloped] {$\pertPotentialRegion[\support^0]$}
	-- (59/24, 1/6+1/5*35/24) node[pos=0.4,below,sloped] {$\pertPotentialRegion[\support^1]$}
	-- (3.5,0.75) node[midway,below,sloped] {$\pertPotentialRegion[\support^{11}]$}
	-- (5,1.5) node[midway,below,sloped] {$\pertPotentialRegion[\support^{12}]$}
	-- (46/8, 2) node[pos=0.9,below,sloped] {$\pertPotentialRegion[\support^2]$};
\node[anchor=north] at (-0.25,-0.25) {$\bar{\vec{d}}^{\pertVar}_{\support^0}$};
\fill
	(-0.25,-0.25) circle(1.5pt)
	(1,1.25/3-0.25) circle(1.5pt)
	(59/24, 1/6+1/5*35/24) 	circle(1.5pt)
	(3.5,0.75) circle(1.5pt)
	(5,1.5)	circle(1.5pt);
	
\draw[->, shorten >=3pt, shorten <=3pt]
	(-.5,.5) -- (-.25,-.25) node[midway, left] {$\vec{L}^* \vec{G} \vec{C} \pertVec$};
\end{tikzpicture}
\end{center}
\caption{The $\lambda$-potentials of a game (thick line segments) and its $\pertVar$-perturbed variant (dotted line segments). The dashed lines are the hyperplanes induced by the normal vectors $\normalVec$. The support $\support^0$ is non-$b$-degenerate, thus, both in the perturbed and the unperturbed game, there is the unique continuative neighbor $\support^1$. This support is $b$-degenerate in the unperturbed game: It has five continuative neighbors---for four of them $\potentialRegion$ is a single point. In the perturbed game, all supports are non-$b$-degenerate.}
\label{fig:perturbedcurve}
\end{figure}

Let $\pertVar > 0$. Then we define the \emph{perturbation vector}
\[
\pertVec :=
\big( \pertVar, \pertVar^2, \dotsc, \pertVar^{km - 1}, \pertVar^{km} \big)^{\top} \in \R^{mk}
\]
and the vector of perturbed offsets $\vec{b}^{\pertVar} := \vec{b} + \pertVec$. We call the game with this offset vector the \emph{$\pertVar$-perturbed game}. Analogous to Theorem~\ref{thm:lambdapotential} and Lemma~\ref{lem:lambdapotentialsLineSegment} we now obtain the following result for the $\pertVar$-perturbed game.

\begin{lemma}
For $\pertVar > 0$ and some support $\support$, let $\pertPotentialRegion$ be the set of all $\lambda$-potentials in the $\pertVar$-perturbed game. Then there are values $\lambda^{\min,\pertVar}_{\support}$ and $\lambda^{\max, \pertVar}_{\support}$ such that
\[
\pertPotentialRegion = \big\{ \lambda \Delta  \vec{\pi} + \bar{\vec{d}}^{\pertVar} \;\vert\; \lambda^{\min,\pertVar}_{\support} \leq \lambda \leq \lambda^{\max, \pertVar}_{\support} \big\}
,
\]
where $\bar{\vec{d}}^{\pertVar} := \vec{L}^* \vec{G} \vec{C} \vec{b}^{\pertVar}$.
\end{lemma}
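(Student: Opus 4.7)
The plan is to mirror the derivation of Lemma~\ref{lem:lambdapotentialsLineSegment}, observing that the perturbation from $\vec{b}$ to $\vec{b}^{\pertVar} = \vec{b} + \pertVec$ leaves all support-dependent matrices ($\vec{L}$, $\vec{G}$, $\vec{C}$, $\vec{W}$, and the direction $\Delta \vec{\pi}$) unchanged and only shifts the offset vector. First I would invoke Theorem~\ref{thm:lambdapotential} for the $\pertVar$-perturbed game, replacing $\vec{b}$ by $\vec{b}^{\pertVar}$ throughout: a potential $\vec{\pi} \in \potentialSpace$ is a $\lambda$-potential of the perturbed game for the support $\support$ iff $\vec{L} \vec{\pi} - \vec{d}^{\pertVar} = \lambda \Delta \vec{y}$ and $\vec{W}(\vec{G}^{\top} \vec{\pi} - \vec{b}^{\pertVar}) \geq \vec{0}$, where $\vec{d}^{\pertVar} = \vec{G} \vec{C} \vec{b}^{\pertVar}$.

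Next, since we are assuming that $\support$ is non-$a$-degenerate, $\vec{L}$ admits the unique generalized inverse $\vec{L}^*$ whose image lies in $\potentialSpace$, so the linear equation admits the unique solution
\[
\vec{\pi} = \vec{L}^*\bigl(\lambda \Delta \vec{y} + \vec{d}^{\pertVar}\bigr) = \lambda \Delta \vec{\pi} + \bar{\vec{d}}^{\pertVar},
\]
with $\Delta \vec{\pi} = \vec{L}^* \Delta \vec{y}$ and $\bar{\vec{d}}^{\pertVar} = \vec{L}^* \vec{d}^{\pertVar} = \vec{L}^* \vec{G} \vec{C} \vec{b}^{\pertVar}$. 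Substituting this expression back into the inequality $\vec{W}(\vec{G}^{\top} \vec{\pi} - \vec{b}^{\pertVar}) \geq \vec{0}$ yields, row by row indexed by the pairs $(e,i)$, an affine condition in $\lambda$ of the form
\[
\lambda\, \vec{w}_{e,i}^{\top} \vec{G}^{\top} \Delta \vec{\pi} \;\geq\; \vec{w}_{e,i}^{\top}\bigl(\vec{b}^{\pertVar} - \vec{G}^{\top} \bar{\vec{d}}^{\pertVar}\bigr),
\]
exactly as in the proof of Lemma~\ref{lem:lambdapotentialsLineSegment}.

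Finally, defining
\[
\lambda^{\min,\pertVar}_{\support} := \max\!\left\{\tfrac{\vec{w}_{e,i}^{\top}(\vec{b}^{\pertVar} - \vec{G}^{\top} \bar{\vec{d}}^{\pertVar})}{\vec{w}_{e,i}^{\top} \vec{G}^{\top} \Delta \vec{\pi}} \;:\; \vec{w}_{e,i}^{\top} \vec{G}^{\top} \Delta \vec{\pi} < 0\right\} \cup \{0\}
\]
and symmetrically $\lambda^{\max,\pertVar}_{\support}$ by taking the minimum over coordinates with positive $\vec{w}_{e,i}^{\top} \vec{G}^{\top} \Delta \vec{\pi}$, intersected with $\{1\}$, gives precisely the interval of admissible $\lambda$ values. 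I would note that coordinates with $\vec{w}_{e,i}^{\top} \vec{G}^{\top} \Delta \vec{\pi} = 0$ must be checked separately, but these are handled exactly as in Lemma~\ref{lem:lambdapotentialsLineSegment}: if the perturbed support is feasible, the corresponding inequalities are automatically satisfied; otherwise $\pertPotentialRegion = \emptyset$ and the conclusion holds vacuously.

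The only real conceptual point, and hence the main step to be careful about, is that the perturbation does not touch $\vec{L}$, $\Delta \vec{y}$, or $\Delta \vec{\pi}$, so the \emph{direction} of the $\lambda$-potential line is identical to the unperturbed one and only the \emph{anchor} moves from $\bar{\vec{d}}_{\support}$ to $\bar{\vec{d}}^{\pertVar}_{\support} = \bar{\vec{d}}_{\support} + \vec{L}^* \vec{G} \vec{C} \pertVec$; the remainder is a verbatim re-run of the earlier argument with $\vec{b}$ replaced by $\vec{b}^{\pertVar}$.
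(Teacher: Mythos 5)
Your proposal is correct and follows essentially the same route as the paper: apply Theorem~\ref{thm:lambdapotential} with $\vec{b}$ replaced by $\vec{b}^{\pertVar}$, use non-$a$-degeneracy to solve the Laplacian equation uniquely via $\vec{L}^*$ (noting that $\Delta\vec{\pi}$ is unchanged and only the anchor shifts to $\bar{\vec{d}}^{\pertVar}$), and then solve the inequality constraints for $\lambda$ exactly as in Lemma~\ref{lem:lambdapotentialsLineSegment}. Your explicit treatment of the coordinates with $\vec{w}_{e,i}^{\top}\vec{G}^{\top}\Delta\vec{\pi}=0$ and of infeasible supports is slightly more careful than what the paper spells out, but the substance is the same.
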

\begin{proof}
First, we observe that, similar to Theorem~\ref{thm:lambdapotential},  $\vec{\pi}$ is a $\lambda$-potential for support $\support$ if and only if
\[
\vec{L} \vec{\pi} - \vec{d}^{\pertVar} = \lambda \Delta \vec{y}
\text{ and }
\vec{W} ( \vec{G}^{\top} \!\vec{\pi} - \vec{b}^{\pertVar} ) \geq \vec{0}
,
\]
where $\vec{d}^{\pertVar} = \vec{G} \vec{C} \vec{b}^{\pertVar}$. As in the proof of Lemma~\ref{lem:lambdapotentialsLineSegment}, we observe that 
\[
\pertPotentialRegion = \big\{ \vec{\pi} \in \potentialSpace \;\vert\; \exists \lambda \in [0,1] : \vec{W} ( \vec{G}^{\top}( \lambda \Delta \vec{\pi} + \bar{\vec{d}}^{\pertVar} ) - \vec{b}^{\pertVar} ) \geq \vec{0} \big\}
.
\]
Solving the constraints for $\lambda$, we get that the numbers
\begin{align*}
\lambda^{\min,\pertVar}_{\support} &:=
	\max \bigg\{
		\frac{\vec{w}^{\top}_{e,i} ( \vec{b}^{\pertVar} - \vec{G}^{\top} \bar{\vec{d}}^{\pertVar} )}%
			 {\vec{w}^{\top}_{e,i} \vec{G}^{\top} \! \Delta \vec{\pi}}
		\;\bigg\vert\;
		\vec{w}^{\top}_{e,i} \vec{G}^{\top} \! \Delta \vec{\pi} < 0
		\bigg\} \cup \{ 0 \} \\
\lambda^{\max,\pertVar}_{\support} &:=
	\min \bigg\{
		\frac{\vec{w}^{\top}_{e,i} ( \vec{b}^{\pertVar} - \vec{G}^{\top} \bar{\vec{d}}^{\pertVar} )}%
			 {\vec{w}^{\top}_{e,i} \vec{G}^{\top} \! \Delta \vec{\pi}}
		\;\bigg\vert\;
		\vec{w}^{\top}_{e,i} \vec{G}^{\top} \! \Delta \vec{\pi} > 0
		\bigg\} \cup \{ 1 \}
\end{align*}
yield the claimed representation of $\pertPotentialRegion$.
\end{proof}

With the next theorem, we prove two results for a sufficiently small perturbation $\pertVar^* > 0$. First, we show that every support that is feasible in the perturbed game is also feasible in the original justifying that it is sufficient to consider just these supports. Second, we show that the perturbed game is non-$b$-degenerate.
Figure~\ref{fig:perturbedcurve} illustrates the $\lambda$-potentials for $b$-degenerate and non-$b$-degenerate supports for the unperturbed and the $\pertVar$-perturbed game.

For any support $\support$, we say two edges $e, e'$ are \emph{serial-dependent for player $i$ in support $\support$} if $e, e'$ are serial-dependent\shortversiononly{ (see Section~\ref{app:spSupports} for a definition of serial dependence)} in the subgraph of $G$ containing only the active edges of player $i$.

\begin{theorem} \label{thm:bPerturbation}
Let $\support$ be a non-$a$-degenerate, total support. Then there is $\pertVar^*_{\support} > 0$ such that for all $0 < \pertVar < \pertVar^*_{\support}$ the following holds.
\begin{enumerate}[(i)]
\item \label{it:thm:bPerturbation:infeasible}
	If $\support$ is infeasible, then $\pertPotentialRegion = \emptyset$ as well.
\item \label{it:thm:bPerturbation:nondegenerate}
	If $\support$ is feasible and non-$b$-degenerate, then $\support$ is also non-$b$-degenerate in the $\pertVar$-perturbed game. Further, the support $\support$ has the same unique continuative neighbors in the unperturbed and the $\pertVar$-perturbed game.
\item \label{it:thm:bPerturbation:degenerate}
	If $\support$ is feasible and $b$-degenerate, then $\pertPotentialRegion = \emptyset$ or $\support$ is non-$b$-degenerate in the $\pertVar$-perturbed game.
\end{enumerate}
\end{theorem}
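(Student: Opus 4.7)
The plan rests on the observation that $\pertVec = (\pertVar, \pertVar^2, \dotsc, \pertVar^{km})^\top$ is a point on the moment curve in $\R^{km}$, so for any vector $\vec{v} \in \R^{km}$ the linear functional $\pertVar \mapsto \vec{v}^\top \pertVec$ is identically zero as a polynomial if and only if $\vec{v} = \vec{0}$. I would begin by writing every quantity appearing in the definition of $\pertPotentialRegion$ as an explicit polynomial in $\pertVar$. In particular, for every pair $(e,i) \in E \times [k]$ the candidate ratio entering the $\max$/$\min$ in the definitions of $\lambda^{\min,\pertVar}_{\support}$ and $\lambda^{\max,\pertVar}_{\support}$ reads
\begin{equation*}
\lambda^{\pertVar}_{e,i} \;=\; \frac{\vec{w}^\top_{e,i}(\vec{b}^{\pertVar} - \vec{G}^\top \bar{\vec{d}}^{\pertVar})}{\vec{w}^\top_{e,i} \vec{G}^\top \Delta \vec{\pi}} \;=\; \lambda^{0}_{e,i} \;+\; \frac{\vec{w}^\top_{e,i}(\vec{I}_{km} - \vec{G}^\top \vec{L}^{*} \vec{G} \vec{C})\pertVec}{\vec{w}^\top_{e,i} \vec{G}^\top \Delta \vec{\pi}},
\end{equation*}
which is a polynomial in $\pertVar$ of degree at most $km$, and pairs with $\vec{w}^\top_{e,i} \vec{G}^\top \Delta \vec{\pi} = 0$ contribute feasibility inequalities that are themselves polynomials in $\pertVar$.

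Next I would invoke the elementary fact that the sign of a nonzero real polynomial $p(\pertVar)$ for all sufficiently small $\pertVar > 0$ coincides with the sign of its lowest-order nonzero coefficient. The technical core of the proof is then to apply this to the finite collection of polynomials consisting of (a) the pairwise differences $\lambda^{\pertVar}_{e,i} - \lambda^{\pertVar}_{e',i'}$ for all ordered pairs, and (b) the individual constraint polynomials attached to pairs with vanishing denominator. Each such polynomial is either identically zero or has finitely many positive real zeros, and I take $\pertVar^*_{\support}$ strictly smaller than all such nonzero positive zeros. For any $\pertVar \in (0, \pertVar^*_{\support})$ the signs of every polynomial in the collection are frozen to the sign of the leading term, so the entire combinatorial structure of $\pertPotentialRegion$---its emptiness, which pairs attain the max/min, and whether these attainers are unique---is determined by a fixed strict ordering.

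The three claims then follow by case analysis. For \emph{(\ref{it:thm:bPerturbation:infeasible})}, infeasibility of $\support$ in the unperturbed game is equivalent to either $\lambda^{\min}_{\support} > \lambda^{\max}_{\support}$ or a strictly violated constraint of type (b); in both cases the leading-order sign analysis preserves infeasibility under small $\pertVar$. For \emph{(\ref{it:thm:bPerturbation:nondegenerate})}, non-$b$-degeneracy guarantees strict inequalities among the values $\lambda^{0}_{e,i}$ at the attaining pairs, and such strict inequalities survive the perturbation, so the same edge--player pairs attain the max and the min and hence the continuative neighbors of $\support$ are unchanged. For \emph{(\ref{it:thm:bPerturbation:degenerate})}, the ties among the $\lambda^{0}_{e,i}$ underlying the $b$-degeneracy are resolved by the lowest-order perturbation term, so either the perturbed region becomes empty (when the tie-breaking ordering forces $\lambda^{\min,\pertVar}_{\support} > \lambda^{\max,\pertVar}_{\support}$) or the attainers become unique and $\support$ is non-$b$-degenerate in the perturbed game.

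The main obstacle will be the identically-zero branch of the sign analysis: I must argue that whenever $(e,i) \neq (e',i')$ and the polynomial $\lambda^{\pertVar}_{e,i} - \lambda^{\pertVar}_{e',i'}$ is identically zero, the two constraints are already identical in the unperturbed game (so that identifying them in either game is harmless and both belong jointly to $\lowerContNeighbors$ or $\upperContNeighbors$ in the same way). This amounts to showing that once the constant term vanishes (i.e.\ $\lambda^{0}_{e,i} = \lambda^{0}_{e',i'}$) and the row vector $\vec{w}^\top_{e,i}/(\vec{w}^\top_{e,i} \vec{G}^\top \Delta \vec{\pi}) - \vec{w}^\top_{e',i'}/(\vec{w}^\top_{e',i'} \vec{G}^\top \Delta \vec{\pi})$ lies in the left kernel of $\vec{I}_{km} - \vec{G}^\top \vec{L}^{*} \vec{G} \vec{C}$, the two $(e,i)$-neighbors of $\support$ either coincide or lead to the same Nash equilibrium flow via Lemma~\ref{lem:continuativeNeighbors}, so that the formal $b$-degeneracy introduced by them is harmless for the definitions of $\predF$ and $\succF$.
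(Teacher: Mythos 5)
Your overall framework is the same as the paper's: treat every candidate boundary value $\bar{\lambda}_{e,i}$ of the $\pertVar$-perturbed game as a polynomial in $\pertVar$ via the moment-curve structure of $\pertVec$, note that for sufficiently small $\pertVar>0$ the sign of a nonzero polynomial is that of its lowest nonvanishing coefficient, take $\pertVar^*_{\support}$ below all relevant positive roots, and then do a case analysis for (i)--(iii). Your treatment of (i) (compactness/strict-inequality survival) and (ii) (gap in the attaining values is preserved) is essentially correct and matches what the paper does.

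The genuine gap is exactly the one you flag but then hand-wave away: the identically-zero branch. Your proposed resolution is that whenever the difference polynomial $\lambda^{\pertVar}_{e,i} - \lambda^{\pertVar}_{e',i'}$ vanishes identically, the two $(e,i)$- and $(e',i')$-neighbors of $\support$ ``either coincide or lead to the same Nash equilibrium flow, so that the formal $b$-degeneracy \dots is harmless.'' That is not the right conclusion and not how the tie is actually broken. For $(e,i)\neq(e',i')$ the two neighboring supports $N(\support,e,i)$ and $N(\support,e',i')$ are always \emph{distinct} supports; sharing the boundary potential (and hence the induced flow there) is automatic for any pair of continuative neighbors by Lemma~\ref{lem:continuativeNeighbors} and says nothing about which one $\predF$ or $\succF$ should select. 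What is actually needed is a structural characterization of which vectors $\eta_{e,i}\vec{u}^\top_{e,i} - \eta_{e',i'}\vec{u}^\top_{e',i'}$ can lie in the left kernel of $\vec{I}_{mk} - \tilde{\vec{C}}\vec{G}^\top\vec{L}^*\vec{G}\vec{\Omega}$. The paper shows (Claim~\ref{clm:thm:bPerturbation:nullspaceX}) that this kernel is exactly the row span of $\vec{G}\vec{\Omega}$, i.e.\ such vectors are potential differences supported on active edges; and then (Claim~\ref{clm:thm:bPerturbation:serialdependent}) that having a two-nonzero-entry vector of this form forces either the common tied value to be $0$, or $i=i'$ with $e,e'$ serial-dependent for player $i$. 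Only then does the restriction of $\myState$ to shortest-path supports single out one neighbor. This nullspace characterization and the serial-dependence consequence are the technical core of part (iii); without them the argument does not close.
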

\begin{proof}
We begin the proof by introducing some additional notation. Given a fixed support $\support$, for every edge~$e$ and player~$i$, let
\begin{align*}
\alpha_{e,i} (\lambda) &:= (\vec{w}^{\top}_{e,i} \vec{G}^{\top} \! \Delta \vec{\pi} ) \, \lambda + \vec{w}^{\top}_{e,i} (\vec{G}^{\top} \bar{\vec{d}} - \vec{b}) \\
\beta_{e,i} (\pertVar) &:= \vec{w}^{\top}_{e,i} (\vec{G}^{\top} \bar{\vec{d}} - \vec{b}) \, \pertVec \\
\bar{\lambda}_{e,i} &:= \frac{\vec{w}^{\top}_{e,i} ( \vec{b} - \vec{G}^{\top} \bar{\vec{d}} )}{\vec{w}^{\top}_{e,i} \vec{G}^{\top} \! \Delta \vec{\pi}} \\
\bar{\mu}_{e,i} (\pertVar) &:= \frac{\vec{w}^{\top}_{e,i} ( \vec{I}_{km} - \vec{G}^{\top} \vec{L}^* \vec{G} \vec{C}) }{\vec{w}^{\top}_{e,i} \vec{G}^{\top} \! \Delta \vec{\pi}} \, \pertVec
.
\end{align*}
These definitions enable us to rewrite the (perturbed) potential regions as well as the definitions of $\lambda^{\max}$ and $\lambda^{\min}$.
\begin{claim} \label{clm:thm:bPerturbation:definitions}
The definitions above yield
\begin{compactenum}[(i)]
\item
	$\potentialRegion = \big\{ \vec{\pi} \in \potentialSpace \, \vert \, \exists \lambda \in [0,1] : \alpha_{e,i}(\lambda) \geq 0 \text{ for all } e \in E, i \in \range{k} \big\} $,
\item
	$\pertPotentialRegion = \big\{ \vec{\pi} \in \potentialSpace \, \vert \, \exists \lambda \in [0,1] : \alpha_{e,i}(\lambda) + \beta_{e,i}(\pertVar) \geq 0 \text{ for all } e \in E, i \in \range{k} \big\}$,
\item
	$\lambda^{\min} = \max \big\{ \bar{\lambda}_{e,i} \,\big\vert\, \vec{w}^{\top}_{e,i} \vec{G}^{\top} \! \Delta \vec{\pi} < 0 \big\} \cap \{0\} $ and $\lambda^{\max} = \min \big\{ \bar{\lambda}_{e,i} \,\big\vert\, \vec{w}^{\top}_{e,i} \vec{G}^{\top} \! \Delta \vec{\pi} > 0 \big\} \cap \{1\}$,
\item
	$\lambda^{\min, \pertVar} = \max \big\{ \bar{\lambda}_{e,i} + \bar{\mu}_{e,i} (\pertVar) \,\big\vert\, \vec{w}^{\top}_{e,i} \vec{G}^{\top} \! \Delta \vec{\pi} < 0 \big\} \cup \{0\} $ and \\ $\lambda^{\max, \pertVar} = \min \big\{ \bar{\lambda}_{e,i} + \bar{\mu}_{e,i} (\pertVar) \,\big\vert\, \vec{w}^{\top}_{e,i} \vec{G}^{\top} \! \Delta \vec{\pi} > 0 \big\} \cup \{1\}$.
\end{compactenum}
\end{claim}
\begin{proof}[Proof of Claim~\ref{clm:thm:bPerturbation:definitions}]
The claim follows directly from the definitions of the respective values and the linear perturbation $\vec{b}^{\pertVar} = \vec{b} + \pertVec$.
\end{proof}

Suppose $\support$ is infeasible, i.e., $\potentialRegion = \emptyset$. This implies, that for all $\lambda \in [0,1]$ there is a pair $e,i$ such that $\alpha_{e,i} (\lambda) < 0$. Thus the function $\alpha(\lambda) := \min_{e,i} \alpha_{e,i} (\lambda) < 0$ is strictly negative as well for all $\lambda \in [0,1]$. Since $\alpha (\lambda)$ is continuous in $\lambda$, $\alpha := \max_{\lambda \in [0,1]} \alpha(\lambda) < 0$ as well. Since $\beta_{e,i} (\pertVar) \xrightarrow{\smash{\pertVar \to 0}} 0$, there is a value $\pertVar^*_1 > 0$ such that $\max_{e,i} \beta_{e,i} (\pertVar) < - \alpha $ for all $0< \pertVar < \pertVar^*_1$. Hence, for every $\lambda \in [0,1]$, we get that $\min_{e,i} \alpha_{e,i} (\lambda) + \beta_{e,i} (\pertVar) < \min_{e,i} \alpha_{e,i} (\lambda) - \alpha < 0$ and, thus, $\pertPotentialRegion = \emptyset$ for all $0 < \pertVar < \pertVar^*_1$ proving \emph{(i)} if we choose some $\pertVar^*_{\support} \leq \pertVar^*_1$.

For claim \emph{(ii)}, assume that $\support$ is feasible and non-$b$-degenerate. Then, there is a unique maximizer (or minimizer, resp.) in the computation of $\lambda^{\min}$ (or $\lambda^{\max}$, resp.). For ease of exposition, we assume that the maximum (or minimum) is attained at some value $\bar{\lambda}_{e,i}$ and not at $0$ or $1$---the proof is the same for these cases. Let $e^*_1, i^*_1$ and $e^*_2, i^*_2$ be the respective maximizer or minimizer, i.e., $\lambda^{\min} = \bar{\lambda}_{e^*_1,i^*_1}$ and $\lambda^{\max} = \bar{\lambda}_{e^*_2, i^*_2}$. Let 
\[
\hat{\lambda} := \min \bigg\{ 
\min_{\substack{(e,i) \neq (e^*_1, i^*_1) : \\ \vec{w}^{\top}_{e,i} \vec{G}^{\top} \! \Delta \vec{\pi} <0}} \big( \lambda^{\min} - \bar{\lambda}_{e,i} \big)
,
\min_{\substack{(e,i) \neq (e^*_2, i^*_2): \\ \vec{w}^{\top}_{e,i} \vec{G}^{\top} \! \Delta \vec{\pi} > 0}} \big( \bar{\lambda}_{e,i} - \lambda^{\max}  \big)
\bigg\}
.
\]
For all $(e,i)$ with $\vec{w}^{\top}_{e,i} \vec{G}^{\top} \! \Delta \vec{\pi} \neq 0$, we have that $\bar{\mu}_{e,i} (\pertVar) \xrightarrow{\smash{\pertVar \to 0}} 0$. Thus, there is $\pertVar^*_2 > 0$ such that $|\bar{\mu}_{e,i} (\pertVar)| < \tfrac{\hat{\lambda}}{2}$ for all $e,i$ with $\vec{w}^{\top}_{e,i} \vec{G}^{\top} \! \Delta \vec{\pi} \neq 0$ and $0 < \pertVar < \pertVar^*_2$. Therefore, we obtain for every $(e,i) \neq (e^*_1, i^*_1)$
\[
\bar{\lambda}_{e^*_1, i^*_1} + \bar{\mu}_{e^*_1, i^*_1} (\pertVar) 
\geq \bar{\lambda}_{e,i} + \hat{\lambda} + \bar{\mu}_{e^*_1, i^*_1}(\pertVar)
> \bar{\lambda}_{e,i} + \frac{\hat{\lambda}}{2} > \bar{\lambda}_{e,i} + \bar{\mu}_{e,i}(\pertVar)
\]
and, hence, that $(e^*_1, i^*_1)$ is also the unique maximizer in the computation of $\lambda^{\min, \pertVar}$. Similarly, we obtain $\bar{\lambda}_{e^*_2, i^*_2} + \bar{\mu}_{e^*_2, i^*_2} (\pertVar) <  \bar{\lambda}_{e, i} + \bar{\mu}_{e, i}$ proving that $(e^*_2, i^*_2)$ is the unique minimizer in the computation of $\lambda^{\max, \pertVar}$. Thus, we have established \emph{(ii)} for all $0 < \pertVar < \pertVar^*_2$.

For the proof of statement \emph{(iii)} assume that $\pertVar^*_3 > 0$ is chosen small enough such that, for all $0 < \pertVar < \pertVar^*_3$, $\bar{\lambda}_{e,i} + \bar{\mu}_{e,i} (\pertVar) = \bar{\lambda}_{e',i'} + \bar{\mu}_{e',i'} (\pertVar)$ implies that $\bar{\lambda}_{e,i} = \bar{\lambda}_{e',i'}$ and $\bar{\mu}_{e,i} (\pertVar) =  \bar{\mu}_{e',i'} (\pertVar)$.
Now we assume that $\support$ is feasible and $b$-degenerate. Further, assume that $\pertPotentialRegion \neq \emptyset$. We proceed to show that in this case, if the maximizer in the computation of $\lambda^{\max,\pertVar}$ or the minimizer in the computation of $\lambda^{\min, \pertVar}$ is non unique, then it is tied only among pairs $e,i$ that are serial-dependent for some player~$i$.
We begin the proof by establishing a necessary condition for the case of non-unique maximizer or minimizer.
\begin{claim} \label{clm:thm:bPerturbation:equalLambda}
Given two edge-player pairs $(e,i), (e', i')$, if for all $\pertVar^*_3 > 0$ there is a $0 < \pertVar < \pertVar^*_3$ such that $\bar{\lambda}_{e,i} + \bar{\mu}_{e,i} (\pertVar) =   \bar{\lambda}_{e',i'} + \bar{\mu}_{e',i'} (\pertVar)$, then
\begin{equation*}
\big( \eta_{e,i} \, \vec{u}^{\top}_{e,i} - \eta_{e',i'} \, \vec{u}^{\top}_{e',i'} \big) \vec{X} = \vec{0}
\end{equation*}
where $\vec{X} := \vec{I}_{mk} -  \tilde{\vec{C}} \vec{G}^{\top} \vec{L}^* \vec{G} \vec{\Omega}$, and where $\eta_{e,i}$ and $\eta_{e',i'}$ are some non-zero constants.
\end{claim}
\begin{proof}[Proof of Claim~\ref{clm:thm:bPerturbation:equalLambda}]
Since we assume that all $\pertVar^*_3$ are chosen small enough, the assumption of the claim implies that there are infinity many values $0 < \pertVar < \pertVar^*_3$ such that $\bar{\mu}_{e,i} = \bar{\mu}_{e',i'}$. Let $\eta_{e,i} := \frac{\sigma_e^i}{\vec{w}^{\top}_{e,i} \vec{G}^{\top} \! \Delta \vec{\pi}}$ and $\eta_{e',i'} := \frac{\sigma_{\smash{e'}}^{\smash{i'}}}{\vec{w}^{\top}_{\smash{e',i'}} \vec{G}^{\top} \! \Delta \vec{\pi}}$. With $\vec{w}^{\top}_{e,i} = \vec{u}^{\top}_{e,i} \vec{W} = \vec{u}^{\top}_{e,i} \vec{\Sigma} \tilde{\vec{C}} = \sigma_{e}^i \vec{u}^{\top}_{e,i}  \tilde{\vec{C}}$ we obtain
\begin{align*}
0 = \bar{\mu}_{e,i} - \bar{\mu}_{e',i'}
&= \bigg( \frac{1}{\vec{w}^{\top}_{e,i} \vec{G}^{\top} \! \Delta \vec{\pi}} \vec{w}^{\top}_{e,i}
- \frac{1}{\vec{w}^{\top}_{e',i'} \vec{G}^{\top} \! \Delta \vec{\pi}} \vec{w}^{\top}_{e',i'} \bigg) \big( \vec{I}_{mk} -  \vec{G}^{\top} \vec{L}^* \vec{G} \vec{C} \big) \pertVec \\
&= \big( \eta_{e,i} \, \vec{u}^{\top}_{e,i} - \eta_{e',i'} \, \vec{u}^{\top}_{e',i'} \big) \tilde{\vec{C}} \big( \vec{I}_{mk} -  \vec{G}^{\top} \vec{L}^* \vec{G} \vec{\Omega} \tilde{\vec{C}} \big) \pertVec \\
&= \big( \eta_{e,i} \, \vec{u}^{\top}_{e,i} - \eta_{e',i'} \, \vec{u}^{\top}_{e',i'} \big) \vec{X} \tilde{\vec{C}} \, \pertVec
.
\end{align*}
Since the last term is a polynomial in $\pertVar$, the assumption implies that this polynomial must have infinitely many roots, i.e., it must be the zero polynomial, implying $\big( \eta_{e,i} \, \vec{u}^{\top}_{e,i} - \eta_{e',i'} \, \vec{u}^{\top}_{e',i'} \big) \vec{X} \tilde{\vec{C}} = \vec{0}$. Since $\tilde{\vec{C}}$ is non-singular (see the proof of Lemma~\ref{lem:matricesC}), the claim follows.
\end{proof}

Claim~\ref{clm:thm:bPerturbation:equalLambda} implies that, if $\pertPotentialRegion$ is degenerate, there is a special vector with only two non-zero entries in the nullspace of the matrix $\vec{X}^{\top}$. The next claim characterizes nullspace of this matrix.

\begin{claim} \label{clm:thm:bPerturbation:nullspaceX}
$\vec{v}^{\top} \vec{X} = \vec{0}$ if and only if there is a vector $\vec{z} \in \R^{nk}$ such that $\vec{v} = (\vec{G} \vec{\Omega})^{\top} \vec{z}$.
\end{claim}
\begin{proof}[Proof of Claim~\ref{clm:thm:bPerturbation:nullspaceX}]
Let $\vec{Y} := \tilde{\vec{C}} \vec{G}^{\top} \vec{L}^* \vec{G} \vec{\Omega} = \tilde{\vec{C}} \vec{G}^{\top} ( \vec{G} \vec{\Omega} \tilde{\vec{C}} \vec{G}^{\top} )^* \vec{G} \vec{\Omega}$. Then, $\vec{X} = \vec{I}_{mk} - \vec{Y}$, i.e., it suffices to show that $\vec{v}^{\top} \vec{Y} = \vec{v}$ if and only if $\vec{v} = (\vec{G} \vec{\Omega})^{\top} \vec{z}$.

We note that the space spanned by the colums of $\vec{G} \vec{\Omega}$ is a subspace of the space spanned by the columns of $\vec{L} = \vec{G} \vec{\Omega} \tilde{\vec{C}} \vec{G}^{\top}$. Thus, we can use \cite[Lemma~9.3.5]{harville1997matrix}, which yields
\[
\vec{G} \vec{\Omega} \vec{Y}
=
\vec{G} \vec{\Omega} \tilde{\vec{C}} \vec{G}^{\top} ( \vec{G} \vec{\Omega} \tilde{\vec{C}} \vec{G}^{\top} )^* \vec{G} \vec{\Omega} = \vec{G} \vec{\Omega}
\]
proving the if-direction.

For the only-if direction, observe that $\rank(\vec{Y}) \leq \rank(\vec{G} \vec{\Omega})$ by definition. On the other hand, we have that $\rank(\vec{G} \vec{\Omega}) = \rank(\vec{G} \vec{\Omega} \vec{Y}) \leq \rank(\vec{Y})$. Thus, $\rank(\vec{Y}) = \rank(\vec{G} \vec{\Omega})$. We conclude that for every $\vec{v}$ that is not a combination of the rows of $\vec{G} \vec{\Omega}$, $\vec{v}^{\top} \vec{Y} = \vec{0}$ (otherwise the rank of $\vec{Y}$ would be strictly greater than $\rank(\vec{G} \vec{\Omega})$). Thus, if there is no $\vec{z}$ with $\vec{v} = (\vec{G} \vec{\Omega})^{\top} \vec{z}$, then $\vec{v}^{\top} \vec{Y} = \vec{0}$.
\end{proof}

Claim~\ref{clm:thm:bPerturbation:nullspaceX} states that a vector $\vec{v}$ is in the left nullspace of the matrix $\vec{X}$ if and only if there is a potential vector $\vec{z}$ that assigns a potential value $z_v^i$ for every player to every vertex such that
\begin{equation} \label{eq:thm:bPerturbation:nullspaceX}
v_{e,i} = 
\begin{cases}
0
	&\text{ if }i \notin S_e, \\
z_w^i - z_v^i
	&\text{ if } i \in S_e
\end{cases}
\end{equation}
for every edge $e = (v,w)$ and every player $i \in \range{k}$.
Thus, in particular, the components of every vector of the left nullspace of $\vec{X}_{\support}$ that corresponds to a player-edge-pair with an inactive edge are zero.
We can use this insight to prove the following claim.

\begin{claim} \label{clm:thm:bPerturbation:serialdependent}
Given two edge-player pairs $(e,i), (e', i')$, if for all $\pertVar^*_3 > 0$ there is a $0 < \pertVar < \pertVar^*_3$ such that $\bar{\lambda}_{e,i} + \bar{\mu}_{e,i} (\pertVar) =   \bar{\lambda}_{e',i'} + \bar{\mu}_{e',i'} (\pertVar)$, one of the following is true.
\begin{compactenum}[(i)]
\item
	$\bar{\lambda}_{e,i} + \bar{\mu}_{e,i} (\pertVar) =   \bar{\lambda}_{e',i'} + \bar{\mu}_{e',i'} (\pertVar) = 0$.
\item
	We have that $i = i'$ and $e$ and $e'$ are serial-dependent for player~$i$.
\end{compactenum}
\end{claim}
\begin{proof}[Proof of Claim~\ref{clm:thm:bPerturbation:serialdependent}]
The claims~\ref{clm:thm:bPerturbation:equalLambda} and~\ref{clm:thm:bPerturbation:nullspaceX} imply that there is a vertex potential $\vec{z}$ such that $(\eta_{e,i} \vec{u}_{e,i}^{\top} + \eta_{e',i'} \vec{u}_{e',i'}^{\top}) = (\vec{G} \vec{\Omega})^{\top} \vec{z}$. Since $(\eta_{e,i} \vec{u}_{e,i}^{\top} + \eta_{e',i'} \vec{u}_{e',i'}^{\top})$ has exactly two non-zero components corresponding to $(e,i)$ and $(e', i')$, we conclude with~\eqref{eq:thm:bPerturbation:nullspaceX} that $i \in S_e$ and $i' \in S_{e'}$, i.e., both edges are active for their respective players. We now consider two cases:

Case 1: $i \neq i'$. In this case, the edge~$e=(v,w)$ is the only active edge with a non-zero potential difference $z^i_w - z^i_v \neq 0$. This implies that $e$ can not be contained in any (undirected) cycle inside the active network of player $i$. We observe that $\Delta \vec{x} := \vec{C} \vec{G}^{\top} \vec{L}^* \Delta \vec{y}$ is a non-negative flow satisfying the demand $r = 1$ for every player~$i$ since $\vec{G}\Delta \vec{x} = \Delta \vec{y}$. In particular, the flow $\Delta \vec{x}^i$ is a $s_i$-$t_i$-flow. Further, we observe that $\Delta x_e^i = \vec{u}^{\top}_{e,i} \omega^i_e  \tilde{\vec{C}} \vec{G}^{\top} \vec{L}^* \Delta \vec{y} = \vec{w}^{\top}_{e,i} \vec{G}^{\top} \! \Delta \vec{\pi} \neq 0$ (because only in this case the pair (e,i) contributes to the computation of $\lambda^{\max, \pertVar}$ and $\lambda^{\min, \pertVar}$). Hence, there is non-zero flow $\Delta x_e^i$ on the active edge $e$ in the active network of player $i$. Since there is no cycle in the active network of player~$i$, all flow (in particular also all flow of $\vec{x}^i$) must use edge $e$. This implies that the inequality corresponding to edge~$e$ and player~$i$ in $\pertPotentialRegion$ is $\lambda = x^i_e \geq 0$. Hence, \emph{(i)} follows.

Case 2: $i = i'$. In this case, there are exactly two edges in the active network of player~$i$ that have non-zero potential difference $z^i_w - z^i_v$. This is equivalent by definition to the fact that these edges are serial-dependent for player~$i$.
\end{proof}

Claim~\ref{clm:thm:bPerturbation:serialdependent} implies that whenever there is not a unique minimizer or maximizer in the computation of $\lambda^{\max, \pertVar}$ or $\lambda^{\min, \pertVar}$, then either the respective value is $0$ or the corresponding edges are serial-dependent for one single player $(e,i)$. The first case is no problem, since we do not need to define a predecessor for a region with $\lambda^{\min, \pertVar} = 0$ (since this is the start region). In the other case, we there is still a unique continuative neighbor in the $\myState$, since we need to make sure, that the neighboring support is also a shortest-path support (to be in $\myState$). In particular the $(e,i)$-neighbor with $(e,i)$ being the pair where $e$ is the farthest away from the source $s_i$ (see also the definition of serial-dependent edges) is the only continuative neighbor in $\myState$. Thus, the region $\pertPotentialRegion$ is not $b$-degenerate. Choosing $\pertVar^*_{\support} := \min \{ \pertVar^*_1, \pertVar^*_2, \pertVar^*_3 \}$ concludes the proof.
\end{proof}

Theorem~\ref{thm:bPerturbation} implies directly that any perturbed game is non-degenerate for $\pertVar$ chosen small enough.
\begin{corollary} \label{cor:bPerturbation}
Let $\pertVar^* := \min_{\support \in \supportSpace} \pertVar^*_{\support}$, where $\pertVar^*_{\support} > 0 $ are the values from Theorem~\ref{thm:bPerturbation}. Then $\pertVar^* > 0$ and for all $0 < \pertVar < \pertVar^*$, the perturbed game with offsets $\vec{b} + \pertVec$ is non-$b$-degenerate.
\end{corollary}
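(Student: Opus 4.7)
The plan is to combine the finiteness of the support space with the case distinction provided by Theorem~\ref{thm:bPerturbation}. First I would argue that $\supportSpace$ is a finite set: since a support is a family $\support = (S_e)_{e \in E}$ with $S_e \subseteq [k]$, there are at most $2^{mk}$ possible supports, so the minimum $\pertVar^* = \min_{\support \in \supportSpace} \pertVar^*_{\support}$ is taken over a finite collection of strictly positive numbers and is therefore strictly positive.

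Next, fix any $0 < \pertVar < \pertVar^*$ and any total, non-$a$-degenerate support $\support$; I want to show that $\support$ cannot be a $b$-degenerate support of the $\pertVar$-perturbed game. I would distinguish the three cases of Theorem~\ref{thm:bPerturbation}, each applied with the choice $\pertVar^*_{\support}$ (which is $\geq \pertVar^*$ by definition, so the theorem applies for our $\pertVar$). If $\support$ is infeasible in the unperturbed game, then by part~\emph{(\ref{it:thm:bPerturbation:infeasible})} we have $\pertPotentialRegion = \emptyset$, so $\support \notin \pertSupportSpace$ and the $b$-degeneracy condition, which requires $\support$ to be feasible, cannot hold. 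If $\support$ is feasible and non-$b$-degenerate in the unperturbed game, then part~\emph{(\ref{it:thm:bPerturbation:nondegenerate})} directly asserts that $\support$ is non-$b$-degenerate in the perturbed game as well. Finally, if $\support$ is feasible but $b$-degenerate in the unperturbed game, part~\emph{(\ref{it:thm:bPerturbation:degenerate})} asserts that either $\pertPotentialRegion = \emptyset$ (so again $\support$ is not even feasible in the perturbed game) or $\support$ is non-$b$-degenerate in the perturbed game.

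Since these three cases exhaust all possibilities for a support $\support$ and the conclusion is the same in each case, no feasible support in the $\pertVar$-perturbed game is $b$-degenerate, which by definition means the $\pertVar$-perturbed game is non-$b$-degenerate. I do not anticipate a significant obstacle here: the genuine work was carried out in Theorem~\ref{thm:bPerturbation}, and the corollary is essentially a uniform choice of perturbation parameter made possible by the finiteness of $\supportSpace$. The only subtle point worth highlighting is the implicit remark that, because $a$-degenerate supports were excluded from $\myState$ at the outset (and are handled separately in Section~\ref{sec:aDegeneracy}), the minimum is taken over the non-$a$-degenerate supports for which Theorem~\ref{thm:bPerturbation} applies.
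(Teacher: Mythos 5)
Your proof is correct and follows exactly the argument the paper intends, namely that $\pertVar^* > 0$ because the minimum is taken over a finite collection of positive perturbation thresholds, and that the three cases of Theorem~\ref{thm:bPerturbation} exhaust all possibilities and rule out $b$-degeneracy in the perturbed game in each case. The paper states the corollary as following ``directly'' from the theorem without spelling out these steps; your writeup simply makes the finiteness observation and the case analysis explicit.
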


Theorem~\ref{thm:bPerturbation} gives also insights about how likely $b$-degenerate games are.
Given a fixed network with graph $G$ and $k$ players and fixed slopes $\vec{a}$ of the cost functions, let 
\[
\degenerateOffsets := \big\{ \vec{b} \in \R^{mk}_{\geq 0} \;\big\vert\;
\text{the game with offsets } \vec{b} \text{ is } b \text{-degenerate} \big\}
\]
be the set of all offsets leading to degenerate games. Then the following corollary shows that the set $\degenerateOffsets$ is in fact a null set and an arbitrary game is non-$b$-degenerate almost surely.

\begin{corollary} \label{cor:nonbdegenerate:as}
Let $\mathcal{D}$ be a continuous distribution on $\R^{mk}_{\geq 0}$. Then,
$
\mathbb{P}_{\mathcal{D}} \big[ \vec{b} \notin \degenerateOffsets \big] = 1
.
$
\end{corollary}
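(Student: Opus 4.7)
The strategy is to show that $\degenerateOffsets$ has Lebesgue measure zero in $\R^{mk}$; since any continuous distribution on $\R^{mk}_{\geq 0}$ is absolutely continuous with respect to Lebesgue measure, this immediately gives $\mathbb{P}_{\mathcal{D}}[\vec{b} \in \degenerateOffsets] = 0$.

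First, I would argue that $\degenerateOffsets$ is a semi-algebraic subset of $\R^{mk}$. The set of candidate supports $\support$ is finite, and for any fixed support the matrices $\vec{L}_\support, \vec{C}_\support, \vec{W}_\support, \vec{\Omega}_\support$ and the direction $\Delta\vec{\pi}_\support$ as well as the normal vectors $\normalVec$ are independent of $\vec{b}$; only the offset potential $\bar{\vec{d}}_\support = \vec{L}_\support^{*} \vec{G} \vec{C}_\support \vec{b}$ depends on $\vec{b}$, and it does so linearly. Consequently each of the conditions ``$\support$ is total'', ``$\support$ is feasible'', ``$\support$ is non-$a$-degenerate'', ``$\support$ is a shortest-path support'', and ``$\support$ is $b$-degenerate'' is expressible by a finite conjunction of polynomial equalities and inequalities in $\vec{b}$. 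In particular, the $b$-degeneracy condition reduces (via Lemma~\ref{lem:lambdapotentialsLineSegment}) to the existence of a pair $(e,i) \neq (e',i')$ with $\bar\lambda_{\support,e,i}(\vec{b}) = \bar\lambda_{\support,e',i'}(\vec{b})$, and this is a linear equation in $\vec{b}$. Taking the finite union over all supports, $\degenerateOffsets$ is semi-algebraic.

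Next, I would invoke Corollary~\ref{cor:bPerturbation} to show $\degenerateOffsets$ has empty interior: for every $\vec{b} \in \R^{mk}_{\geq 0}$ there exists $\pertVar^* > 0$ such that, for all $0 < \pertVar < \pertVar^*$, the perturbed offset $\vec{b} + \pertVec$ does not lie in $\degenerateOffsets$. Since $\pertVec \to \vec{0}$ as $\pertVar \to 0$, every point of $\R^{mk}_{\geq 0}$ is approached by points outside $\degenerateOffsets$, so no point is an interior point of $\degenerateOffsets$.

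Finally, I would appeal to the standard fact that a semi-algebraic subset of $\R^{N}$ with empty interior has dimension strictly smaller than $N$ and therefore Lebesgue measure zero, yielding the corollary. I expect the only substantial work in the plan to be the verification that each degeneracy condition is indeed polynomial in $\vec{b}$; this is mostly bookkeeping since every matrix appearing in the characterizations of Section~3 is built from $\vec{a}$ and $\support$ alone. A more hands-on alternative would bypass the semi-algebraic machinery entirely by covering $\degenerateOffsets$ with the finite collection of tie-hyperplanes $\{\vec{b}:\bar\lambda_{\support,e,i}(\vec{b}) = \bar\lambda_{\support,e',i'}(\vec{b})\}$ and using Corollary~\ref{cor:bPerturbation} to rule out that any relevant such hyperplane coincides with all of $\R^{mk}$; each proper hyperplane then has measure zero, and the finite union does too.
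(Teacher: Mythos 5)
Your primary route is correct and genuinely different from the paper's. The paper's proof directly covers $\degenerateOffsets$ by finitely many linear subspaces of the form $\{\vec{b}: (\eta_{e,i}\vec{u}^{\top}_{e,i} - \eta_{e',i'}\vec{u}^{\top}_{e',i'})\vec{X}_{\support}\tilde{\vec{C}}_{\support}\vec{b} = 0\}$, then argues that each of these is \emph{proper} (i.e., the coefficient vector is nonzero) by invoking Claim~\ref{clm:thm:bPerturbation:serialdependent}: if the coefficient vector vanished, the tie between $\bar{\lambda}_{e,i}$ and $\bar{\lambda}_{e',i'}$ would persist for every $\vec{b}$, but the serial-dependence analysis then shows that the corresponding support is not actually $b$-degenerate, so that term can be dropped from the cover. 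You instead observe that $\degenerateOffsets$ is semi-algebraic, derive empty interior from Corollary~\ref{cor:bPerturbation} (since $\vec{b}+\pertVec \to \vec{b}$ along a curve of non-$b$-degenerate offsets), and cite the fact that a semi-algebraic set with empty interior has dimension less than $mk$, hence Lebesgue measure zero. This is more abstract but has the real advantage of sidestepping the delicate case analysis around a vanishing coefficient vector; the cost is the extra verification that every condition entering the definition of $\degenerateOffsets$ (feasibility, membership in $\myState$, uniqueness of the zero-flow support, and the tie conditions on $\lambda^{\min}_{\support},\lambda^{\max}_{\support}$) really is first-order definable over $\R$ in $\vec{b}$, which you flag as ``bookkeeping'' but do not carry out.

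Your ``hands-on alternative'' is, however, not quite right as stated, and it is precisely the point where the paper has to work. Covering $\degenerateOffsets$ by tie-hyperplanes $\{\vec{b}: \bar{\lambda}_{\support,e,i}(\vec{b}) = \bar{\lambda}_{\support,e',i'}(\vec{b})\}$ is fine, but Corollary~\ref{cor:bPerturbation} does not let you conclude that none of these hyperplanes is all of $\R^{mk}$. A trivial hyperplane means the tie holds for every $\vec{b}$; this by itself does \emph{not} make every game $b$-degenerate, because $b$-degeneracy also requires the tied pair to realize $\lambda^{\min}_{\support}$ or $\lambda^{\max}_{\support}$ and the resulting neighbors to lie in $\myState$. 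So Corollary~\ref{cor:bPerturbation} can be true alongside a trivial tie-hyperplane with no contradiction. The paper's actual resolution is Claim~\ref{clm:thm:bPerturbation:serialdependent}: a persistent tie forces either $\bar{\lambda}_{e,i}=0$ or serial-dependence of $e,e'$ for one player, and in the latter case the shortest-path-support filter removes the ambiguity, so the support is not $b$-degenerate and that hyperplane may be discarded. If you want the elementary route, you need this claim rather than Corollary~\ref{cor:bPerturbation}; if you stick with the semi-algebraic route, you should drop the misleading alternative.
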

\begin{proof}
With the definitions from the proof of Theorem~\ref{thm:bPerturbation}, we get that, whenever some feasible support $\support$ is degenerate, $\bar{\lambda}_{e,i} = \bar{\lambda}_{e',i'}$. Similar to Claim~\ref{clm:thm:bPerturbation:equalLambda}, we can reformulate this as the condition
\begin{equation} \label{eq:cor:nonbdegenerate:as:subclaim}
\big( \eta_{e,i} \, \vec{u}^{\top}_{e,i} - \eta_{e',i'} \, \vec{u}^{\top}_{e',i'} \big) \vec{X}_{\support} \tilde{\vec{C}}_{\support} \, \vec{b} = 0
.
\end{equation}
We may assume that $\big( \eta_{e,i} \, \vec{u}^{\top}_{e,i} - \eta_{e',i'} \, \vec{u}^{\top}_{e',i'} \big) \vec{X}_{\support} \tilde{\vec{C}}_{\support}$ is not the zero vector. Otherwise, Claim~\ref{clm:thm:bPerturbation:serialdependent} implies as argued in the proof of Theorem~\ref{thm:bPerturbation} that the support is non-$b$-degenerate. Hence, any offset vector $\vec{b}$ belonging to a $b$-degenerate game satisfies at least one non-trivial linear condition of the form~\eqref{eq:cor:nonbdegenerate:as:subclaim} for some support $\support$ and some pair of edge~$e$ and player~$i$. Thus,
\[
\degenerateOffsets \subseteq
\bigcup_{\support \in \myState} \bigcup_{(e,i) \in E \times \range{k}} \bigcup_{(e',i') \in E \times \range{k}}  \big\{ \vec{b} \in \R^{mk}_{\geq 0} \, \big\vert \, \big( \eta_{e,i} \, \vec{u}^{\top}_{e,i} +- \eta_{e',i'} \, \vec{u}^{\top}_{e',i'} \big) \vec{X}_{\support} \tilde{\vec{C}}_{\support} \, \vec{b}  = 0 \big\}
.
\]
Since every set $\big\{ \vec{b} \in \R^{nk} \, \big\vert \, \big( \eta_{e,i} \, \vec{u}^{\top}_{e,i} + \eta_{e',i'} \, \vec{u}^{\top}_{e',i'} \big) \vec{X}_{\support} \tilde{\vec{C}}_{\support} \, \vec{b}  = 0 \big\}$ is a lower dimensional subspace and the union above is finite, we get
\begin{align*}
\mathbb{P} \big[ \vec{b} \in \mathfrak{D} \big] 
&\leq \mathbb{P} \Big[ \vec{b} \in \bigcup_{\support \in \myState} \bigcup_{(e,i) \in E \times \range{k}} \bigcup_{(e',i') \in E \times \range{k}}  \big\{ \vec{b}' \in \R^{mk}_{\geq 0} \, \big\vert \, \big( \eta_{e,i} \, \vec{u}^{\top}_{e,i} - \eta_{e',i'} \, \vec{u}^{\top}_{e',i'} \big) \vec{X}_{\support} \tilde{\vec{C}}_{\support} \, \vec{b}'  = 0 \big\} \Big] \\
&\leq \sum_{\support \in \myState} \sum_{(e,i) \in E \times \range{k}} \sum_{(e',i') \in E \times \range{k}}    \underbrace{   \mathbb{P} \big[ \vec{b} \in \big\{ \vec{b}' \in \R^{mk}_{\geq 0} \, \big\vert \, \big( \eta_{e,i} \, \vec{u}^{\top}_{e,i} - \eta_{e',i'} \, \vec{u}^{\top}_{e',i'} \big) \vec{X}_{\support} \tilde{\vec{C}}_{\support} \, \vec{b}'  = 0 \big\} \big]   }_{   = 0    }
 = 0
\end{align*}
which proves the claim.
\end{proof}


\subsubsection{Implicit Perturbation}

Finally, we discuss how to use the fact that perturbed games are non-$b$-degenerate games in order to obtain a unique pivoting rule for all games. First, we define a new set of supports
\begin{align*}
\pertSupportSpace :=\{ \mathcal{S} : \;&\mathcal{S} \text{ is a feasible, non-}a\text{-degenerate, shortest-path-support for offsets } \vec{b} + \pertVec  \} 
\end{align*}
that we use as states of our $\PPAD$-graph. Theorem~\ref{thm:bPerturbation}\emph{(\ref{it:thm:bPerturbation:infeasible})} and  \emph{(\ref{it:thm:bPerturbation:nondegenerate})} imply that $\myState = \pertSupportSpace$, whenever the game is non-$b$-degenerate. We then redefine the predecessor and successor functions $\predF$ and $\succF$ such that they rely on the continuative neighbors $\pertLowerContNeighbors, \pertUpperContNeighbors$ of the perturbed game rather than on the actual continuative neighbors $\lowerContNeighbors, \upperContNeighbors$. Again, for non-$b$-degnerate games, this does not change these functions. But, by Corollary~\ref{cor:bPerturbation}, we get that now these functions are well-defined.

With the functions $\predF$ and $\succF$ depending on the continuative neighbors of the perturbed game we need to clarify how we can actually compute these neighboring supports. Computing these supports with an explicit value for $\pertVar$ via the values $\lambda^{\min, \pertVar}$ and $\lambda^{\max, \pertVar}$ is practically feasible---to begin with, it is not obvious how to choose $\pertVar^*$ without checking exponentially many supports. 
Fortunately, an computations with explicit values for $\pertVar$ are not necessary, as we show in the following lemma.

\begin{lemma}
For all $0 < \pertVar < \pertVar^*$, the unique continuative neighbors can be determined in polynomial times without using an explicit value of $\pertVar$.
\end{lemma}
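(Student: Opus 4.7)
The plan is to observe that for every support $\support$ and every pair $(e,i)$, the perturbed threshold $\bar{\lambda}_{e,i} + \bar{\mu}_{e,i}(\pertVar)$ is a rational function in the perturbation parameter $\pertVar$ whose numerator is a polynomial of degree at most $km$ and whose denominator does not depend on $\pertVar$. Concretely, using the definition of $\bar{\mu}_{e,i}$ from the proof of Theorem~\ref{thm:bPerturbation}, we can write
\begin{align*}
\bar{\lambda}_{e,i} + \bar{\mu}_{e,i}(\pertVar)
= \frac{\vec{w}^{\top}_{e,i}\bigl( \vec{b} - \vec{G}^{\top} \potentialOffset \bigr)}{\vec{w}^{\top}_{e,i} \vec{G}^{\top}\! \Delta \vec{\pi}}
+ \frac{\vec{w}^{\top}_{e,i}\bigl( \vec{I}_{km} - \vec{G}^{\top} \vec{L}^{*} \vec{G} \vec{C} \bigr)\pertVec}{\vec{w}^{\top}_{e,i} \vec{G}^{\top}\! \Delta \vec{\pi}}.
\end{align*}
Since $\pertVec = (\pertVar,\pertVar^{2},\dots,\pertVar^{km})^{\top}$, the numerator is a univariate polynomial $p_{e,i}(\pertVar) = \gamma_{e,i}^{(0)} + \gamma_{e,i}^{(1)}\pertVar + \dots + \gamma_{e,i}^{(km)}\pertVar^{km}$ whose coefficient vector $\vec\gamma_{e,i}\in\R^{km+1}$ can be read off directly as an entry of a matrix-vector product. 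The denominator $\delta_{e,i} := \vec{w}^{\top}_{e,i} \vec{G}^{\top}\! \Delta \vec{\pi}$ is independent of $\pertVar$ and its sign was already used to classify the constraint as contributing to $\lambdaMin$ or $\lambdaMax$.

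The key observation is that the minimum and maximum of finitely many such rational functions for all sufficiently small $\pertVar > 0$ is determined purely by lexicographic comparison of their signed coefficient vectors. More precisely, for two pairs $(e,i)$ and $(e',i')$ with $\delta_{e,i},\delta_{e',i'}$ of fixed signs, the polynomial $\delta_{e',i'} p_{e,i}(\pertVar) - \delta_{e,i} p_{e',i'}(\pertVar)$ has either all zero coefficients (in which case the two thresholds are identically equal) or a unique lowest-index nonzero coefficient, whose sign determines the ordering of $\bar{\lambda}_{e,i} + \bar{\mu}_{e,i}(\pertVar)$ versus $\bar{\lambda}_{e',i'} + \bar{\mu}_{e',i'}(\pertVar)$ for all $0 < \pertVar$ below a threshold depending on the coefficients. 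Taking the minimum of all such thresholds yields a single $\pertVar^{**} > 0$ below which the orderings of all pairs are simultaneously determined by their coefficient vectors.

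Consequently, we implement an implicit version of the argmax/argmin: for each candidate pair $(e,i)$, we compute the coefficient vector $\vec\gamma_{e,i}\in\R^{km+1}$ together with $\delta_{e,i}$, and sort the $km$ candidates by the lexicographic order induced by the comparison rule above (scaling by $\delta_{e,i}$ to handle signs). The unique maximum or minimum so obtained is, by Corollary~\ref{cor:bPerturbation}, the unique maximizer or minimizer in the perturbed game, for every $0 < \pertVar < \min\{\pertVar^{*},\pertVar^{**}\}$. Since each coefficient vector has size $\mathcal{O}(km)$ and can be computed from a matrix-vector product with $\vec{L}^{*}$ in polynomial time (and we already maintain $\vec{L}^{*}$ via the update formula of Theorem~\ref{thm:neighboringLaplacians}), the whole comparison runs in polynomial time and never requires choosing or manipulating a numerical value of $\pertVar$. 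The main subtlety is that the lex-order comparison must correctly account for the sign of $\delta_{e,i}$ and $\delta_{e',i'}$, which is routine once one multiplies both sides by $\delta_{e,i}\delta_{e',i'}$ before comparing.
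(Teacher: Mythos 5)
Your proof is correct and follows essentially the same route as the paper's: both reduce the comparison of perturbed thresholds to a lexicographic comparison of the coefficient vectors of polynomials in $\pertVar$ (in the paper's notation, the rows $\eta_{e,i}\vec{u}_{e,i}^{\top}\vec{X}\tilde{\vec{C}}$), relying on $\pertVar^*$ being small enough so that the sign of the lowest-order nonzero coefficient determines the ordering throughout $(0,\pertVar^*)$. The only cosmetic difference is that you fold the constant term $\bar{\lambda}_{e,i}$ and the $\pertVar$-dependent correction $\bar{\mu}_{e,i}(\pertVar)$ into a single $(km+1)$-dimensional coefficient vector and lex-sort once, whereas the paper first selects the pairs tying on $\bar{\lambda}_{e,i}$ and then breaks ties by a lex comparison of the $\bar{\mu}$-coefficients; these are equivalent.
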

\begin{proof}
Theorem~\ref{thm:bPerturbation} proves that these continuative neighbors are unique, since all supports are non-$b$-degenerate for all $0 < \pertVar < \pertVar^*$. Further, we obtain from the proof, that the pairs $(e,i)$ inducing the maximum (or minimum, resp.) in the computation of $\lambda^{\min, \pertVar}$ (or $\lambda^{\max, \pertVar}$, resp.) can be determined by finding the edge~$e$ and the player~$i$ that maximizes (or minimizes, reps.) $\bar{\mu}_{e,i} (\pertVar)$ among all pairs of edges and players that already maximize (or minimize, resp.) $\bar{\lambda}_{e,i}$. In order to see that this is possible without an explicit value for $\pertVar$, we observe (see Claim~\ref{clm:thm:bPerturbation:equalLambda}) that for two pairs $(e,i)$ and $(e',i')$ the difference of the $\bar{\mu}$ values is
\[
p_{e,i,e',i'} (\pertVar) := \bar{\mu}_{e,i} (\pertVar) - \bar{\mu}_{e',i'} (\pertVar) 
= \big( \eta_{e,i} \, \vec{u}^{\top}_{e,i}  - \eta_{e',i'} \, \vec{u}^{\top}_{e',i'} \big) \vec{X} \tilde{\vec{C}} \, \pertVec
,
\]
i.e., a polynomial in $\pertVar$ with coefficient vector $\big( \eta_{e,i} \, \vec{u}^{\top}_{e,i}  - \eta_{e',i'} \, \vec{u}^{\top}_{e',i'} \big) \vec{X} \tilde{\vec{C}}$. Since $p_{e,i,e',i'} (0) = 0$, and $p_{e,i,e',i'} (\pertVar)$ is not the zero polynomial, we infer that $\bar{\mu}_{e,i} (\pertVar) - \bar{\mu}_{e',i'} (\pertVar) > 0$ if and only if the coefficient vector $\big( \eta_{e,i} \, \vec{u}^{\top}_{e,i}  - \eta_{e',i'} \, \vec{u}^{\top}_{e',i'} \big) \vec{X} \tilde{\vec{C}}$ is lexicographically greater than the zero vector $\vec{0}$.
Equivalently, for comparing $\bar{\mu}_{e,i} (\pertVar)$ and $\bar{\mu}_{e',i'} (\pertVar)$ it suffices to compare the vectors $\eta_{e,i} \, \vec{u}^{\top}_{e,i} \vec{X} \tilde{\vec{C}}$ and $\eta_{e',i'} \, \vec{u}^{\top}_{e',i'} \vec{X} \tilde{\vec{C}}$ lexicographically. This can be done without an explicit value for $\pertVar$ and clearly in polynomial time.
\end{proof}

Overall, we conclude that we can compute the (unique) continuative neighbors of the perturbed game efficiently and without an explicit value for $\pertVar$. Thus, we can compute the predecessor and  successor function of the perturbed game. As stated above, in non-$b$-degenerate games this makes no difference, but in $b$-degenerate games it yields a unique pivoting rule even for degenerate supports. Hence, computing equilibria in degenerate games is still in $\PPAD$ and, moreover, our parametric computation approach is still feasible for $b$-degenerate games.

\subsubsection{$a$-$b$-Degeneracy}

In this subsection, we address the case of $a$-$b$-degeneracy. This kind of degeneracy occurs if for some feasible, $a$-degenerate support $\support$, the maximizer (or minimizer, resp.) in the computation of the values $\nullspaceMin$ and $\nullspaceMax$ are not unique. In this case, there is not a unique continuative neighbor. Since $a$-$b$-degeneracy depends on the values $\nullspaceMin$ and $\nullspaceMax$, Theorem~\ref{thm:bPerturbation} does not immediately prove that $a$-$b$-degeneracy does not occur in perturbed games. However, we can use the techniques of the proof of Theorem~\ref{thm:bPerturbation} in order to obtain a similar statement.

\begin{corollary}
There is $\pertVar^* > 0$ such that, for all $0< \pertVar < \pertVar^*$, $a$-$b$-degeneracy does not occur in the $\pertVar$-perturbed game.
\end{corollary}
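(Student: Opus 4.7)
The plan is to adapt the proof of Theorem~\ref{thm:bPerturbation} to the parameters $\nullspaceMin$ and $\nullspaceMax$ that govern the continuative neighbors of an $a$-degenerate support. Recall from Lemma~\ref{lem:neighboringDegenerateSupport} that for a feasible $a$-degenerate support $\support$ in the $\pertVar$-perturbed game, the polytope of $\lambda$-potentials is parametrized by $\nullspaceParameter \in [\pertXiMin, \pertXiMax]$ along the nullspace direction $\nullspaceD$ via $\vec{\pi}(\nullspaceParameter) = \vec{L}^{+}_{\support}(\lambda^{*,\pertVar}_{\support} \Delta \vec{y} + \vec{d}^{\pertVar}_{\support}) + \nullspaceParameter \nullspaceD$. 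Solving each inequality of $\vec{W}_{\support}(\vec{G}^{\top} \vec{\pi}(\nullspaceParameter) - \vec{b}^{\pertVar}) \geq \vec{0}$ for $\nullspaceParameter$ gives, for every pair $(e,i)$ with $\normalVec^{\top} \nullspaceD \neq 0$, a threshold $\bar{\nullspaceParameter}^{\pertVar}_{\support,e,i} = \bar{\nullspaceParameter}_{\support,e,i} + \bar{\eta}_{\support,e,i}(\pertVar)$, where $\bar{\eta}_{\support,e,i}(\pertVar)$ is a polynomial in $\pertVar$ with no constant term --- entirely analogous to the decomposition $\bar{\lambda}_{e,i} + \bar{\mu}_{e,i}(\pertVar)$ used in Theorem~\ref{thm:bPerturbation}. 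An $a$-$b$-degeneracy in the perturbed game occurs exactly when two distinct pairs $(e,i),(e',i')$ achieve the same value of $\bar{\nullspaceParameter}^{\pertVar}_{\support,e,i}$ among those realizing the extremum.

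Mirroring Claims~\ref{clm:thm:bPerturbation:equalLambda}--\ref{clm:thm:bPerturbation:serialdependent}, if such an equality holds for infinitely many $\pertVar$, then there are nonzero constants $\eta_{e,i}, \eta_{e',i'}$ such that
\[
\bigl( \eta_{e,i} \vec{u}_{e,i}^{\top} - \eta_{e',i'} \vec{u}_{e',i'}^{\top} \bigr) \bigl(\vec{I}_{mk} - \tilde{\vec{C}}_{\support} \vec{G}^{\top} \vec{L}^{+}_{\support} \vec{G} \vec{\Omega}_{\support}\bigr) = \vec{0}.
\]
The key algebraic input, the analog of Claim~\ref{clm:thm:bPerturbation:nullspaceX}, is that this equation forces $\eta_{e,i}\vec{u}_{e,i} - \eta_{e',i'}\vec{u}_{e',i'}$ to lie in $\image\bigl((\vec{G}\vec{\Omega}_{\support})^{\top}\bigr)$. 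Since this vector has only two nonzero components, the case analysis of Claim~\ref{clm:thm:bPerturbation:serialdependent} carries over: either $\bar{\nullspaceParameter}^{\pertVar} = 0$, or $e$ and $e'$ are serial-dependent in the active network of a single player $i=i'$, or $e$ is the unique active edge carrying all the flow induced by $\nullspaceD$ for some player. In each of these structural cases, the shortest-path requirement defining $\pertSupportSpace$ selects a unique $(e,i)$-neighbor, so no genuine ambiguity remains. For all other pairs $(\support, (e,i), (e',i'))$ the polynomial $p_{\support,e,i,e',i'}(\pertVar) = \bar{\eta}_{\support,e,i}(\pertVar) - \bar{\eta}_{\support,e',i'}(\pertVar)$ is a nonzero polynomial with no constant term, hence has only finitely many positive roots. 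Choosing $\pertVar^* > 0$ smaller than the minimum such root over the finitely many relevant triples, and also smaller than the bound of Corollary~\ref{cor:bPerturbation}, suffices.

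The main obstacle will be verifying the generalized-inverse version of Claim~\ref{clm:thm:bPerturbation:nullspaceX}, because in the $a$-degenerate setting $\vec{L}^{+}_{\support}$ is only a genuine generalized inverse rather than the canonical $\vec{L}^{*}_{\support}$ used in the non-degenerate case. Concretely, one must establish $\vec{G}\vec{\Omega}_{\support} \tilde{\vec{C}}_{\support} \vec{G}^{\top} \vec{L}^{+}_{\support} \vec{G} \vec{\Omega}_{\support} = \vec{G}\vec{\Omega}_{\support}$, which reduces to showing $\colspace{\vec{G}\vec{\Omega}_{\support}} \subseteq \colspace{\vec{L}_{\support}}$. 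This follows from the nonsingularity of $\tilde{\vec{C}}_{\support}$ (exploited in Lemma~\ref{lem:matricesC}) precisely as in the non-degenerate case, so the argument does go through, but the choice of generalized inverse and the independence of the relevant quadratic forms from this choice --- which requires that the vectors involved lie in the row and column spaces of $\vec{L}_{\support}$ --- must be handled carefully. Once this technical step is in place, the rest of the adaptation is essentially a rerun of the proof of Theorem~\ref{thm:bPerturbation} with $\lambda$ replaced by $\nullspaceParameter$.
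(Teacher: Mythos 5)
Your overall approach mirrors the paper's: adapt Theorem~\ref{thm:bPerturbation} to the parameters $\nullspaceMin,\nullspaceMax$, arrive at the same matrix $\tilde{\vec{X}} := \vec{I}_{mk} - \tilde{\vec{C}}\vec{G}^{\top}\vec{L}^+\vec{G}\vec{\Omega}$, and conclude that a non-unique extremizer forces a two-nonzero-entry vector into $\kernel(\tilde{\vec{X}}^{\top})$, from which the serial-dependence case analysis of Claim~\ref{clm:thm:bPerturbation:serialdependent} applies.

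However, your handling of the "main obstacle" contains a genuine error. You claim that establishing the analog of Claim~\ref{clm:thm:bPerturbation:nullspaceX} requires proving $\vec{G}\vec{\Omega}_{\support}\tilde{\vec{C}}_{\support}\vec{G}^{\top}\vec{L}^+_{\support}\vec{G}\vec{\Omega}_{\support} = \vec{G}\vec{\Omega}_{\support}$, i.e., $\colspace(\vec{G}\vec{\Omega}_{\support}) \subseteq \colspace(\vec{L}_{\support})$, and you assert that this follows from nonsingularity of $\tilde{\vec{C}}_{\support}$ "precisely as in the non-degenerate case." This is false: for an $a$-degenerate (weakly-$a$-degenerate) support, $\rank(\vec{L}_{\support}) = k(n-1)-1 < k(n-1) = \rank(\vec{G}\vec{\Omega}_{\support})$, so $\colspace(\vec{G}\vec{\Omega}_{\support})$ strictly contains $\colspace(\vec{L}_{\support})$ and the equation you want to establish cannot hold. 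The nonsingularity of $\tilde{\vec{C}}_{\support}$ gives only $\colspace(\vec{L}_{\support}) \subseteq \colspace(\vec{G}\vec{\Omega}_{\support})$, which is equality exactly when the support is non-$a$-degenerate; it does not give the reverse inclusion here. Fortunately this equation is the \emph{if} direction of Claim~\ref{clm:thm:bPerturbation:nullspaceX}, and it is not needed. What is needed is the \emph{only-if} direction, which holds for any generalized inverse $\vec{L}^+$ for a much more elementary reason: if $\vec{v}^{\top}\tilde{\vec{X}} = \vec{0}$ then $\vec{v}^{\top} = \vec{v}^{\top}\tilde{\vec{C}}\vec{G}^{\top}\vec{L}^+\vec{G}\vec{\Omega}$, which is visibly a left-multiple of $\vec{G}\vec{\Omega}$, hence $\vec{v} \in \image\bigl((\vec{G}\vec{\Omega})^{\top}\bigr)$. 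The paper explicitly points out that only this direction survives passage to the $a$-degenerate setting and that it suffices for Claim~\ref{clm:thm:bPerturbation:serialdependent}; your argument should be corrected to stop at this one-sided statement rather than attempting the unattainable equality.
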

\begin{proof}
Similarly as we did for for the values $\lambda^{\min}$ and $\lambda^{\max}$, we can compute the values $\nullspaceMin, \nullspaceMax$ as
\begin{align*}
\nullspaceMin &= 
	\min \big\{
	\bar{\xi}_{e,i}
	\;\big\vert\;
	\vec{w}^{\top}_{e,i} \vec{G}^{\top} \! \nullspaceD < 0
	\big\} &\text{and}&&
\nullspaceMax &= 
	\min \big\{
	\bar{\xi}_{e,i}
	\;\big\vert\;
	\vec{w}^{\top}_{e,i} \vec{G}^{\top} \! \nullspaceD > 0
	\big\},
\end{align*}
where
\begin{align*}
\bar{\xi}_{e,i} &= 
\frac
	{\vec{w}^{\top}_{e,i} \big( \vec{b} - \vec{G}^{\top} (\vec{L}^+ (\lambda^*_{\support} \Delta \vec{y} + \vec{d} ) \big) }
	{\vec{w}^{\top}_{e,i} \vec{G}^{\top} \! \nullspaceD}
.
\end{align*}
If we consider the perturbed game with the offset vector $\vec{b}^{\pertVar} = \vec{b} + \pertVec$, we obtain $\nullspaceMin = \min \big\{\bar{\xi}_{e,i} + \bar{\nu}_{e,i} (\pertVar) \;\big\vert\; \vec{w}^{\top}_{e,i} \nullspaceD < 0 \big\}$ and $\nullspaceMax = \min \big\{\bar{\xi}_{e,i} + \bar{\nu}_{e,i} (\pertVar) \;\big\vert\; \vec{w}^{\top}_{e,i} \nullspaceD > 0 \big\}$, where
\[
\bar{\nu}_{e,i} (\pertVar) = \frac{\vec{w}^{\top}_{e,i} ( \vec{I}_{km} - \vec{G}^{\top} \vec{L}^+ \vec{G} \vec{C}) }{\vec{w}^{\top}_{e,i} \vec{G}^{\top} \! \Delta \vec{\pi}} \, \pertVec
.
\]
With the same techniques as in the proof of Theorem~\ref{thm:bPerturbation}, we obtain that the maximizer (or minimizer, resp.) is non-unique only if there is a vector $\big( \eta_{e,i} \, \vec{u}^{\top}_{e,i} - \eta_{e',i'} \, \vec{u}^{\top}_{e',i'} \big)$ in the nullspace of the matrix $\tilde{\vec{X}} := \vec{I}_{mk} - \tilde{\vec{C}} \vec{G}^{\top} \vec{L}^+ \vec{G} \vec{\Omega}$. The rest of the proofs follows then similar to the proof of Theorem~\ref{thm:bPerturbation}, with the only difference being that the matrix $\tilde{\vec{X}}$ depends on the generalized inverse $\vec{L}^+$ rather than $\vec{L}^*$ as does the matrix $\vec{X}$. Since $\vec{L}^+$ has lower rank in an $a$-degenerate region compared to $\vec{L}^*$ in a non-$a$-degenerate region, in Claim~\ref{clm:thm:bPerturbation:nullspaceX} only the \emph{only-if} direction holds true. However, this is enough to establish Claim~\ref{clm:thm:bPerturbation:serialdependent}, hence, the statement of Theorem~\ref{thm:bPerturbation} can be transferred to the $a$-$b$-degenerate case. 
\end{proof}

\shortversiononly{
\section{Shortest-Path-Supports} \label{app:spSupports}

}

\includeappendixcollection{proof}{Missing Proofs}


 \bibliography{bibliography}

\end{document}